\newcommand\myshade{85}
\colorlet{mylinkcolor}{magenta}
\colorlet{mycitecolor}{blue}
\colorlet{myurlcolor}{Aquamarine}
\algnewcommand\algorithmicto{\textbf{to}}
\algnewcommand\RETURN{\State \textbf{return} }
\newtheorem{theorem}{Theorem}[section]
\newtheorem{Proposition}[theorem]{Proposition}
 \newtheorem{Remark}[theorem]{Remark}
\DeclareMathOperator*{\argmax}{arg\,max}
\DeclareMathOperator*{\argmin}{arg\,min}
\DeclareMathOperator*{\arginf}{arg\,inf}
\newcommand{\e}{\varepsilon}
\renewcommand{\d}{\mathrm{d}}
\newcommand{\dd}{\ensuremath{\mathrm d}}
\newcommand{\dt}{\ensuremath{\mathrm dt}}
\newcommand{\vspan}{\ensuremath{\mathrm{span}}}
\newcommand{\cB}{\ensuremath{\mathcal{B}}}
\newcommand{\cG}{\ensuremath{\mathcal{G}}}
\newcommand{\cI}{\ensuremath{\mathcal{I}}}
\newcommand{\cJ}{\ensuremath{\mathcal{J}}}
\newcommand{\cP}{\ensuremath{\mathcal{P}}}
\newcommand{\cR}{\ensuremath{\mathcal{R}}}
\newcommand{\cS}{\ensuremath{\mathcal{S}}}
\newcommand{\cU}{\ensuremath{\mathcal{U}}}
\newcommand{\bN}{\ensuremath{\mathbb{N}}}
\newcommand{\bR}{\ensuremath{\mathbb{R}}}
\newcommand{\rB}{\ensuremath{\mathrm{B}}}
\newcommand{\rG}{\ensuremath{\mathrm{G}}}
\newcommand{\Linf}{\ensuremath{L^\infty([0,T])}}
\newcommand{\Linfdp}{\ensuremath{L^\infty([0,\widetilde{T}], \bR^d)}}
\newcommand{\Linftaud}{\ensuremath{L^\infty([0,T+\tau], \bR^d)}}
\newcommand{\bfbeta}{\ensuremath{{\bm{\beta}}}}
\newcommand{\bfpsi}{\ensuremath{{\bm{\psi}}}}
\newcommand{\bfgamma}{\ensuremath{{\bm{\gamma}}}}
\newcommand{\bff}{\ensuremath{{\textbf{f}}}}
\newcommand{\bfu}{\ensuremath{{\textbf{u}}}}
\newcommand{\bfS}{\ensuremath{{\textbf{S}}}}
\newcommand{\bfI}{\ensuremath{{\textbf{I}}}}
\newcommand{\bfR}{\ensuremath{{\textbf{R}}}}
\newcommand{\NMF}{\ensuremath{\texttt{NMF}}}
\newcommand{\SIR}{\ensuremath{\texttt{SIR}}}
\newcommand{\MRSIR}{\ensuremath{\texttt{Multiregional\_SIR}}}
\newcommand{\DM}{\ensuremath{\texttt{Detailed\_Model}}}
\newcommand{\ENLARGECONE}{\ensuremath{\texttt{Enlarge\_Cone}}}
\newcommand{\Colizza}{\ensuremath{\texttt{SEI$5$CHRD}}}
\newcommand{\Magal}{\ensuremath{\texttt{SE$2$IUR}}}
\newcommand{\fitIR}{\ensuremath{\texttt{routine-IR}}}
\newcommand{\fitbg}{\ensuremath{\texttt{routine-$\bfbeta\bfgamma$}}}
\newcommand{\bfE}{\ensuremath{{\textbf{E}}}}
\newcommand{\bfEun}{\ensuremath{{\textbf{E1}}}}
\newcommand{\bfEdeux}{\ensuremath{{\textbf{E2}}}}
\newcommand{\bfIp}{\ensuremath{{\textbf{I$_p$}}}}
\newcommand{\bfIa}{\ensuremath{{\textbf{I$_a$}}}}
\newcommand{\bfIps}{\ensuremath{{\textbf{I$_{ps}$}}}}
\newcommand{\bfIms}{\ensuremath{{\textbf{I$_{ms}$}}}}
\newcommand{\bfIss}{\ensuremath{{\textbf{I$_{ss}$}}}}
\newcommand{\bfC}{\ensuremath{{\textbf{C}}}}
\newcommand{\bfH}{\ensuremath{{\textbf{H}}}}
\newcommand{\bfD}{\ensuremath{{\textbf{D}}}}
\newcommand{\bfU}{\ensuremath{{\textbf{U}}}}
\newcommand{\bfb}{\ensuremath{{\textbf{b}}}}
\newcommand{\diag}{\operatorname{diag}}
\newcommand{\collapse}{\ensuremath{{\text{col}}}}
\newcommand{\tr}{\ensuremath{{\text{tr}}}}
\newcommand{\obs}{\ensuremath{{\text{obs}}}}
\newcommand{\col}[1]{{\color{black}{#1}}}
\newcolumntype{L}[1]{>{\raggedright\let\newline\\\arraybackslash\hspace{0pt}}m{#1}}
\newcolumntype{C}[1]{>{\centering\let\newline\\\arraybackslash\hspace{0pt}}m{#1}}
\newcolumntype{R}[1]{>{\raggedleft\let\newline\\\arraybackslash\hspace{0pt}}m{#1}}
\title{
Epidemiological Forecasting with Model Reduction of Compartmental Models. \\Application to the COVID-19 pandemic.
}
\author{Athmane Bakhta, Thomas Boiveau, Yvon Maday, Olga Mula}
\date{}
\begin{document}
\maketitle

\begin{abstract}
We propose a forecasting method for predicting epidemiological health series on a two-week horizon at the regional and interregional resolution. The approach is based on model order reduction of parametric compartmental models, and is designed to accommodate small amount of sanitary data. The efficiency of the method is shown in the case of the prediction of the number of \col{infected and removed people} during the \col{two} pandemic waves of COVID-19 in France, which \col{have} taken place approximately between \col{February and November 2020}. Numerical results illustrate the promising potential of the approach.
\end{abstract}

\section{Introduction}
\label{sec:Motivation}
Providing reliable epidemiological forecasts during an ongoing pandemic is crucial to mitigate the potentially disastrous consequences on global public health and economy. As the ongoing pandemic on COVID-19 sadly illustrates, this is a daunting task in the case of new diseases due to the incomplete knowledge of the behavior of the disease, and the heterogeneities and uncertainties in the health data count. Despite these difficulties, many forecasting strategies exist and we could cast them into two main categories. The first one is purely data-based, and involves statistical and learning methods such as time series analysis, multivariate linear regression, grey forecasting or neural networks \cite{Ceylan2020,anastassopoulou2020data,fang2020forecasting,roda2020difficult, liu2020predicting}. The second approach uses epidemiological models which are appealing since they provide an interpretable insight of the mechanisms of the outbreak. They also provide high flexibility in the level of detail to describe the evolution of the pandemic, ranging from simple compartmental models that divide the population into a few exclusive categories, to highly detailed descriptions involving numerous compartments or even agent-based models \col{(see, e.g., \cite{KR2011,martcheva2015introduction,brauer2017mathematical} for  general references on mathematical epidemiological models and \cite{magal2020predicting, di9impact, flaxman2020estimating} for some models focused on Covid-19)}. One salient drawback of using epidemiological models for forecasting purposes lies in the very high uncertainty in the estimation of the involved parameters. This is due to the fact that most of the time the parameters cannot be inferred from real observations and the available data are insufficient or too noisy to provide any reliable estimation. The situation is aggravated by the fact that the number of parameters can quickly become large even in moderately simple compartmental models \cite{di9impact}. As a result, forecasting with these models involves making numerous a priori hypothesis which can sometimes be hardly justified by data observations.

\medskip 
In this paper, our goal is to forecast the time-series of \col{infected,} removed and dead patients with compartmental models that involve as few parameters as possible in order to infer them solely from the data. \col{The available data are only given for hospitalized people, one can nevertheless estimate the total number of infected people through an adjustment factor hinted from the literature. Such a factor takes into account the proportion of asymptomatic people and infected people who do not go to hospital}. The model involving the least number of parameters is probably the SIR model \cite{kermack1927contribution} which is based on a partition of the population into:
\begin{itemize}
\item Uninfected people, called susceptible (S),
\item Infected and contagious people (I), with more or less marked symptoms,
\item People removed (R) from the infectious process, either because they are cured or unfortunately died after being infected.
\end{itemize}
If $N$ denotes the total population size that we assume to be constant over a certain time interval $[0, T]$, we have
$$
N = S(t)+I(t)+R(t), \forall t \in [0, T],
$$
and the evolution from $S$ to $I$ and from $I$ to $R$ is given for all $t\in [0,T]$ by
\begin{align*}
\frac{d S}{dt}(t) &= - \frac{\beta I(t) S(t)}{N} \\
\frac{d I}{dt}(t) &= \frac{\beta I(t) S(t)}{N} - \gamma I(t) \\
\frac{d R}{dt}(t) &=  \gamma I(t) .
\end{align*}
The SIR model has only two parameters:
\begin{itemize}
\item $\gamma>0$ represents the recovery rate. In other words,
its inverse $\gamma^{-1}$ can be interpreted as the length (in days)
of the contagious period.
\item $\beta>0$ is the transmission
rate of the disease. It essentially depends on two factors: the
contagiousness of the disease and the
contact rate within the population. The larger this second parameter is, the
faster the transition from susceptible to infectious will be. As a
consequence, the number of hospitalized patients may increase very
fast, and may lead to a collapse of the health system \cite{armocida2020italian}. Strong
distancing measures like confinement can effectively act on this
parameter \cite{roques2020impact, ferguson2020}, helping to keep it low.
\end{itemize}
Our forecasting strategy is motivated by the following observation: by allowing the parameters $\beta$ and $\gamma$ to be time-dependent, we can find optimal coefficients $\beta^*(t)$ and $\gamma^*(t)$ that exactly fit any series of \col{infected and removed} patients. In other words, we can perfectly fit any observed health series with a SIR model having time-dependent coefficients.

As we explain later on in the paper, the high fitting power stems from the fact that the parameters $\beta$ and $\gamma$ are searched in $L^\infty([0, T], \bR_+)$, the space of essentially bounded measurable functions. For our forecasting purposes, this space is however too large to give any predictive power and we need to find a smaller {manifold} that has simultaneously good fitting and forecasting properties. To this aim, we develop a method based on model order reduction. The idea is to find a space of reduced dimension that can host the dynamics of the current epidemic. This reduced space is learnt from a series of detailed compartmental models based on precise underlying mechanisms of the disease. One major difficulty in these models is to fit the correct parameters. In our approach, we do not seek to estimate them. Instead, we consider a large range of possible parameter configurations with a uniform sampling that allows us to simulate virtual epidemic scenarios on a longer range than the fitting window $[0, T]$. We next {cast} each virtual epidemics from the family of detailed compartmental models into the family of SIR models with time-dependent coefficients in a sense that we explain later on. This procedure yields time-dependent parameters $\beta$ and $\gamma$ for each detailed virtual epidemics. The set of all such $\beta$ (resp. $\gamma$) is then condensed into a reduced basis of small dimension. We finally use the available health data on the time window $[0, T]$ to find the functions $\beta$ and $\gamma$ from the reduced space that fit at best the current epidemics {over $[0,T]$}. Since the reduced basis functions are defined over a longer time range  $[0, T+\tau]$ with $\tau>0$ (say, e.g., two weeks), the strategy automatically provides forecasts from $T$ to $T+\tau$. Its accuracy will be related to the pertinence of the mechanistic mathematical models that have been used in the learning process.

We note that an important feature of our approach is that all virtual simulations are considered equally important on a first stage and the procedure automatically learns what are the best scenarios (or linear combinations of scenarios) to describe the available data. Moreover, the approach even mixes different compartmental models to accommodate these available data. This is in contrast to other existing approaches which introduce a strong a priori belief on the quality of a certain particular model. Note also that we can add models related to other illness, and use the large manifold to fit a possible new epidemic.  \col{It is also possible to mix the current approach with other purely statistical or learning strategies by means of expert aggregration. One salient difference with these approaches which is important to emphasize is that our method hinges on highly detailed compartmental models which have clear epidemiological interpretations. Our collapsing methodology into the time-dependent SIR is a way of “summarizing” the dynamics into a few terms}. \col{One may expect that reducing the number of parameters comes at the cost of losing interpretability of parameters, and this is in general true. Nevertheless, the numerical results of the present work show that a reasonable tradeoff  between "reduction of the number of parameters" and "interpretability of these few parameters" can be achieved.}

The paper is organized as follows. In Section \ref{sec:mono-region}, we present the forecasting method in the case of one single region with constant population. For this, we briefly introduce in Section \ref{sec:models} the epidemiological models involved in the procedure, namely the SIR model with time-dependent coefficients and more detailed compartmental models used for the training step. In Section \ref{sec:methodo}, after proving that SIR models with time-dependent coefficients in $\Linf$ \col{is able to fit any admissible epidemiological evolution (in a sense that will be made rigorous)}, we present the main steps of the forecasting method. The method involves a collapsing step from detailed models to SIR models with time-dependent coefficients, and  model reduction techniques. We detail these points in Sections \ref{sec:collapse} and \ref{sec:MOR}. In Section \ref{sec:multiregional} we explain how the method can easily be extended to a multi-regional context involving population mobility, and regional health data observations (provided, of course, that mobility data is available). We start in Section \ref{sec:basics-multiregion} by clarifying that the nature of the mobility data will dictate the kind of multi-regional SIR model to use in this context. In Section \ref{sec:forecast-multiregion} we outline how to adapt the main steps of the method to the multi-regional case. Finally, in Section \ref{sec:numerical-Paris}, we present numerical results for the the \col{two} pandemic waves of COVID-19 in France in year \col{2020}, \col{which have taken place approximately between February and November 2020}. Concluding comments are given in Section~\ref{sec:conclusion} followed by two appendices \ref{appendix:noise} and \ref{appendix:forecasting-error} that present details about the processing of the data noise, and the forecasting error.

\section{Methodology for one single region}
\label{sec:mono-region}
For the sake of clarity, we first consider the case of one single region with constant population and no population fluxes with other regions. Here, the term region is generic and may be applied to very different geographical scales, ranging from a full country, to a department within the country, or even smaller partitions of a territory.

\subsection{Compartmental models}
\label{sec:models}
The final output of our method is a mono-regional SIR model with time dependent coefficients {as exposed below}. 
{This SIR model with time-dependent coefficients will be evaluated} with reduced modeling techniques involving {a large family of }models with finer compartments {proposed in the literature}. Before presenting the method in the next section, we introduce here all the models {that we use in this paper} along with useful notations for the rest of the paper. 

\paragraph{SIR models with time-dependent parameters:} We will fit and forecast the series of \col{infected and removed patients (dead and recovered)} with SIR models where the coefficients $\beta$ and $\gamma$ are time-dependent:
\begin{align*}
\frac{d S}{dt}(t) &= - \frac{\beta(t) I(t) S(t)}{N} \\
\frac{d I}{dt}(t) &= \frac{\beta(t) I(t) S(t)}{N} - \gamma(t) I(t) \\
\frac{d R}{dt}(t) &=  \gamma(t) I(t) .
\end{align*}
In the following, we use bold-faced letters for past-time quantities. For example, $\bff \coloneqq \{ f(t) \,:\, 0\leq t \leq T \}$ for any function $\bff\in\Linf$. Using this notation, for any given $\bfbeta$ and $\bfgamma\in \Linf$ we denote by
$$
(\bfS, \bfI, \bfR) = \SIR(\bfbeta, \bfgamma, [0, T])
$$
the solution of the associated SIR dynamics in $[0, T]$.

\paragraph{Detailed compartmental models:} Models involving many compartments offer a detailed description of epidemiological mechanisms at the expense of involving many parameters. In our approach, we use them to generate virtual scenarios. One of the initial motivations behind the present work is to provide forecasts for the COVID-19 pandemic, thus we have selected the two following models which are specific for this disease, but note that any other compartmental model \cite{liu2020covid,di9impact,magal2020predicting} or agent-based simulation could also be used .
\begin{itemize}
\item \textbf{First model}, $\Colizza$: This model is inspired from the one proposed in \cite{di9impact}. It involves 11 different compartments and a set of 19 parameters. The dynamics of the model is illustrated in Figure \ref{fig:Colizza} and the system of equations reads 
\begin{align*}
\frac{dS}{dt}(t)&=  - \frac{1}{N}  S(t)  \left( \beta_ p I_p(t)  +  \beta_a I_a(t)  +  \beta_{ps} I_{ps}(t)  +  \beta_{ms} I_{ms}(t)  + \beta_{ss} I_{ss}(t)+ \beta_H H(t) + \beta_C C(t) \right) \\
\frac{dE}{dt}(t)&=  \frac{1}{N} S(t)\left( \beta_ p I_p(t)  +  \beta_a I_a(t)  +  \beta_{ps} I_{ps}(t)  +  \beta_{ms} I_{ms}(t)  + \beta_{ss} I_{ss}(t) + \beta_H H(t) + \beta_C C(t) \right)   - \varepsilon E(t)\\
\frac{dI_p}{dt}(t)&=  \varepsilon E(t)  -  \mu_p I_p(t) \\
\frac{dI_a}{dt}(t)&=      p_a \mu_p I_p(t)   - \mu I_a(t)\\
\frac{dI_{ps}}{dt}(t)&=  p_{ps}(1-p_a) \mu_p I_p(t)   - \mu I_{ps}(t) \\
\frac{dI_{ms}}{dt}(t)&= p_{ms}(1-p_a) \mu_p I_p(t)   - \mu I_{ms}(t) \\
\frac{dI_{ss}}{dt}(t)&=  p_{ss}(1-p_a )\mu_p I_p(t)   - \mu I_{ss}(t) \\
\frac{dC}{dt}(t)&=  p_c \mu I_{ss}(t)  - (\lambda_{C,R} +\lambda_{C,D} ) C(t)\\ 
\frac{dH}{dt}(t)&=    (1-p_c) \mu I_{ss}(t)  - (\lambda_{H,R} + \lambda_{H,D}) H(t)  \\ 
\frac{dR}{dt}(t)&= \lambda_{C,R} C(t)  + \lambda_{H,R} H(t)\\
\frac{dD}{dt}(t)&= \lambda_{C,D} C(t)  + \lambda_{H,D} H(t) \\ 
\end{align*}
The different parameters involved in the model are described in Table \ref{tab:params_Colizza} and detailed in the appendix of \cite{di9impact}.

\begin{figure}
\begin{center}
\resizebox{0.65\columnwidth}{!}{\tikzstyle{compartment}=[rectangle,draw=black,fill=black!20,thick,minimum size=10mm]
\begin{tikzpicture}[node distance=2cm]
\node[compartment] (Ips) {$I_{ps}$};
\node (ghost) [below=0.3cm of Ips] {};
\node[compartment] (Ims) [below=0.3cm of ghost] {$I_{ms}$};
\node[compartment] (Iss) [below=0.6cm of Ims] {$I_{ss}$};
\node[compartment] (Ia) [above=0.6cm of Ips] {$I_{a}$};
\node[compartment] (Ip) [left of=ghost] {$I_p$};
\node[compartment] (E) [left of=Ip] {$E$};
\node[compartment] (S) [left of=E] {$S$};
\node[compartment] (C) [right=1.5cm of Iss] {$C$};
\node[compartment] (D) [right=1.5cm of C] {$D$};
\node[compartment] (R) [above=1.5cm of D] {$R$};
\node[compartment] (H) [above=0.1cm of C] {$H$};
\draw [->, very thick] (Ip.east) to node {} node [above] {} (Ia.west);
\draw [->, very thick] (Ip.east) to node {} node [above] {} (Iss.west);
\draw [->, very thick] (Ia.east) to node {} node [above] {} (R.west);
\draw [->, very thick] (Iss.east) to node {} node [above] {} (H.west);
\draw [->, very thick] (Iss.east) to node {} node [above] {} (C.west);
\draw [->, very thick] (C.east) to node {} node [above] {} (D.west);
\draw [->, very thick] (C.east) to node {} node [above] {} (R.west);
\draw [->, very thick] (H.east) to node {} node [above] {} (D.west);
\draw [->, very thick] (H.east) to node {} node [above] {} (R.west);
\draw [->, very thick] (S.east) to node {} node [above] {} (E.west);
\draw [->, very thick] (E.east) to node {} node [above] {} (Ip.west);
\draw [->, very thick] (Ip.east) to node {} node [left] {} (Ips.west);
\draw [->, very thick] (Ip.east) to node {} node [left] {} (Ims.west);
\draw [->, very thick] (Ips.east) to node {} node [right] {} (R.west);
\draw [->, very thick] (Ims.east) to node {} node [right] {} (R.west);

\end{tikzpicture}}
\end{center}
\caption{$\Colizza$ model}
\label{fig:Colizza}
\end{figure}

\begin{table}
\begin{center}
\begin{tabular}{|c|c|}
\hline
Compartment & Description  \\ \hline
   $S$  & Susceptible \\
   $E$  &  Exposed (non infectious)\\
   $I_p$ & Infected and pre-symptomatic (already infectious) \\
   $I_a$ &  Infected and a-symptomatic (but infectious)\\
   $I_{ps}$  & Infected and paucisymptomatic \\
   $I_{ms}$  &  Infected with mild symptoms\\
   $I_{ss}$ &  Infected with severe symptoms\\
   $H$ &  Hospitalized\\
   $C$  &  Intensive Care Unit\\
   $R$  & Removed \\
   $D$  &  Dead\\ \hline
\end{tabular}
\end{center}
\caption{Description of the compartments in Model \Colizza}
\label{tab:compartments_Colizza}
\end{table}%

\begin{table}
\begin{center}
\begin{tabular}{|c|c|}
\hline
Parameter & Description  \\ \hline
   $ \beta_{p}$  & Relative infectiousness of $I_p$  \\
    $\beta_{a}$  &  Relative infectiousness of $I_a$ \\
   $ \beta_{ps} $ & Relative infectiousness of $I_{ps}$  \\
    $\beta_{ms} $ & Relative infectiousness of $I_{ms}$  \\
    $\beta_{ss} $ & Relative infectiousness of $I_{ss}$ \\
    $\beta_H  $ & Relative infectiousness of $I_{H}$ \\
    $\beta_C$   & Relative infectiousness of $I_{C}$ \\
    $\varepsilon^{-1}$    &Latency period \\
    $\mu_p^{-1} $  & Duration of prodromal phase \\
    $p_a $ & Probability of being asymptomatic   \\
    $\mu^{-1} $ & Infectious period of $I_a$, $I_{ps}$, $I_{ms}$, $I_{ss}$ \\
    $p_{ps}$  & If symptomatic, probability of being paucisymptomatic \\
    $p_{ms} $  &  If symptomatic, probability of developing mild symptoms\\
    $p_{ss}$  & If symptomatic, probability of developing severe symptoms {(note that $p_{ps}+p_{ms}+p_{ss}=1$)} \\
    $p_C $  & If severe symptoms, probability of going in C \\
   $\lambda_{CR}$   & If in C, daily rate entering in $R$ \\
  $\lambda_{CD} $  & If in C, daily rate entering in $D$ \\
   $\lambda_{HR}$   & If hospitalized, daily rate entering in $R$ \\
   $\lambda_{HD}  $   &  If hospitalized, daily rate entering in $D$ \\ \hline
\end{tabular}
\end{center}
\caption{Description of the parameters involved in Model \Colizza}
\label{tab:params_Colizza}
\end{table}%
We denote by 
\begin{align}
\bfu = \Colizza(\bfu_0, \beta_{p}, & \beta_{a},  \beta_{ps},  \beta_{ms},  \beta_{ss},  \beta_{H},  \beta_{C},\\& \varepsilon,  \mu_p, p_a , \mu, p_{ps}, p_{ms}, p_{ss}, p_C,\\& \lambda_{CR}, \lambda_{CD} , \lambda_{HR} ,\lambda_{HD}, [0, T])
\end{align}
the parameter-to-solution map where $\bfu= (\bfS, \bfE, \bfIp, \bfIa, \bfIps, \bfIms, \bfIss, \bfC, \bfH, \bfR, \bfD)$.

\item \textbf{Second model}, $\Magal$: This model is a variant of the one proposed in \cite{magal2020predicting}. It involves 5 different compartments and a set of 6 parameters. The dynamics of the model is illustrated in Figure \ref{fig:Magal} and the set of equations is

\begin{align*}
\frac{dS}{dt}(t)&=  - \frac{1}{N} \beta S(t) (E_2(t) + U(t) + I(t)) \\
\frac{dE_1}{dt}(t)&=  \frac{1}{N} \beta S(t) (E_2(t) + U(t) + I(t))  - \delta E_1(t)\\
\frac{dE_2}{dt}(t)&=  \delta E_1(t) - \sigma E_2(t) \\
\frac{dI}{dt}(t)&=   \nu \sigma E_2(t) - \gamma_1 I(t) \\
\frac{dU}{dt}(t)&=   (1-\nu) \sigma E_2(t) - \gamma_2 U(t) \\
\frac{dR}{dt}(t)&=   \gamma_1 I(t)+\gamma_2 U(t) \\
\end{align*}

\begin{figure}
\begin{center}
\resizebox{0.5\columnwidth}{!}{\tikzstyle{compartment}=[rectangle,draw=black,fill=black!20,thick,minimum size=10mm]
\begin{tikzpicture}[node distance=2cm]
\node[compartment] (I) {$I$};
\node (ghost) [below=0.5cm of I] {};
\node[compartment] (U) [below=0.5cm of ghost] {$U$};
\node[compartment] (R) [right of=ghost] {$R$};
\node[compartment] (E2) [left of=ghost] {$E_2$};
\node[compartment] (E1) [left of=E2] {$E_1$};
\node[compartment] (S) [left of=E1] {$S$};
\draw [->, very thick] (S.east) to node {} node [above] {} (E1.west);
\draw [->, very thick] (E1.east) to node {} node [above] {} (E2.west);
\draw [->, very thick] (E2.east) to node {} node [left] {} (I.west);
\draw [->, very thick] (E2.east) to node {} node [left] {} (U.west);
\draw [->, very thick] (I.east) to node {} node [right] {} (R.west);
\draw [->, very thick] (U.east) to node {} node [right] {} (R.west);


\end{tikzpicture}}
\end{center}
\caption{$\Magal$ model} 
\label{fig:Magal}
\end{figure}
We denote by 
$$
\bfu = \Magal (\bfu_0,\beta, \delta, \sigma, \nu, \gamma_1, \gamma_2, [0, T])
$$
the parameter-to-solution map where $\bfu = (\bfS, \bfEun, \bfEdeux, \bfI, \bfU, \bfR)$. The different parameters involved in the model are described in Table \ref{tab:params_Magal}. 

\begin{table}
\begin{center}
\begin{tabular}{|c|c|}
\hline
Compartment & Description  \\ \hline
   $S$  & Susceptible \\
   $E_1$  &  Exposed (non infectious)\\
   $E_2$ & Infected and pre-symptomatic (already infectious) \\
   $I$ &  Infected and symptomatic\\
   $U$  & Un-noticed \\
   $R$  &  dead and removed\\ \hline
\end{tabular}
\end{center}
\caption{Description of the compartments in Model \Magal}
\label{tab:compartments_Magal}
\end{table}%

\begin{table}
\begin{center}
\begin{tabular}{|c|c|}
\hline
Parameter & Description  \\ \hline
   $ \beta$  & Relative infectiousness of $I$, $U$, $E_2$ \\
    $ \delta^{-1}$ & Latency period\\
    $ \sigma^{-1}$ & Duration of prodromal phase \\
    $ \nu$ & Proportion of $I$ among $I+U$\\
    $ \gamma_1$ & If $I$, daily rate entering in $R$\\
    $ \gamma_2$ & If $U$, daily rate entering in $R$ \\ \hline
\end{tabular}
\end{center}
\caption{Description of the parameters involved in Model \Magal}
\label{tab:params_Magal}
\end{table}

\item \textbf{Generalization:} In the following, we abstract the procedure as follows. For a any $\DM$ with $d$ compartments involving a vector $\mu\in\bR^p$ of $p$ parameters, we denote by 
$$
\bfu(\mu) = \DM(\mu, [0, \widetilde{T}]), \quad \bfu(\mu) \in \Linfdp
$$
the parameter-to-solution map, {where $\widetilde{T}$ is some given time simulation that can be as large as wished because this is a virtual scenario. Note that, because the initial condition of the illness at time 0 is not known, we include in the parameter set the initial condition $\bfu_0$.
}
\end{itemize}

\subsection{Forecasting based on model reduction of detailed models}
\label{sec:methodo}
We assume that we are given health data in a time window $[0, T]$, where $T>0$ is assumed to be the present time. The observed data is the series of \col{infected} people, denoted $I_\obs$, and removed people denoted $R_\obs$.  They are usually given at a national or a regional scale and on a daily basis. For our discussion, it will be useful to work with time-continuous functions and $t\to I_\obs(t)$ will denote the piecewise constant approximation in $[0, T]$ from the given data (and similarly for $R_\obs(t)$). Our goal is to give short-term forecasts of the series in a time window $\tau>0$ {whose size}  will be about two weeks. We denote by $I(t)$ and $R(t)$ the approximations to the series $I_\obs(t)$ and $R_\obs(t)$ at any time $t\in[0, T]$.

As already brought up, we propose to fit the data and provide forecasts with SIR models with time-dependent parameters $\bfbeta$ and $\bfgamma$. The main motivation for using such a simple family is because it possesses optimal fitting and forecasting properties for our purposes in the sense that we explain next. Defining the cost function
\begin{equation}
\cJ(\bfbeta, \bfgamma \, |\, I_\obs(t), R_\obs(t), [0,T]) \coloneqq \int_0^T \left( \vert I(t) - I_\obs(t)\vert^2 + \vert R(t) - R_\obs(t)\vert^2 \right)\dt
\label{eq:cJ}
\end{equation}
such that
$$
(\bfS, \bfI, \bfR)= \SIR(\bfbeta, \bfgamma, [0, T]),
$$
the fitting problem can be expressed at the continuous level as the optimal control problem of finding 
\begin{equation}
\label{eq:fit-1}
J^* = \inf_{(\bfbeta, \bfgamma) \in \Linf\times \Linf} \cJ(\bfbeta, \bfgamma \, |\, \bfI_\obs, \bfR_\obs, [0,T]).
\end{equation}
The following result ensures the existence of a unique minimizer under very mild constraints.

\begin{Proposition}
\label{prop:SIR}
Let $N \in \bN^*$ and $T>0$. For any real-valued functions $S_\obs,I_\obs,R_\obs$ {of class ${\mathcal C}^1$, }defined on $[0,T]$ satisfying 
\begin{enumerate}
\item[(i)] $S_\obs(t)+I_\obs(t)+R_\obs(t) = N$ for every $t \in [0,T]$,
\item[(ii)] $S_\obs$ in nonincreasing on $[0,T]$,
\item[(iii)]$R_\obs$ is nondeacreasing on $[0,T]$, 
\end{enumerate}
there exists a unique minimizer $(\bfbeta^*_\obs, \bfgamma^*_\obs)$ to problem \eqref{eq:fit-1}.
\end{Proposition}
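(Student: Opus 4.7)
The strategy is to show that the infimum $J^*$ equals zero and is attained by a \emph{unique} pair $(\bfbeta^*_\obs, \bfgamma^*_\obs)$ obtained by solving the SIR equations algebraically for $\beta$ and $\gamma$ in terms of the observation data. Since the cost $\cJ$ is nonnegative, it suffices to exhibit one pair $(\bfbeta, \bfgamma) \in \Linf \times \Linf$ such that the associated SIR trajectory coincides pointwise with $(S_\obs, I_\obs, R_\obs)$, and then to show any other minimizer must agree with it almost everywhere.

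\textbf{Existence.} Assuming (as is implicit in any epidemiologically admissible data) that $I_\obs$ and $S_\obs$ are bounded away from zero on $[0,T]$, I would define
\[
\gamma^*_\obs(t) \coloneqq \frac{R_\obs'(t)}{I_\obs(t)}, \qquad \beta^*_\obs(t) \coloneqq -\frac{N\, S_\obs'(t)}{I_\obs(t)\, S_\obs(t)}.
\]
The $\mathcal{C}^1$ assumption makes $S_\obs'$ and $R_\obs'$ continuous on $[0,T]$, hence bounded, so both functions belong to $\Linf$. Assumptions (ii) and (iii) make them nonnegative. It remains to verify that $(S_\obs, I_\obs, R_\obs)$ is indeed the SIR trajectory driven by $(\bfbeta^*_\obs, \bfgamma^*_\obs)$: plugging the formulas into the first and third SIR equations gives back $S_\obs'$ and $R_\obs'$ by construction, while the second equation is obtained as a consequence of (i) by differentiating the identity $S_\obs + I_\obs + R_\obs = N$. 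Since the initial condition is also matched, the Cauchy–Lipschitz theorem guarantees this is \emph{the} unique SIR solution, so the associated cost $\cJ$ is zero, i.e.\ $J^* = 0$.

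\textbf{Uniqueness.} Suppose $(\bfbeta, \bfgamma)$ is any other minimizer, and let $(\bfS, \bfI, \bfR) = \SIR(\bfbeta, \bfgamma, [0,T])$. Since $\cJ = 0$ and the integrand is continuous (the SIR solution is $\mathcal{C}^1$), we get $I(t) = I_\obs(t)$ and $R(t) = R_\obs(t)$ for every $t\in [0,T]$; the conservation law $S+I+R=N$ then forces $S(t) = S_\obs(t)$. Using the third SIR equation we obtain $\gamma(t) I_\obs(t) = R_\obs'(t)$ for a.e.\ $t$, so $\gamma = \gamma^*_\obs$ a.e.; similarly the first equation yields $\beta = \beta^*_\obs$ a.e., proving uniqueness in $\Linf$.

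\textbf{Main obstacle.} The delicate point is handling the potential vanishing of $I_\obs$ (where both $\beta^*_\obs$ and $\gamma^*_\obs$ become $0/0$, and where $\gamma$ is not individually determined since it is multiplied by $I_\obs$) and of $S_\obs$ (same remark for $\beta$). On the existence side, these degeneracies require a bound $I_\obs \geq c > 0$ and $S_\obs \geq c > 0$ to keep $\beta^*_\obs, \gamma^*_\obs$ essentially bounded; on the uniqueness side, any interval where $I_\obs \equiv 0$ would make $\gamma$ non-unique. One therefore either adds the mild hypothesis that $I_\obs$ and $S_\obs$ are bounded away from zero, or understands uniqueness modulo the null set $\{I_\obs = 0\} \cup \{S_\obs = 0\}$; this is the only subtlety, the rest of the argument being essentially an explicit inversion of the SIR dynamics.
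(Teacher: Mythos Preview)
Your proposal is correct and follows essentially the same approach as the paper: both exhibit the explicit formulas $\beta^*_\obs(t) = -N S_\obs'(t)/(I_\obs(t) S_\obs(t))$ and $\gamma^*_\obs(t) = R_\obs'(t)/I_\obs(t)$ and observe that they drive the SIR dynamics exactly through $(S_\obs, I_\obs, R_\obs)$, making the cost vanish. In fact your argument is more complete than the paper's, which only writes down the existence part and does not address uniqueness at all; you supply the missing uniqueness step and, importantly, you correctly flag the implicit positivity assumption on $I_\obs$ and $S_\obs$ that the paper silently relies on (without it the formulas may fail to lie in $\Linf$ and uniqueness can genuinely break down).
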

\begin{proof}
One can set 
\begin{align}
\begin{cases}
\beta^*_\obs(t) &\coloneqq -  \dfrac{N}{ I_\obs(t) S_\obs(t)} \dfrac{d S_\obs}{dt}(t)  \\
\gamma^*_\obs(t) &\coloneqq \dfrac{1}{I_\obs(t)} \dfrac{d R_\obs}{dt}(t)
\end{cases}
\label{eq:bg-star}
\end{align}
so that
$$
(\bfS_\obs, \bfI_\obs, \bfR_\obs)= \SIR(\bfbeta^*, \bfgamma^*, [0, T])
$$
and
$$
\cJ(\bfbeta^*_\obs, \bfgamma^*_\obs,[0,T]) = 0
$$
which obviously implies that $J^* = 0$. 
\end{proof}

{Note that one can  equivalently set
\begin{align}
\begin{cases}
\beta^*_\obs(t) &\coloneqq -  \dfrac{N}{ I_\obs(t) S_\obs(t)} \dfrac{d S_\obs}{dt}(t)  \\
\gamma^*_\obs(t) &\coloneqq \dfrac{1}{I_\obs(t)} \left[\dfrac{d I_\obs}{dt}(t) - \dfrac{\beta^*_\obs(t) I_\obs(t) S_\obs(t)}{N} \right]
\end{cases}
\label{eq:bg-star-var}
\end{align}
or again
\begin{align}
\begin{cases}
\gamma^*_\obs(t) &\coloneqq \dfrac{1}{I_\obs(t)} \dfrac{d R_\obs}{dt}(t)\\
\beta^*_\obs(t) &\coloneqq \dfrac{N}{I_\obs(t)S_\obs(t)} \left[\dfrac{d I_\obs}{dt}(t) - \gamma^*_\obs(t)  I_\obs(t)  \right]
\end{cases}
\label{eq:bg-star-var}
\end{align}
}

This simple observation means that there exists a time-dependent SIR model which can  perfectly fit the data of any (epidemiological) \col{evolution} that satisfies properties (i), (ii) and (iii).  In particular, we can perfectly fit the series of {sick} people with a time-dependent SIR model (modulo a smoothing of the local oscillations due to noise). Since the health data are usually given on a daily basis, we can approximate  $\bfbeta^*_\obs, \bfgamma^*_\obs$ by approximating the derivatives by classical finite differences in equation \eqref{eq:bg-star}.

The fact that we can build $\bfbeta^*_\obs$ and $\bfgamma^*_\obs$ such that $\cJ(\bfbeta^*_\obs, \bfgamma^*_\obs) = J^* =0$ implies that the family of time-dependent SIR models is rich enough not only to fit the evolution of any epidemiological series, but also to deliver perfect predictions of the health data. However, this great approximation power comes at the cost of defining the parameters $\bfbeta$ and $\bfgamma$ in $\Linf$ which is a space that is too large in order to be able to define any feasible prediction strategy.

In order to pin down a smaller {manifold} where these parameters may vary without sacrificing much on the fitting and forecasting power, our strategy is the following: 
\begin{enumerate}
\item \textbf{Learning phase:} {The fundamental hypothesis  of our approach is the confidence that the specialists of epidemiology have well understood the \col{mechanisms of infection transmission}. The second hypothesis is that these accurate models suffer from the proper determination of the parameters they contain. We thus propose to generate a large number of virtual epidemics, following these mechanistic models with parameters that can be chosen in a vicinity of the suggested parameters values, including the  different initial conditions.}
\begin{enumerate}
\item \textbf{Generate virtual scenarios using detailed models with constant coefficients:}
\begin{itemize}
\item Define the notion of $\DM$ which is most appropriate for the epidemiological study. Several models could be considered. For our numerical application, the detailed models were defined in Section \ref{sec:models}.
\item Define an interval range $\cP\subset\bR^p$ where the parameters $\mu$ of $\DM$ will vary. {We call the solution manifold $\cU$ the set of virtual dynamics over $[0, T+\tau]$, namely
\begin{equation}
\cU \coloneqq \{ \bfu(\mu) = \DM(\mu, [0, T+\tau]) \; : \; \mu \in \cP \}.
\end{equation}
}
\item Draw a finite training set
\begin{equation}
\cP_{\tr} = \{\mu_1,\dots, \mu_K\} \subseteq \cP
\end{equation}
of $K\gg 1$ parameter instances and we compute $\bfu(\mu) = \DM(\mu, [0, T+\tau])$ for $\mu \in \cP_{\tr}$. Each $\bfu(\mu)$ is a virtual epidemiological scenario. An important detail for our prediction purposes is that the simulations are done in $[0, T+\tau]$, that is, we simulate not only in the fitting time interval but also in the prediction time interval. We call
$$
\cU_\tr = \{\bfu(\mu) \; : \; \mu \in \cP_{\tr}\}
$$
the training set of all virtual scenarios. 
\end{itemize}
\item \textbf{Collapse:} Collapse every detailed model $\bfu(\mu) \in \cU_\tr$ into a SIR model following the ideas which we explain in Section \ref{sec:collapse}. For every $\bfu(\mu)$, the procedure gives time-dependent parameters $\bfbeta(\mu)$ and $\bfgamma(\mu)$ and associated SIR solutions $(\bfS, \bfI, \bfR)(\mu)$ in $[0, T+\tau]$. {This yields the sets
\begin{equation}
\cB_\tr \coloneqq
\{\bfbeta(\mu) \; : \; \mu \in \cP_\tr \}
\quad  \text{and}\quad
\cG_\tr \coloneqq
\{\bfgamma(\mu) \; :\; \mu \in \cP_\tr\}.
\label{eq:training-sets-beta-gamma}
\end{equation}
}
\item \textbf{Compute reduced models:}

We apply model reduction techniques using $\cB_\tr$ and $\cG_\tr$ as training sets in order to build two basis
$$
\rB_n = \vspan \{b_1,\dots, b_n\},
\quad
\rG_n = \vspan \{g_1,\dots, g_n\} \subset L^\infty([0, T+\tau], \bR),
$$
which are defined over $[0, T+\tau]$.  {The space $\rB_n$ is such that it approximates well all functions $\bfbeta(\mu) \in \cB_\tr$ (resp.~all $\bfgamma(\mu)\in \cG_\tr$ can be well approximated by elements of  $\rG_n$)}. We outline in Section \ref{sec:MOR} the methods we have  explored in our numerical tests.
\end{enumerate}
\item \textbf{Fitting on the reduced spaces:} We next solve the fitting problem \eqref{eq:fit-1} in the interval $[0, T]$ by searching $\bfbeta$ (resp. $\bfgamma$) in $\rB_n$ (resp. in $\rG_n$) instead of in $\Linf$, that is,
\begin{equation}
\label{eq:JstarMOR}
J^*_{(\rB_n, \rG_n)}
= \min_{(\bfbeta, \bfgamma) \in \rB_n\times \rG_n}
\cJ(\bfbeta, \bfgamma \; |\; \bfI_\obs, \bfR_\obs, [0, T]).
\end{equation}
Note that the respective dimensions of $\rB_n$ and $\rG_n$ can be different, for simplicity we take them equal in the following.
Obviously, since $\rB_n$ and $\rG_n\subset \Linf$, we have that
$$
J^* \leq J^*_{(\rB_n, \rG_n)},
$$
but we numerically observe that the function $n\mapsto J^*_{(\rB_n, \rG_n)}$ decreases very rapidly as $n$ increases, which indirectly confirms the fact that the manifold generated by the two above models accommodates well the COVID-19 epidemics.

The solution of problem \eqref{eq:JstarMOR} gives us coefficients $(c^*_i)_{i=1}^n$ and $(\tilde c^*_i)_{i=1}^n\in \bR^n$ such that the time-dependent parameters
\begin{align*}
\beta^*_n(t) &= \sum_{i=1}^n c^*_i b_i(t), \quad \forall t \in [0, T+\tau], \\
\gamma^*_n(t) &= \sum_{i=1}^n \tilde c^*_i g_i(t).
\end{align*}
{achieve} the minimum \eqref{eq:JstarMOR}.
\item \textbf{Forecast:} For a given dimension $n$ of the reduced spaces, propagate in $[0, T+\tau]$ the associated SIR model
$$
(\bfS_n^*, \bfI_n^*, \bfR_n^*) = \SIR(\bfbeta^*_n, \bfgamma^*_n, [0, T+\tau])
$$
The values $I_n^*(t)$ and $R_n^*(t)$ for $t\in [0, T[$ are by construction close to the observed data $\bfI_\obs, \bfR_\obs$ (up to some numerical optimization error). The values $I_n^*(t)$ and $R_n^*(t)$ for $t\in [T, T+\tau]$ are then used for prediction.
\item \textbf{Forecast Combination/Aggregation of Experts (optional step):} By varying the dimension $n$ and using different model reduction approaches, we can easily produce a collection of different forecasts and the question of how to select the best predictive model arises. Alternatively, we can also resort to Forecast Combination techniques \cite{poncela2011forecast}: denoting $(I_1, R_1), \dots, (I_P, R_P)$ the different forecasts, the idea is to search for an appropriate linear combination
$$
I^{\text{FC}}(t) = \sum_{p=1}^P w_p I_p(t)
$$
and similarly for $R$. Note that these combinations do not need to involve forecasts from our methodology only. Other approaches like time series forecasts could also be included. One simple forecast combination is the average, in which all alternative forecasts are given the same weight $w_p = 1/P,\, p=1,\dots P$. More elaborate approaches consist in estimating the weights that minimize a loss function involving the forecast error.
\end{enumerate}
Before going into the details of some of the steps, three remarks are in order:
\begin{enumerate}
\item To bring out the essential mechanisms, we have idealized some elements in the above discussion by omitting certain unavoidable discretization aspects. To start with, the ODE solutions cannot be computed exactly but only up to some accuracy given by a numerical integration scheme. In addition, the optimal control problems \eqref{eq:fit-1} and \eqref{eq:JstarMOR} are non-convex. As a result, in practice we can only find a local minima. Note however that modern solvers find solutions which are very satisfactory for all practical purposes. In addition, note that solving the control problem in a reduced space as in \eqref{eq:JstarMOR} could be interpreted as introducing a regularizing effect with respect to the control problem \eqref{eq:fit-1} in the full $\Linf$ space. It is to be expected that the search of global minimizers is facilitated in the reduced landscape.
\item \textbf{\fitIR~and \fitbg:} A variant for the fitting problem \eqref{eq:JstarMOR} which we have studied in our numerical experiments is to replace the cost function $\cJ(\bfbeta, \bfgamma \; |\; \bfI_\obs, \bfR_\obs, [0, T])$ by the cost function 
\begin{equation}
\label{eq:cJ2}
\widetilde{\cJ}(\bfbeta, \bfgamma \, | \, \bfbeta^*_\obs, \bfgamma^*_\obs, [0,T]) \coloneqq \int_0^T \left( \vert \bfbeta - \bfbeta^*_\obs \vert^2 + \vert \bfgamma - \bfgamma^*_\obs)\vert^2 \right)\dt.
\end{equation}
In other words, we use the variables $\bfbeta^*_\obs$ and $\bfgamma^*_\obs$ from \eqref{eq:bg-star} as observed data instead of working with the observed health series $\bfI_\obs$, $\bfR_\obs$. In Section \ref{sec:numerical-Paris}, we will refer to the standard fitting method as \fitIR~and to this variant as \fitbg.
\item \col{The fitting procedure works both on the components of the reduced basis and the initial time of the epidemics to minimize the loss function but, for simplicity, this last optimization is not reported in the notation.}
\end{enumerate}

\subsection{Details on Step 1-(b): Collapsing the detailed models into SIR dynamics}
\label{sec:collapse}
Let
$$
\bfu(\mu) = \DM(\mu, [0, T+\tau]) \; \in \Linftaud
$$
be the solution in $[0, T+\tau]$ to a detailed model for a given vector of parameters $\mu \in \bR^d$. Here $d$ is possibly large ($d=11$ in the case of $\Colizza$ model and $d=5$ in the case of $\Magal$'s one). The goal of this section is to explain how to collapse the detailed dynamics $\bfu(\mu)$ into a \SIR~dynamics with time-dependent parameters. The procedure can be understood as a projection of a detailed dynamics into the family of dynamics given by \SIR~models with time-dependent parameters.

For the $\Colizza$ model, we collapse the variables by setting
\begin{align*}
\bfS^\collapse &= \bfS + \bfE \\
\bfI^\collapse &= \bfIp + \bfIa + \bfIps + \bfIms + \bfIss +\bfC + \bfH \\
\bfR^\collapse &= \bfR + \bfD
\end{align*}
Similarly, for the $\Magal$ model, we set
\begin{align*}
\bfS^\collapse &=  \bfS + \bfE_{1i} \\
\bfI^\collapse &=   \bfE_{2i} + \bfI_{i} + \bfU_{i} \\
\bfR^\collapse &= \bfR
\end{align*}
Note that $\bfS^\collapse $, $\bfI^\collapse $ and $\bfR^\collapse $ depend on $\mu$ but we have omitted the dependence to lighten the notation.

Once the collapsed variables are obtained, we identify the time-dependent parameters $\bfbeta$ and $\bfgamma$ of the $\SIR$ model by solving the fitting problem
\begin{equation}
\label{eq:fit-2}
(\bfbeta,\bfgamma)  \in \arginf_{(\bfbeta, \bfgamma) \in    L^\infty([0, T+\tau], \bR)  \times L^\infty([0, T+\tau], \bR)  } \cJ(\bfbeta, \bfgamma \, |\, \bfI^\collapse, \bfR^\collapse, [0,T+\tau])
\end{equation}
where
$$
(\bfS, \bfI, \bfR)= \SIR(\bfbeta, \bfgamma, [0, T+\tau]).
$$
{Note that problem \eqref{eq:fit-2} has the same structure as problem \eqref{eq:fit-1}, the difference coming from the fact that the collapsed variables $\bfI^\collapse$, $\bfR^\collapse$ in \eqref{eq:fit-2} play the  role of the health data $\bfI_\obs$, $\bfR_\obs$ in \eqref{eq:fit-1}. Therefore, it follows from Proposition \ref{prop:SIR} that problem \eqref{eq:fit-2}  has a very simple solution since it suffices to apply formula \eqref{eq:bg-star} for solving it. Note here that the exact derivatives of $\bfS^\collapse$, $\bfI^\collapse$ and $\bfR^\collapse$ can be obtained from the \DM.}

Since the solution $(\bfbeta,\bfgamma)$ to \eqref{eq:fit-2} depends on the parameter $\mu$ of the detailed model, repeating the above procedure for every detailed scenario $\bfu(\mu)$ {for any $\mu\in\cP_{\tr} $} yields the two families of time-dependent functions {$\cB_\tr=\{ \bfbeta(\mu)  \; : \; \mu \in \cP_\tr\} $  and $\cG_\tr=\{ \bfgamma(\mu)  \; : \; \mu \in \cP_\tr\} $ defined on the interval $[0, T+\tau]$ which were introduced in section \eqref{eq:training-sets-beta-gamma}.}

\subsection{Details on Model Order Reduction}
\label{sec:MOR}
Model Order Reduction is a family of methods aiming at approximating a set of solutions of parametrized PDEs or ODEs (or related quantities) with linear spaces, which are called reduced models or reduced spaces. In our case, the sets to approximate are
$$
\cB = \{ \bfbeta(\mu) \,:\, \mu \in \cP \}
\quad \text{and}\quad 
\cG = \{ \bfgamma(\mu) \,:\, \mu \in \cP \},
$$
where each $\mu$ is the vector of parameters of the detailed model which take values over $\cP$, and $\bfbeta(\mu)$ and $\bfgamma(\mu)$ are the associated time-dependent coefficients of the collapsed SIR evolution. In the following, we view $\cB $ and $\cG$ as subsets of $L^2([0,T])$, and we denote by $\Vert  \cdot \Vert$ and $\left<\cdot, \cdot\right>$ its norm and inner product. Indeed, in view of Proposition \ref{prop:SIR}, $\cB$ and $\cG\subset\Linf \subset L^2([0, T])$.

Let us carry the discussion for $\cB$ (the same will hold for $\cG$). If we measure performance in terms of the worst error in the set $\cB$, the best possible performance that reduced models of dimension $n$ can achieve is given by the Kolmogorov $n$-width
$$
d_n(\cB )_{L^2([0, T])} \coloneqq \inf_{\substack{Y\in L^2([0, T]) \\ \dim(Y)\leq n}} \max_{u\in \cB} || u - P_Y u \Vert
$$
where $P_Y$ is the orthogonal projection onto the subspace $Y$. In the case of measuring errors in an average sense, the benchmark is given by
$$
\delta^2_n(\cB, \nu )_{L^2([0, T])} \coloneqq \inf_{\substack{Y\in L^2([0, T])\\ \dim(Y)=n}} \int_{\cP} || u(y) - P_Y u(y) \Vert^2 \dd \nu(y)
$$
where $\nu$ is {some given} measure on $\cP$.

In practice, building spaces that meet these benchmarks is {generally} not possible. However, it is possible to build {sequences of} spaces {for which their error decay}  is comparable to the one given by $\left( d_n(\cB )_{L^2([0, T])} \right)_n$ or $\left( \delta_n(\cB )_{L^2([0, T])} \right)_n$. As a result, when the Kolmogorov width decays fast, the constructed reduced spaces will deliver a very good approximation of the set $\cB$ with few modes (see \cite{BCDDPW2011, DPW2013, CD2015, MMT2016}).

We next present the reduced models that we have used in our numerical experiments. Other methods could of course be considered and we refer to  \cite{QMN2015, BCOW2017, HRS2019, MP2020} for general references on model reduction. We carry the discussion in a fully discrete setting in order to simplify the presentation and keep it as close to the practical implementation as possible. All the claims below could be written in a fully continuous sense at the expense of introducing additional mathematical objects such as certain Hilbert-Schmidt operators to define the continuous version of the Singular Value Decomposition (SVD).

We build the reduced models using the two discrete training sets of functions $\cB_\tr = \{ \bfbeta_i \}_{i=1} ^{K}$ and $\cG_\tr = \{ \bfgamma_i \}_{i=1} ^{K}$ from $\cB$ and $\cG$ where $K$ denotes the number of virtual scenarios considered. The sets have been generated in step 1-(b) of our general pipeline (see Section \ref{sec:methodo}).  

We consider a discretization of the time interval  $[0, T+\tau]$ into a set of $Q \in \bN^*$ points as follows $\{t_1=0, \cdots, t_P =T, \cdots, t_Q= T+\tau\}$ where $P < Q$. Thus, we can represent each function $\bfbeta_i$ as a vector of $Q$ values  
$$
\bfbeta_i = (\beta_i(t_1), \cdots,  \beta_i(t_Q))^T \in \bR_+^Q.
$$
and hence assemble all the functions of the family $\{ \bfbeta_i \}_{i=1} ^{K}$ into a matrix $M_\cB \in \bR_+^{Q\times K}$. The same remark applies for the family $\{ \bfgamma_i \}_{i=1} ^{K}$ that gives a matrix $M_\cG \in \bR_+^{Q \times K}$.

\begin{enumerate}
\item \textbf{SVD:} The eigenvalue decomposition of the correlation matrix $M_\cB^T M_\cB\in \bR^{K \times K}$ writes
$$
M_\cB^T M_\cB = V \Lambda V^T,
$$ 
where $V=(v_{i,j})\in \bR^{K\times K}$ is an orthogonal matrix and $\Lambda\in \bR^{K\times K}$ is a  diagonal matrix with non-negative entries which we denote $\lambda_i$ and order them in decreasing order. The {$\ell^2(\bR^{Q})$}-orthogonal basis functions $\{\bfb_1,\dots, \bfb_K\}$ are then given by the linear combinations
$$
\bfb_i = \sum_{j=1}^K v_{j,i} \bfbeta_j,\quad 1\leq i \leq K.
$$
{For $n\le K$, } the space
$$
\rB_n = \vspan\{ \bfb_1,\dots \bfb_n \}
$$
is the best $n$-dimensional space to approximate the set $\{\bfbeta_i\}_{i=1}^K$ in the average sense. We have
$$
\delta_n( \{\bfbeta_i\}_{i=1}^K )_{\ell^2(\bR^{Q+1})} = 
\left(\frac{1}{K} \sum_{i=1}^K || \bfbeta_i - P_{\rB_n} \bfbeta_i \Vert_{\ell^2(\bR^{Q+1})}^2 \right)^{1/2}
=
\left( \sum_{i>n}^K \lambda_i \right)^{1/2}
$$
and the average approximation error is given by the sum of the tail of the eigenvalues.

Therefore the SVD method is particularly efficient if there is a fast decay of the eigenvalues, meaning that the set $\cB_\tr = \{\bfbeta_i\}_{i=1}^K$ can be approximated by only few modes. However, note that by construction, this method does not ensure positivity in the sense that $P_{\rB_n} \beta_i(t)$ may become negative for some $t\in [0,T]$ although the original function $\bfbeta_i(t)\geq 0$ for all $t\in [0,T]$. This is due to the fact that the vectors {$\bfb_i$} are not necessarily nonnegative. As we will see later, in our study, ensuring positivity especially for extrapolation (i.e., forecasting) is particularly important, and motivates the next methods.

\item \textbf{Nonnegative Matrix Factorization} (NMF, see \cite{PT1994,gillis2014and}): NMF is a variant of SVD involving nonnegative modes and expansion coefficients. In this approach, we build a family of non-negative functions $\{\bfb_i\}_{i=1}^n$ and we approximate each $\bfbeta_i$ with a linear combination 
\begin{equation}
\label{eq:LC-NMF}
\bfbeta^{\NMF}_i  = \sum_{j=1}^n  a_{i,j} \bfb_j,\quad 1\leq i \leq K,
\end{equation}
where for every $1\leq i\leq K$ and $1 \leq j \leq n$, the coefficients $a_{i,j} \geq 0$ and the basis function $\bfb_j \geq 0$.  In other words, we solve the following constrained optimization problem 
\begin{equation*}
{(W^*, B^*)  \in \argmin_{(W,B) \in  \bR_+^{K \times n}  \times  \bR_+^{n \times Q}  }  \| M_\cB - WB \|_F^2 .}
\end{equation*}
We refer to \cite{gillis2014and} for further details on the NMF and its numerical aspects.

\item \textbf{Enlarged Nonnegative Greedy (ENG): greedy algorithm with projection on an extended cone of positive functions:}
\col{We now present our novel model reduction method}, which is of interest in itself since it allows to build reduced models that preserve positivity and even other types of bounds. The method stems from the observation that NMF approximates functions in the cone of positive functions of $\vspan\{\bfb_i \geq 0\}_{i=1}^n$ since it imposes that $a_{i,j}\geq 0$ in the linear combination \eqref{eq:LC-NMF}. However, note that the positivity of the linear combination is not equivalent to the positivity of the coefficients $a_{i,j}$ since  there are obviously linear combinations involving very small $a_{i,j}< 0$ for some $j$ which may still deliver a nonnegative linear combination $\sum_{j=1}^n  a_{i,j} \bfb_j $. We can thus widen the cone of linear combinations yielding positive values by carefully including these negative coefficients $a_{i,j}$. One possible strategy for this is proposed in Algorithm \ref{alg:cone}, which describes a routine that we call $\ENLARGECONE$. The routine
$$
\{ \bfpsi_1,\dots, \bfpsi_n \} = \ENLARGECONE[\{ \bfb_1,\dots, \bfb_n \}, \e]
$$
 takes a  set of nonnegative functions $\{\bfb_1, \dots, \bfb_n\}$ as input, and modifies each function $\bfb_i$ by iteratively adding negative linear combinations of the other basis functions $\bfb_j$ for $j\neq i$ (see line 11 of the routine). The coefficients are chosen in an optimal way {so as to maintain the positivity of the final linear combination while minimizing the $L^\infty$-norm}. The algorithm returns a set of functions
$$
\bfpsi_i = \bfb_i - \sum_{j\neq i}\sigma^i_j \bfb_j,\quad \forall i\in \{1,\dots, n\}
$$
with $\sigma^i_j\geq 0$. Note that the algorithm requires to set a tolerance parameter $\e>0$ for the computation of the $\sigma^i_j$.

Once we have run $\ENLARGECONE$, the approximation of any function $\bfbeta$ is then sought as
\begin{equation}
\label{2K10b}
\bfbeta^{(EC)}  = \argmin_{c_1,\dots, c_n\geq 0} \Vert \bfbeta - \sum_{j=1}^n  c_{j} \bfpsi_j\Vert^2_{L^2([0, T+\tau])}
\end{equation}

We emphasize that the routine is valid for any set of nonnegative input functions. We can thus apply  $\ENLARGECONE$ to the functions $\{\bfb_i\geq 0\}_{i=1}^n$ from NMF, but also  to the functions selected by a greedy algorithm such as the following:
\begin{itemize}
\item For $n=1$, find
$$
\bfb_1 = \argmax_{\bfbeta \in \{ \bfbeta_i \}_{i=1} ^{K}} \Vert \bfbeta \Vert^2_{L^2([0, T+\tau])}
$$
\item At step $n>1$, we have selected the set of functions $\{\bfb_1,\dots, \bfb_{n-1}\}$. We next find
$$
\bfb_n = \argmax_{\bfbeta \in \{ \bfbeta_i \}_{i=1} ^{K}}  \min_{c_1,\dots, c_n\geq 0} \Vert \bfbeta  -\sum_{j=1}^{n-1} c_{j} \bfb_j\Vert^2_{L^2([0, T+\tau])}
$$

\end{itemize}
In our numerical tests, we call Enlarged Nonnegative Greedy (ENG) the routine involving the above greedy algorithm combined with our routine $\ENLARGECONE$.


\renewcommand{\algorithmicrequire}{\textbf{Input:}}
\renewcommand{\algorithmicensure}{\textbf{Output:}}
\begin{algorithm}[h]
  \begin{algorithmic}[]
  \REQUIRE Set of nonnegative functions $\{ \bfb_1,\dots, \bfb_n \}$. Tolerance $\e>0$.
    \FOR {$i\in \{1,\dots, n\}$}
      \State Set $\sigma^i_j =0,\quad \forall j\neq i.$
      \FOR {$\ell \in \{1,\dots, n\}$} 
      	\State $\alpha_\ell^{i,*} =   \argmax\{\alpha\geq 0 \; :\; \bigl[\bfb_i - \sum_{j\neq i}\sigma^i_j \bfb_j - \alpha \bfb_\ell \bigr](t) >0, \ \  \forall t \in [0,T+\tau]\}$ 
	\State $\sigma^i_\ell = \sigma^i_\ell + \frac{\alpha_\ell^{i,*}}{2}$ 
      	\WHILE {$\alpha_\ell^{i,*} \geq \e$ } 
      		\State $\alpha_\ell^{i,*} =   \argmax\{\alpha\geq 0 \; :\; \bigl[\bfb_i - \sum_{j\neq i}\sigma^i_j \bfb_j - \alpha \bfb_\ell \bigr](t) >0, \ \  \forall t \in [0,T+\tau]\}$ 
		\State $\sigma^i_\ell = \sigma^i_\ell + \frac{\alpha_\ell^{i,*}}{2}$ 
	\ENDWHILE 
	\ENDFOR 
	\State $\bfpsi_i = \bfb_i - \sum_{j\neq i}\sigma^i_j \bfb_j $ 
    \ENDFOR 
 \ENSURE $\{ \bfpsi_1,\dots, \bfpsi_n \}$ 
  \end{algorithmic}
\caption{$\ENLARGECONE[\{ \bfb_1,\dots, \bfb_n \}, \e]\to \{ \bfpsi_1,\dots, \bfpsi_n \}$}
  \label{alg:cone}
\end{algorithm}

We remark that if we work with positive functions that are upper bounded by a constant $L>0$, we can ensure that the approximations{, denoted as $\Psi$, and} written as a linear combination of basis functions, will also be between these bounds $0$ and $L$ by defining, on the one hand and as we have just done, a cone of positive functions  generated by the above family $\{\psi_i\}_i$, and, on the other hand, by considering the base of the functions $L-\varphi$, $\varphi$ being as above the set all greedy elements of the reduced basis to which we also apply an enlargement of these positive functions. We then impose that the approximation is written as a positive combination of the first (positive) functions and that $L-\Psi$  is also written as a combination with positive components in the second basis.

In this frame, the approximation appears under the form of a least square approximation with $2n$ linear constraints on the $n$ coefficients expressing the fact that in the two above transformed basis the coefficients are nonnegative.
\end{enumerate}

\col{In addition to the previous basis functions, it is possible to include more general/generic basis functions such as polynomial, radial, or wavelet functions in order to guarantee simple forecasting trends. For instance, one can add affine functions in order to include the possibility of predicting with simple linear extrapolation to the range of possible forecasts offered by the reduced model. Given the overall exponential behavior of the health data, we have added an exponential function of the form $b_0(t) = \xi \exp(-\xi' t)$ (respectively $g_0(t) = \psi \exp(- \psi' t)$) to the reduced basis functions $\{b_1,\dots, b_n\}$ and (respectively $\{g_1,\dots, g_n\}$) where the real-valued nonnegative parameters $\xi, \xi',\psi, \psi'$ are obtained through a standard exponential regression from $\bfbeta_\obs^*$ (respectively $\bfgamma_\obs^*$) associated to the targeted profiles of infectious people, that is the profiles defined on the time interval $[0,T]$ that one wants to extrapolate to $]T, \tau]$. In other words, the final reduced models that we use for forecasting are: 
$$
\rB_{n+1} = \vspan \{b_0, b_1,\dots, b_n\},
\quad
\rG_{n+1} = \vspan \{g_0, g_1,\dots, g_n\} \subset L^\infty([0, T+\tau], \bR),
$$
Indeed, including the exponential functions in the reduced models gives access cheaply to the overall behavior of the parameters $\beta$ and $\gamma$, the rest of basis functions generated from the training sets catch the high order approximations and allow then to improve the extrapolation. }

\begin{Remark}
{\textbf{Reduced models on $\cI=\{\textbf{I}(\mu) \,:\, \mu \in \cP\}$ and $\cR=\{\textbf{R}(\mu) \,:\, \mu \in \cP \}$} Instead of applying model reduction to the sets $\cB$ and $\cG$ as we do in our approach, we could apply the same above techniques directly to the sets of solutions $\cI$ and $\cR$ of the SIR models with time-dependent coefficients in $\cB$ and $\cG$. In this case, the resulting approximation would however not follow a SIR dynamics.}
\end{Remark}

\section{Methodology for multiple regions including population mobility data}
\label{sec:multiregional}

The forecasting method of Section \ref{sec:methodo} for one single region can be extended to {the treatment of }multiple regions involving population mobility. The prediction scheme will now be based on a multiregional SIR with time-dependent coefficients. Compared to other more detailed models, its main advantage is that it reduces drastically the number of parameters to be estimated. Indeed, detailed multiregional models such as multiregional extensions of the above $\Colizza$ and $\Magal$ models from Section \ref{sec:collapse} require a number of parameters which quickly grows with the number $P$ of regions involved. Their calibration thus requires large amounts of data which, in addition, may be unknown, very uncertain, or not available. In a forthcoming paper, we will apply the fully multi-regional setting for the post-lockdown period.

The structure of this section is the same as the previous one for the case of a single region. We start by introducing in Section \ref{sec:basics-multiregion} the multi-regional SIR model with time-dependent coefficients and associated detailed models. As any multiregional model, mobility data are required as an input information, and the nature and level of detail of the available data {motivates} certain choices on the modelling of the multiregional SIR (as well as the other detailed models). We next present in Section \ref{sec:forecast-multiregion} the general pipeline, in which we emphasize the high modularity of the approach.

\subsection{Multi-regional compartmental models}
\label{sec:basics-multiregion}
In the spirit of fluid flow modeling, there are essentially two ways of describing mobility between regions:
\begin{itemize}
\item In a Eulerian description, we take the regions as fixed references for which we record incoming and outgoing travels.
\item In a Lagrangian description, we follow the motion of people domiciled in a certain region and record their travels in the territory. We can expect this modeling to be more informative on the geographical spread of the disease but it comes at the cost of additional details on the people's domicile region.
\end{itemize}
Note that both descriptions hold at any coarse or fine geographical level in the sense that what we call the regions could be taken to be full countries, departments within a country, or very small geographical partitions of a territory.  We next describe the multi-regional SIR models with the Eulerian and Lagrangian description of population fluxes which will be the output of our methodology.

\subsubsection{Multi-regional SIR models with time-dependent parameters}

\textbf{Eulerian description of population flux:}
Assume that we have $P$ regions and the number of people in region $i$ is $N_i$ {for $i=1,\dots, P$}. Due to mobility, the population in each region varies, so $N_i$ depends on $t$. However, the total population is assumed to be constant and equal to $N$, that is
$$
N = \sum_{i=1}^P N_i(t), \quad \forall t \geq 0.
$$
For any $t\geq0$, let $\lambda_{i\to j}(t) \in [0,1]$ be the probability that people from $i$ travel to $j$ at time $t$. In other words, $\lambda_{i\to j}(t)N_i(t)\delta t$ is the number of people from region $i$ that have travelled to region $j$ between time $t$ and $t+\delta t$. Note that we have
$$
\sum_{j=1}^P \lambda_{i\to j}(t) = 1, \quad \forall t \geq 0.
$$
Since, for any $\delta t\geq 0$,
$$
N_i(t+\delta t) = N_i(t) - \sum_{j\neq i} \lambda_{i\to j}(t) N_i(t) \delta t + \sum_{j\neq i} \lambda_{j\to i}(t) N_j(t) \delta t
$$
dividing by $\delta t$ and taking the limit $\delta t\to 0$ yields
$$
\frac{\d N_i}{\dt}(t) = -\sum_{j\neq i} \lambda_{i\to j}(t) N_i(t) + \sum_{j\neq i} \lambda_{j\to i}(t) N_j(t).
$$
Note that we have
$$
\sum_{i=1}^P \frac{\d N_i}{\dt}(t) = 0, \quad \forall t \geq 0.
$$
Thus $\sum_i N_i(t) = \sum_i N_i(0) = N$, which is consistent with our assumption that the total population is constant.

The time evolution of the $N_i$ is known in this case if we are given the $\lambda_{i\to j}(t)$ from Eulerian mobility data. In addition to this mobility data, we also have the data of the evolution of infected and removed people and our goal is to fit a multi-regional SIR model that is in accordance with this data. We propose the following model.

Denoting $S_i$, $I_i$ an $R_i$ the number of Susceptible, Infectious and Removed people in  region $i$ at time $t$, we first have the relation
$$
N_i(t) = S_i(t) + I_i(t) + R_i(t) \quad \Leftrightarrow \quad 1 = \frac{ S_i(t) }{N_i(t)} + \frac{ I_i(t) }{N_i(t)} + \frac{ R_i(t) }{N_i(t)}.
$$
Note that from the second relation, it follows that
\begin{equation}
\label{eq:check}
0 = \frac{\d}{\dt}\frac{ S_i }{N_i} + \frac{\d}{\dt}\frac{ I_i }{N_i} + \frac{\d}{\dt}\frac{ R_i }{N_i}.
\end{equation}
To model the evolution between compartments, one possibility is the following SIR model
\begin{align}
\frac{\d}{\dt}\frac{ S_i }{N_i} &= - \left(\beta_i \lambda_{i\to i} \frac{I_i}{N_i} \label{eq:S}
+ \sum_{j\neq i}\beta_j \lambda_{j\to i} \frac{I_j}{N_j} \right)\frac{S_i}{N_i} \\
\frac{\d}{\dt}\frac{ I_i }{N_i} &= - \frac{\d}{\dt}\frac{ S_i }{N_i} - \gamma_i \frac{I_i}{N_i} \\
\frac{\d}{\dt}\frac{ R_i }{N_i} &=\gamma_i  \frac{I_i}{N_i} ,
\end{align}
The parameters $\beta_i$, $\gamma_i$, $N_i$ depend on $t$ but we have omitted the dependence to ease the reading. Introducing the compartmental densities
\begin{equation}
s_i = \frac{ S_i }{N_i}, \quad i_i = \frac{ I_i }{N_i}, \quad r_i = \frac{ R_i }{N_i},
\end{equation}
the system equivalently reads
\begin{align}
\label{eq:S}
\frac{\d}{\dt}s_i &= - \left(\beta_i \lambda_{i\to i} i_i
+ \sum_{j\neq i}\beta_j \lambda_{j\to i} i_j \right)s_i \\
\frac{\d}{\dt}i_i &= - \frac{\d}{\dt}s_i - \gamma_i i_i \\
\frac{\d}{\dt}r_i &=\gamma_i  i_i,
\end{align}
Before going further, some comments are in order:
\begin{itemize}
\item The model is consistent in the sense that it satisfies \eqref{eq:check} and when $P=1$ we recover the traditional SIR model.
\item Under lockdown measures, $\lambda_{i\to j} \approx \delta_{i,j}$ and the population $N_i(t)$ remains practically constant. As a result, the evolution of each region is decoupled from the others and each region can be addressed with the mono-regional approach.
\item The use of $\beta_j$ in equation \eqref{eq:S} is debatable. When the people from region $j$ arrive to region $i$, it may be reasonable to assume that the contact rate is $\beta_i$.
\item The use of $\lambda_{j\to i}$ in equation \eqref{eq:S} is also very debatable. The probability $\lambda_{j\to i}$ was originally defined to account for the mobility of people from region $j$ to region $i$ without specifying the compartment. However, in equation \eqref{eq:S}, we need the probability of mobility of infectious people from region $j$ to region $i$, which we  denote by $\mu_{j\to i}$ in the following. It seems reasonable to think that $\mu_{j\to i}$ may be smaller than $\lambda_{j\to i}$ because as soon as people become symptomatic and suspect their illness, they will probably not move. Two possible options would be:
\begin{itemize}
\item We could try to make a guess on $\mu_{j\to i}$. If the symptoms arise, say, 2 days after infection and if we recover in 15 days in average, then we could say that $\mu_{j\to i} = 2/15 \lambda_{j\to i}$.
\item Since, the above seems however pretty empirical, {another option is to use  $\lambda_{j\to i}$ and absorb the uncertainty in the values of the $\beta_j$ that one fits.}
\end{itemize}
\item \col{We choose not to add mobility in the $R$ compartment as this does not modify the dynamics of the epidemic spread and only adjustments in the population sizes are needed.}
\end{itemize}
\vspace{0.2cm}
\noindent
\textbf{Lagrangian description of population flux:} We call the above description Eulerian because we have fixed the {regions} as a fixed reference. Another point of view is to follow the trajectories of inhabitants of each region, in the same spirit as when we follow the trajectories of fluid particles.

Let now $S_i$, $I_i$, $R_i$ the number of susceptible, infectious and Removed people who are domiciled in region $i, i=1,\dots P$. It is reasonable to assume that $S_i(t)+I_i(t)+R_i(t)$ is constant in time. However, all the dwellers of region $i$ may not all be in that region at time $t$. Let $\lambda^{(i)}_{j\to k}(t)$ be the probability that susceptible people domiciled at $i$ travel from region $j$ to region $k$ at time $t$. With this notation, $\lambda^{(i)}_{i\to i}(t)$ is the probability that susceptible people domiciled at $i$ remain in region $i$ at time $t$. Similarly, let $\mu^{(i)}_{j\to k}(t)$ be the probability that infectious people domiciled at $i$ travel from region $j$ to $k$ at time $t$. Hence the total number of susceptible and infectious people that are in region $i$ at time $t$ is
\begin{align}
\cS_i(t) &= \sum_{k=1}^P \sum_{j=1}^P \left( \lambda^{(k)}_{j\to i}(t) - \lambda^{(k)}_{i\to j}(t) \right)S_{k}(t) \\
\cI_i(t) &= \sum_{k=1}^P \sum_{j=1}^P \left( \mu^{(k)}_{j\to i}(t) - \mu^{(k)}_{i\to j}(t) \right)S_{k}(t)
\end{align}
We can thus write the evolution over $S_i$, $I_i$, $R_i$ as
\begin{align}
\label{eq:SIR-lagrangian}
\frac{\d S_i}{\dt} &= - \sum_{j=1}^P \sum_{k=1}^P \beta_k(t)\lambda^{(i)}_{j\to k}(t) S_i(t) \cI_k(t)  \\
\frac{\d I_i}{\dt} &= -\frac{\d S_i}{\dt} - \gamma_i(t) I_i(t)  \\
\frac{\d R_i}{\dt} &=  \gamma_i(t) I_i(t) 
\end{align}
Note that $S_i(t)+I_i(t)+R_i(t)$ is constant, which is consistent with the fact that in our model
$$
\frac{\d}{\dt} (S_i+I_i+R_i)=0.
$$
We emphasize that, to implement this model, one needs the Lagrangian mobility data $\lambda^{(i)}_{j\to k}$ for all $(i,j,k)\in \{1,\dots,P\}^3$.
\medskip\\
\textbf{Notation:} In the following, we gather the compartmental variables in vectors
$$
\vec{\bfS} \coloneqq (\bfS)_{i=1}^P, \quad \vec{\bfI} \coloneqq (\bfI)_{i=1}^P, \quad \vec{\bfR} \coloneqq (\bfR)_{i=1}^P
$$
as well as the time-dependent coefficients
$$
\vec{\bfbeta}=(\bfbeta)_{i=1}^P,\quad \vec{\bfgamma}=(\bfgamma)_{i=1}^P.
$$
For any $\vec{\bfbeta}$ and $\vec{\bfgamma}\in \left( \Linf \right)^P$, we denote by
$$
( \vec{\bfS}, \vec{\bfI}, \vec{\bfR} ) = \MRSIR (\vec{\bfS}(0), \vec{\bfI}(0), \vec{\bfR}(0), \vec{\bfbeta}, \vec{\bfgamma}, [0, T] )
$$
the output of any of the above multiregional SIR models. For simplicity in what follows, we will omit the initial condition in the notation.

\subsubsection{Detailed multi-regional models with constant coefficients}
In the spirit of the multi-regional SIR, one can formulate detailed multi-regional versions of more detailed models such as the ones introduced in Section \ref{sec:models}. We omit the details for the sake of brevity.

\subsection{Forecasting for multiple regions with population mobility}
\label{sec:forecast-multiregion}
Similarly as in the mono-regional case, we assume that we are given health data in $[0, T]$ in all regions. The observed data in region $i$ is the series of \col{infected} people, denoted $I_i^\obs$, and recovered people denoted $R_i^\obs$.  They are usually given at a national or a regional scale and on a daily basis. 

We propose to fit the data and provide forecasts with SIR models with time-dependent parameters $\bfbeta_i$ and $\bfgamma_i$ for each region $i$. Like in the mono-regional case, we can prove that such a simple family possesses optimal fitting properties for our purposes. In the current case, the cost function reads
\begin{align*}
&{\cJ(\vec{\bfbeta}, \vec{\bfgamma} \, |\, \vec{\bfI}_\obs, \vec{\bfR}_\obs, [0,T]) }
\coloneqq \sum_{i=1}^P \int_0^T \left( \vert I_i(t) - I_i^\obs(t)\vert^2 + \vert R_i(t) - R_i^\obs(t)\vert^2 \right)\dt \\
&\qquad \text{such that } \left( \vec{\bfS}, \vec{\bfI}, \vec{\bfR} \right) = \MRSIR \left( \vec{\bfbeta}, \vec{\bfgamma}, [0, T] \right),
\end{align*}
and the fitting problem is the optimal control problem of finding 
\begin{equation}
J^* = \inf_{  \vec{\bfbeta}, \vec{\bfgamma} \in \left( \Linf \right)^P \times \left( \Linf \right)^P } {\cJ(\vec{\bfbeta}, \vec{\bfgamma} \, |\, \vec{\bfI}_\obs, \vec{\bfR}_\obs, [0,T]).}
\label{eq:fit-mr}
\end{equation}
The following proposition ensures the existence of a unique minimizer under certain conditions. To prove it, it will be useful to remark that any of the above multi-regional SIR models (see \eqref{eq:S}, \eqref{eq:SIR-lagrangian}) can be written in the general form
\begin{align*}
\frac{\dd \vec{\bfS}}{\dt} &= \mathrm{M}(\Lambda(t), \vec{\bfS}(t), \vec{\bfI}(t)) \vec{\bfbeta} \\
\frac{\dd \vec{\bfI}}{\dt} &= - \frac{\dd \vec{\bfS}}{\dt} - \diag(\bfI(t))\, \vec{\bfgamma} \\
\frac{\dd \vec{\bfR}}{\dt} &=  \diag(\bfI(t)) \vec{\bfgamma},
\end{align*}
where, by a slight abuse of notation, the vectors $\vec{\bfS}$, $\vec{\bfI}$ and $\vec{\bfR}$ are densities of population in the case of the Eulerian approach (see equation \eqref{eq:S}). They are classical population numbers in the case of the Lagrangian approach (see equation \eqref{eq:SIR-lagrangian}). $\diag(\bfI(t))$ is the $P\times P$ diagonal matrix with diagonal entries given by the vector $\bfI(t)$. $\mathrm{M}(\Lambda(t), \vec{\bfS}(t), \vec{\bfI}(t))$ is a matrix of size $P\times P$ which depends on the vectors of susceptible and infectious people $\vec{\bfS}(t)$, $\vec{\bfI}(t)$ and on the mobility matrix $\Lambda$. In the case of the Eulerian description, $\Lambda(t) = (\lambda_{i,j}(t))_{1\leq i, j \leq P}$ and in the case of the Lagrangian approach $\Lambda(t)$ is the $P\times P \times P$ tensor $\Lambda(t)= (\lambda^{(i)}_{j,k}(t))_{1\leq i, j, k \leq P}$. For example, in the case of the Eulerian model \eqref{eq:SIR-lagrangian}, the matrix $M$ reads 
\begin{equation}
\label{eq:M}
\mathrm{M}(\Lambda(t), \vec{\bfS}(t), \vec{\bfI}(t))
= - \diag( \vec{\bfS}(t)) \Lambda^T \diag( \vec{\bfI}(t))
= - ( S_i \lambda_{i\to j} I_j )_{1\leq i, j \leq P}
\end{equation}

\begin{Proposition}
\label{prop:MRSIR}
If $\mathrm{M}(\Lambda(t), \vec{\bfS}(t), \vec{\bfI}(t))$ is invertible for all $t\in [0, T]$, then
there exists a unique minimizer $( \vec{\bfbeta}^*, \vec{\bfgamma}^*)$ to problem \eqref{eq:fit-mr}.
\end{Proposition}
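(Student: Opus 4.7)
The plan is to mimic the strategy of Proposition \ref{prop:SIR}: construct an explicit pair $(\vec{\bfbeta}^*, \vec{\bfgamma}^*)$ that drives the cost functional to zero (so that $J^* = 0$), and then argue that any other minimizer must coincide with it. Implicit in the setup, by analogy with the mono-regional case, I assume the observed data $(\vec{\bfS}_\obs, \vec{\bfI}_\obs, \vec{\bfR}_\obs)$ are of class $\mathcal{C}^1$ and satisfy the natural per-region population balance, so $\vec{\bfS}_\obs$ is determined from $\vec{\bfI}_\obs$, $\vec{\bfR}_\obs$ and the mobility data.

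Using the general form of $\MRSIR$ recalled just above the proposition, the invertibility of $\mathrm{M}(\Lambda(t), \vec{\bfS}_\obs(t), \vec{\bfI}_\obs(t))$ for every $t\in[0,T]$ allows me to define
\[
\vec{\bfbeta}^*(t) \coloneqq \mathrm{M}(\Lambda(t), \vec{\bfS}_\obs(t), \vec{\bfI}_\obs(t))^{-1} \frac{\dd \vec{\bfS}_\obs}{\dt}(t),
\]
and, assuming $I_i^\obs(t)>0$ on $[0,T]$ for every $i$ so that $\diag(\bfI_\obs(t))$ is invertible, to set
\[
\vec{\bfgamma}^*(t) \coloneqq \diag(\bfI_\obs(t))^{-1} \frac{\dd \vec{\bfR}_\obs}{\dt}(t).
\]
Plugging $(\vec{\bfbeta}^*, \vec{\bfgamma}^*)$ into $\MRSIR$ and invoking the conservation $S_i+I_i+R_i=\text{const}$ (Lagrangian) or equivalently of the total population and compartmental densities (Eulerian), one verifies by direct substitution that the resulting $(\vec{\bfS}, \vec{\bfI}, \vec{\bfR})$ coincides with $(\vec{\bfS}_\obs, \vec{\bfI}_\obs, \vec{\bfR}_\obs)$. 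Hence $\cJ(\vec{\bfbeta}^*, \vec{\bfgamma}^* \,|\, \vec{\bfI}_\obs, \vec{\bfR}_\obs, [0,T]) = 0 = J^*$.

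For uniqueness, suppose $(\vec{\bfbeta}', \vec{\bfgamma}')$ is any other minimizer. Since $\cJ\geq 0$ and $J^*=0$, we get $\vec{\bfI}'(t)=\vec{\bfI}_\obs(t)$ and $\vec{\bfR}'(t)=\vec{\bfR}_\obs(t)$ for all $t\in[0,T]$. Conservation then gives $\vec{\bfS}'(t)=\vec{\bfS}_\obs(t)$ so $\tfrac{\dd}{\dt}\vec{\bfS}'=\tfrac{\dd}{\dt}\vec{\bfS}_\obs$. Substituting into the $\MRSIR$ ODE yields
\[
\mathrm{M}(\Lambda(t), \vec{\bfS}_\obs(t), \vec{\bfI}_\obs(t))\,\vec{\bfbeta}'(t) = \frac{\dd \vec{\bfS}_\obs}{\dt}(t), \qquad \diag(\bfI_\obs(t))\,\vec{\bfgamma}'(t) = \frac{\dd \vec{\bfR}_\obs}{\dt}(t),
\]
and the two invertibility assumptions force $\vec{\bfbeta}'=\vec{\bfbeta}^*$, $\vec{\bfgamma}'=\vec{\bfgamma}^*$ almost everywhere on $[0,T]$.

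The main obstacle I expect is less the algebraic content, which is essentially a vector-valued restatement of \eqref{eq:bg-star}, than the implicit data-admissibility hypotheses: the statement tacitly requires $I_i^\obs(t)>0$ on $[0,T]$ for each region (needed for $\diag(\bfI_\obs)$ to be invertible, just as $S_\obs I_\obs > 0$ is implicit in \eqref{eq:bg-star}), plus the matching of the population-balance and mobility data with $\vec{\bfS}_\obs,\vec{\bfI}_\obs,\vec{\bfR}_\obs$. Once these are granted, the only thing left to check is that $\vec{\bfbeta}^*,\vec{\bfgamma}^* \in (\Linf)^P$, which follows from the $\mathcal{C}^1$-regularity of the data together with continuity in $t$ of $\mathrm{M}^{-1}$ and $\diag(\bfI_\obs)^{-1}$ on the compact $[0,T]$.
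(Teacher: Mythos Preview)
Your proof is correct and follows essentially the same approach as the paper: both construct $(\vec{\bfbeta}^*,\vec{\bfgamma}^*)$ by inverting $\mathrm{M}$ and $\diag(\bfI_\obs)$ in the $\MRSIR$ equations to force $\cJ=0$. Your treatment is in fact more complete than the paper's, which simply writes down the formulas and declares $J^*=0$ without spelling out the uniqueness argument or the implicit admissibility hypotheses (positivity of $I_i^\obs$, $\mathcal{C}^1$ regularity) that you rightly make explicit.
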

\begin{proof}
Since we assume that $\mathrm{M}(\Lambda(t), \vec{\bfS}(t), \vec{\bfI}(t))$ is invertible for every $t\in [0,T]$, we can set 
\begin{align}
\begin{cases}
\vec{\bfbeta}^*(t) &\coloneqq M^{-1}(t) \frac{\dd \vec{\bfS}}{\dt}  \\
\vec{\bfgamma}^*(t) &\coloneqq \diag^{-1}(I(t)) \frac{\dd \vec{\bfR}}{\dt}
\end{cases}
\end{align}
or equivalently 
\begin{align}
\begin{cases}
\vec{\bfbeta}^*(t) &\coloneqq M^{-1}(t) \frac{\dd \vec{\bfS}}{\dt}  \\
\vec{\bfgamma}^*(t) &\coloneqq - \diag^{-1}(I(t)) \left(  \frac{\dd \vec{\bfI}}{\dt} + \mathrm{M}(\Lambda(t), \vec{\bfS}(t), \vec{\bfI}(t)) \vec{\bfbeta}^*  \right)
\end{cases}
\end{align}
so that
$$
(\vec{\bfS}_\obs, \vec{\bfI}_\obs, \vec{\bfR}_\obs)= \MRSIR \left( \vec{\bfbeta}^*, \vec{\bfgamma}^*, [0, T] \right)
$$
and
$$
\cJ(\vec{\bfbeta}^*, \vec{\bfgamma}^* \, |\, \vec{\bfI}_\obs, \vec{\bfR}_\obs, [0,T]) = 0
$$
which implies that $J^* = 0$.
\end{proof}

Before going further, let us comment on the invertibility of $\mathrm{M}(\Lambda(t), \vec{\bfS}(t), \vec{\bfI}(t))$ which is necessary in Proposition \ref{prop:MRSIR}. A sufficient condition to ensure it is if the matrix is diagonally dominant row-wise or column-wise. This yields certain conditions on the mobility matrix $\Lambda(t)$ with respect to the values of $\vec{\bfS}(t)$, $\vec{\bfI}(t)$. For example, if $M$ is defined as in equation \eqref{eq:M}, the matrix is diagonally dominant per rows if for every $1\leq i\leq P$,
$$
\lambda_{i\to i}  > \sum_{j\neq i} \lambda_{i\to j} \frac{I_j}{I_i}.
$$
Similarly, if for every $1\leq j\leq P$,
$$
\lambda_{j\to j} > \sum_{i\neq j}  \lambda_{i\to j}  \frac{S_i}{S_j},
$$
then the matrix is diagonally dominant per columns, and guarantees invertibility. Note that any of the above conditions is satisfied in \col{situations} with little or no mobility where $\lambda_{i\to i} \approx \delta_{i,j}$.

Now that we have exactly defined the set up for the multi-regional case, we can follow the same steps of Section \ref{sec:methodo} to derive forecasts involving model reduction for the time-dependent variables $\vec{\bfbeta}$ and $\vec{\bfgamma}$.

\section{Numerical results}
\label{sec:numerical-Paris}

In this section we apply our forecasting method \col{to the ongoing COVID-19 pandemic spread in year 2020 in France which started approximately from February 2020.}  Particular emphasis is put on the first pandemic wave for which we consider the period going from March 19 to May 20, 2020. Due to the lockdown imposed between March 17 and May 11, inter-regional population mobility was drastically reduced in that period. Studies using anonymised Facebook data have estimated the reduction in 80\% (see \cite{FAV}). As a result, it is reasonable to treat each region independently from the rest, and we apply the mono-regional setting of Section \ref{sec:mono-region}. Here, we focus on the case of the Paris region, and we report different forecasting errors obtained using the methods described in Section \ref{sec:mono-region}. \col{Some forecasts are also shown for the second wave for the Paris region between September 24 and November 25.}

The numerical results are presented as follows. Section \ref{sec:health-data} explains the sources of health data. Sections \ref{sec:forecast-compare} and \ref{sec:forecast-ENG} go straight to the core of the results and present a study of the forecasting power of the methods in a two week horizon. Section \ref{sec:forecast-wave-2} displays forecasts on the second wave. \col{Section \ref{sec:forecast-28-days} aims to illustrate the robustness of the forecasting over longer periods of time.} \col{A discussion on the fitting errors of the methods is given in Appendix \ref{appendix:noise}. Additional results highlighting the accuracy of the forecasts are shown in Appendix \ref{appendix:forecasting-error}.} 

\subsection{Data}
\label{sec:health-data}
\col{We use public data from Santé Publique France\footnote{\url{https://www.data.gouv.fr/en/datasets/donnees-hospitalieres-relatives-a-lepidemie-de-covid-19/}} to get the number $I_\obs(t)$ of infected, and $R_\obs(t)$ of removed people. As shown in Figure \ref{fig:data_I_R}, the raw data present some oscillations at the scale of the week, which are due to administrative delays for the cases to be officially reported by hospitals}. For our methodology, we have smoothed the data by applying a 7 days moving average filter. In order to account for the total number of infected people, we also multiply the data by an adjustment factor $f_{\rm adj}=15$ as hinted from the literature\footnote{Indeed, it is said in \cite{mizumoto2020estimating} that "of the 634 confirmed cases, a total of 306 and 328 were reported to be symptomatic and asymptomatic" and in \cite{di9impact} claims that the probability of developing severe symptoms for a symptomatic patient is 0 for children, 0.1 for adults and 0.2 for seniors. Thus if one takes $p=0.13$ as an approximate value of these probabilities,  one may estimate the adjustment factor as $ f_{\rm adj} = \frac{634}{328} \times \frac{1}{p}  \approx 15.$ }. Obviously, this factor is uncertain and could be improved in the light of further retrospective studies of the outbreak. However, note that when {$S \simeq N$ which is the case} in the start of the epidemic, the impact of this factor is negligible in the dynamics as can be understood from \eqref{eq:bg-star}. \col{In addition, since we use the same factor down to provide a forecast on the hospitalized, the influence on the choice is minor.}

\begin{figure}
\centering
\begin{subfigure}{.45\textwidth}
\includegraphics[width=1\textwidth]{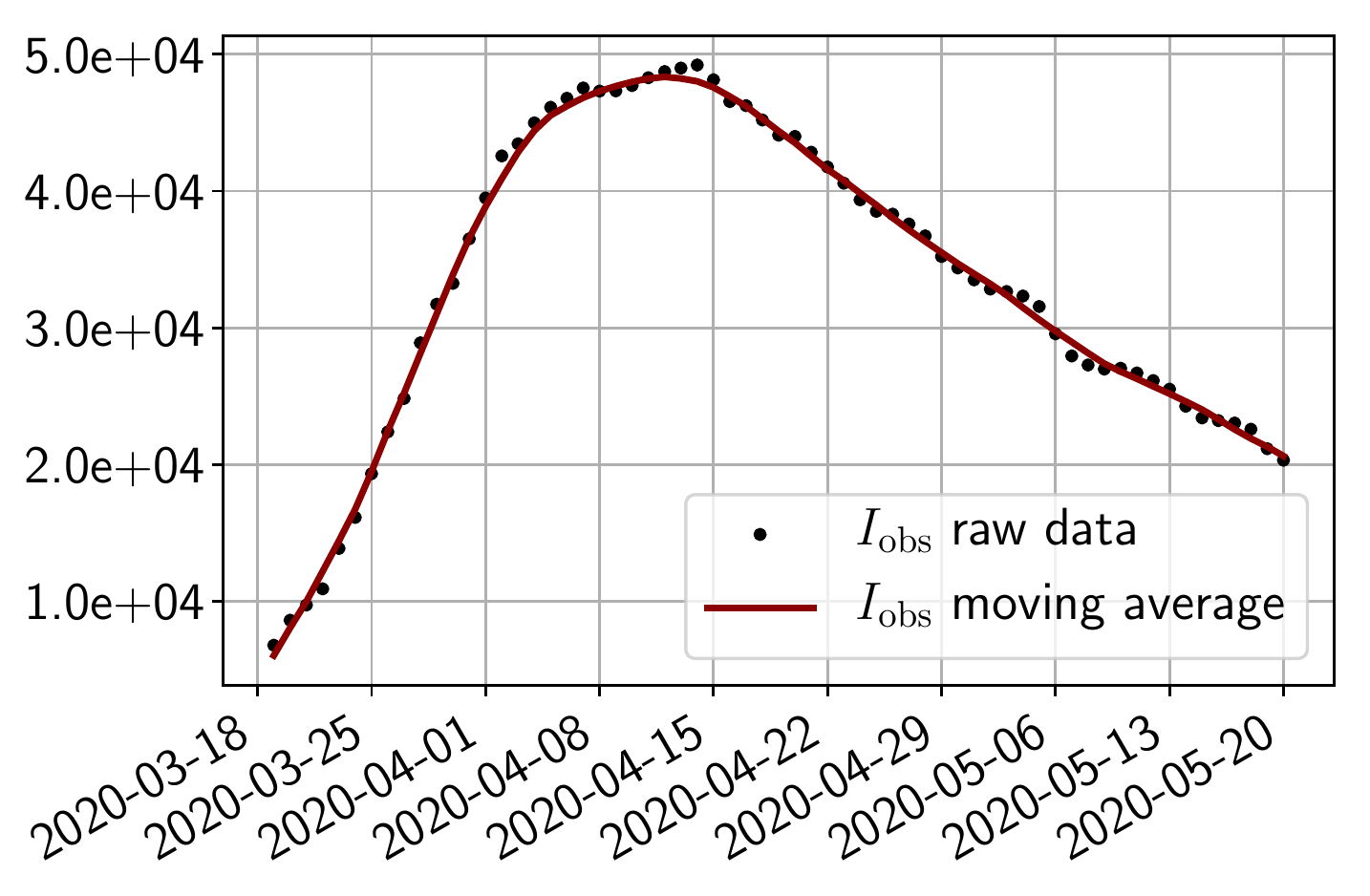}
\caption{\col{Infected}}
\end{subfigure}
\begin{subfigure}{.45\textwidth}
\includegraphics[width=1\textwidth]{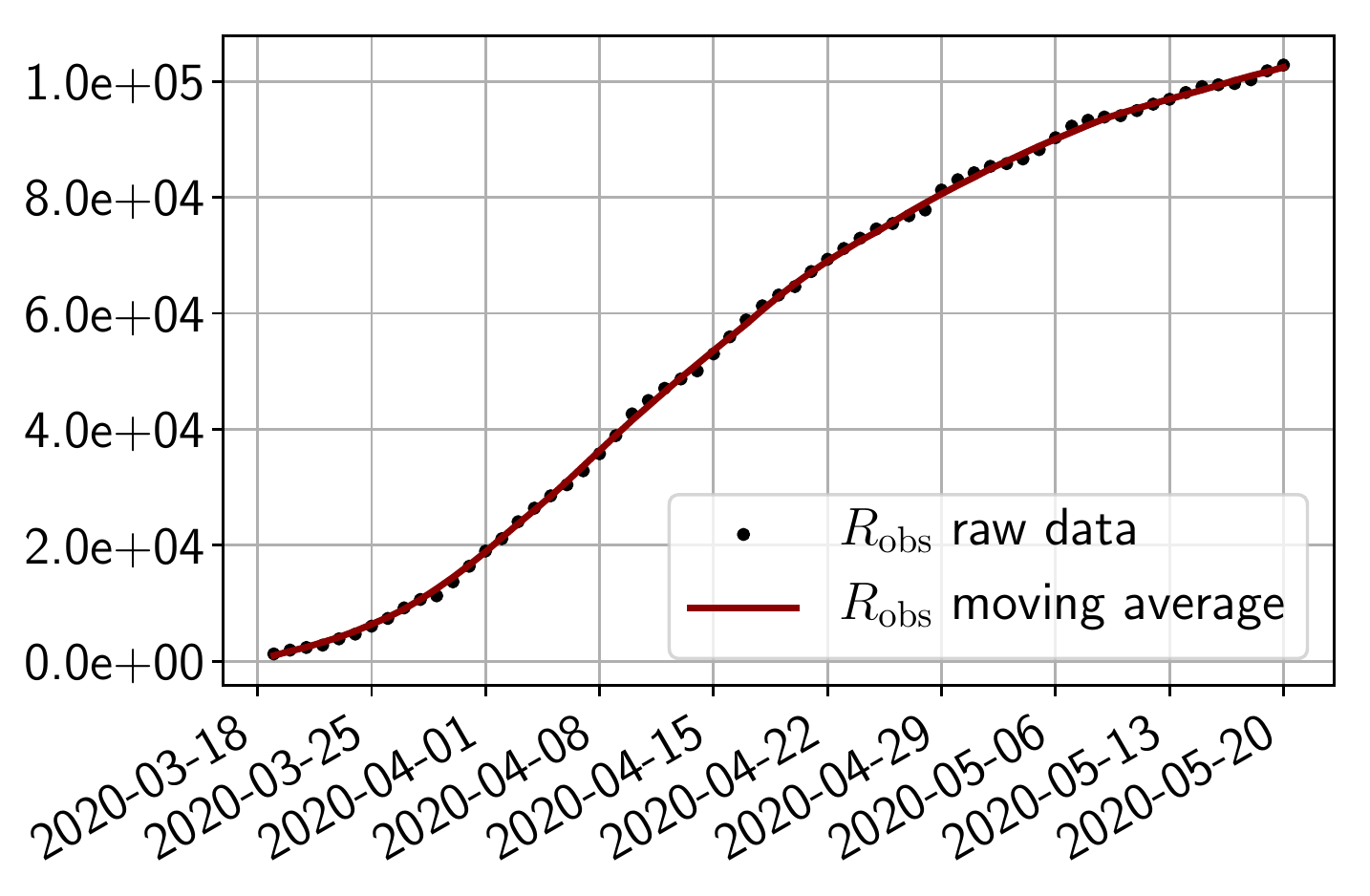}
\caption{\col{Removed}}
\end{subfigure}
\caption{\textbf{Data from $t_0=19/03/2020$ to $T=20/05/2020$}}
\label{fig:data_I_R}
\end{figure}

\subsection{Results}
\label{sec:results-paris}
Using the observations $I_\obs(t)$ and $R_\obs(t)$, we apply a finite difference scheme in formula \eqref{eq:bg-star} to derive $\beta^*_\obs(t)$ and $\gamma^*_\obs(t)$ for $t\in [0, T]$. Figure \ref{fig:beta_gamma_wave_1} shows the values of these parameters as well as the basic reproduction number $R^*_{0, \obs} = \beta_\obs^*/\gamma_\obs^*$ for the first pandemic wave in Paris. 

\begin{figure}
\centering
\begin{subfigure}{.32\textwidth}
\includegraphics[width=1\textwidth]{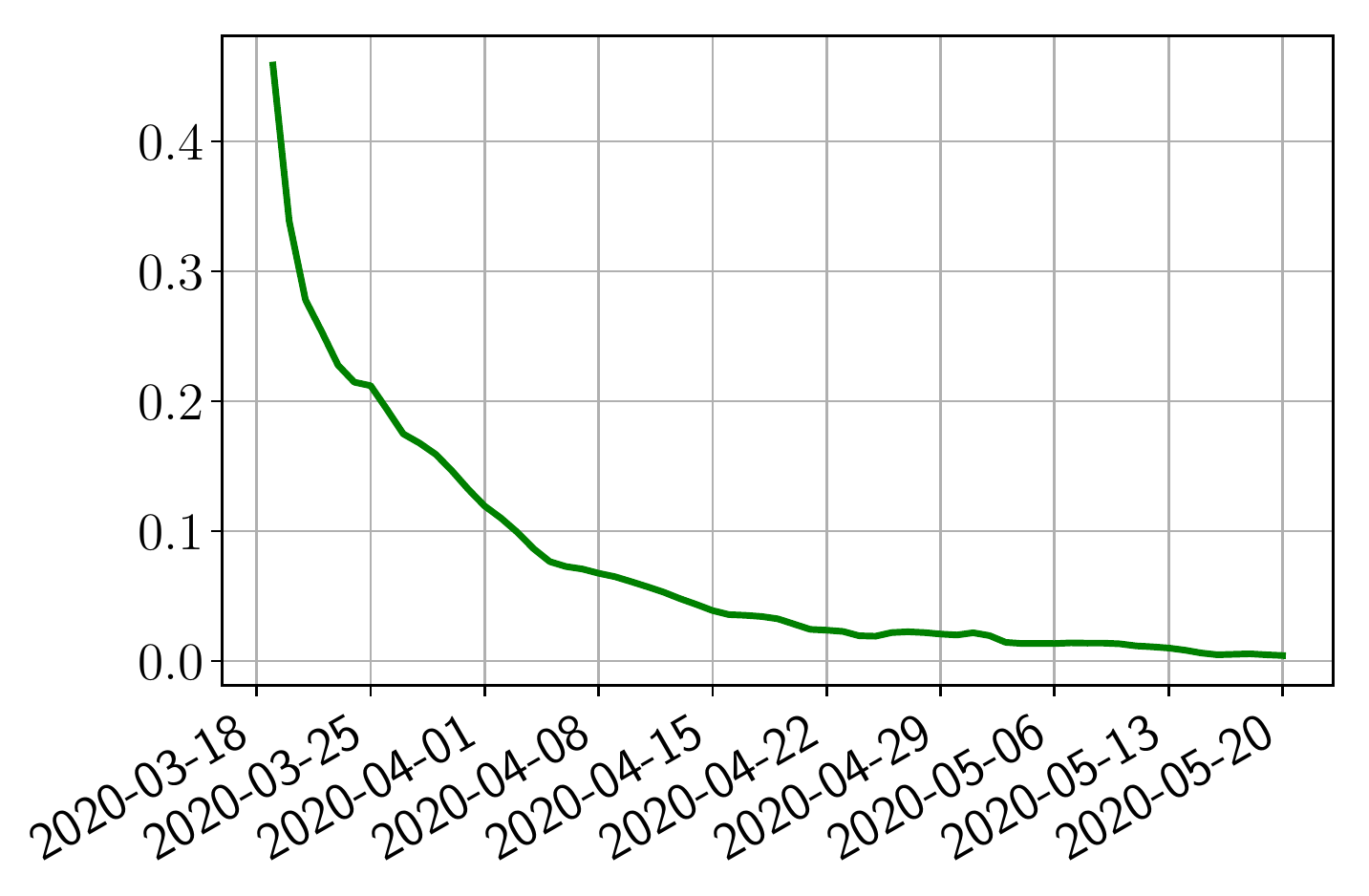}
\caption{$\beta_{\rm obs}^*$}
\end{subfigure}
\begin{subfigure}{.32\textwidth}
\includegraphics[width=1\textwidth]{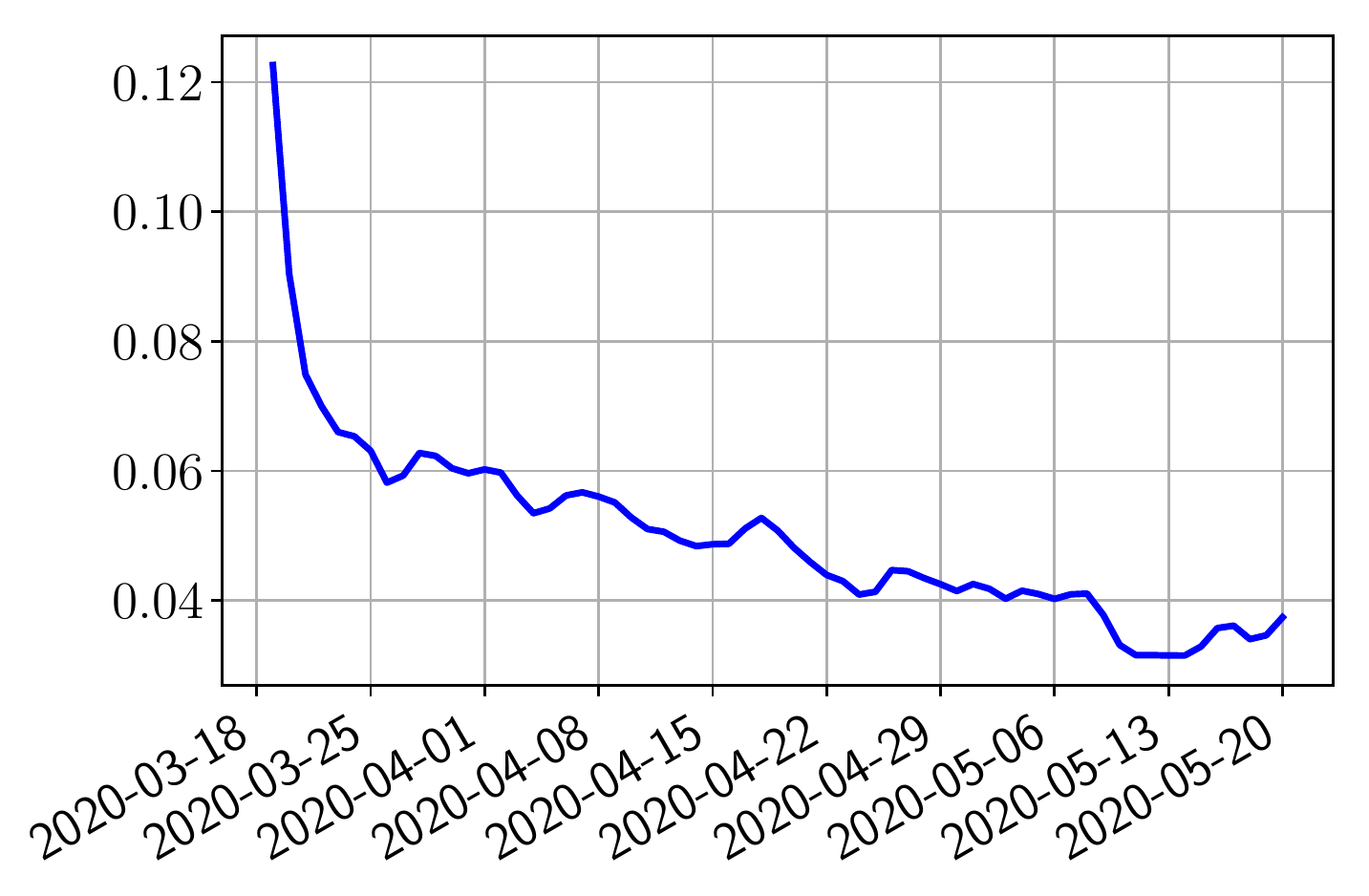}
\caption{$\gamma_{\rm obs}^*$}
\end{subfigure}
\begin{subfigure}{.32\textwidth}
\includegraphics[width=1\textwidth]{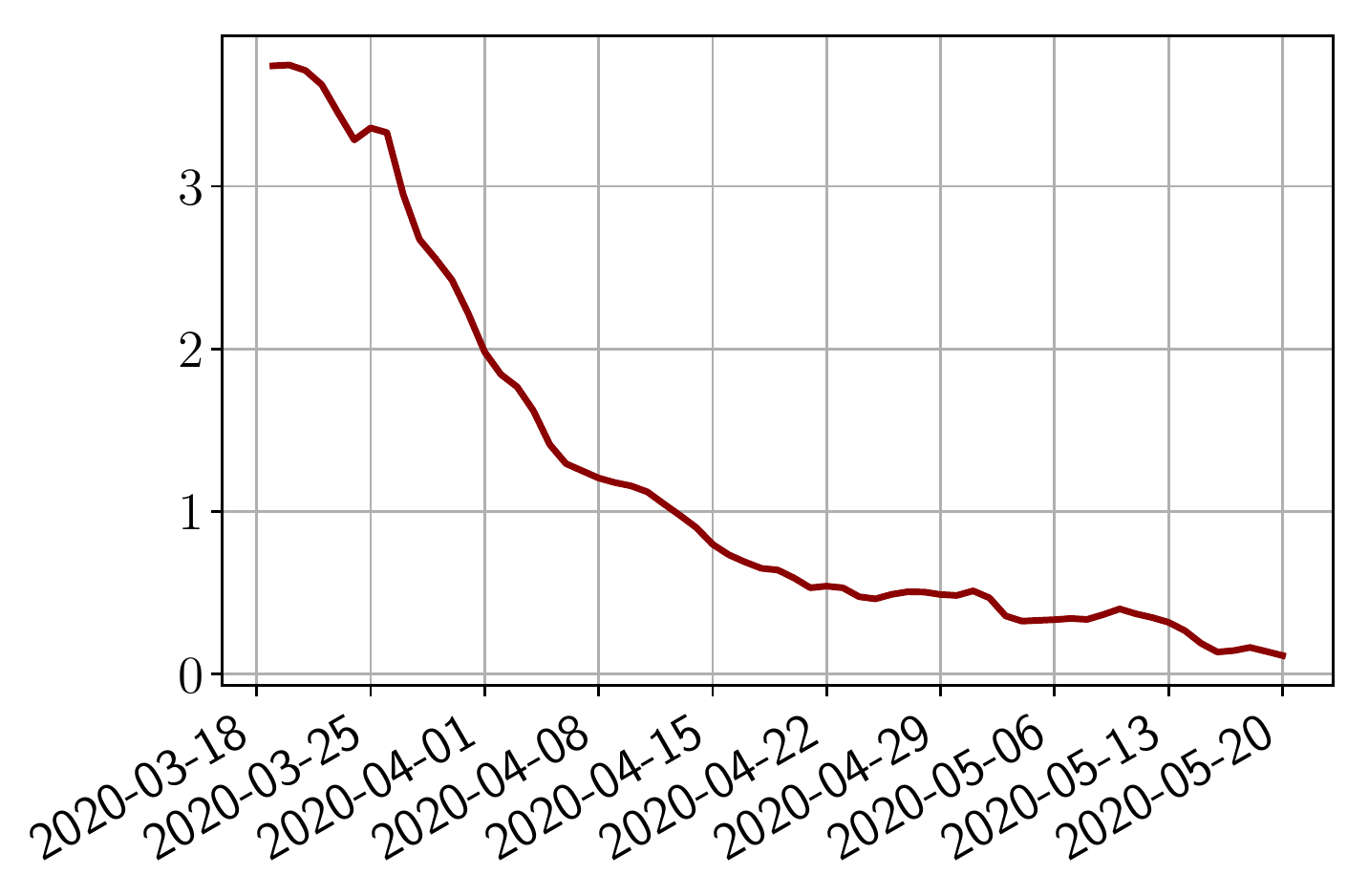}
\caption{$R^*_{0, \obs}$}
\end{subfigure}
\caption{\textbf{$\beta_\obs^*$, $\gamma_\obs^*$, $R^*_{0, \obs} = \beta_\obs^*/\gamma_\obs^*$ deduced from the data from $t_0=19/03/2020$ to $T=20/05/2020$}}
\label{fig:beta_gamma_wave_1}
\end{figure}

We next follow the steps of Section \ref{sec:methodo} to obtain the forecasts. In the learning phase (step 1), we use two parametric detailed models of type \Magal~and \Colizza~to generate training sets $\cB_\tr$ and $\cG_\tr$ composed of $K=2618$ training functions $\bfbeta(\mu)$ and $\bfgamma(\mu)$ where $\mu$ are uniformly sampled in the set of parameters $\cP_\tr$ in the vicinity of the parameter values suggested in the literature \cite{di9impact,magal2020predicting}. Based on these training sets, we finish step 1 by building three types of reduced models: SVD, NMF and ENG (see Section \ref{sec:MOR}).

Given the \col{reduced bases $\rB_{n}$ and $\rG_{n}$}, we next search for the optimal $\bfbeta^*_n\in \rB_{n}$ and $\bfgamma^*_n\in \rG_{n}$  that fit at best the observations (step 2 of our procedure). For this fitting step we consider two loss functions:
\begin{enumerate}
\item \fitIR: loss function $\cJ(\bfbeta, \bfgamma \; |\; \bfI_\obs, \bfR_\obs, [0, T])$ from \eqref{eq:cJ},
\item \fitbg: loss function $\widetilde{\cJ}(\bfbeta, \bfgamma \, | \, \bfbeta^*_\obs, \bfgamma^*_\obs, [0,T])$ from \eqref{eq:cJ2}
\end{enumerate}

We study the performance of each of the three reduced models and the impact of the dimension $n$ of the reduced model in terms of fitting error. \col{The presentation of these results is deferred to Appendix \ref{appendix:noise} in order not to overload the main discussion}. The main conclusion is that \col{the fitting strategy using SVD reduced bases provide smaller errors than NMF and ENG especially when we increase the number of modes $n$}. This is illustrated in Figure \ref{fig:fitting_b_g_1} where we show the fittings obtained with \fitbg~and $n=10$ for the first wave (from $t_0=19/03/2020$ to $T=20/05/2020$). We observe that SVD is the best at fitting $\beta^*_{obs}$ and $\gamma^*_{obs}$ while ENG produces a smoother fitting of the data. Although the smoother fitting with ENG yields larger fitting errors than SVD, we will see in the next section that it yields better forecasts. 
\begin{figure}
\centering
\begin{subfigure}{.45\textwidth}
\includegraphics[width=1\textwidth]{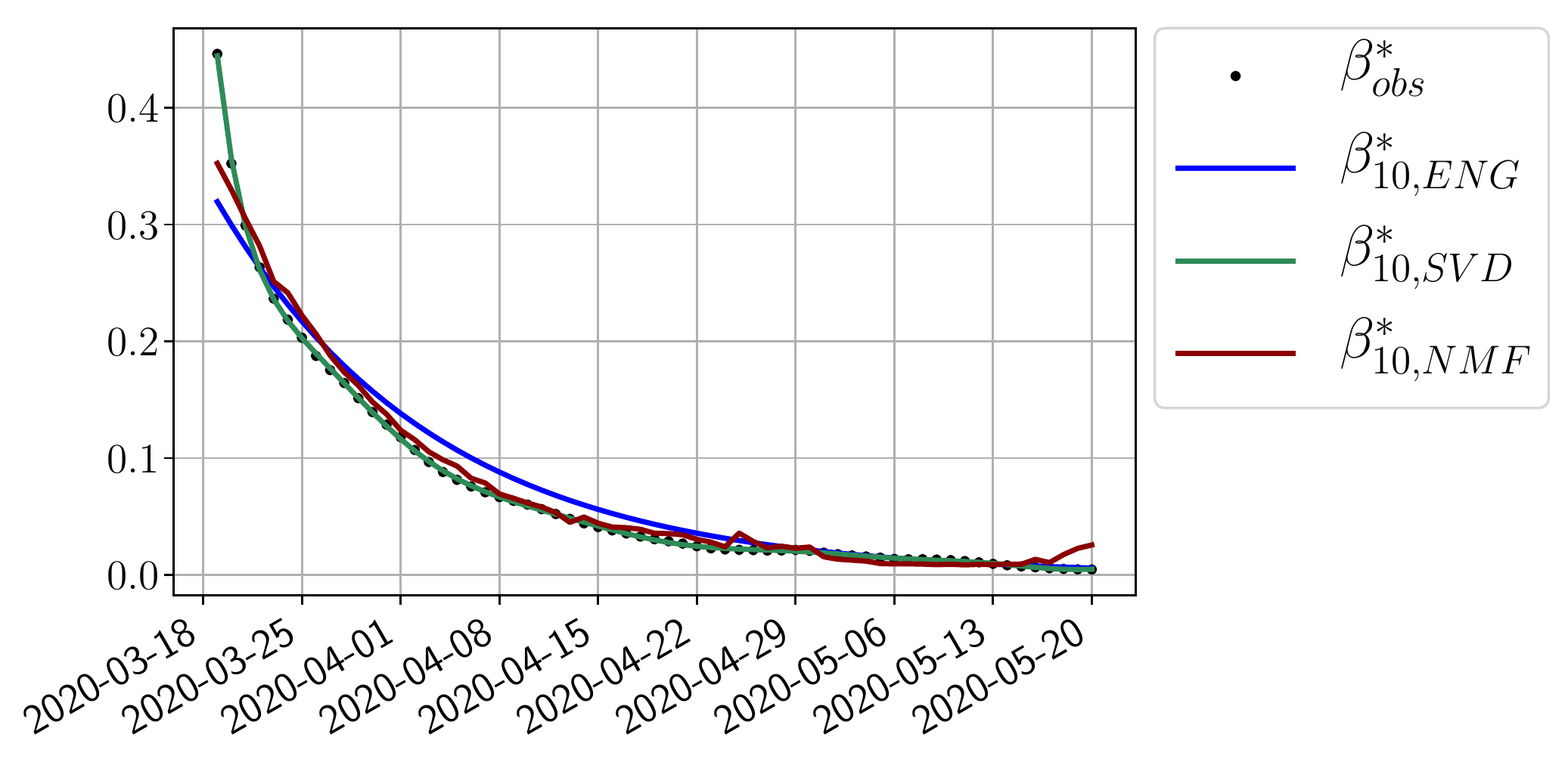}
\caption{$\beta$}
\end{subfigure}
\begin{subfigure}{.45\textwidth}
\includegraphics[width=1\textwidth]{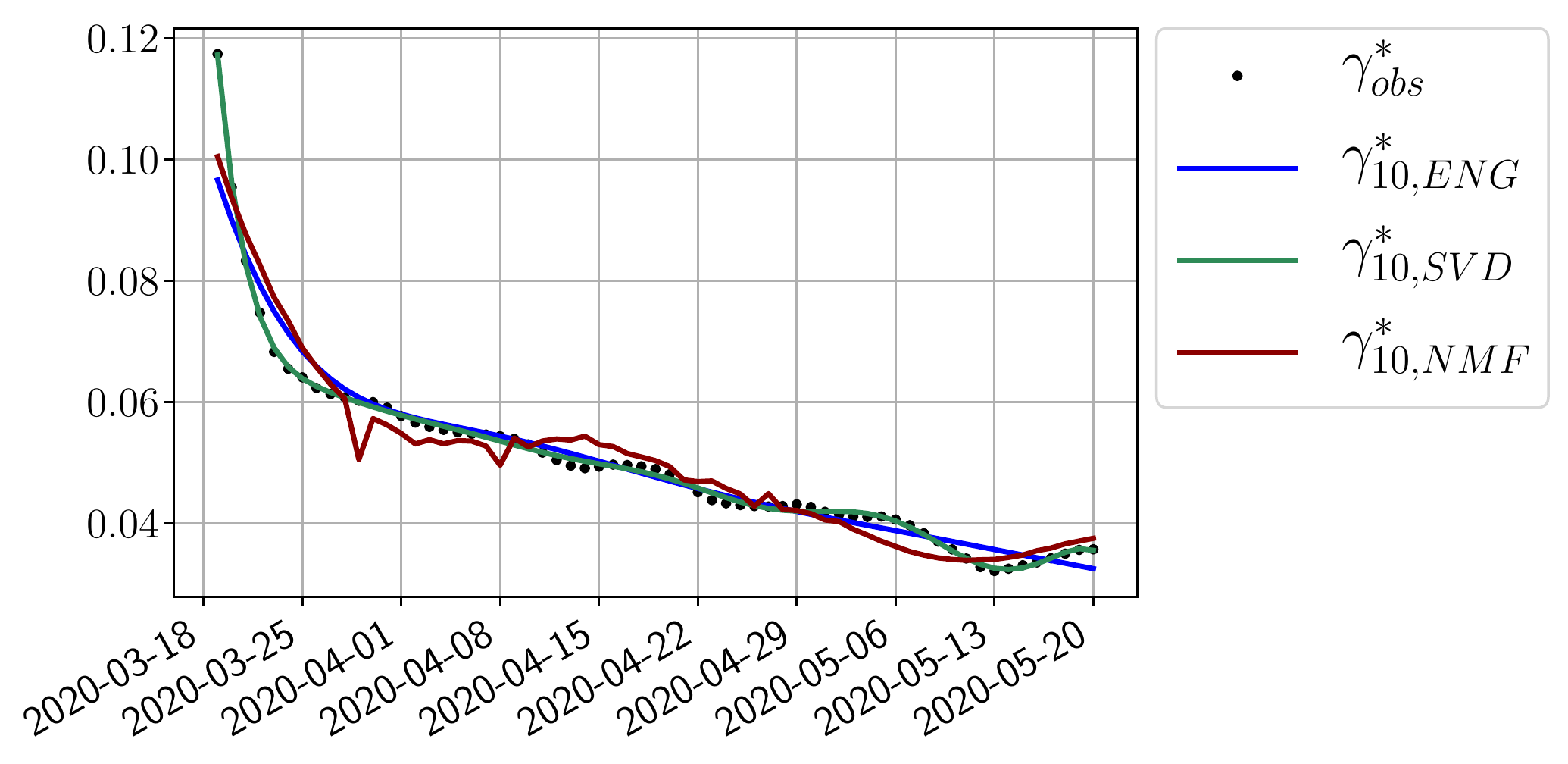}
\caption{$\gamma$}
\end{subfigure}
\caption{\textbf{Fitting from $t_0=19/03/2020$ to $T=20/05/2020$}}
\label{fig:fitting_b_g_1}
\end{figure}

\subsubsection{Forecasting for the first pandemic wave on a 14 day horizon}
\label{sec:forecast-compare}
In this section we illustrate \col{the short-term forecasting behavior of our method}. We consider a forecasting window of $\tau=14$ days and we examine several different starting days in the course of the first pandemic wave. The results are shown in figures \ref{fig:forecast_0104} to \ref{fig:forecast_0505}. Recall that that the forecasting uses the coefficients of the reduced basis obtained by the fitting procedure, but also the optimal initial condition of the forecast that minimizes the error on the three days prior to the start of the forecast. For each given fitting strategy (\fitIR, \fitbg), and each given type of reduced model (SVD, NMF, ENG), we have chosen to plot an average forecast computed with predictions using reduced dimensions $n \in \{5,6,7,8,9,10 \}$. This choice is a simple type of forecast combination but of course other more elaborate aggregation options could be considered. The labels of the plots correspond to the following:
\begin{itemize}
\item $I_{SVD}$, $I_{NMF}$, $I_{ENG}$, $R_{SVD}$, $R_{NMF}$, $R_{ENG}$ are the average forecasts obtained using \fitbg.
\item $I^*_{SVD}$, $I^*_{NMF}$, $I^*_{ENG}$, $R^*_{SVD}$, $R^*_{NMF}$, $R^*_{ENG}$ are the average forecasts obtained using \fitIR.
\end{itemize}
The main observation from Figures \ref{fig:forecast_0104} to \ref{fig:forecast_0505} is that the ENG reduced model is the most robust and accurate forecasting method. Fitting ENG with \fitIR~or \fitbg~does not seem to lead to large differences in the quality of the forecasts but \fitbg~seems to provide slightly better results. \col{This claim is further confirmed by the study of the numerical forecasting errors of the different methods that the reader finds in Appendix \ref{appendix:forecasting-error}}. 

Figures \ref{fig:forecast_0104} to \ref{fig:forecast_0505} also show that the SVD reduced model is very unstable and provides forecasts that blow up. \col{This behavior illustrates the dangers of overfitting}, namely that a method with high fitting power may present poor forecasting power due to instabilities. In the case of SVD, the instabilities stem from the fact that approximations are allowed to take negative values. This is the reason why NMF, which incorporates the nonnegative constraint, performs better than SVD. One of the reasons why ENG outperforms NMF relies in the enlargement of the cone of nonnegative functions (see Section \ref{sec:MOR}). \col{It is important to note that, with ENG the reduced bases are directly related to well chosen virtual scenarios, whereas SVD and NMF rely on matrix factorization techniques that provide purely artificial bases. This makes forecasts from ENG more realistic and therefore more reliable.}
\begin{figure}[H]
\centering
\begin{subfigure}{.45\textwidth}
\includegraphics[width=1\textwidth]{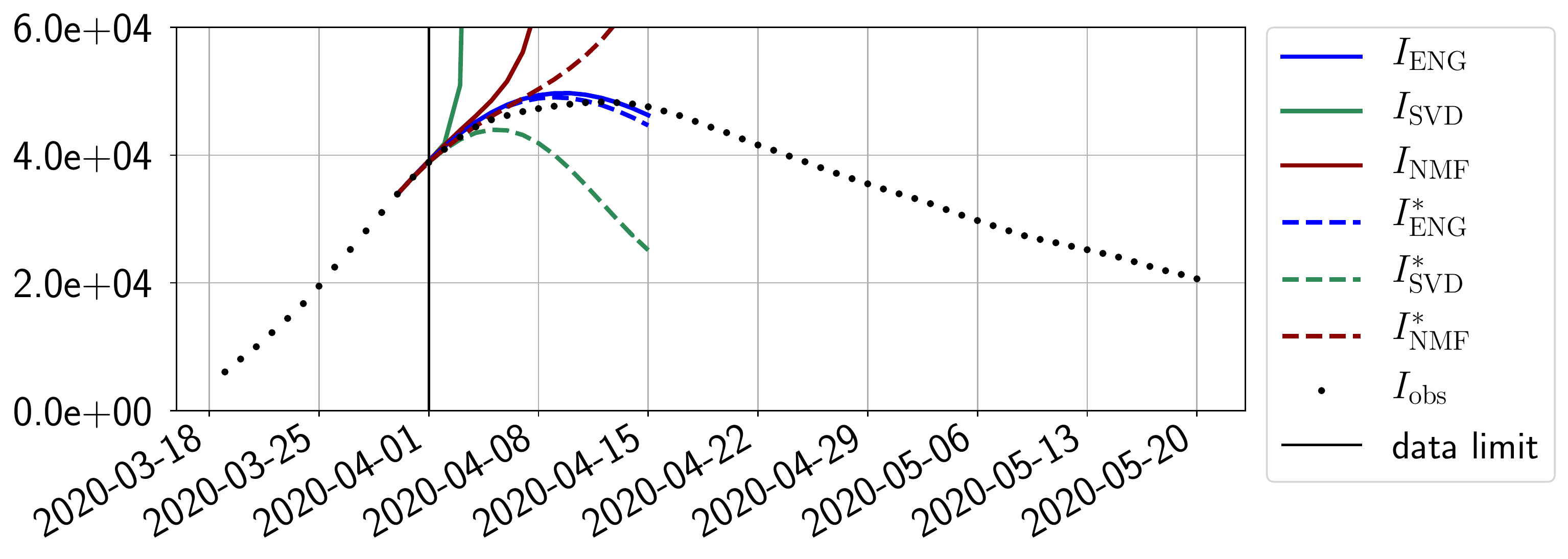}
\caption{Infected}
\end{subfigure}
\begin{subfigure}{.45\textwidth}
\includegraphics[width=1\textwidth]{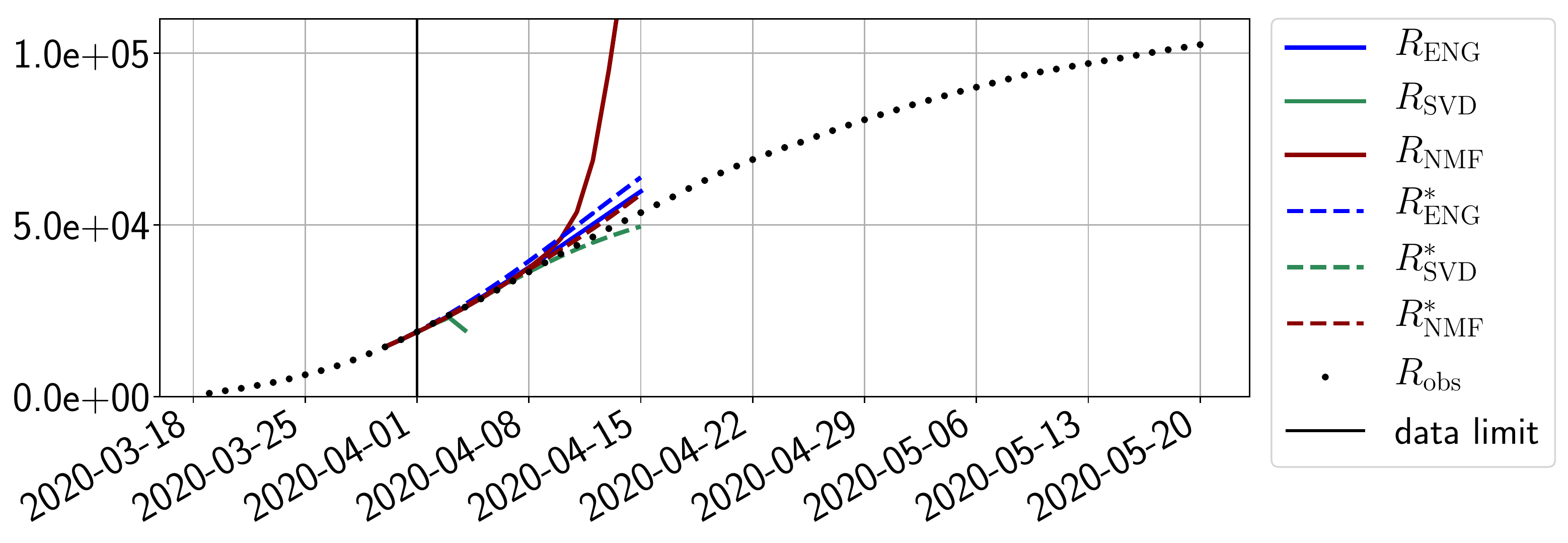}
\caption{Removed}
\end{subfigure}
\caption{14 day forecasts starting from $T=01/04$.}
\label{fig:forecast_0104}
\end{figure}

\begin{figure}[H]
\centering
\begin{subfigure}{.45\textwidth}
\includegraphics[width=1\textwidth]{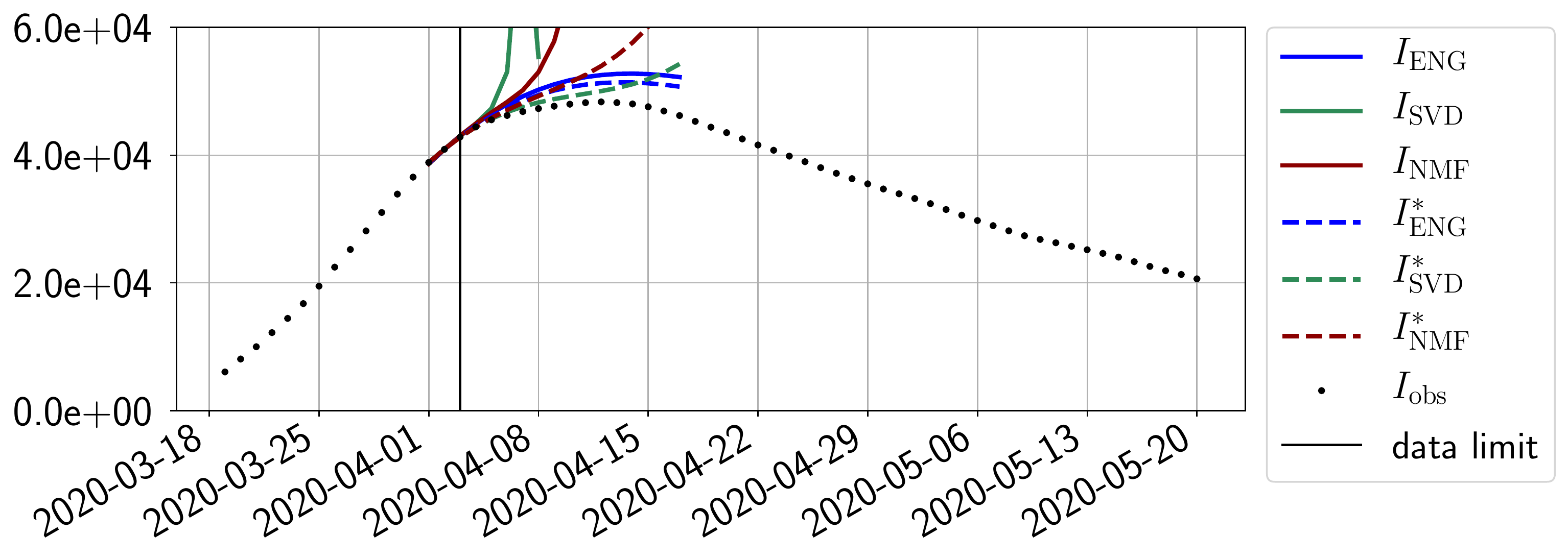}
\caption{Infected}
\end{subfigure}
\begin{subfigure}{.45\textwidth}
\includegraphics[width=1\textwidth]{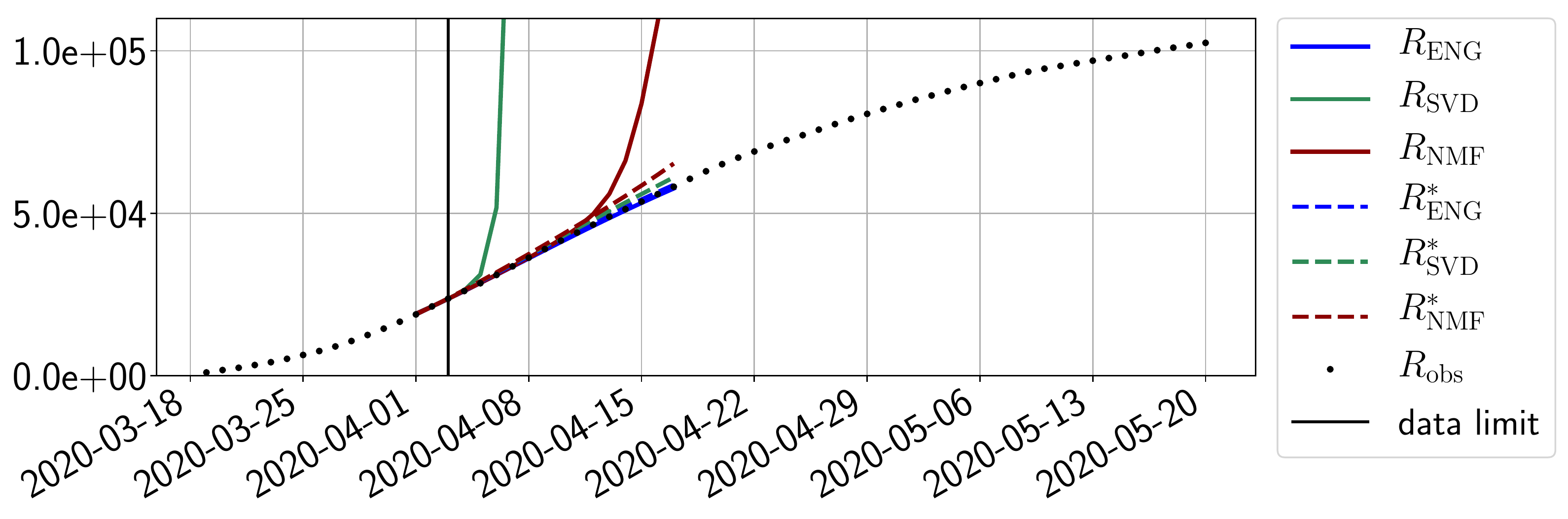}
\caption{Removed}
\end{subfigure}
\caption{14 day forecasts starting from $T=03/04$.}
\label{fig:forecast_0304}
\end{figure}

\begin{figure}[H]
\centering
\begin{subfigure}{.45\textwidth}
\includegraphics[width=1\textwidth]{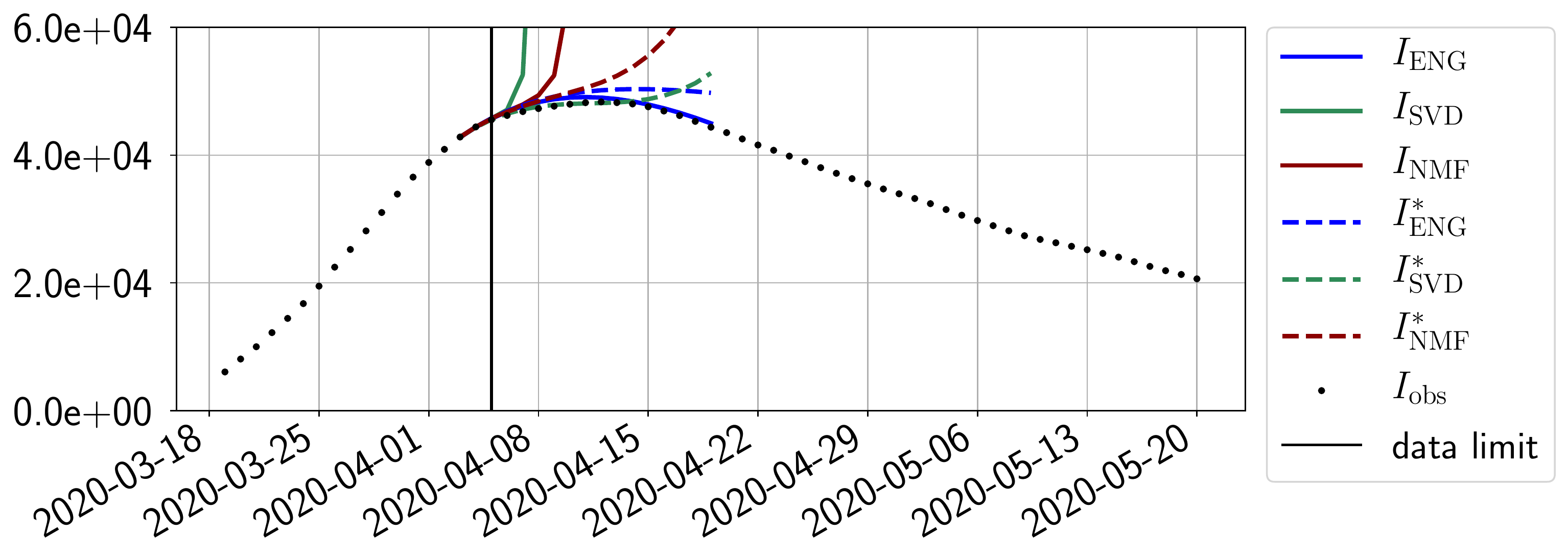}
\caption{Infected}
\end{subfigure}
\begin{subfigure}{.45\textwidth}
\includegraphics[width=1\textwidth]{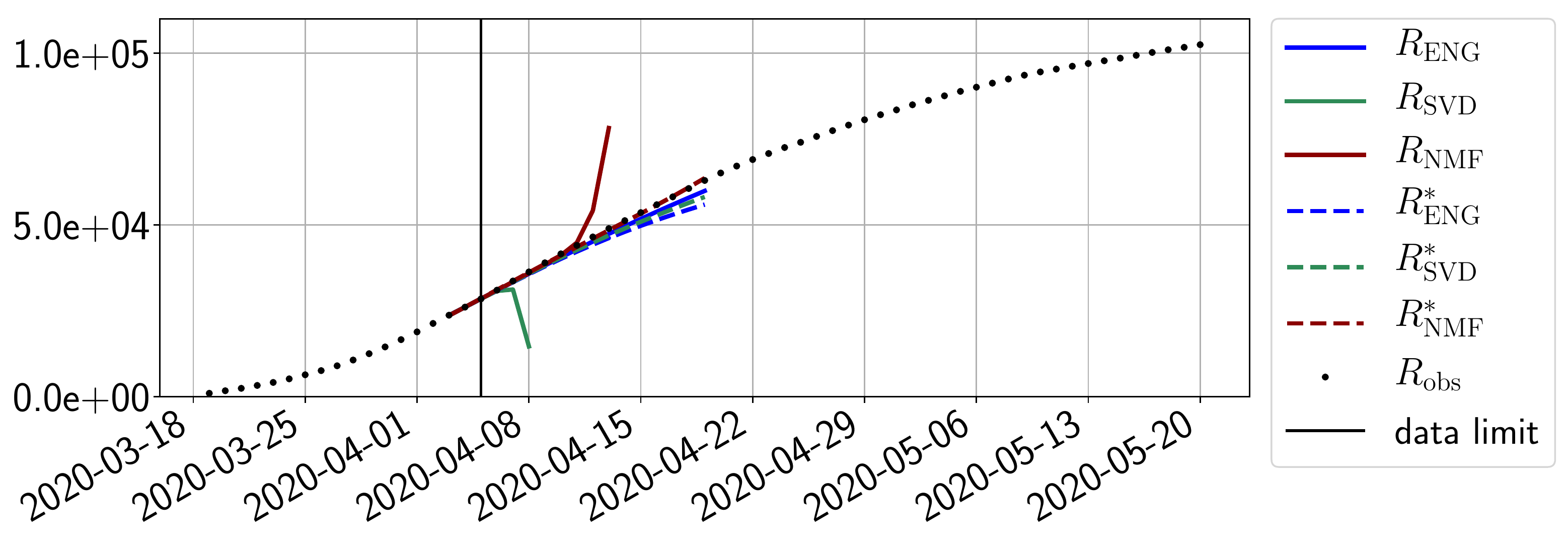}
\caption{Removed}
\end{subfigure}
\caption{14 day forecasts starting from $T=05/04$.}
\label{fig:forecast_0504}
\end{figure}

\begin{figure}[H]
\centering
\begin{subfigure}{.45\textwidth}
\includegraphics[width=1\textwidth]{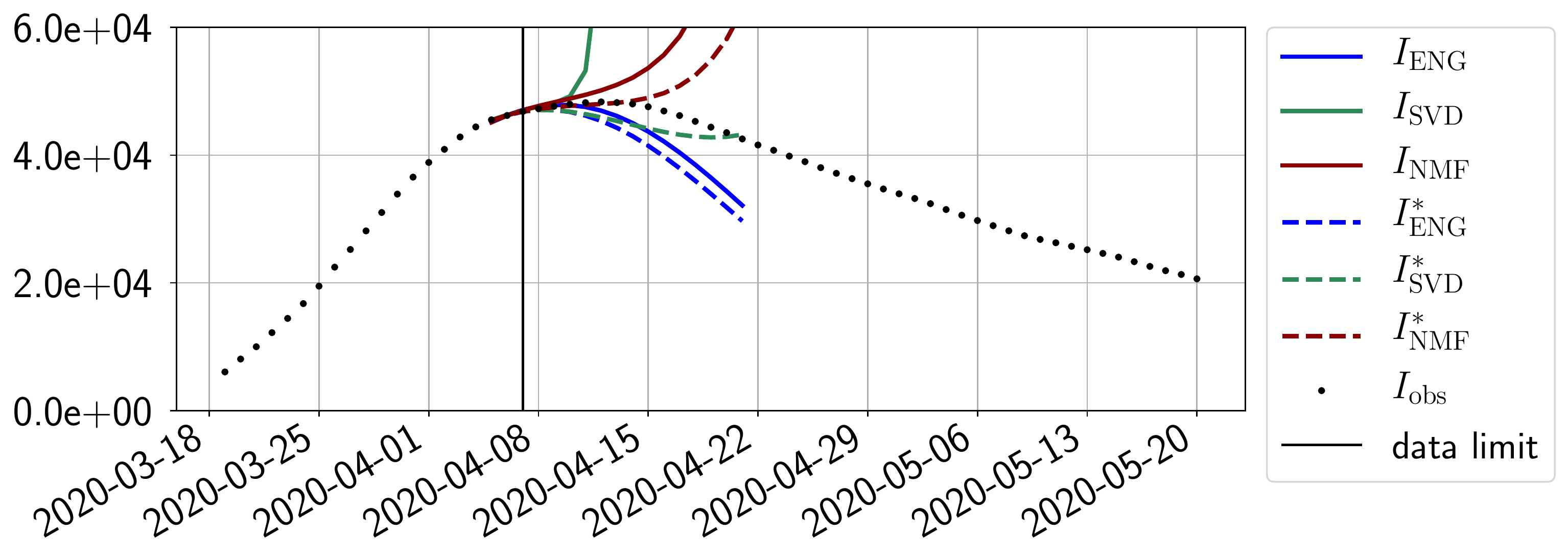}
\caption{Infected}
\end{subfigure}
\begin{subfigure}{.45\textwidth}
\includegraphics[width=1\textwidth]{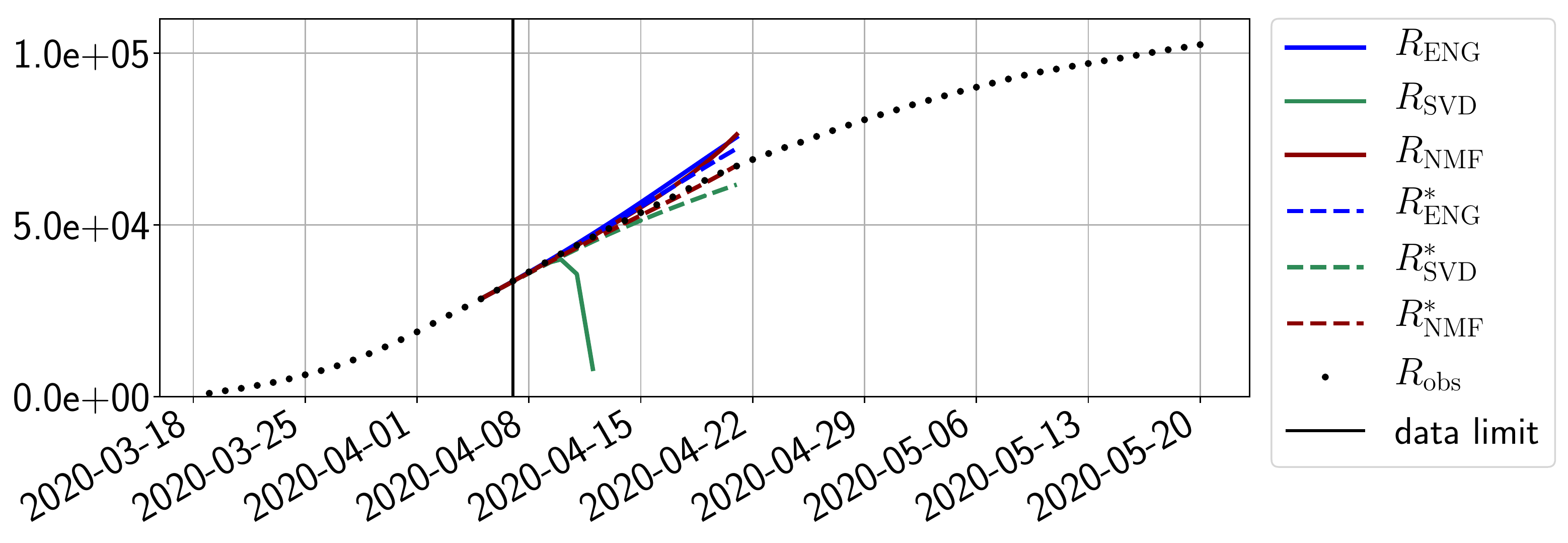}
\caption{Removed}
\end{subfigure}
\caption{14 day forecasts starting from $T=07/04$.}
\label{fig:forecast_0704}
\end{figure}

\begin{figure}[H]
\centering
\begin{subfigure}{.45\textwidth}
\includegraphics[width=1\textwidth]{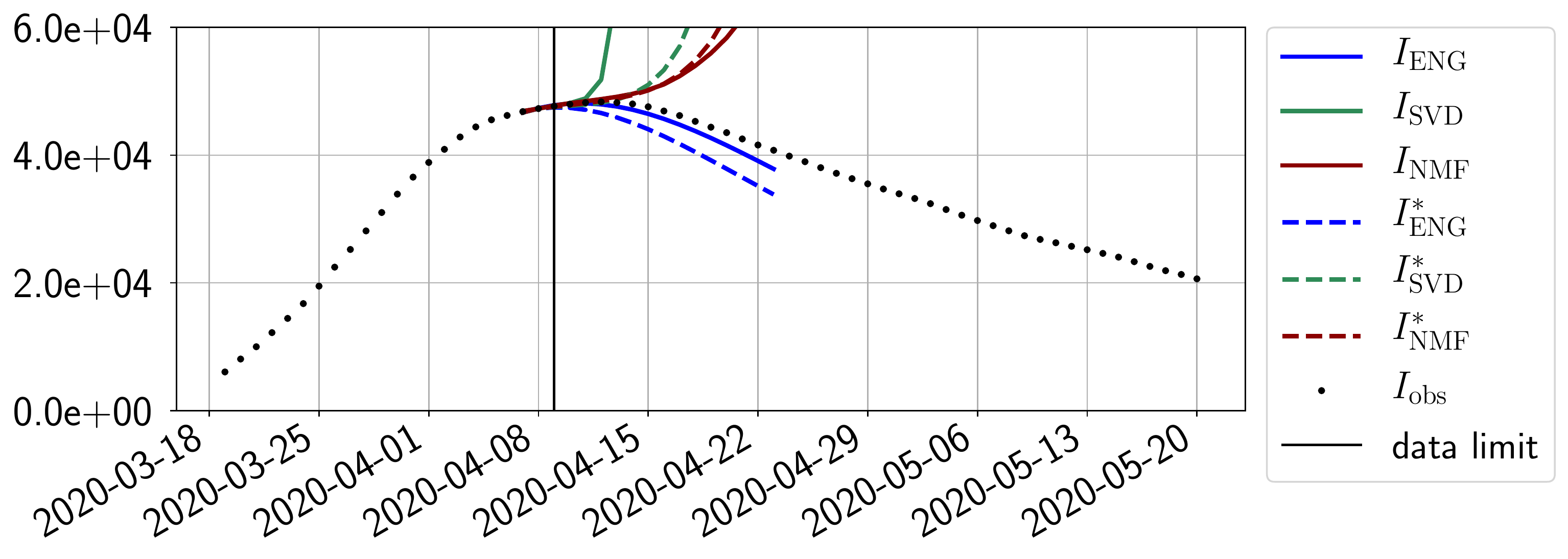}
\caption{Infected}
\end{subfigure}
\begin{subfigure}{.45\textwidth}
\includegraphics[width=1\textwidth]{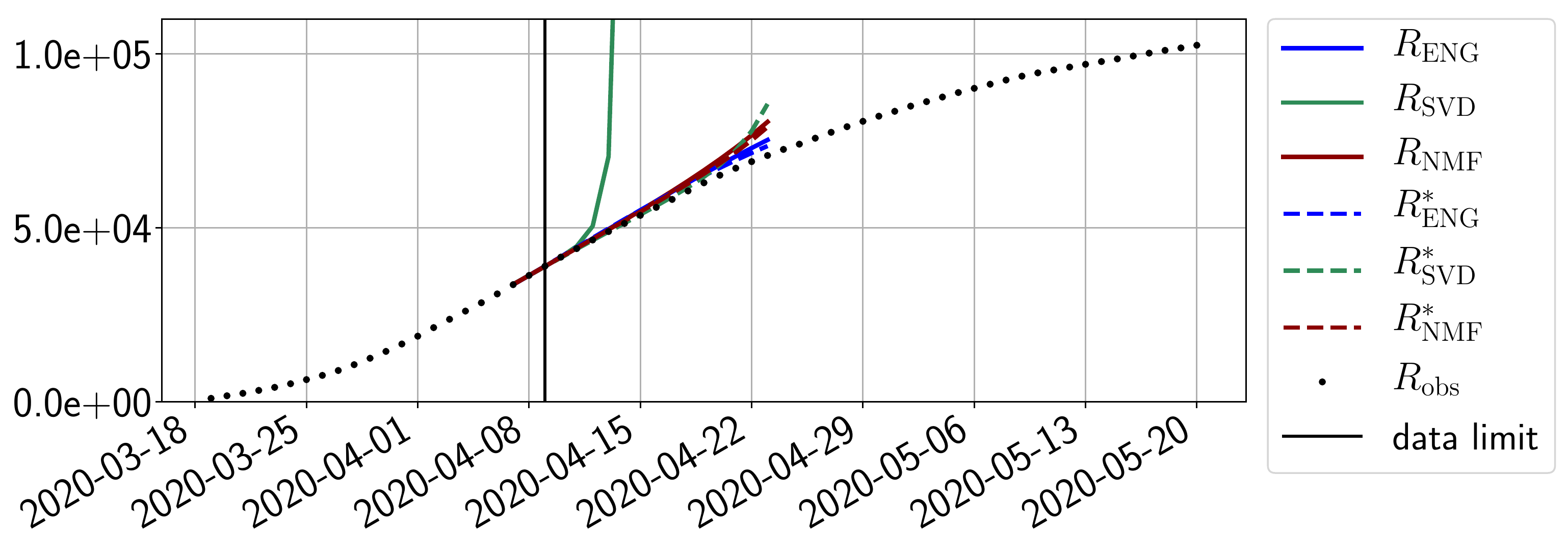}
\caption{Removed}
\end{subfigure}
\caption{14 day forecasts starting from $T=09/04$.}
\label{fig:forecast_0904}
\end{figure}

\begin{figure}[H]
\centering
\begin{subfigure}{.45\textwidth}
\includegraphics[width=1\textwidth]{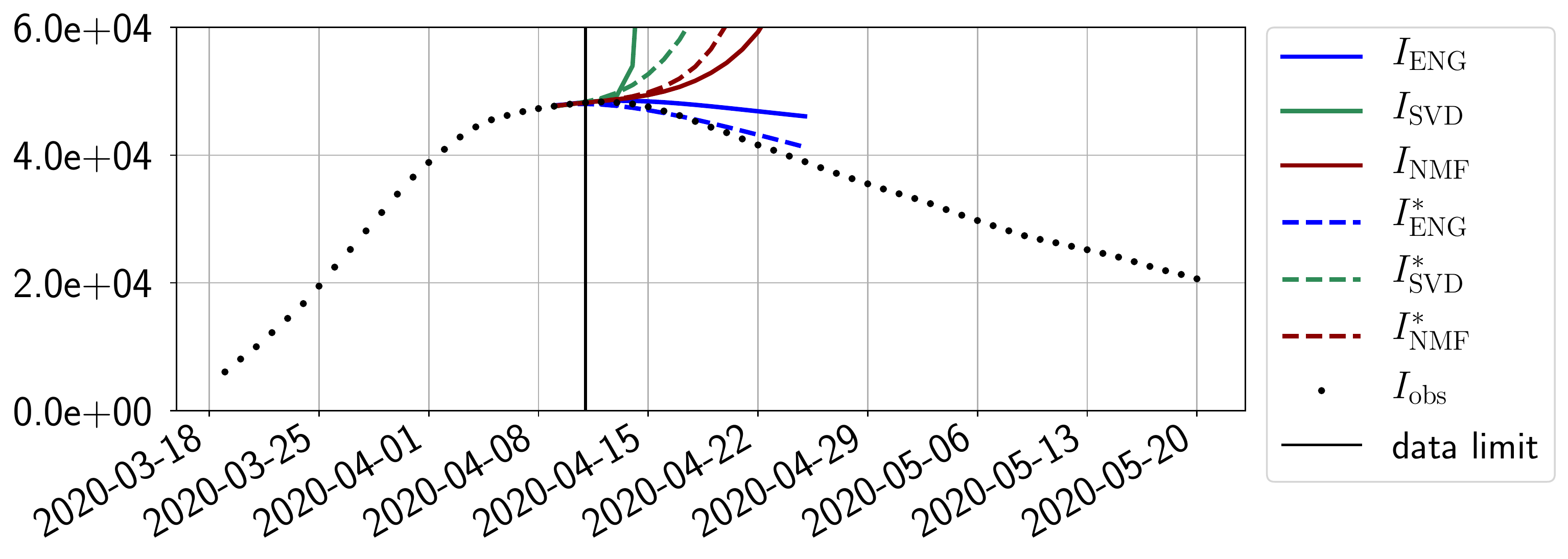}
\caption{Infected}
\end{subfigure}
\begin{subfigure}{.45\textwidth}
\includegraphics[width=1\textwidth]{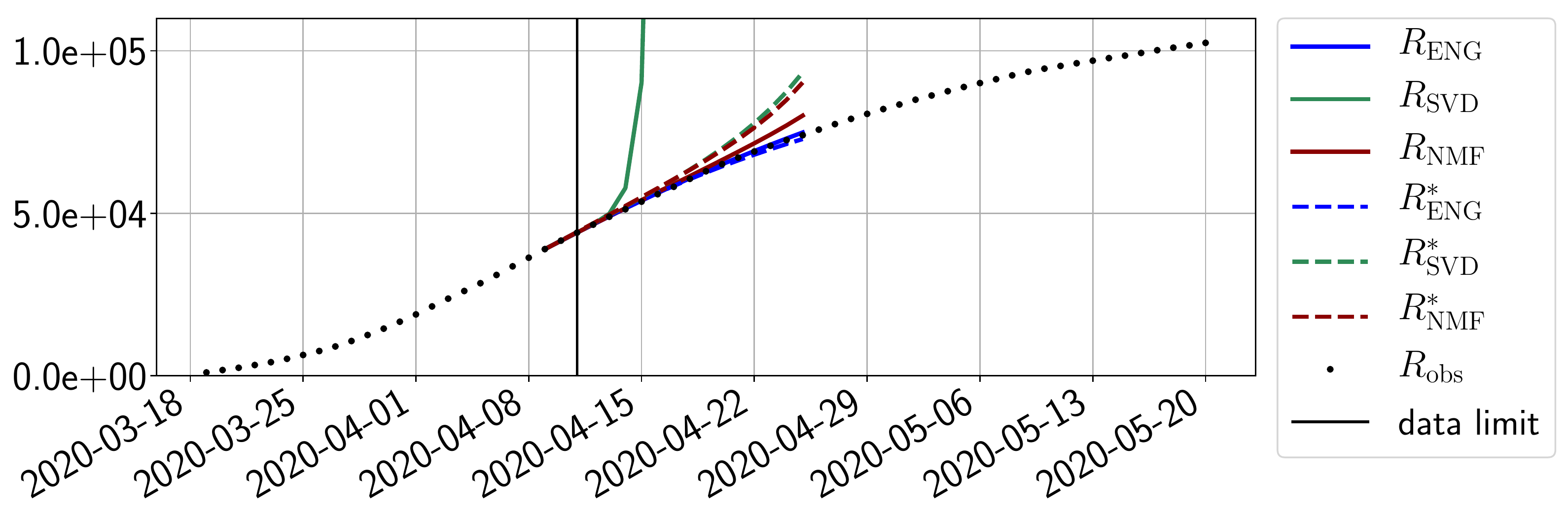}
\caption{Removed}
\end{subfigure}
\caption{14 day forecasts starting from $T=11/04$.}
\label{fig:forecast_1104}
\end{figure}

\begin{figure}[H]
\centering
\begin{subfigure}{.45\textwidth}
\includegraphics[width=1\textwidth]{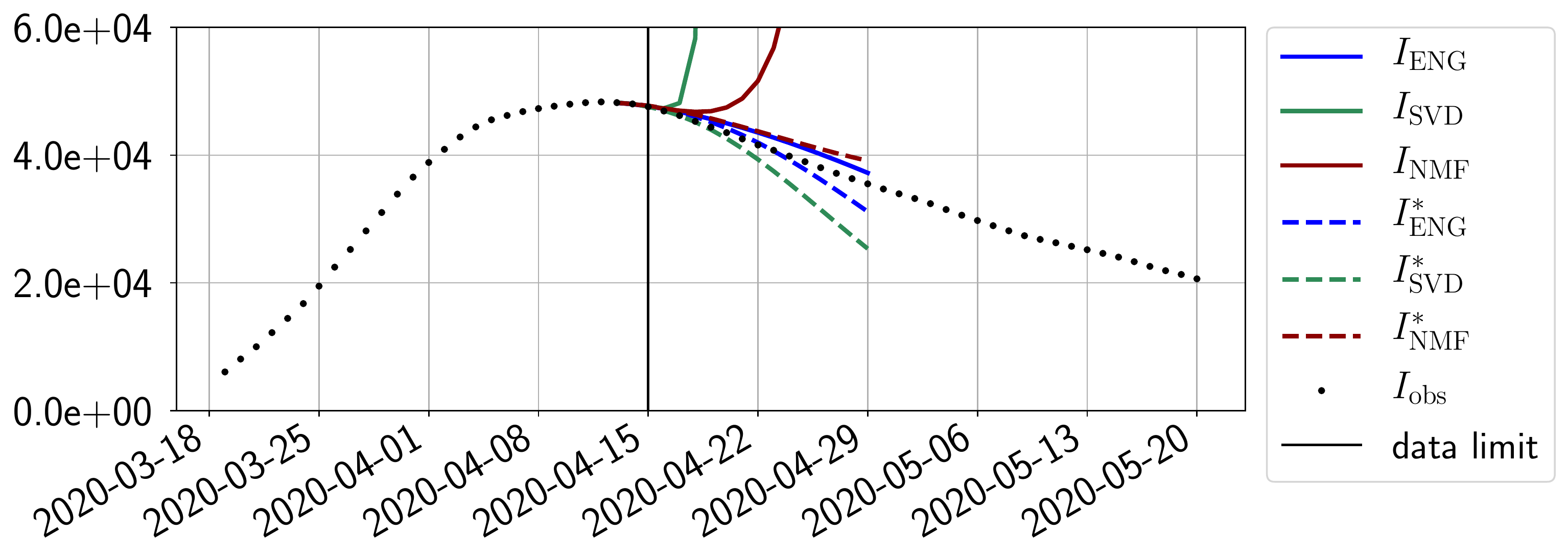}
\caption{Infected}
\end{subfigure}
\begin{subfigure}{.45\textwidth}
\includegraphics[width=1\textwidth]{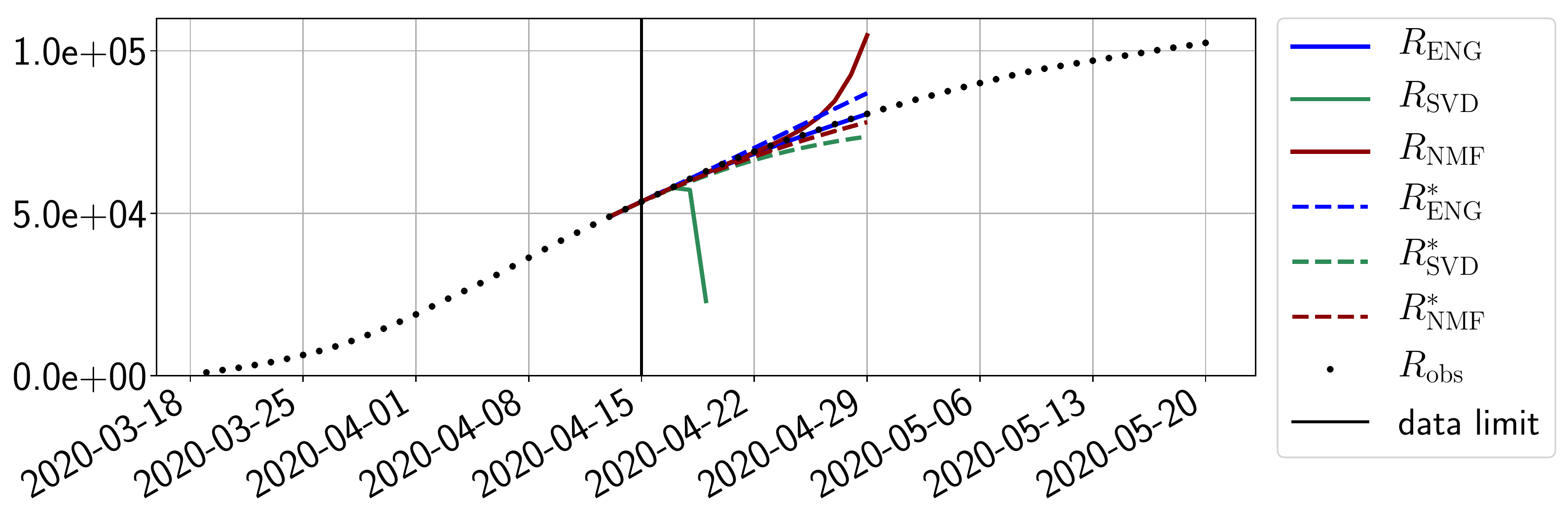}
\caption{Removed}
\end{subfigure}
\caption{14 day forecasts starting from $T=15/04$.}
\label{fig:forecast_1504}
\end{figure}

\begin{figure}[H]
\centering
\begin{subfigure}{.45\textwidth}
\includegraphics[width=1\textwidth]{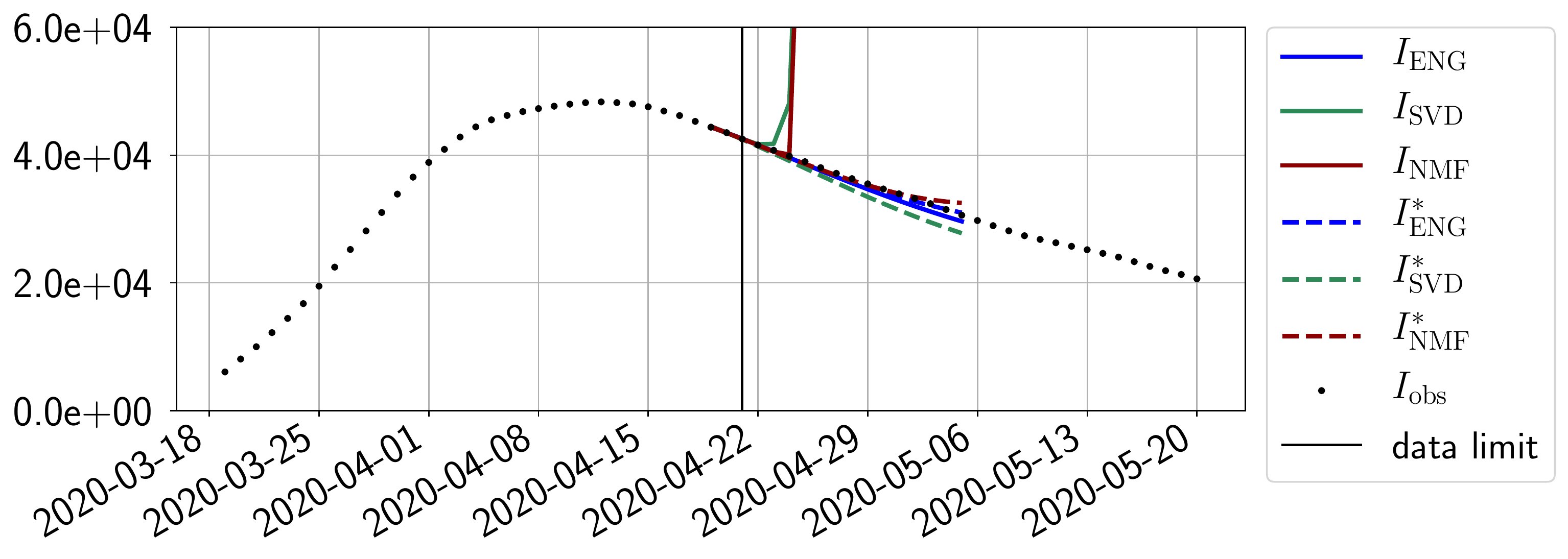}
\caption{Infected}
\end{subfigure}
\begin{subfigure}{.45\textwidth}
\includegraphics[width=1\textwidth]{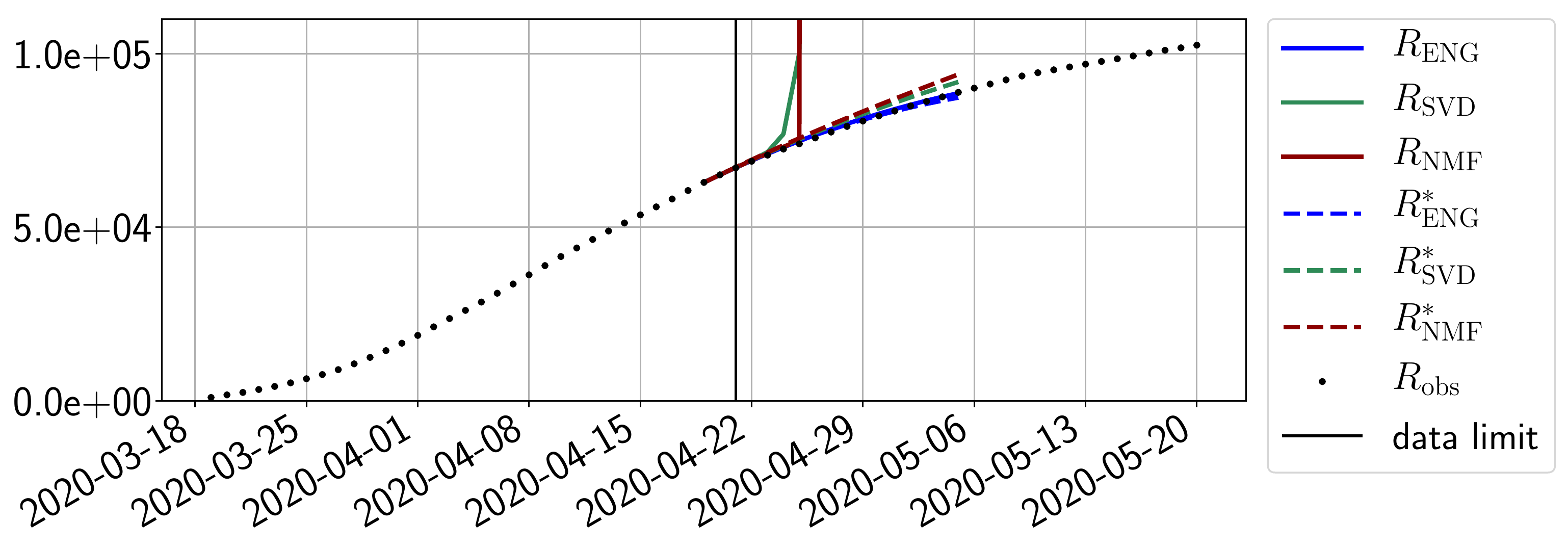}
\caption{Removed}
\end{subfigure}
\caption{14 day forecasts starting from $T=21/04$.}
\label{fig:forecast_2104}
\end{figure}

\begin{figure}[H]
\centering
\begin{subfigure}{.45\textwidth}
\includegraphics[width=1\textwidth]{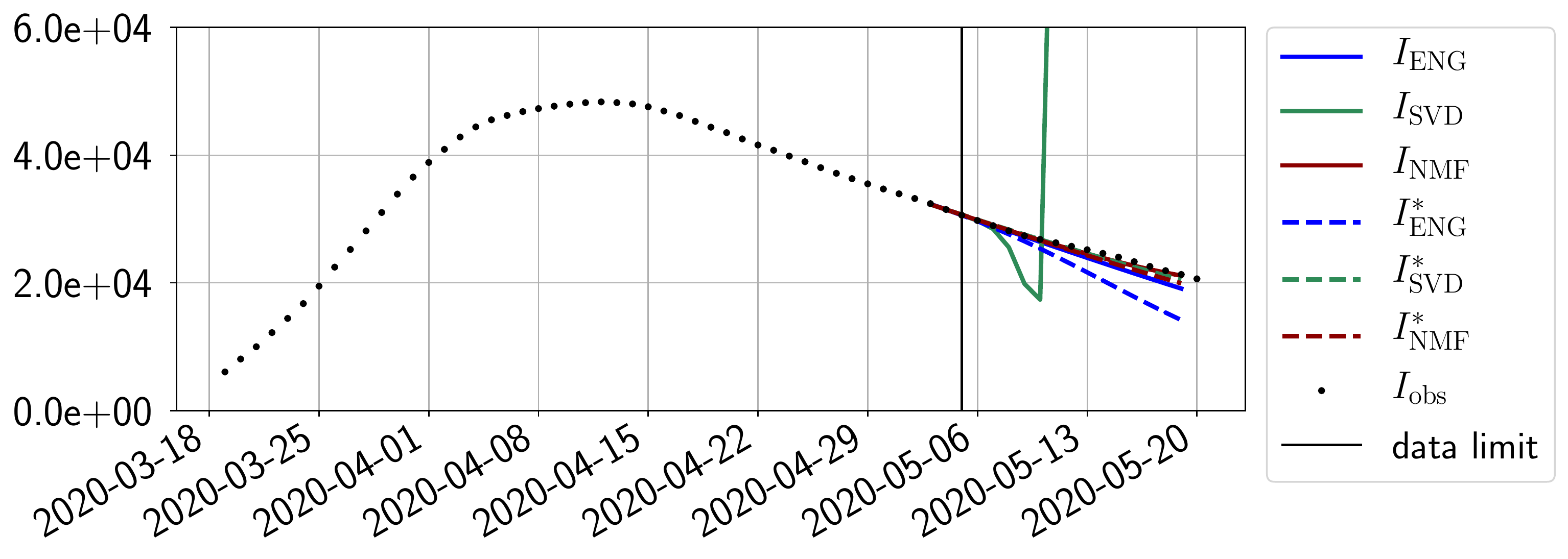}
\caption{Infected}
\end{subfigure}
\begin{subfigure}{.45\textwidth}
\includegraphics[width=1\textwidth]{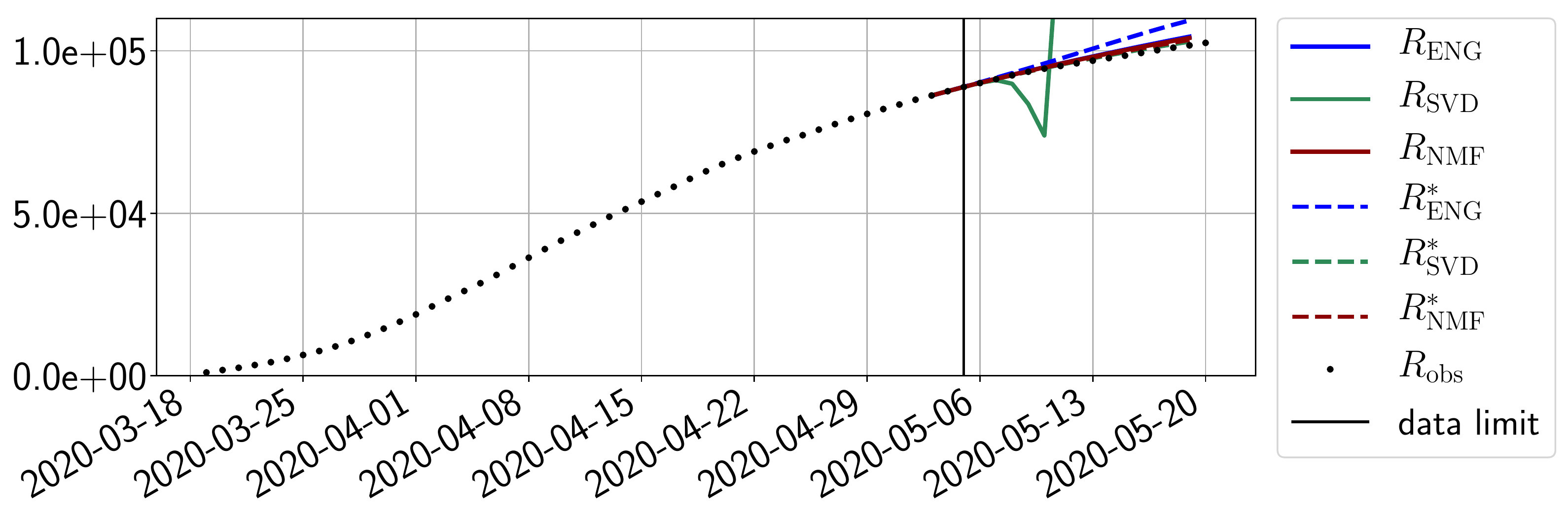}
\caption{Removed}
\end{subfigure}
\caption{14 day forecasts starting from $T=05/05$.}
\label{fig:forecast_0505}
\end{figure}

\subsubsection{Focus on the forecasting with ENG }
\label{sec:forecast-ENG}
For our best forecasting method (\fitbg~using ENG), we plot in Figures \ref{fig:forecast_modes_0104} to \ref{fig:forecast_modes_0505} the forecasts for each dimension $n=5$ to $10$. The plots give the \col{forecasts} on a 14 day ahead window for $\beta$, $\gamma$ and the resulting evolution of the infected $I$ and removed $R$. We see that the method performs reasonably well for all values of $n$, and proves that the results of the previous section with the averaged forecast are not compensating spurious effects which could occur for certain values of $n$. \col{We intentionally chose to display the inaccurate forecasts from 03/04, 07/04 and 11/04 as they are among of the worst predictions obtained using this method, however, it is important to mention that despite the lack of accuracy for these cases it remains plausible epidemic behaviors with different but realistic evolutions for $\beta$ and $\gamma$ compared to the actual evolution.} Note that the method was able to predict the peak of the epidemic several days in advance (see Figure \ref{fig:forecast_modes_0104}). \col{We also observe that the prediction on $\gamma$ is difficult at all times due to the fact that $\gamma^*_\obs$ presents an oscillatory behavior.} Despite this difficulty, the resulting forecasts for $I$ and $R$ are very satisfactory in general.

\begin{figure}[H]
\centering
\begin{subfigure}{.45\textwidth}
\includegraphics[width=1\textwidth]{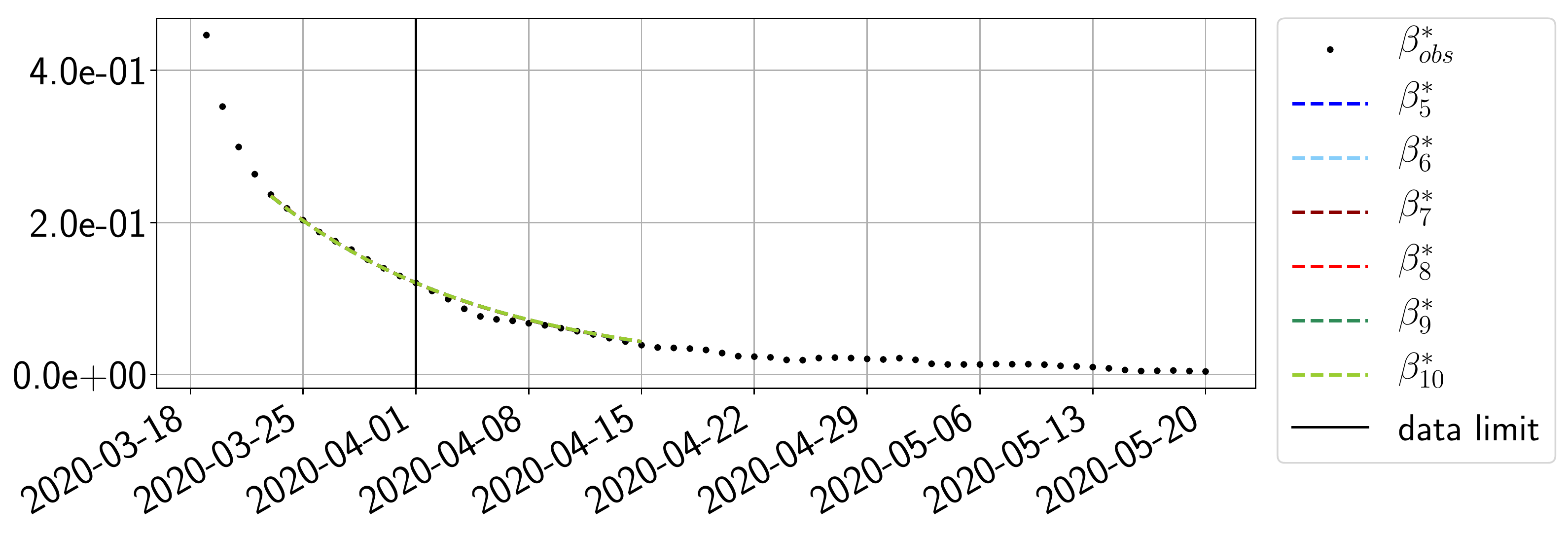}
\caption{$\beta$}
\end{subfigure}
\begin{subfigure}{.45\textwidth}
\includegraphics[width=1\textwidth]{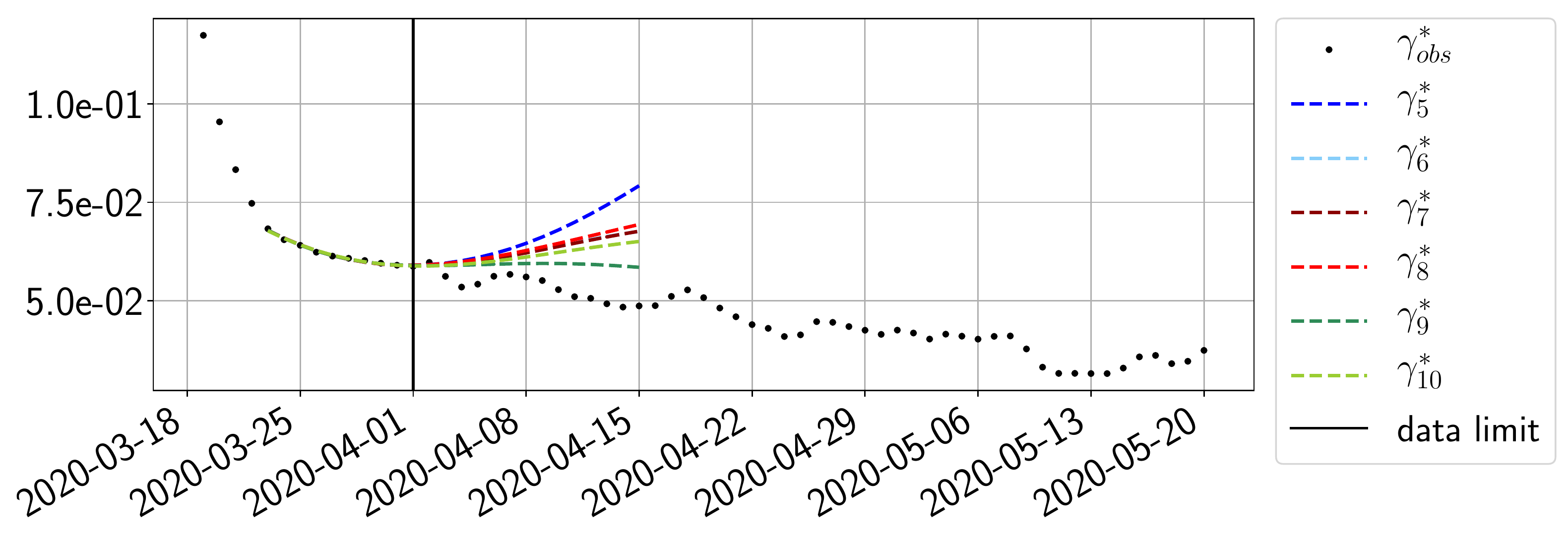}
\caption{$\gamma$}
\end{subfigure}

\vspace{0.4cm}

\begin{subfigure}{.45\textwidth}
\includegraphics[width=1\textwidth]{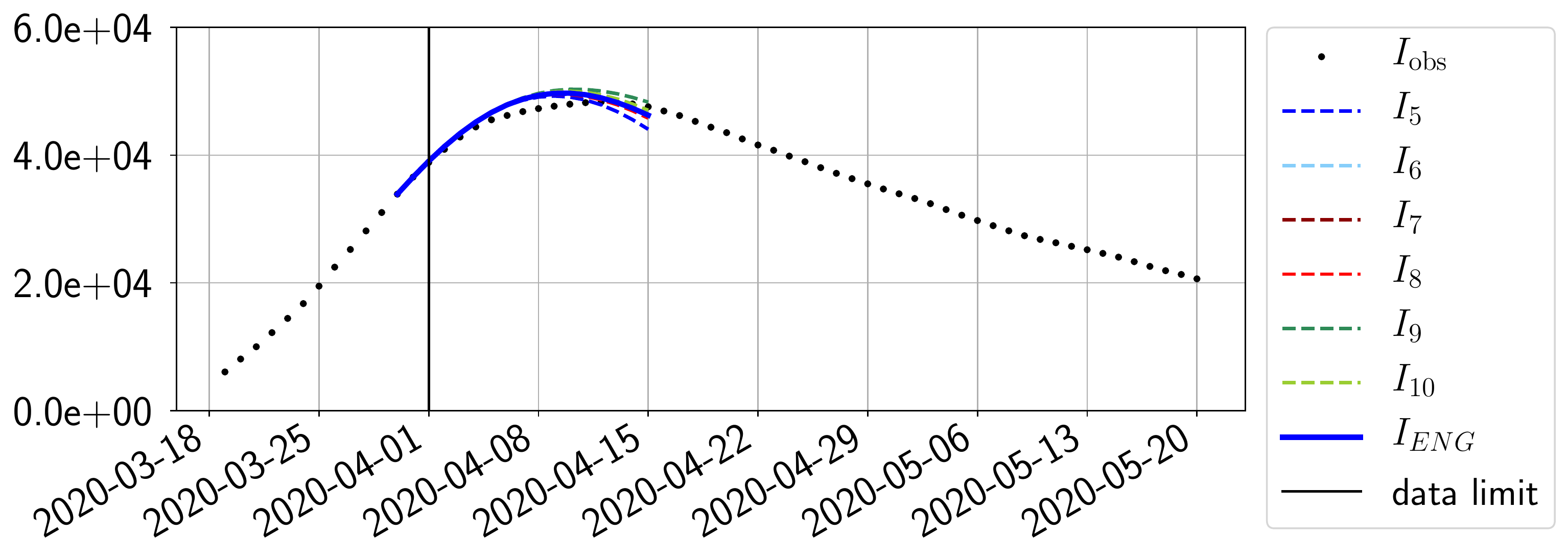}
\caption{Infected}
\end{subfigure}
\begin{subfigure}{.45\textwidth}
\includegraphics[width=1\textwidth]{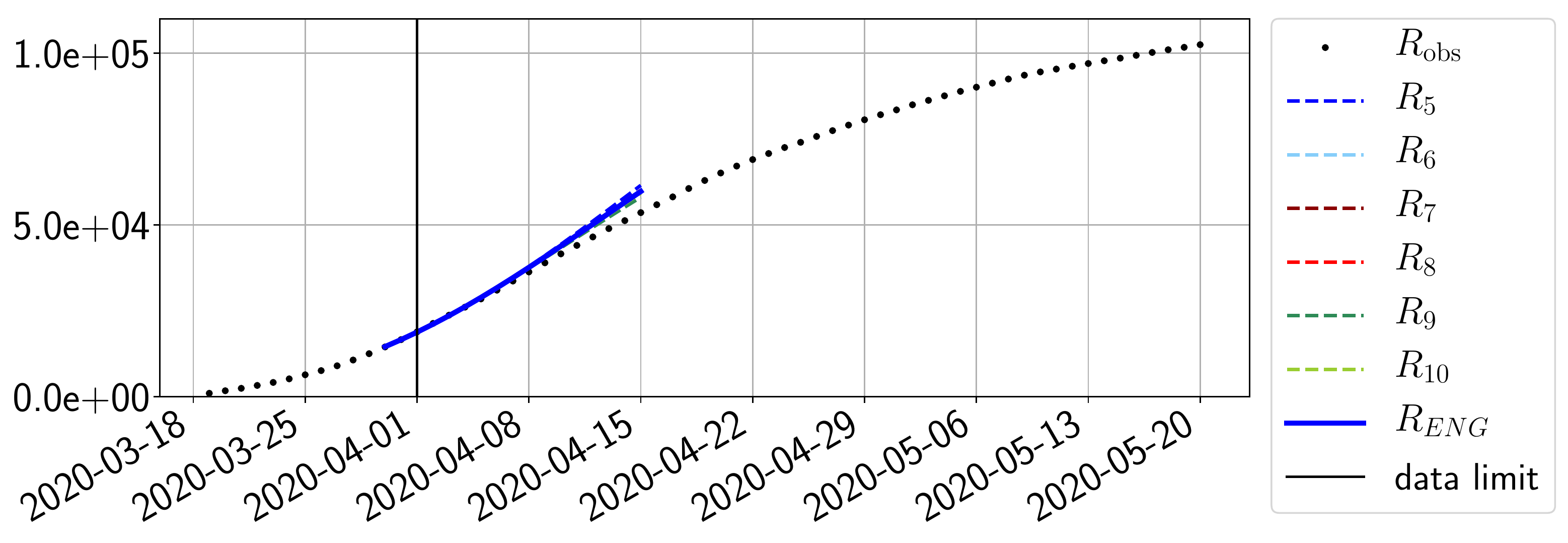}
\caption{Removed}
\end{subfigure}
\caption{ENG forecast from $T=01/04$}
\label{fig:forecast_modes_0104}
\end{figure}

\begin{figure}[H]
\centering
\begin{subfigure}{.45\textwidth}
\includegraphics[width=1\textwidth]{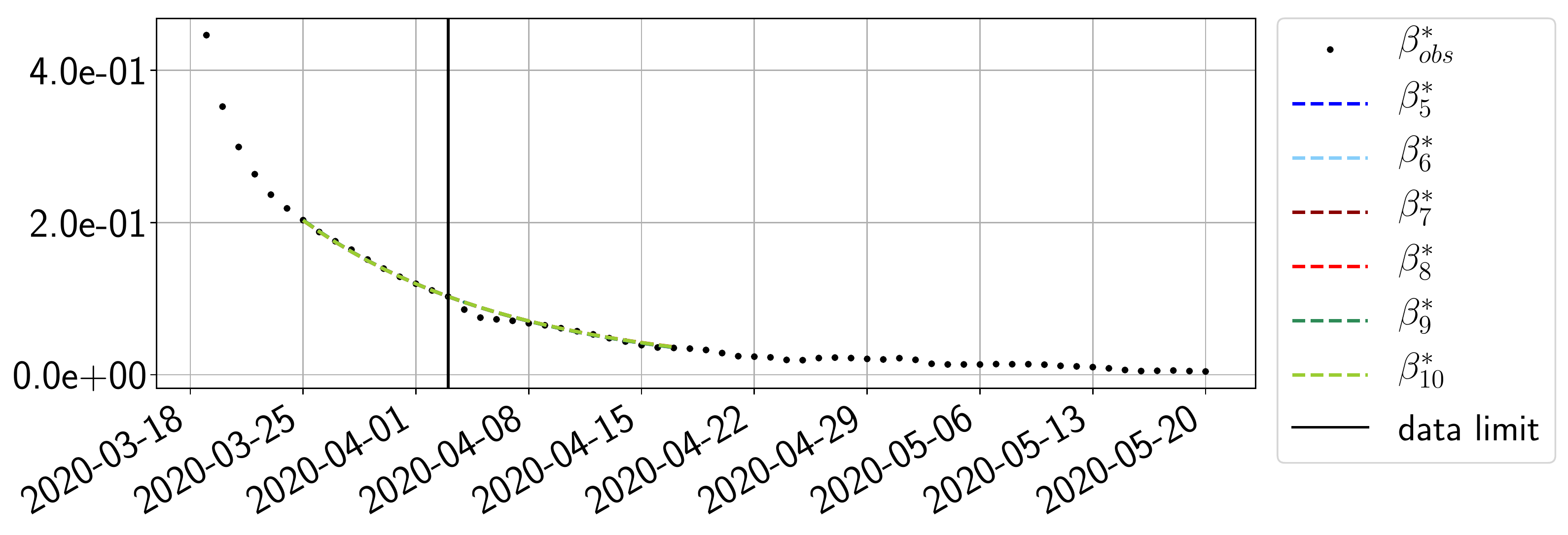}
\caption{$\beta$}
\end{subfigure}
\begin{subfigure}{.45\textwidth}
\includegraphics[width=1\textwidth]{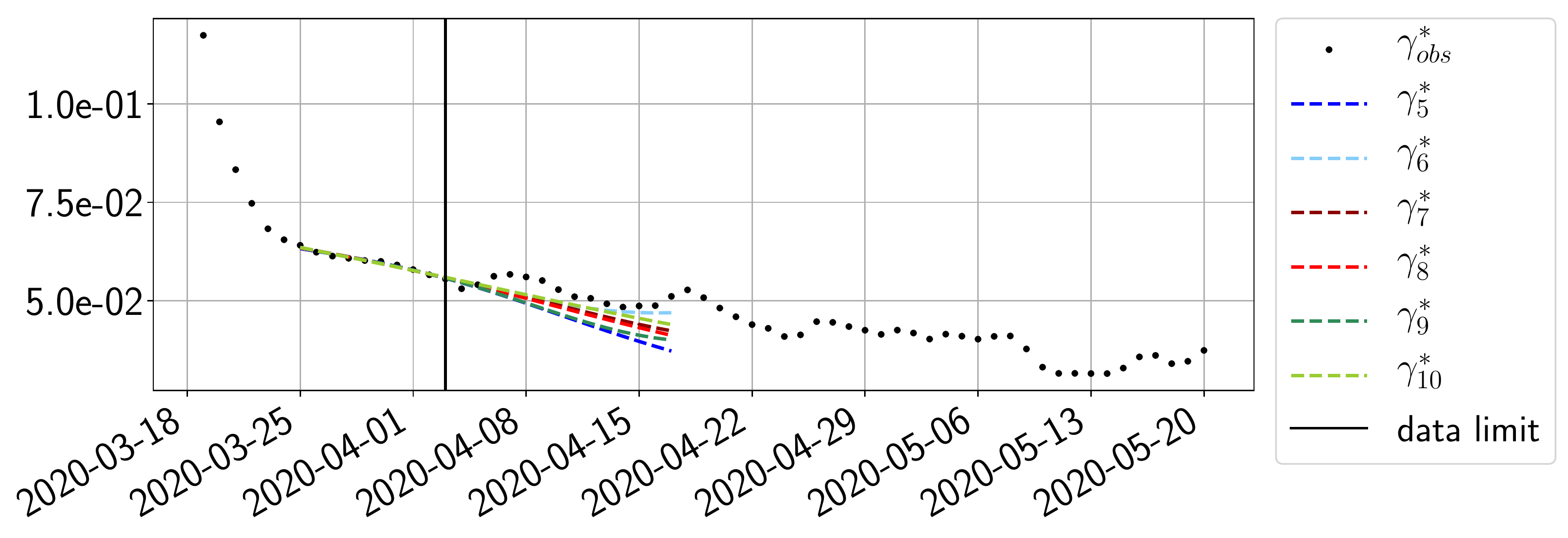}
\caption{$\gamma$}
\end{subfigure}

\vspace{0.4cm}

\begin{subfigure}{.45\textwidth}
\includegraphics[width=1\textwidth]{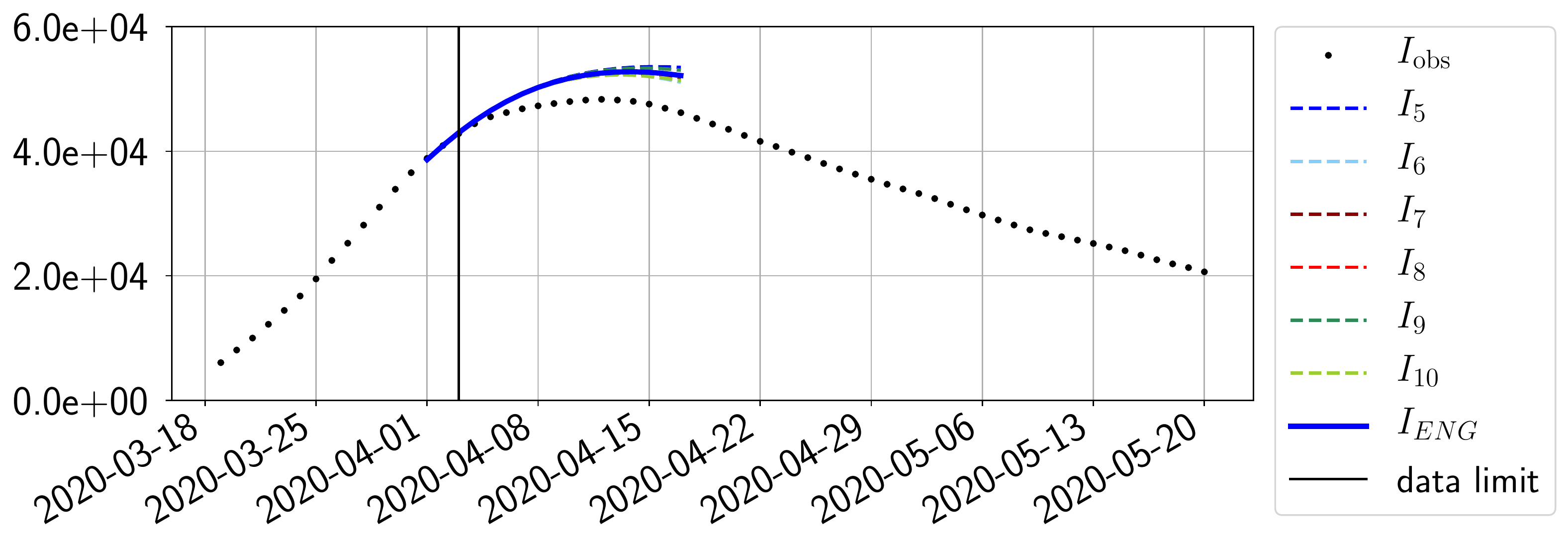}
\caption{Infected}
\end{subfigure}
\begin{subfigure}{.45\textwidth}
\includegraphics[width=1\textwidth]{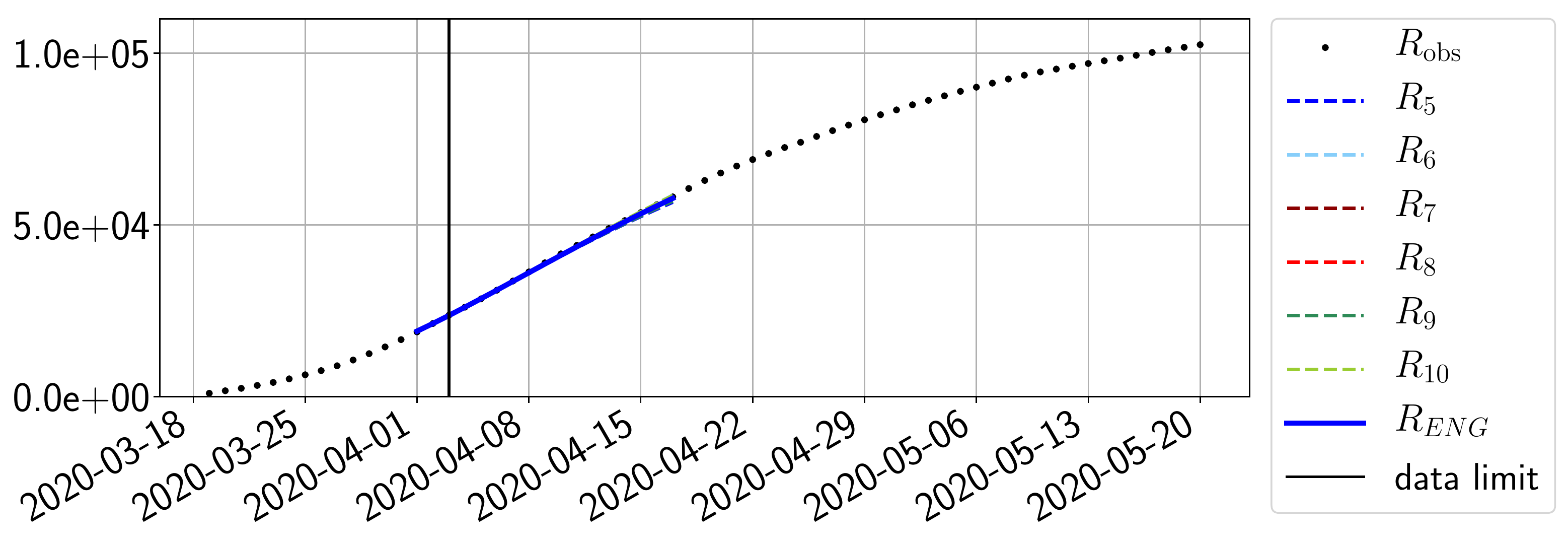}
\caption{Removed}
\end{subfigure}
\caption{ENG forecast from $T=03/04$}
\label{fig:forecast_modes_0304}
\end{figure}

\begin{figure}[H]
\centering
\begin{subfigure}{.45\textwidth}
\includegraphics[width=1\textwidth]{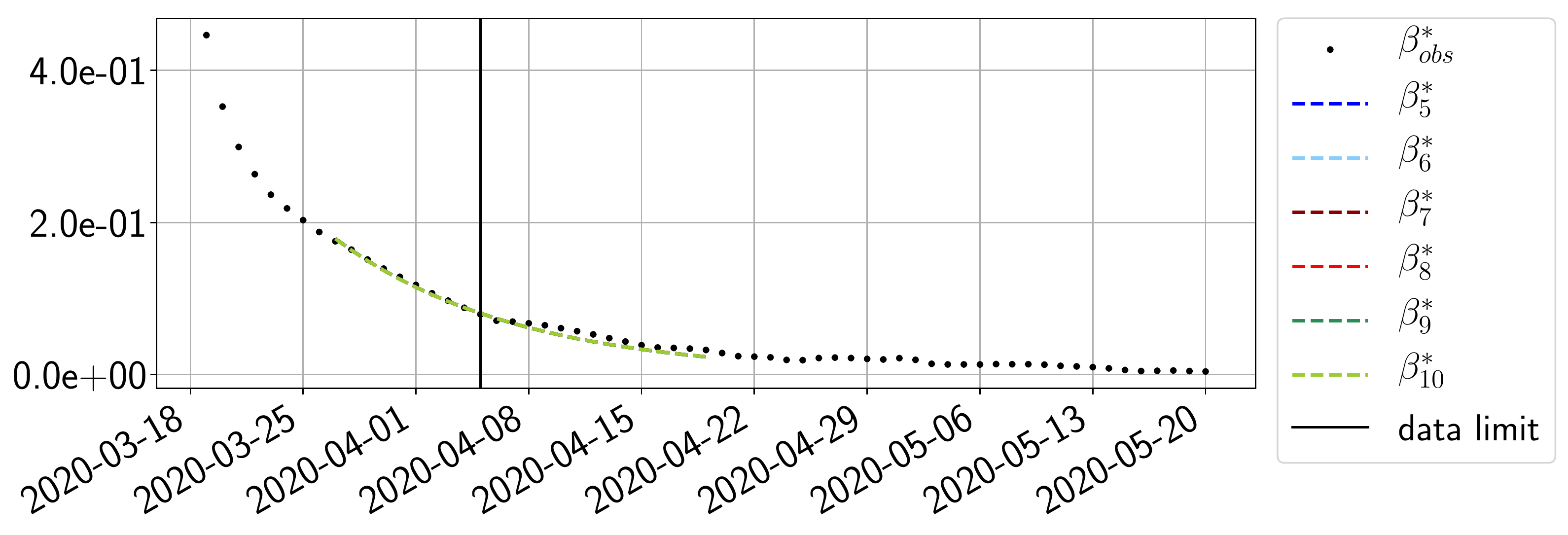}
\caption{$\beta$}
\end{subfigure}
\begin{subfigure}{.45\textwidth}
\includegraphics[width=1\textwidth]{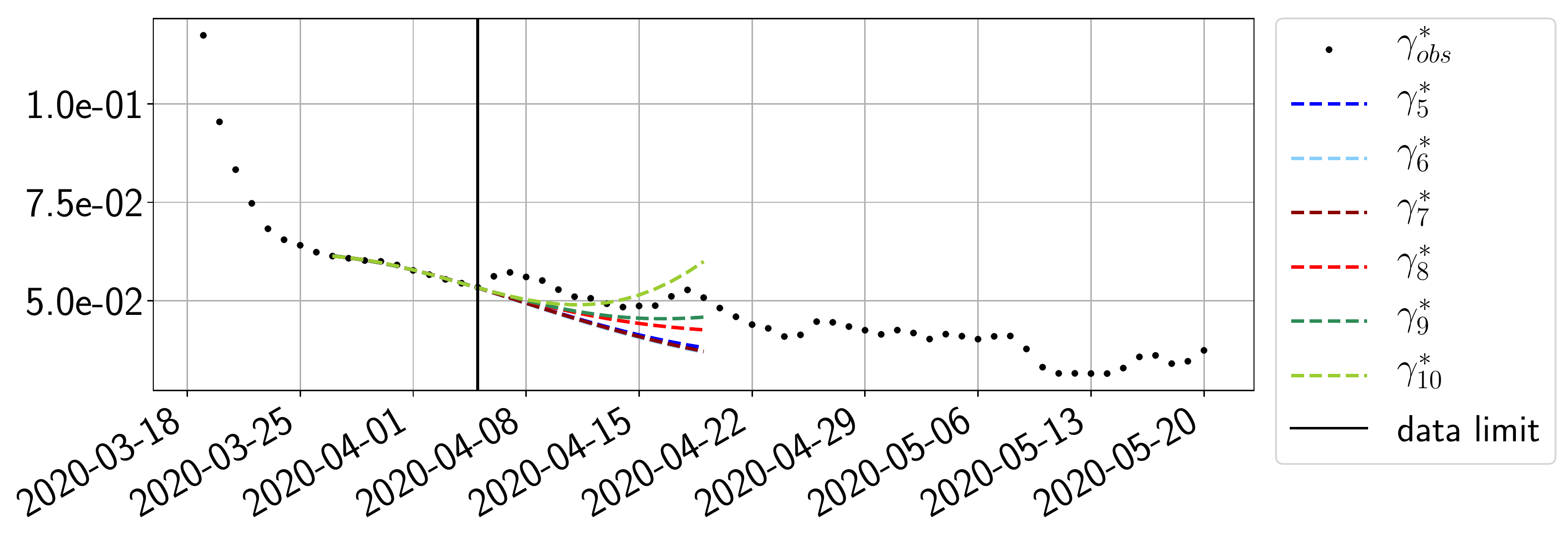}
\caption{$\gamma$}
\end{subfigure}

\vspace{0.4cm}

\begin{subfigure}{.45\textwidth}
\includegraphics[width=1\textwidth]{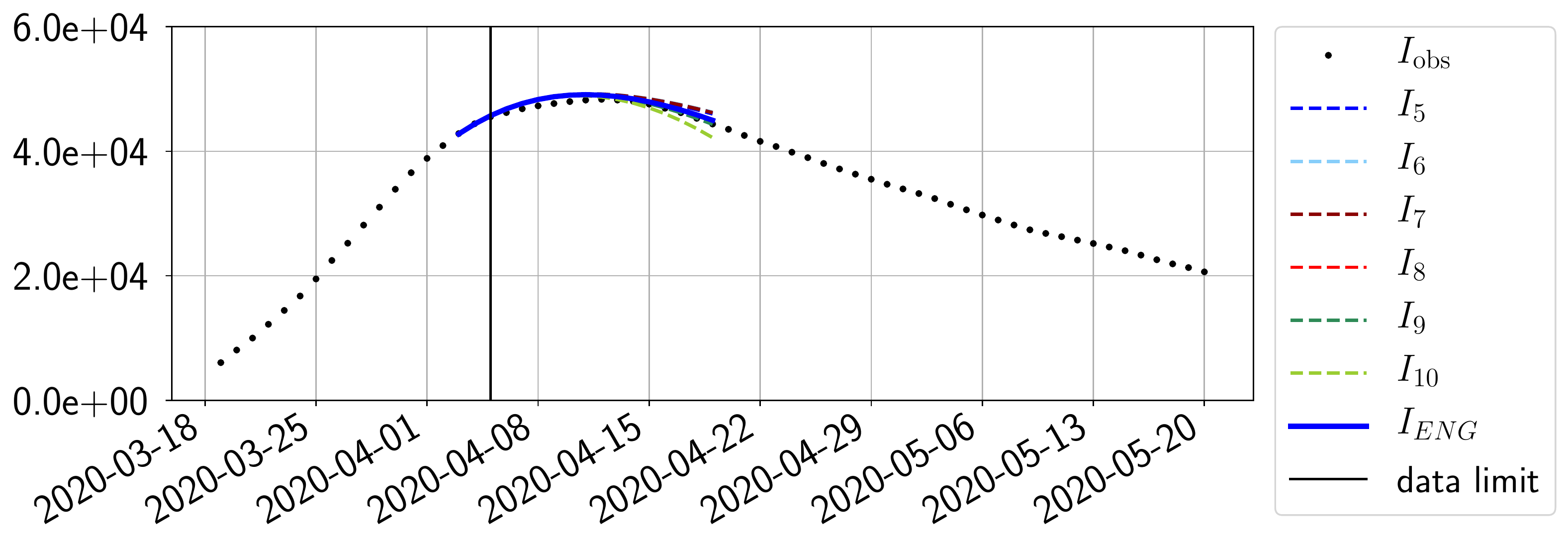}
\caption{Infected}
\end{subfigure}
\begin{subfigure}{.45\textwidth}
\includegraphics[width=1\textwidth]{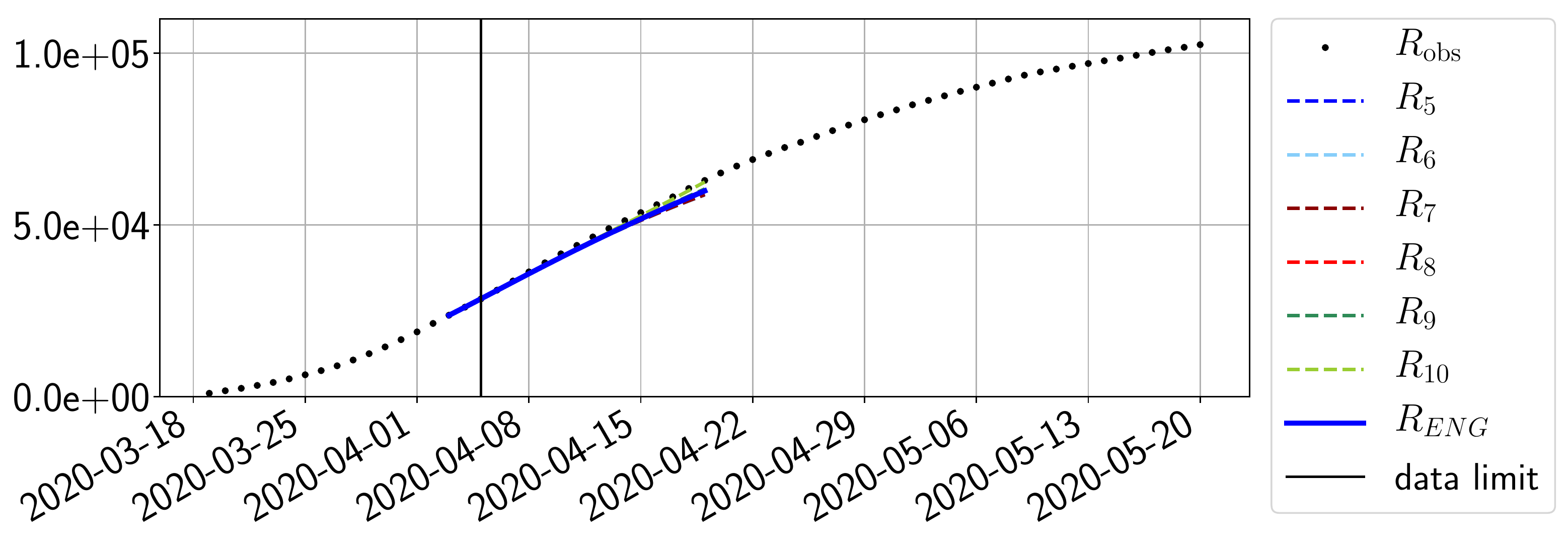}
\caption{Removed}
\end{subfigure}
\caption{ENG forecast from $T=05/04$}
\label{fig:forecast_modes_0504}
\end{figure}

\begin{figure}[H]
\centering
\begin{subfigure}{.45\textwidth}
\includegraphics[width=1\textwidth]{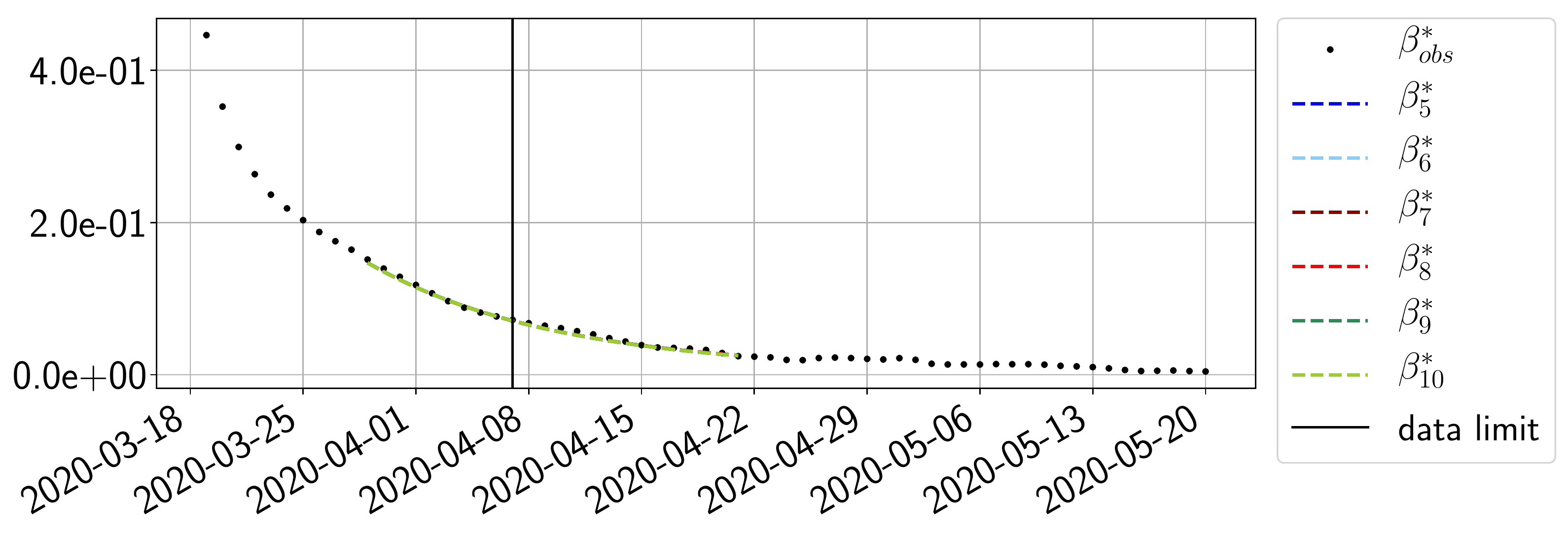}
\caption{$\beta$}
\end{subfigure}
\begin{subfigure}{.45\textwidth}
\includegraphics[width=1\textwidth]{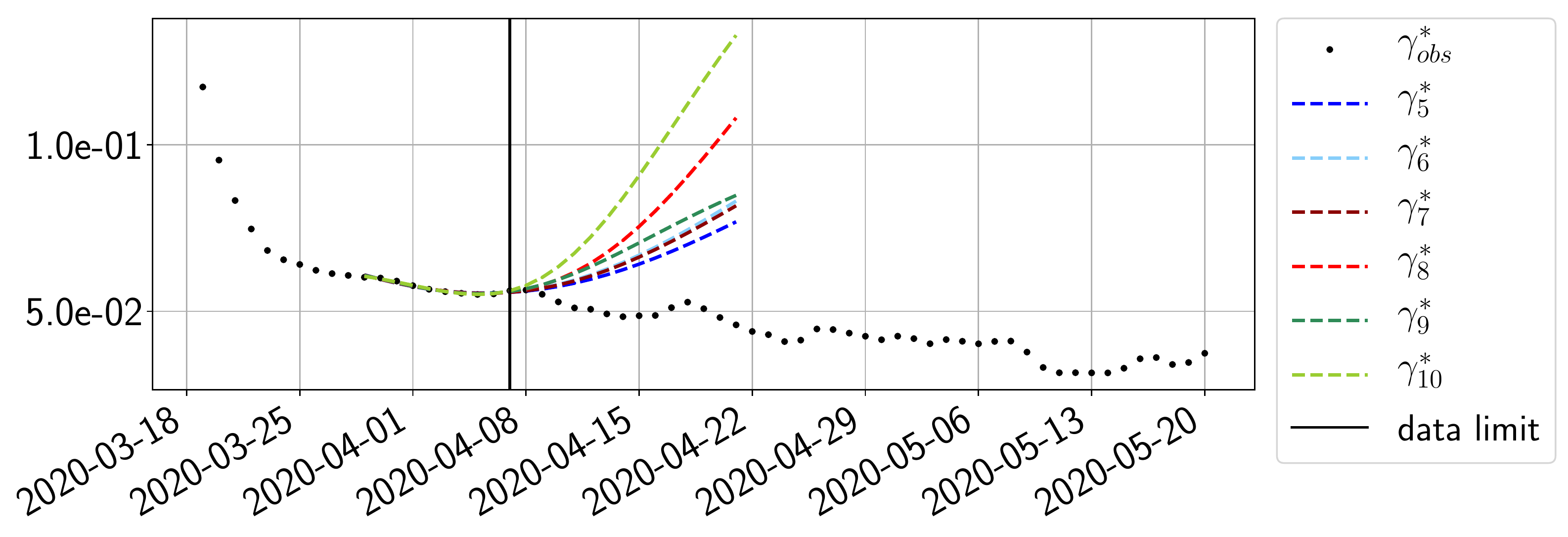}
\caption{$\gamma$}
\end{subfigure}

\vspace{0.4cm}

\begin{subfigure}{.45\textwidth}
\includegraphics[width=1\textwidth]{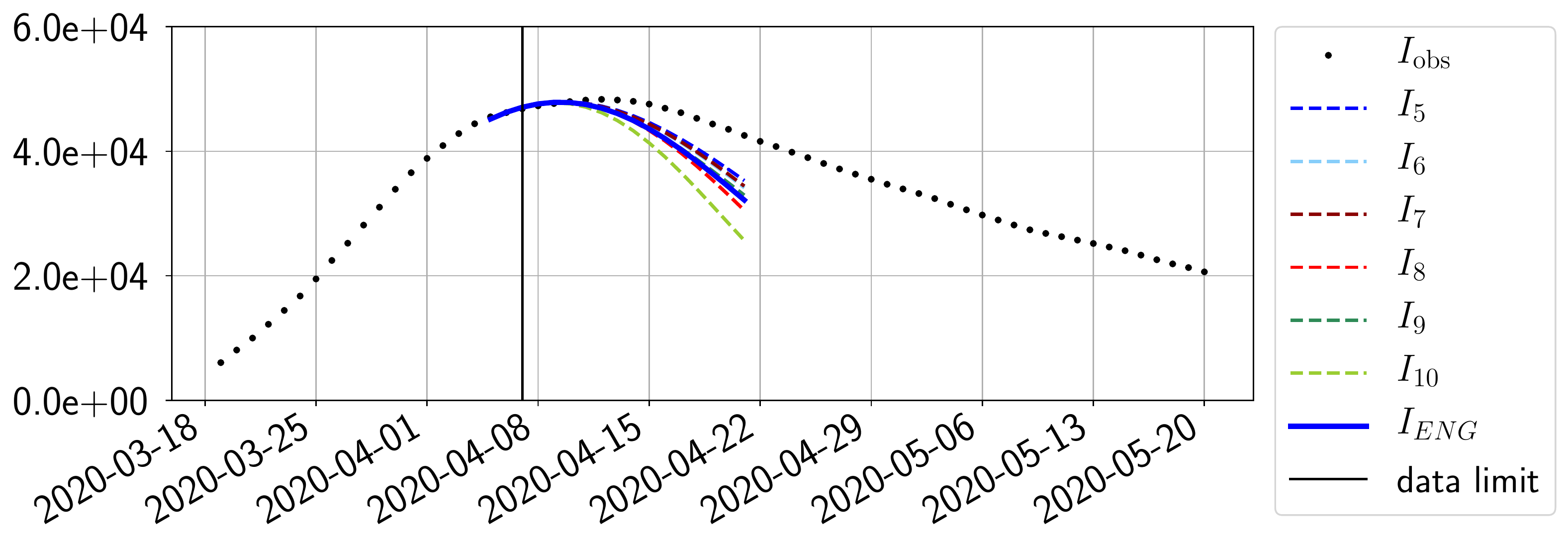}
\caption{Infected}
\end{subfigure}
\begin{subfigure}{.45\textwidth}
\includegraphics[width=1\textwidth]{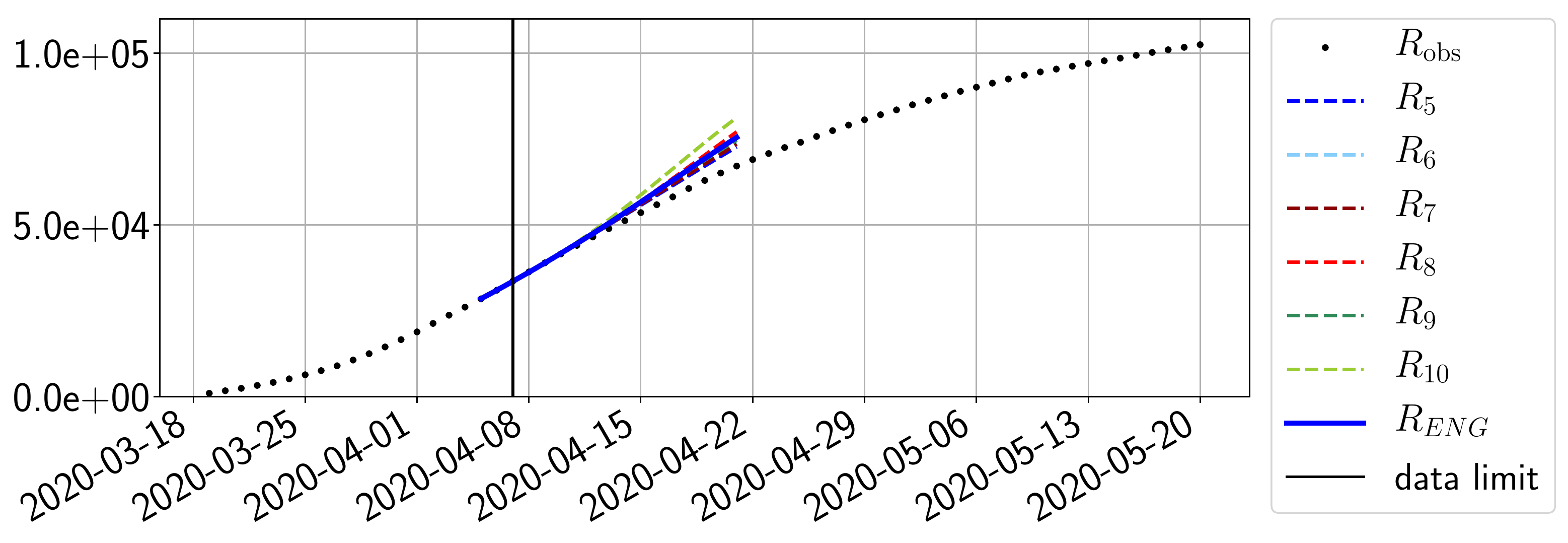}
\caption{Removed}
\end{subfigure}
\caption{ENG forecast from $T=07/04$}
\label{fig:forecast_modes_0704}
\end{figure}

\begin{figure}[H]
\centering
\begin{subfigure}{.45\textwidth}
\includegraphics[width=1\textwidth]{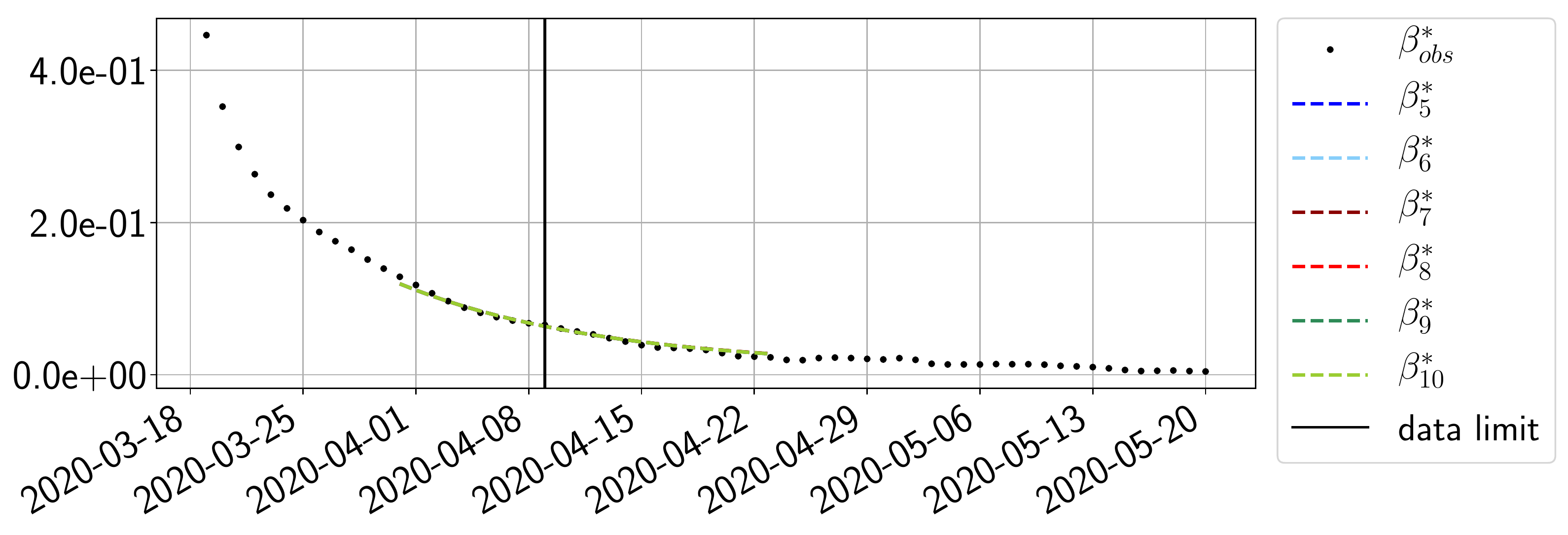}
\caption{$\beta$}
\end{subfigure}
\begin{subfigure}{.45\textwidth}
\includegraphics[width=1\textwidth]{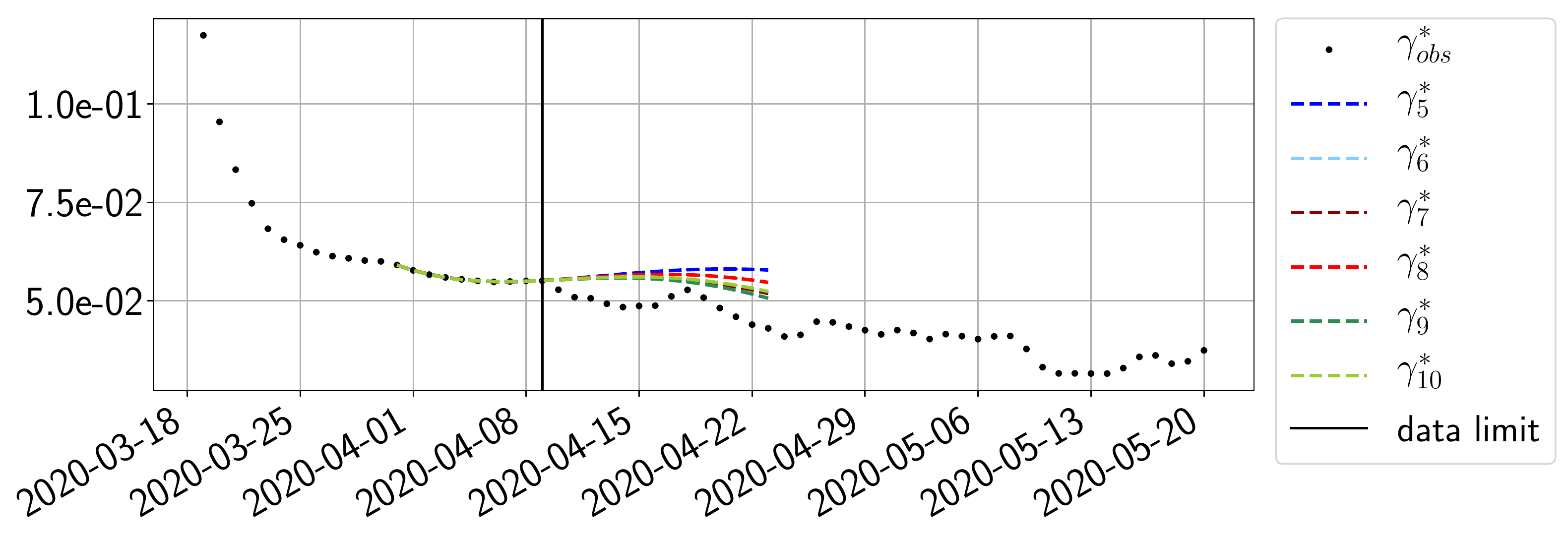}
\caption{$\gamma$}
\end{subfigure}

\vspace{0.4cm}

\begin{subfigure}{.45\textwidth}
\includegraphics[width=1\textwidth]{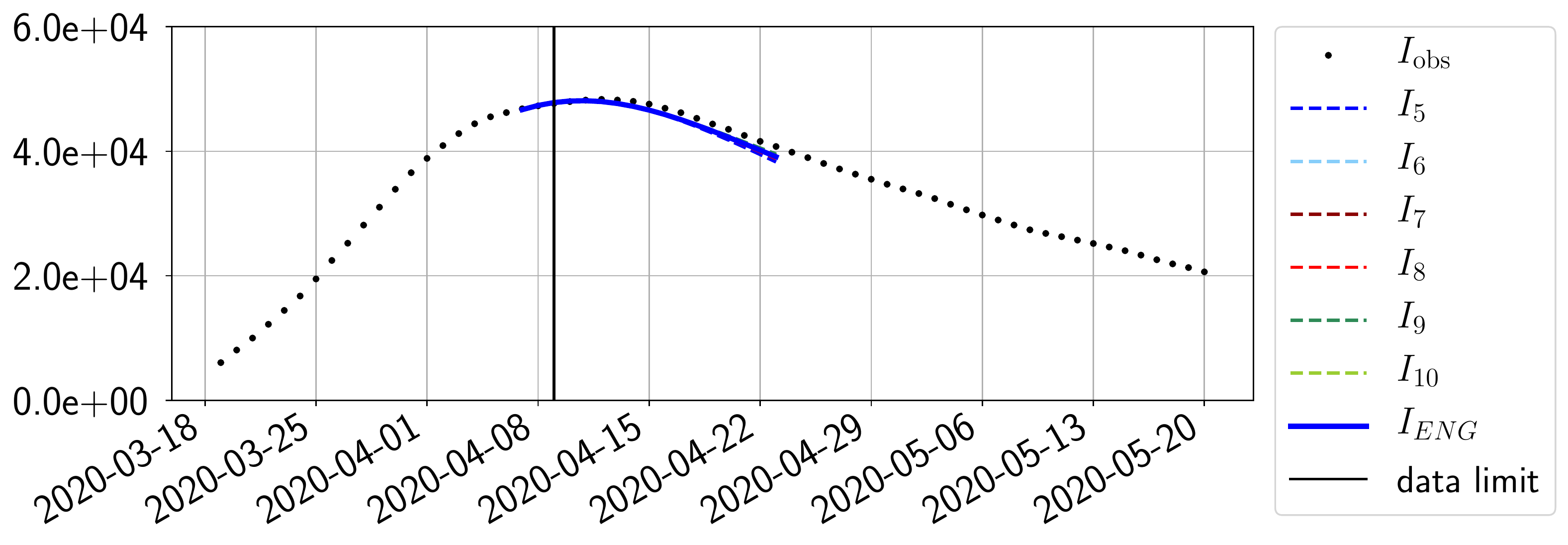}
\caption{Infected}
\end{subfigure}
\begin{subfigure}{.45\textwidth}
\includegraphics[width=1\textwidth]{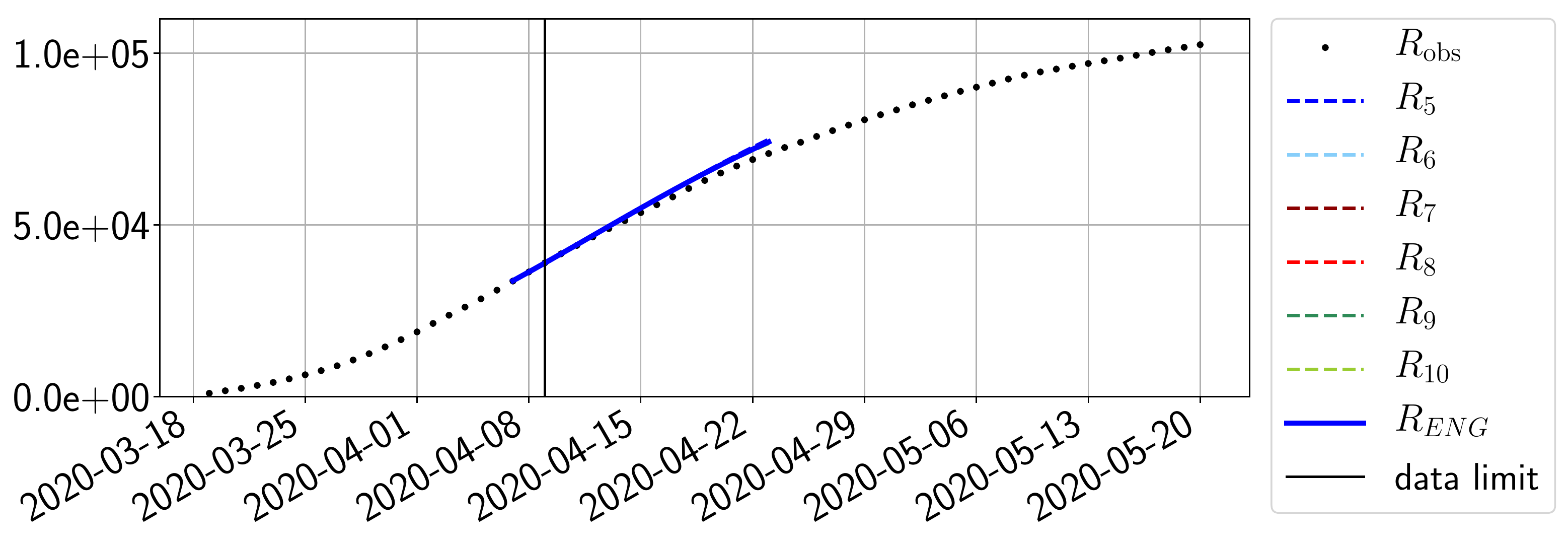}
\caption{Removed}
\end{subfigure}
\caption{ENG forecast from $T=09/04$}
\label{fig:forecast_modes_0904}
\end{figure}

\begin{figure}[H]
\centering
\begin{subfigure}{.45\textwidth}
\includegraphics[width=1\textwidth]{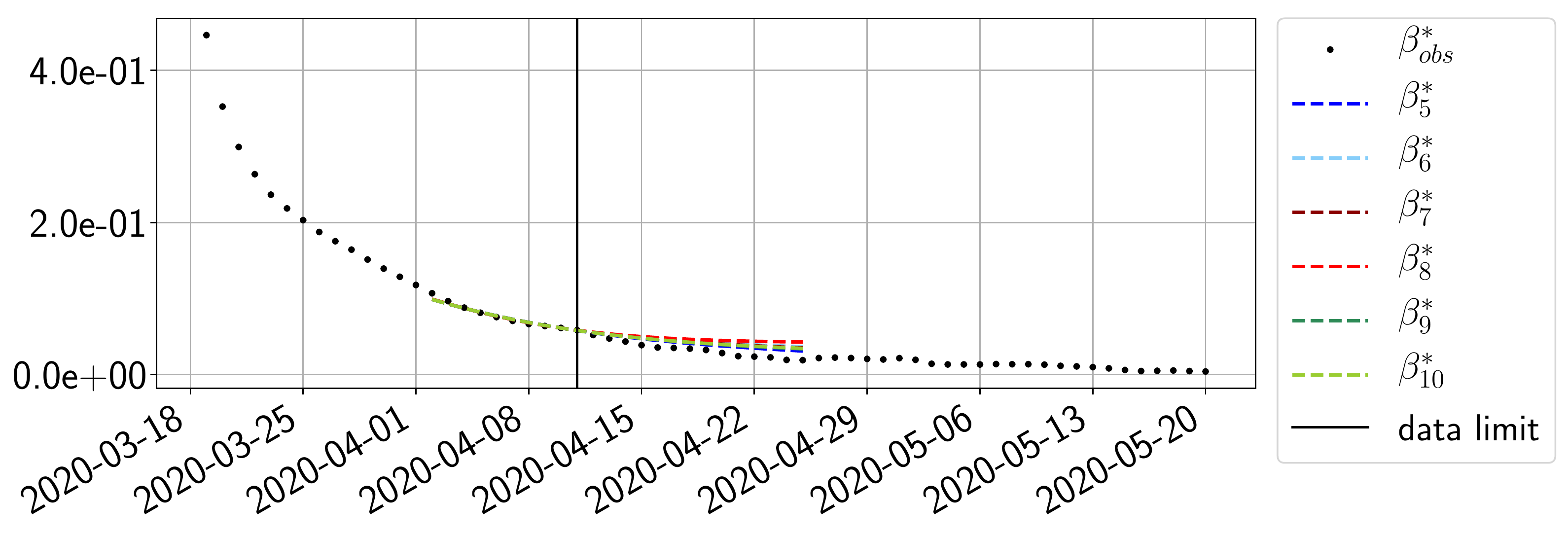}
\caption{$\beta$}
\end{subfigure}
\begin{subfigure}{.45\textwidth}
\includegraphics[width=1\textwidth]{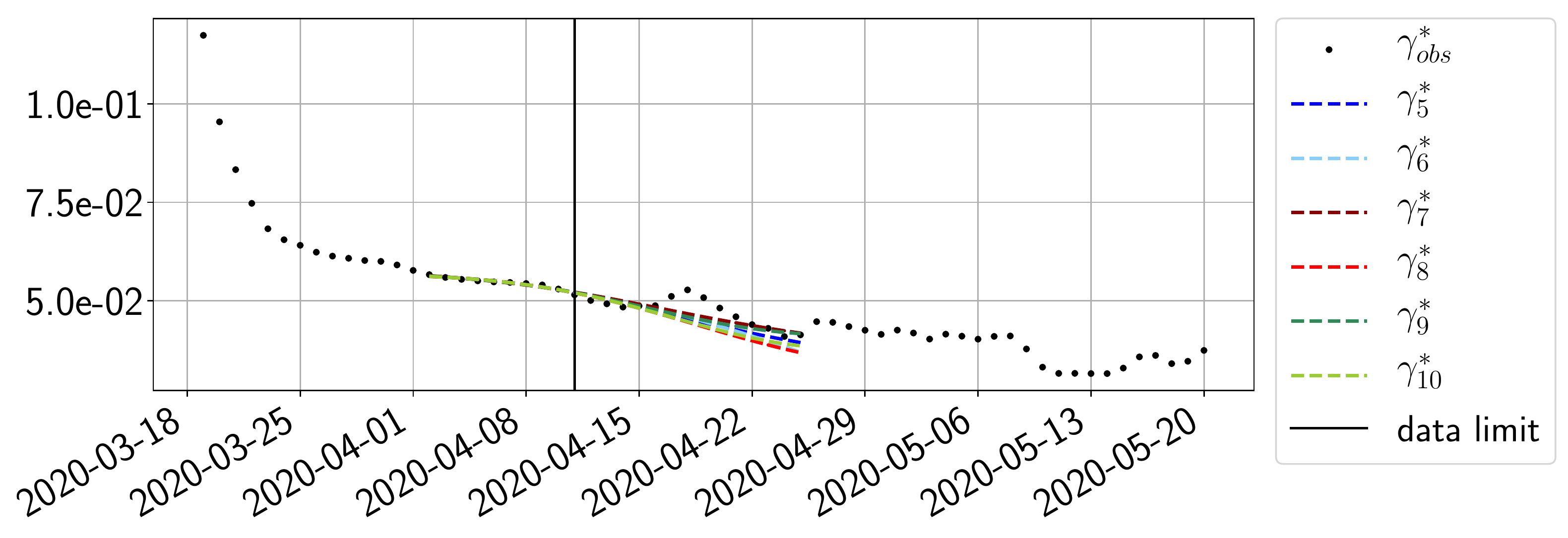}
\caption{$\gamma$}
\end{subfigure}

\vspace{0.4cm}

\begin{subfigure}{.45\textwidth}
\includegraphics[width=1\textwidth]{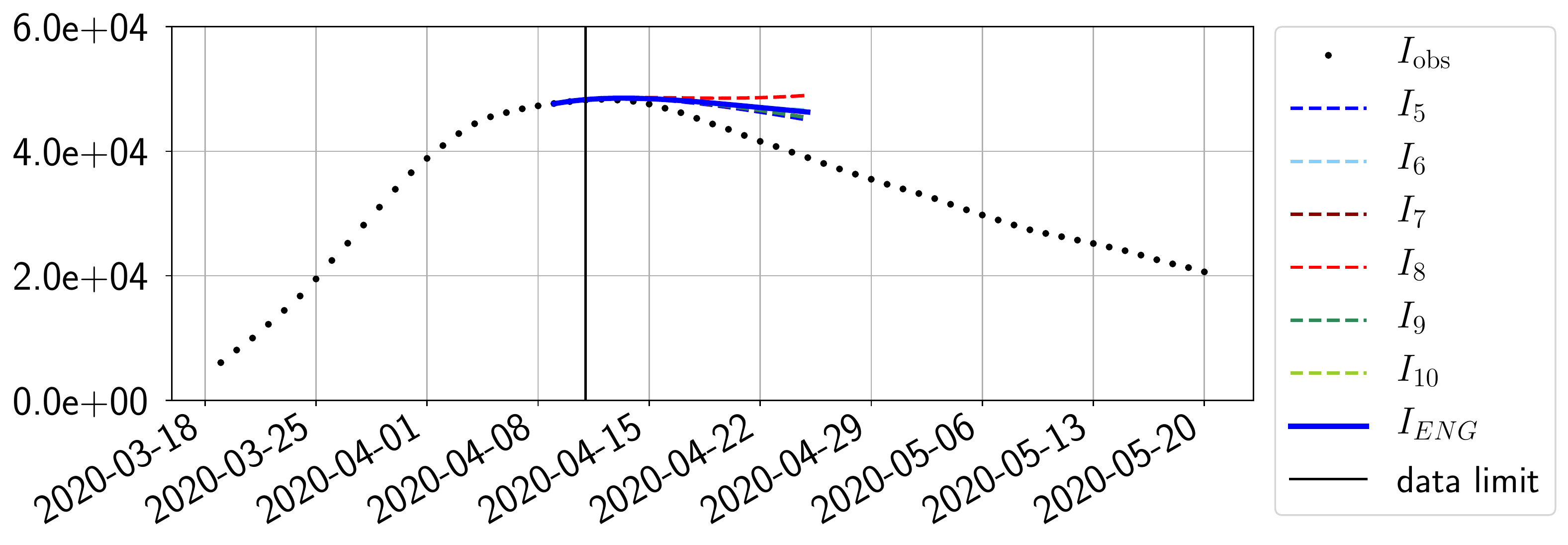}
\caption{Infected}
\end{subfigure}
\begin{subfigure}{.45\textwidth}
\includegraphics[width=1\textwidth]{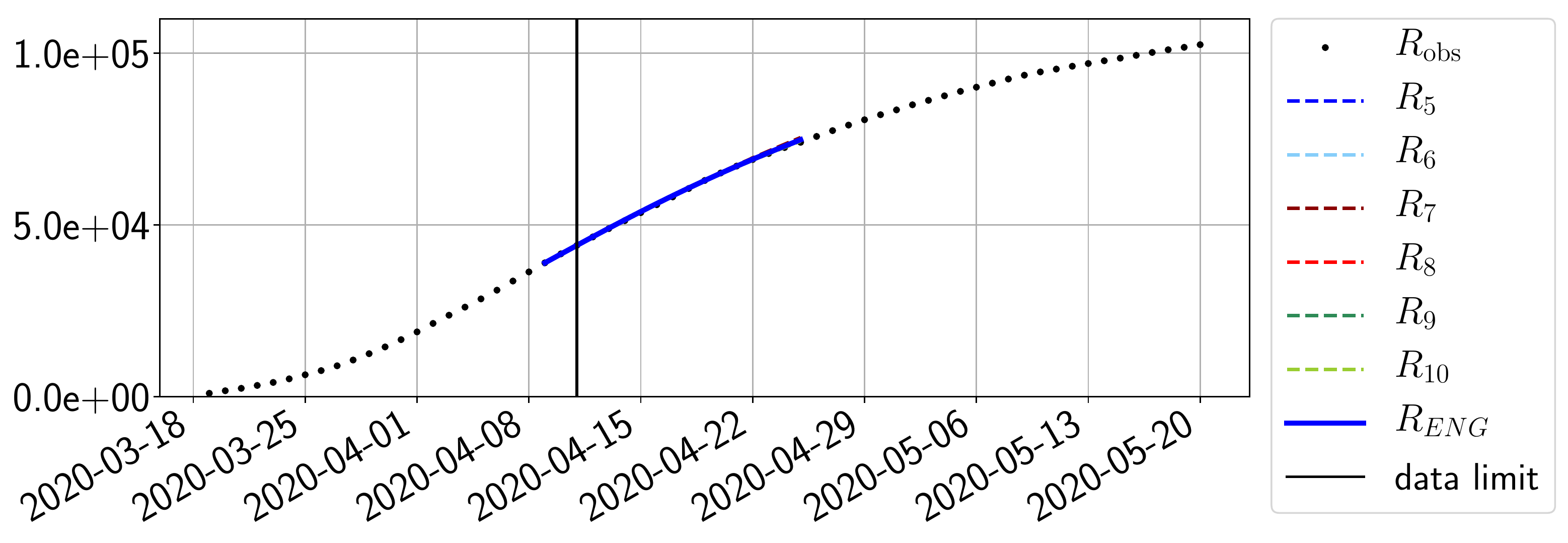}
\caption{Removed}
\end{subfigure}
\caption{ENG forecast from $T=11/04$}
\label{fig:forecast_modes_1104}
\end{figure}

\begin{figure}[H]
\centering
\begin{subfigure}{.45\textwidth}
\includegraphics[width=1\textwidth]{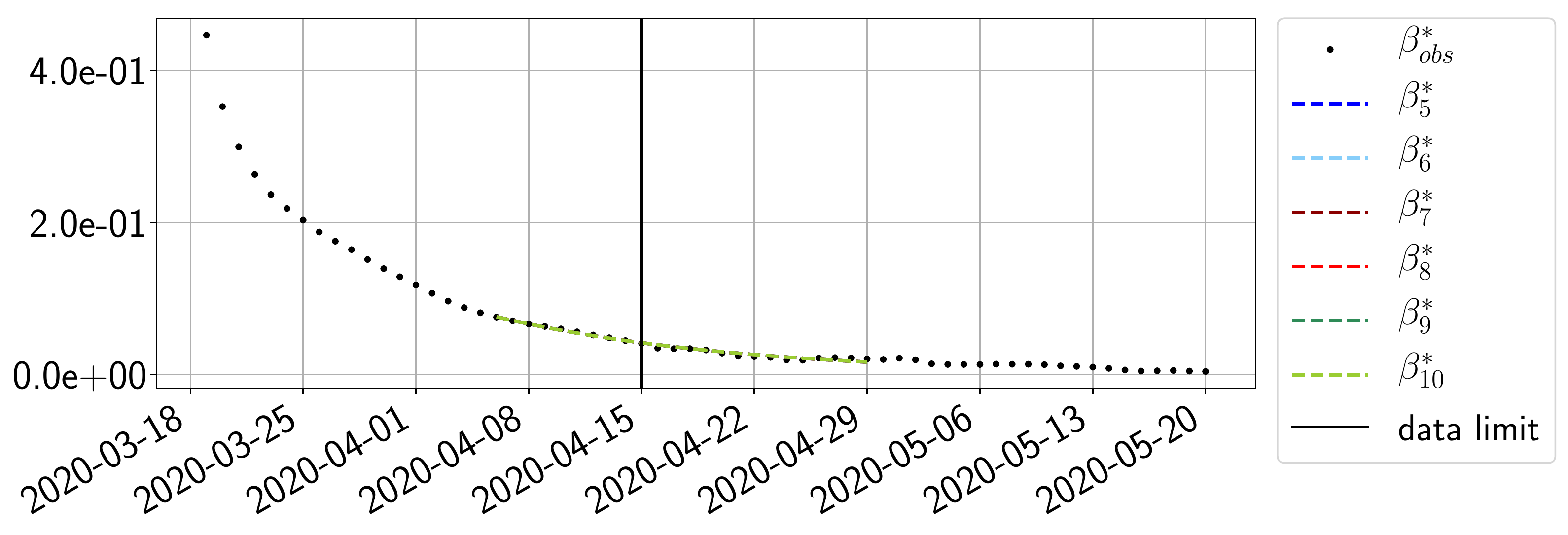}
\caption{$\beta$}
\end{subfigure}
\begin{subfigure}{.45\textwidth}
\includegraphics[width=1\textwidth]{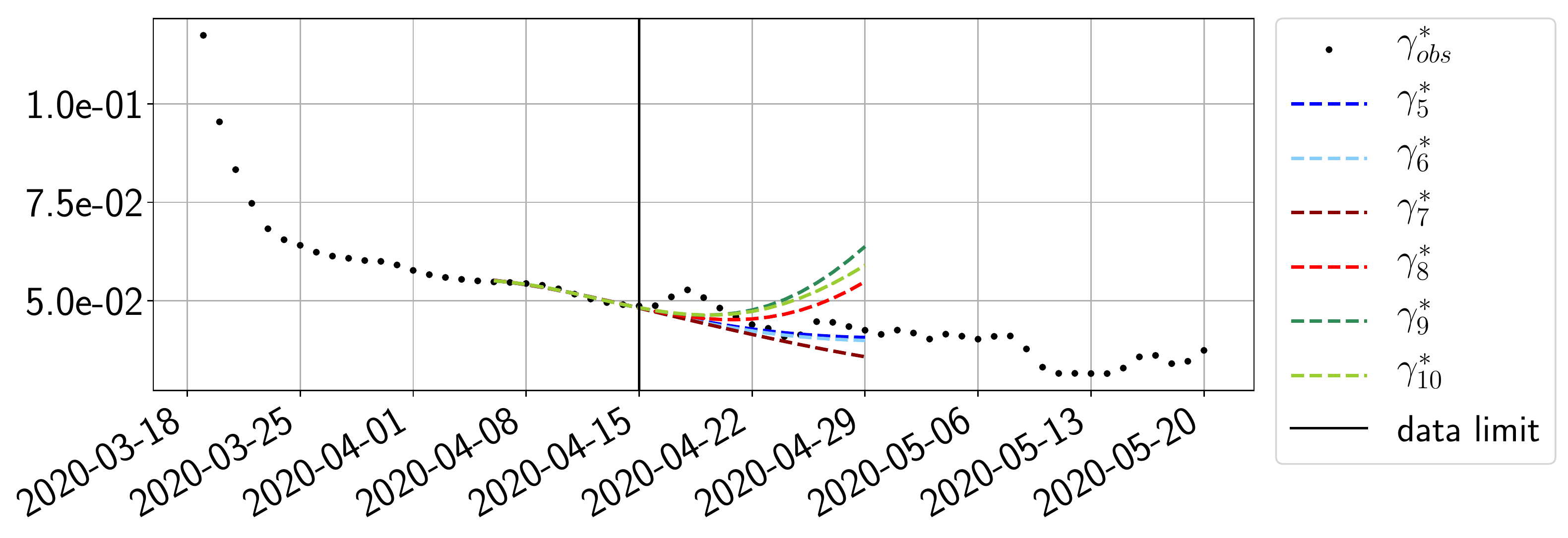}
\caption{$\gamma$}
\end{subfigure}

\vspace{0.4cm}

\begin{subfigure}{.45\textwidth}
\includegraphics[width=1\textwidth]{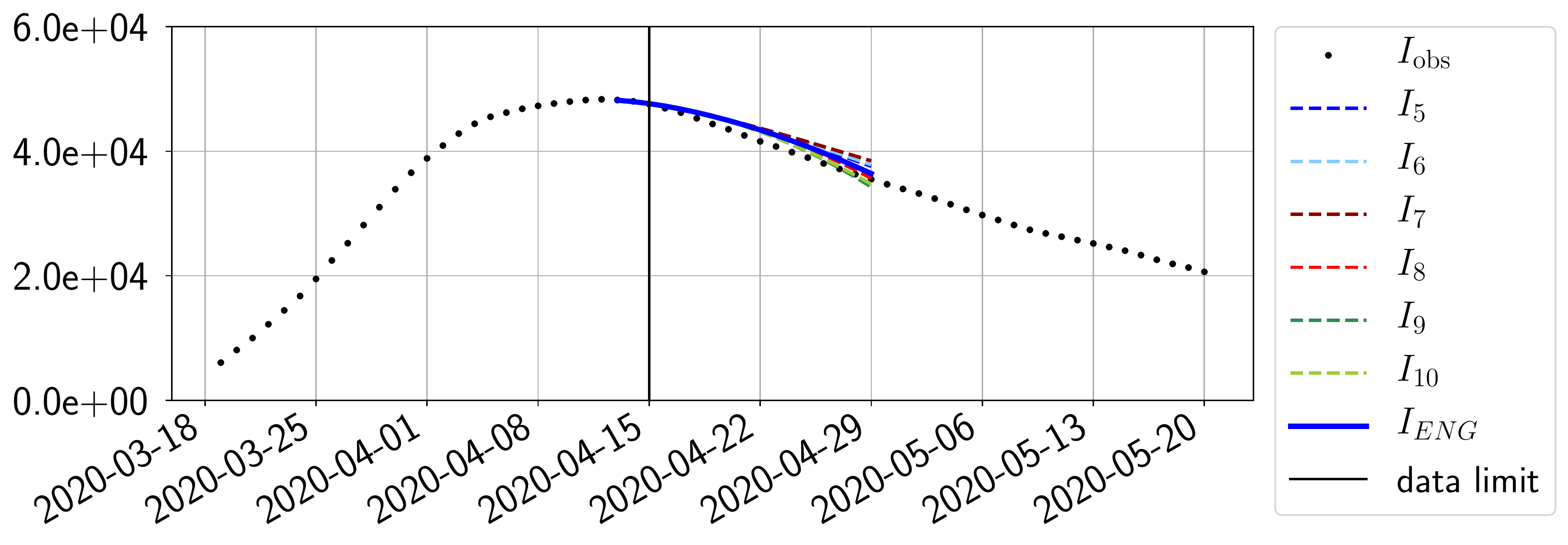}
\caption{Infected}
\end{subfigure}
\begin{subfigure}{.45\textwidth}
\includegraphics[width=1\textwidth]{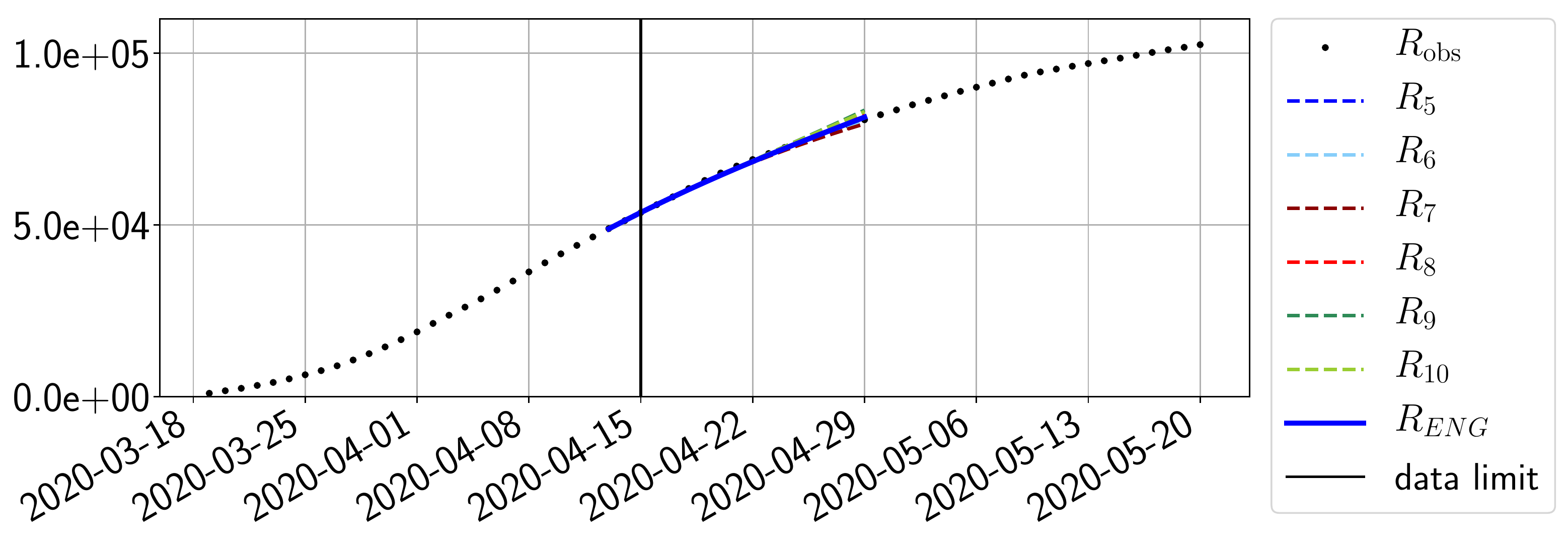}
\caption{Removed}
\end{subfigure}
\caption{ENG forecast from $T=15/04$}
\label{fig:forecast_modes_1504}
\end{figure}

\begin{figure}[H]
\centering
\begin{subfigure}{.45\textwidth}
\includegraphics[width=1\textwidth]{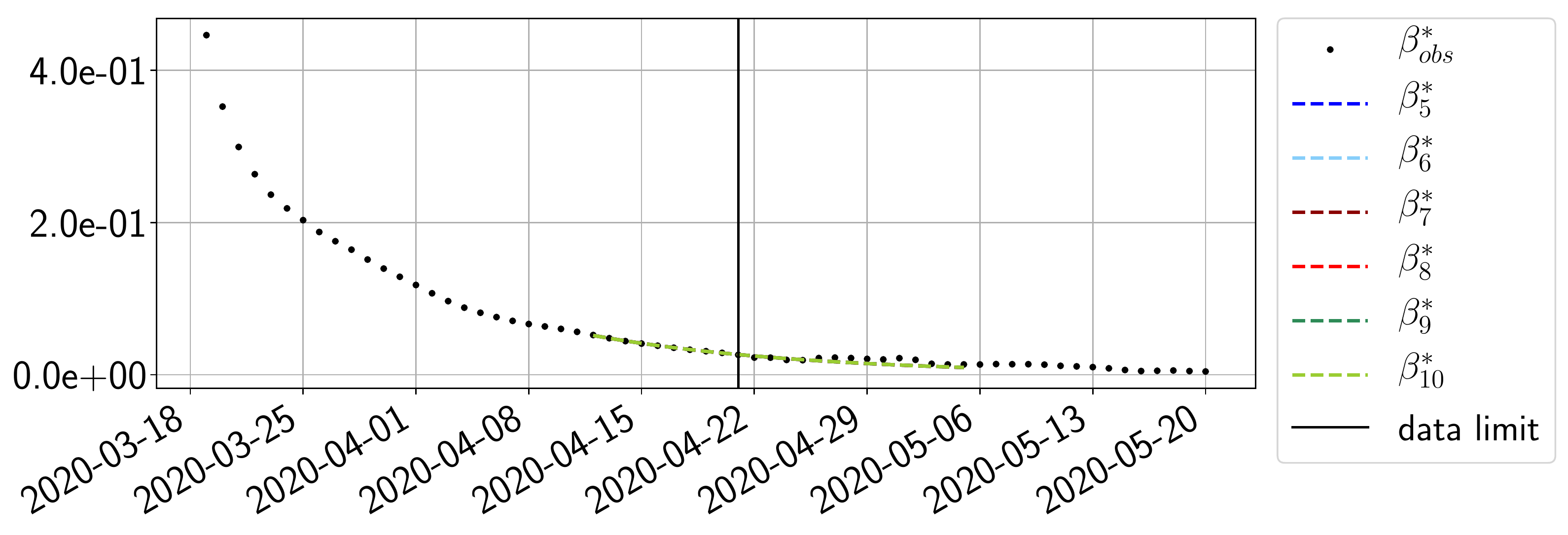}
\caption{$\beta$}
\end{subfigure}
\begin{subfigure}{.45\textwidth}
\includegraphics[width=1\textwidth]{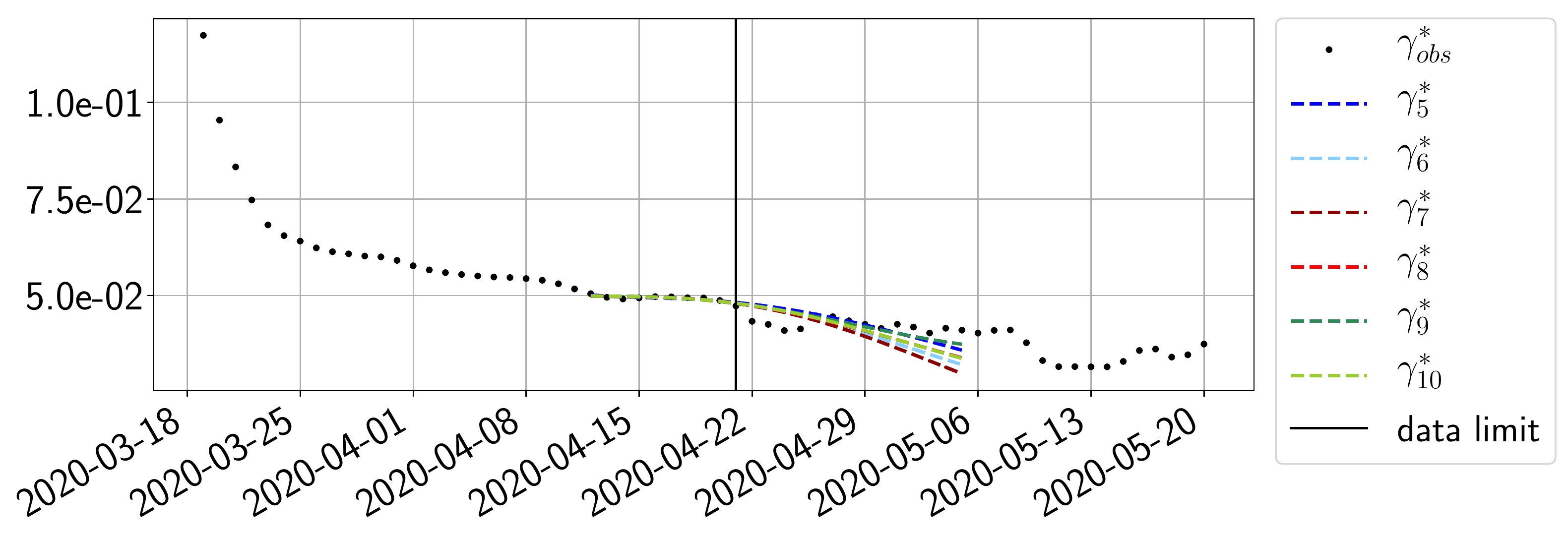}
\caption{$\gamma$}
\end{subfigure}

\vspace{0.4cm}

\begin{subfigure}{.45\textwidth}
\includegraphics[width=1\textwidth]{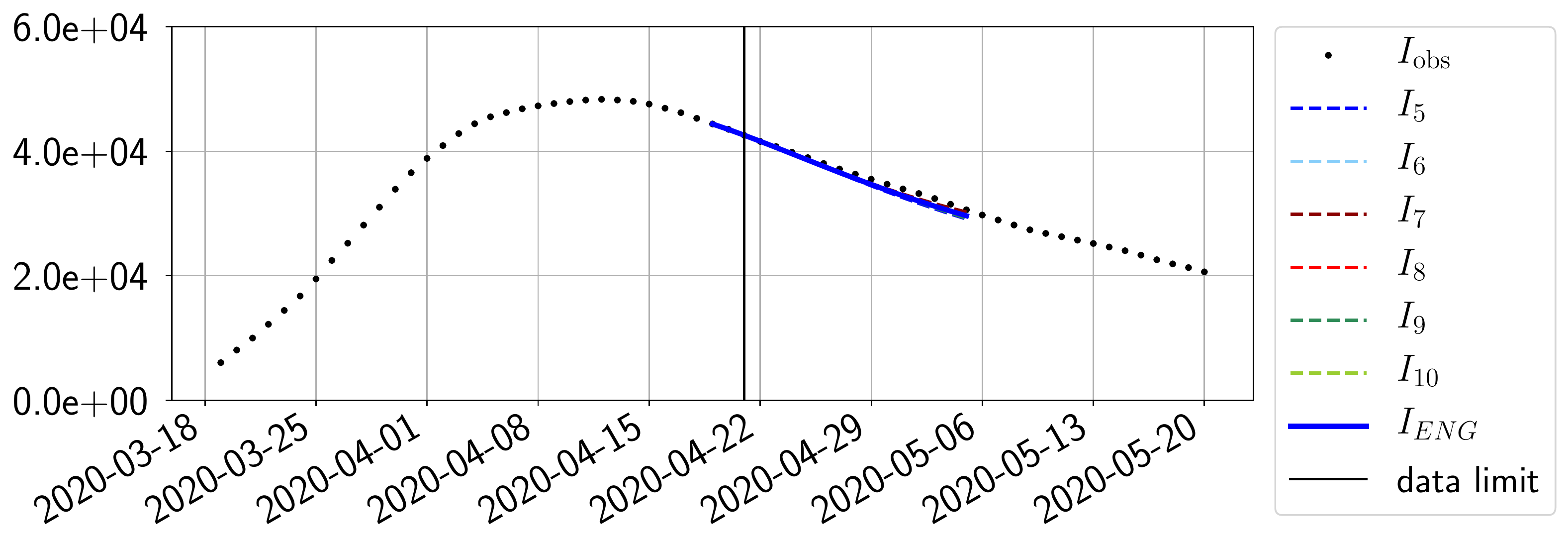}
\caption{Infected}
\end{subfigure}
\begin{subfigure}{.45\textwidth}
\includegraphics[width=1\textwidth]{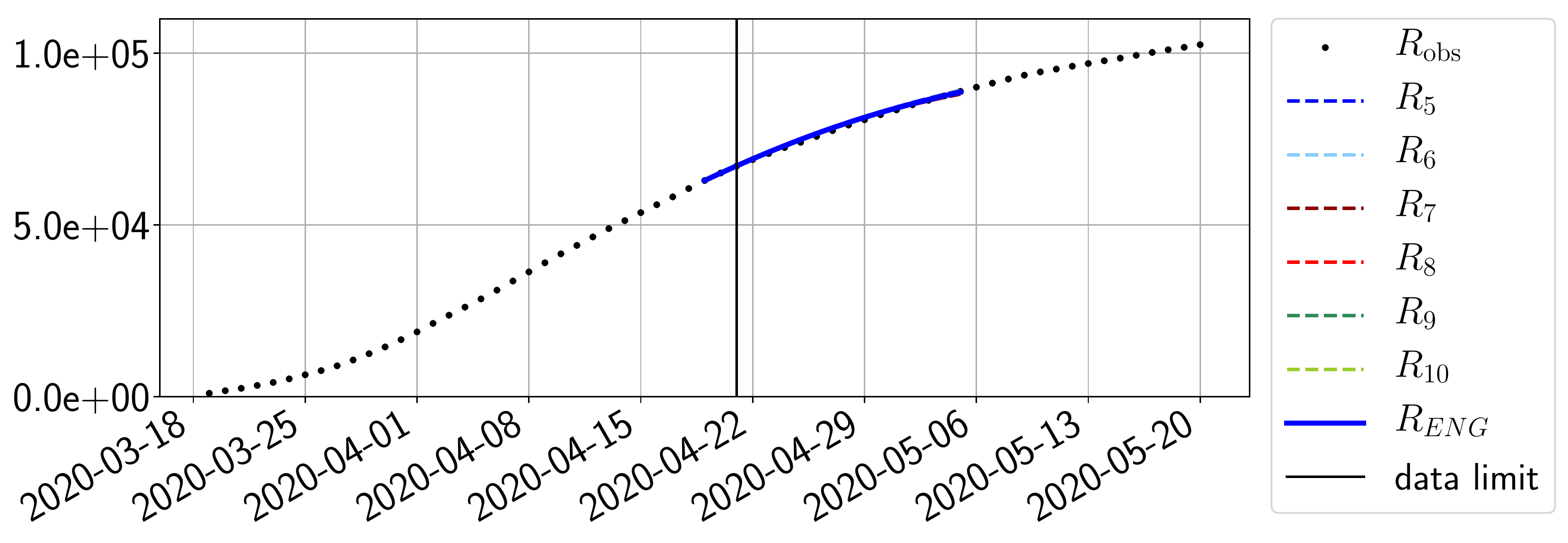}
\caption{Removed}
\end{subfigure}
\caption{ENG forecast from $T=21/04$}
\label{fig:forecast_modes_2104}
\end{figure}

\begin{figure}[H]
\centering
\begin{subfigure}{.45\textwidth}
\includegraphics[width=1\textwidth]{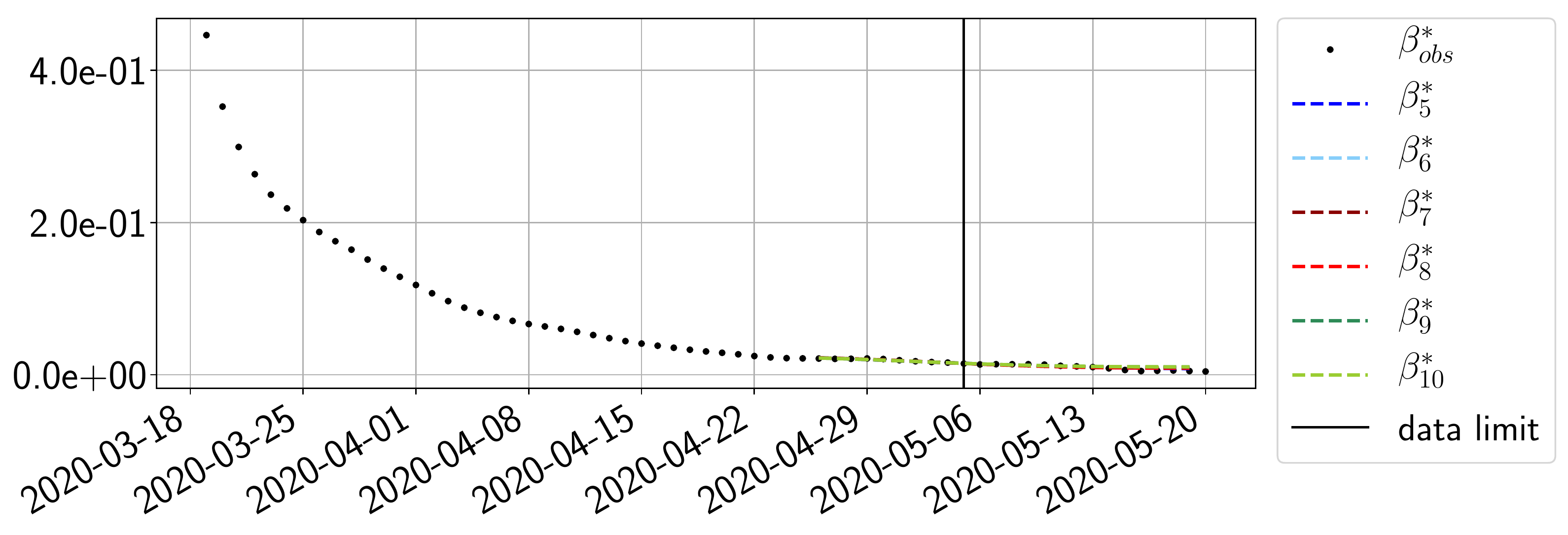}
\caption{$\beta$}
\end{subfigure}
\begin{subfigure}{.45\textwidth}
\includegraphics[width=1\textwidth]{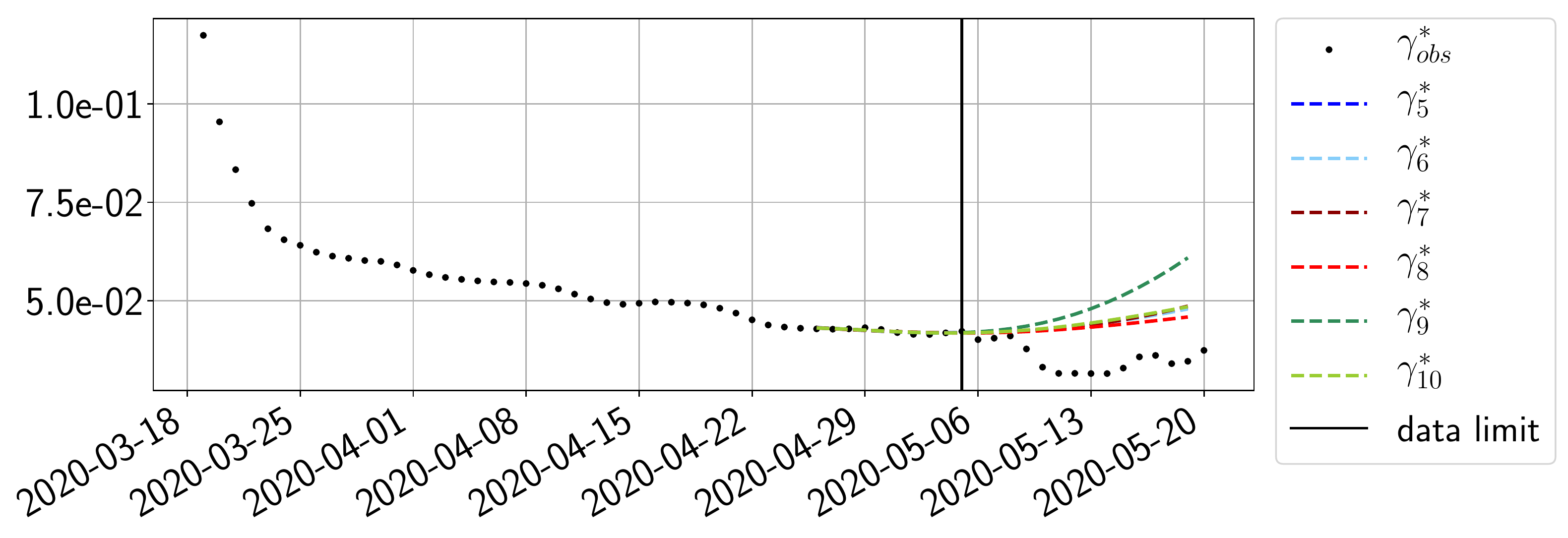}
\caption{$\gamma$}
\end{subfigure}

\vspace{0.4cm}

\begin{subfigure}{.45\textwidth}
\includegraphics[width=1\textwidth]{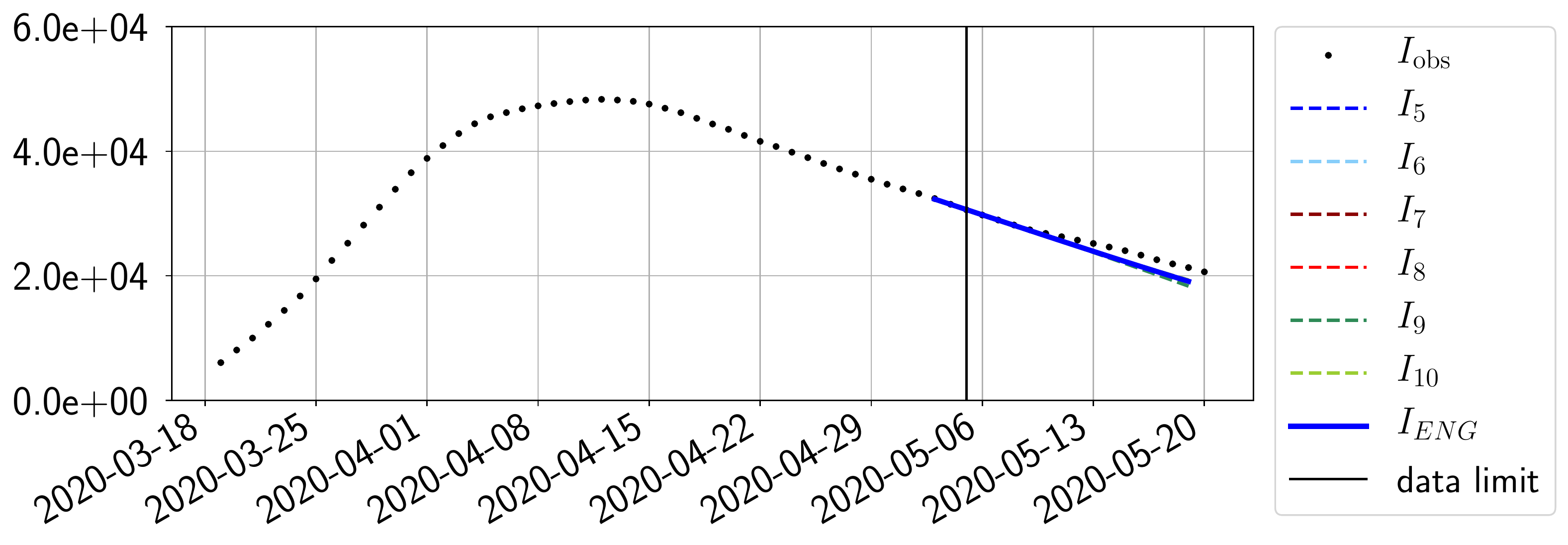}
\caption{Infected}
\end{subfigure}
\begin{subfigure}{.45\textwidth}
\includegraphics[width=1\textwidth]{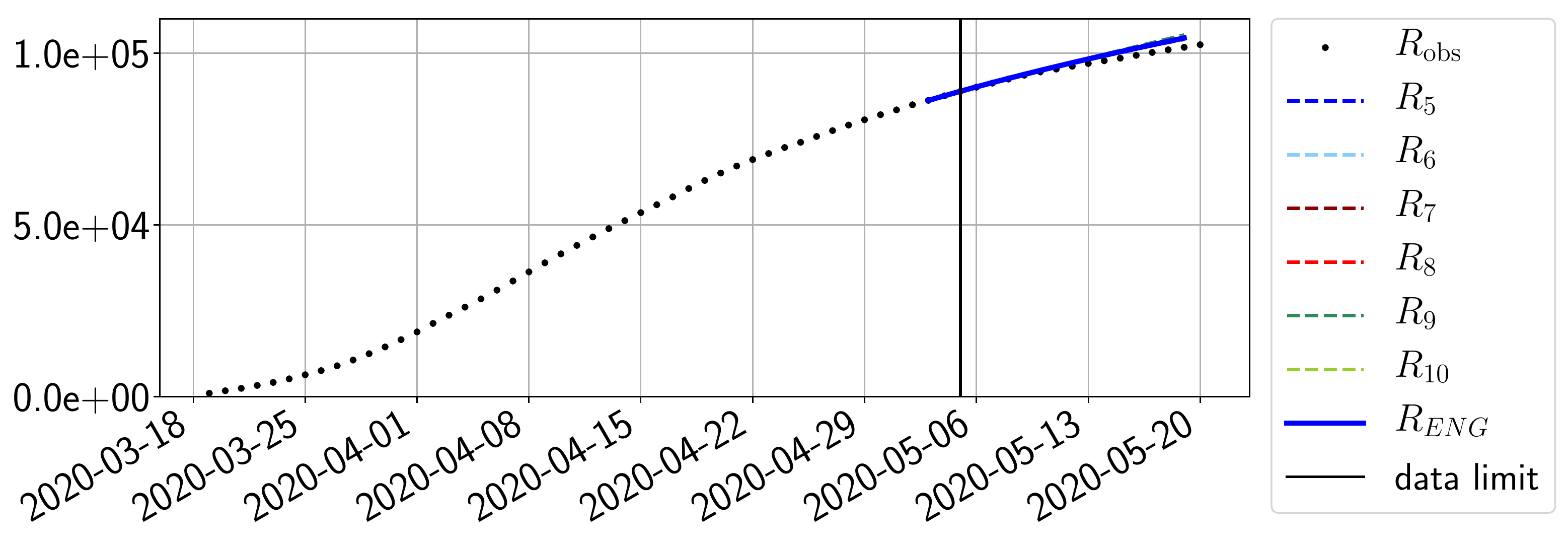}
\caption{Removed}
\end{subfigure}
\caption{ENG forecast from $T=05/05$}
\label{fig:forecast_modes_0505}
\end{figure}

\col{
\subsubsection{Forecasting of the second wave with ENG}
\label{sec:forecast-wave-2}
The review of the present paper took place during the month of November 2020 while the second COVID-19 pandemic wave hit France. We have taken the chance to enlarge the body of numerical results and we provide some example forecasts with ENG on this wave in figures \ref{fig:forecast_modes_2810} to \ref{fig:forecast_modes_0911}. As the figures illustrate, the method provides very satisfactory forecasts on a 14 day ahead window. We again observe a satisfactory prediction of the second peak (figures \ref{fig:forecast_modes_2810}, \ref{fig:forecast_modes_0311}  and  \ref{fig:forecast_modes_0911}) and the same difficulty in forecasting $\gamma$ due to the oscillations in $\gamma_\obs$ but this has not greatly impacted on the quality of the forecasts for $I$ and $R$.
}

\begin{figure}[H]
\centering
\begin{subfigure}{.45\textwidth}
\includegraphics[width=1\textwidth]{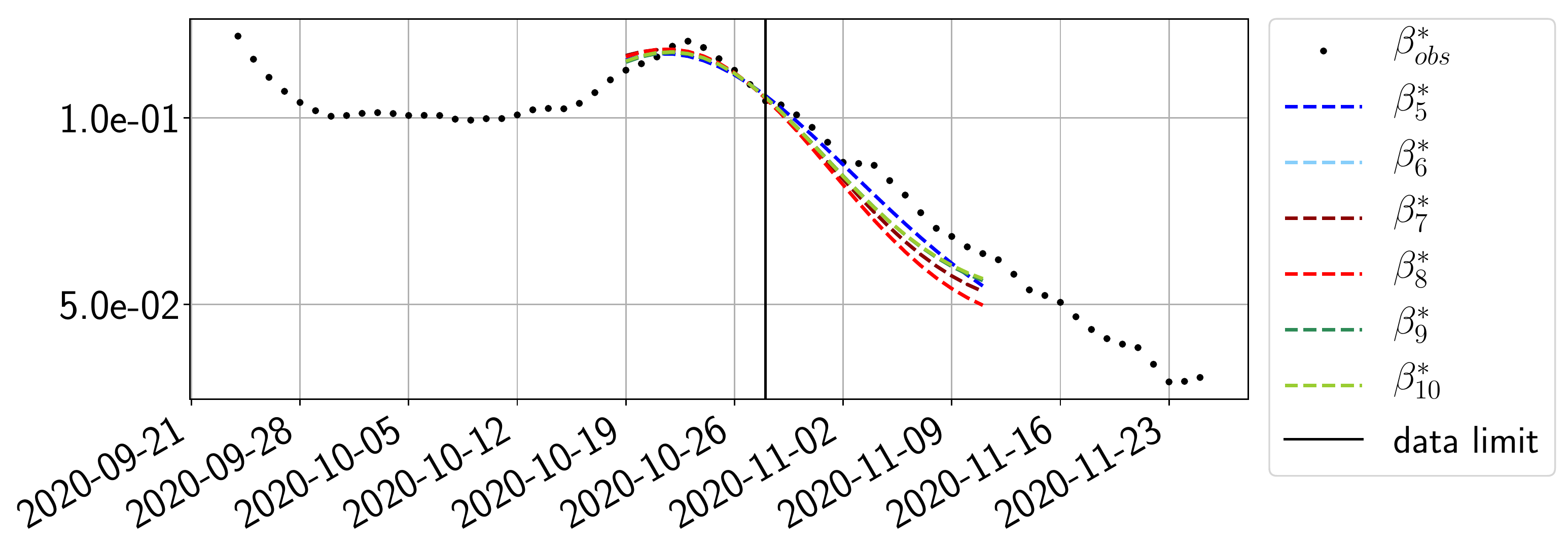}
\caption{$\beta$}
\end{subfigure}
\begin{subfigure}{.45\textwidth}
\includegraphics[width=1\textwidth]{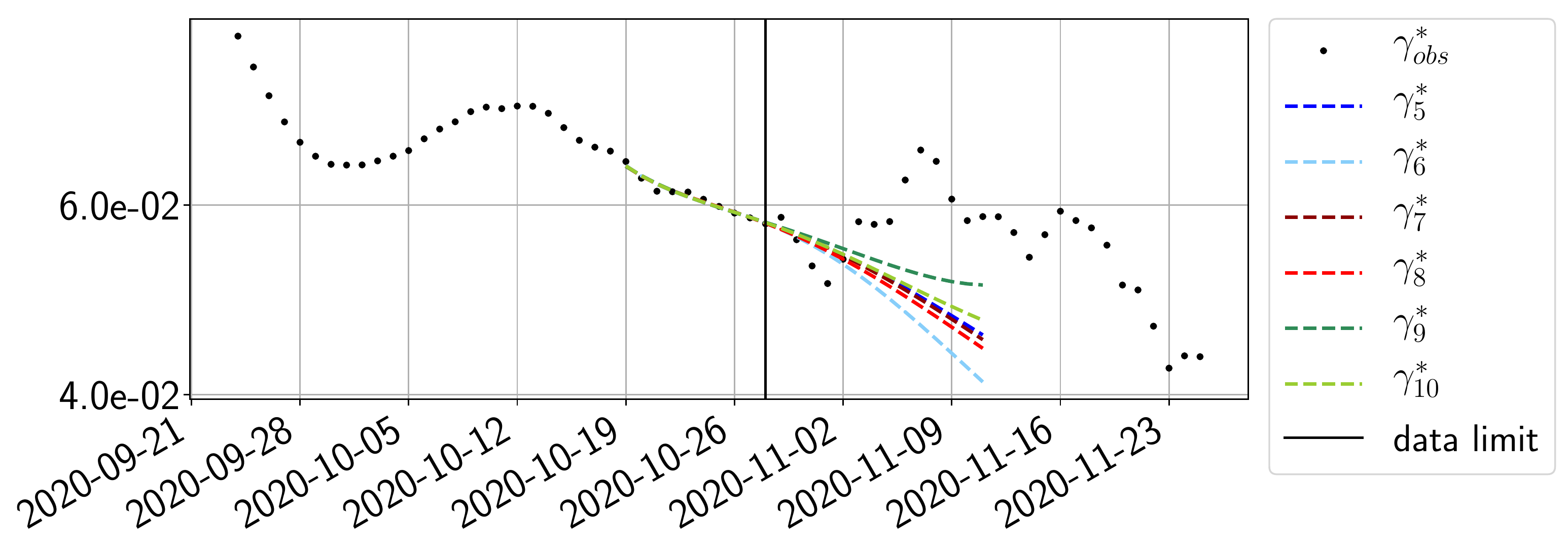}
\caption{$\gamma$}
\end{subfigure}

\vspace{0.4cm}

\begin{subfigure}{.45\textwidth}
\includegraphics[width=1\textwidth]{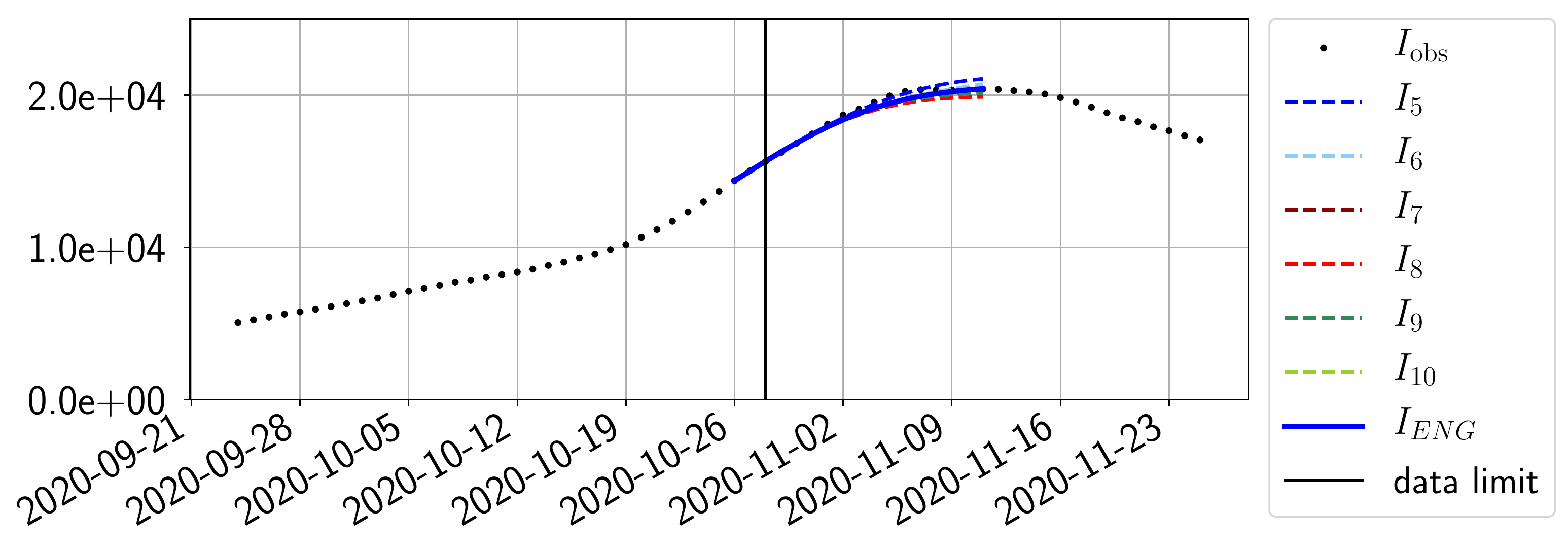}
\caption{Infected}
\end{subfigure}
\begin{subfigure}{.45\textwidth}
\includegraphics[width=1\textwidth]{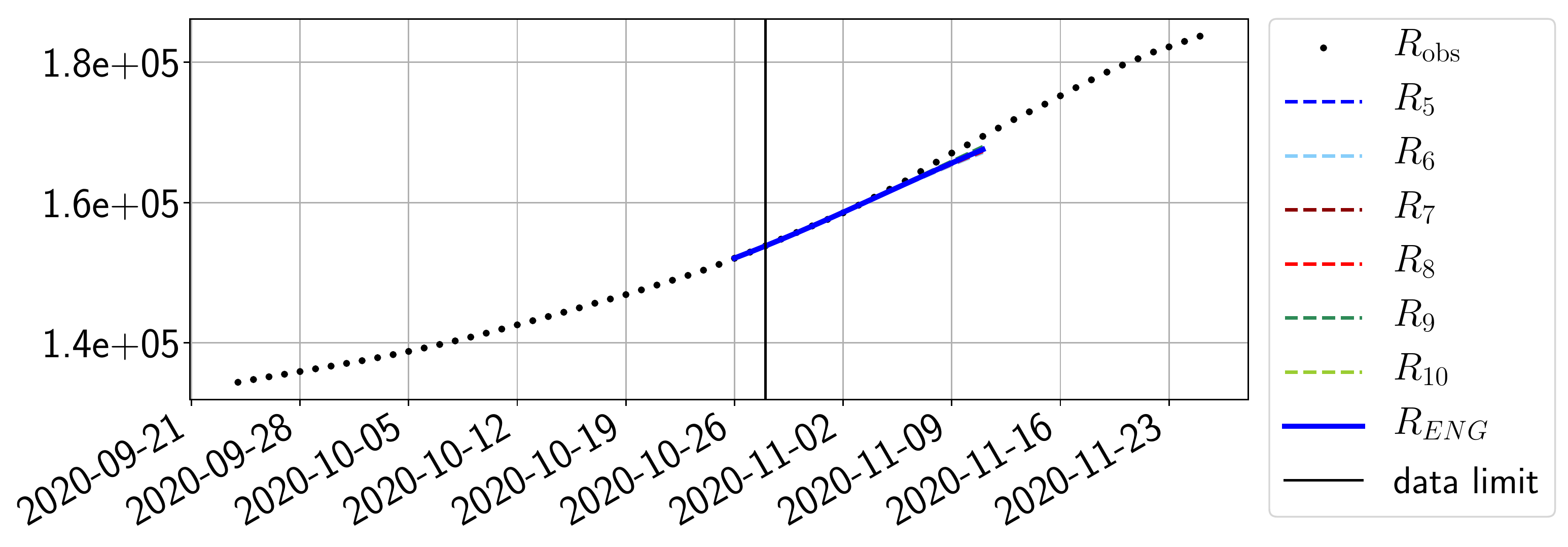}
\caption{Removed}
\end{subfigure}
\caption{ENG forecast from $T=28/10$}
\label{fig:forecast_modes_2810}
\end{figure}

\begin{figure}[H]
\centering
\begin{subfigure}{.45\textwidth}
\includegraphics[width=1\textwidth]{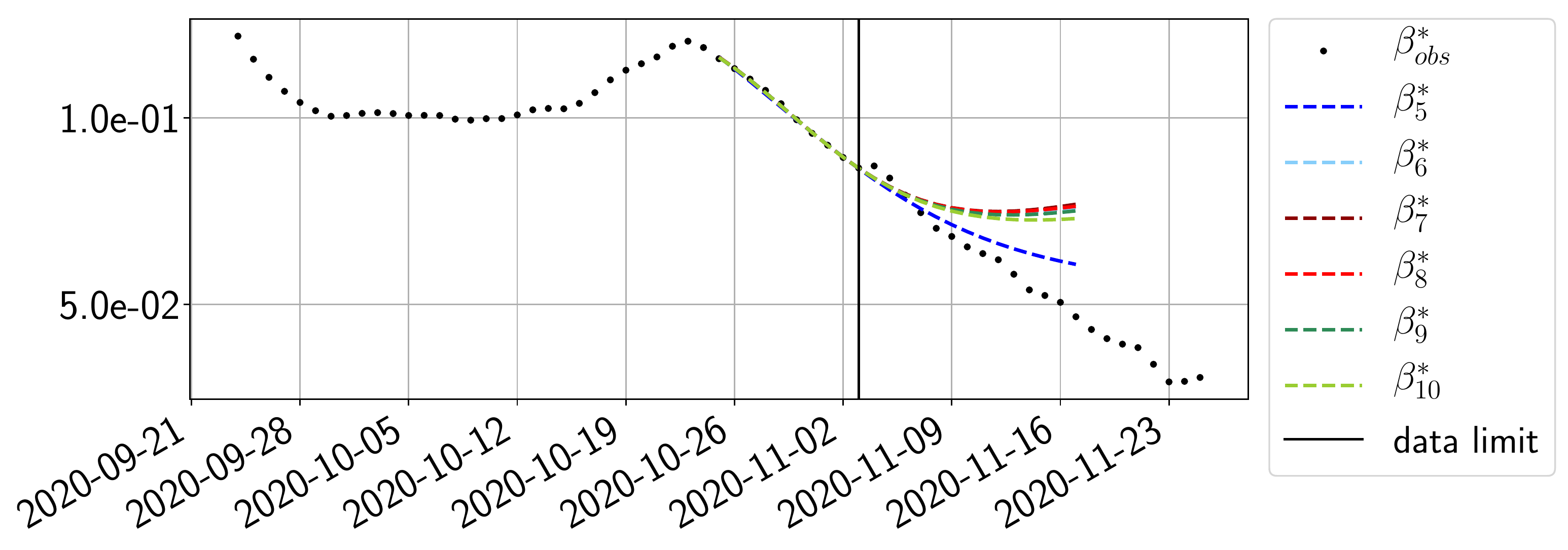}
\caption{$\beta$}
\end{subfigure}
\begin{subfigure}{.45\textwidth}
\includegraphics[width=1\textwidth]{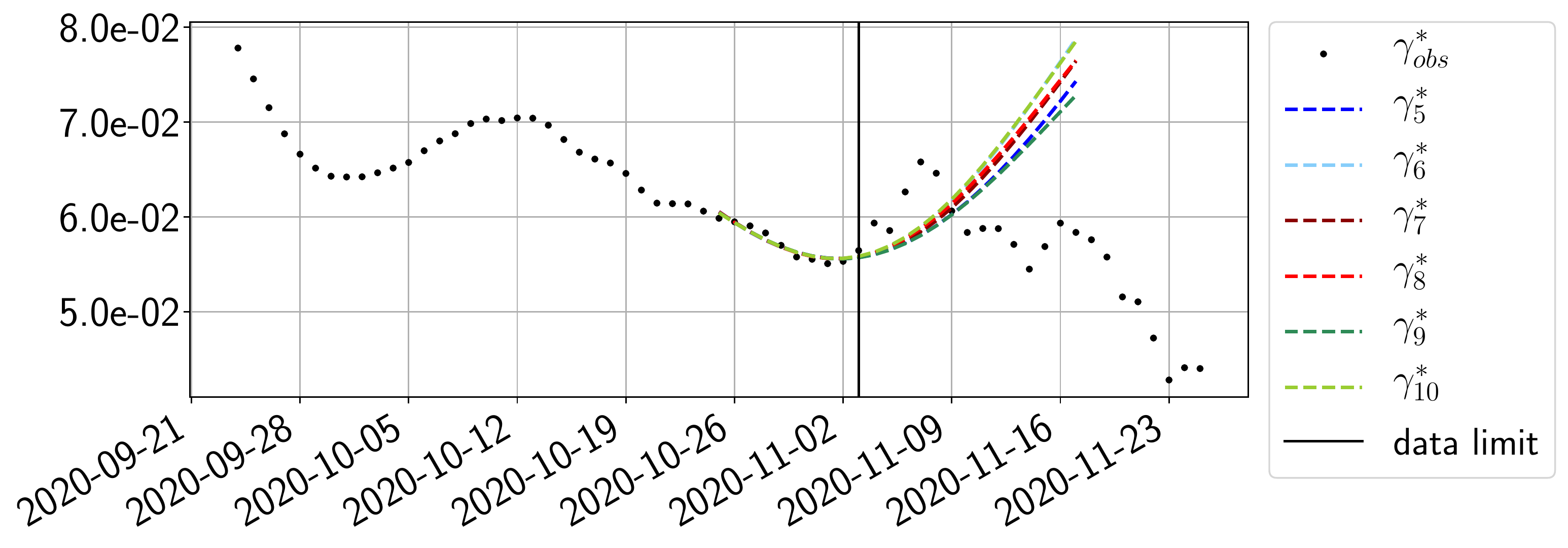}
\caption{$\gamma$}
\end{subfigure}

\vspace{0.4cm}

\begin{subfigure}{.45\textwidth}
\includegraphics[width=1\textwidth]{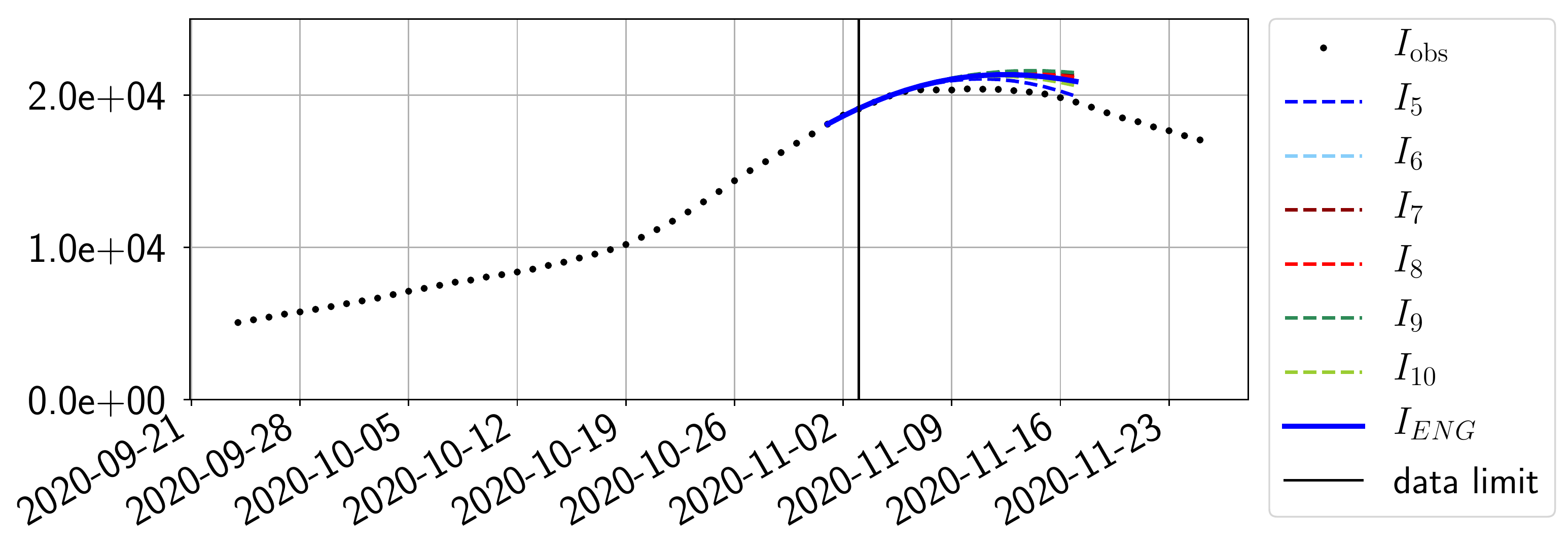}
\caption{Infected}
\end{subfigure}
\begin{subfigure}{.45\textwidth}
\includegraphics[width=1\textwidth]{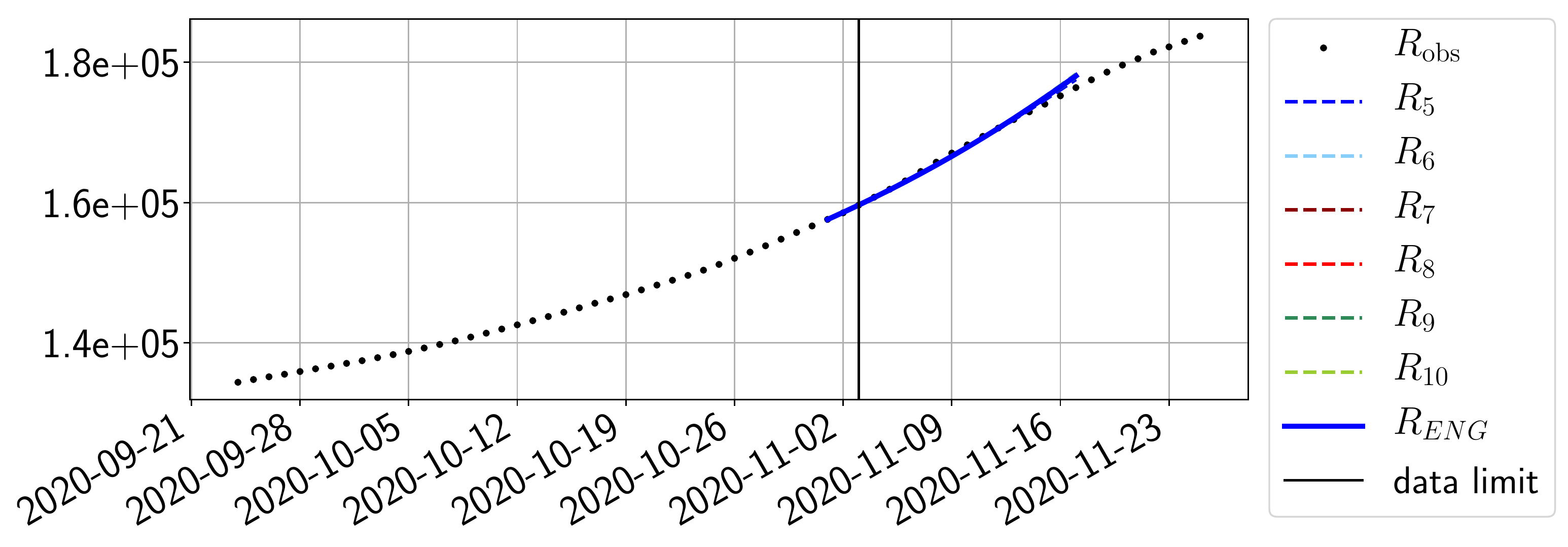}
\caption{Removed}
\end{subfigure}
\caption{ENG forecast from $T=03/11$}
\label{fig:forecast_modes_0311}
\end{figure}

\begin{figure}[H]
\centering
\begin{subfigure}{.45\textwidth}
\includegraphics[width=1\textwidth]{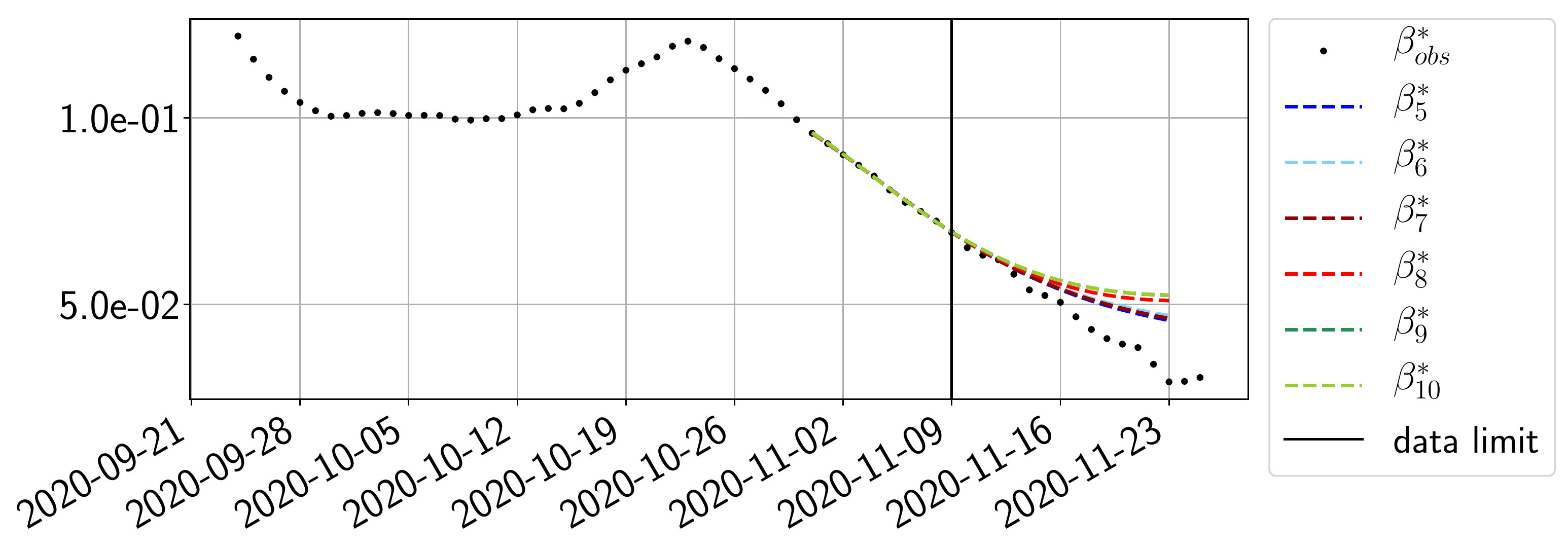}
\caption{$\beta$}
\end{subfigure}
\begin{subfigure}{.45\textwidth}
\includegraphics[width=1\textwidth]{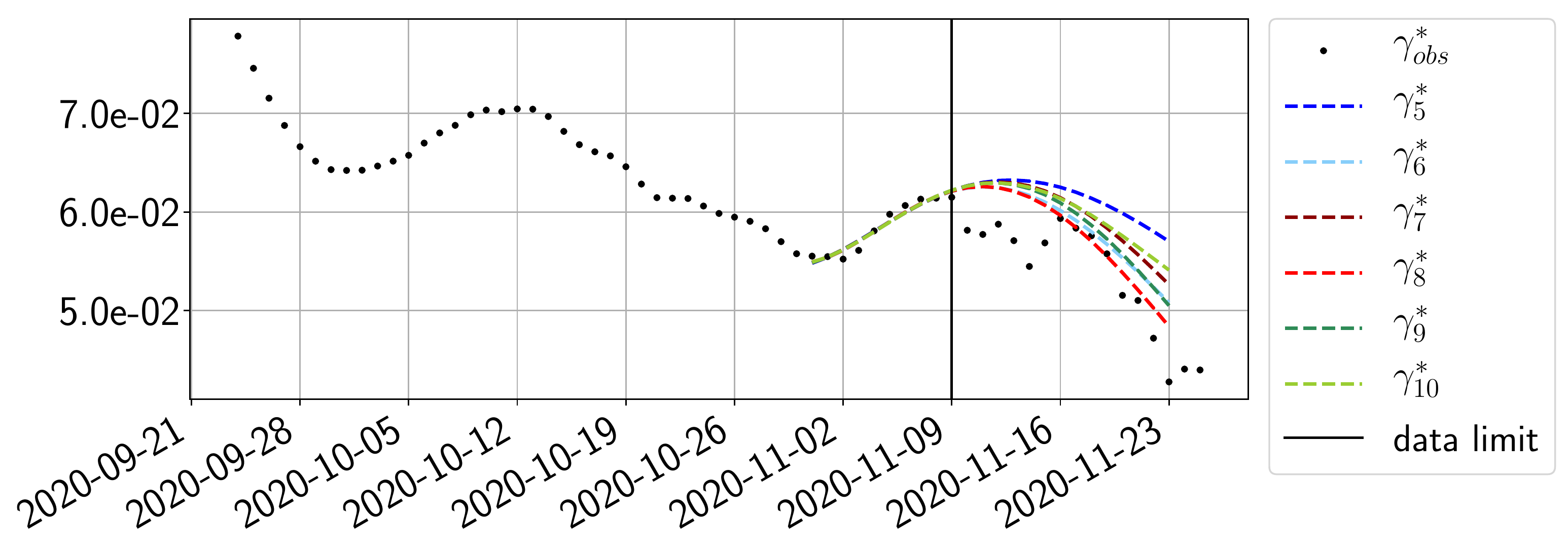}
\caption{$\gamma$}
\end{subfigure}

\vspace{0.4cm}

\begin{subfigure}{.45\textwidth}
\includegraphics[width=1\textwidth]{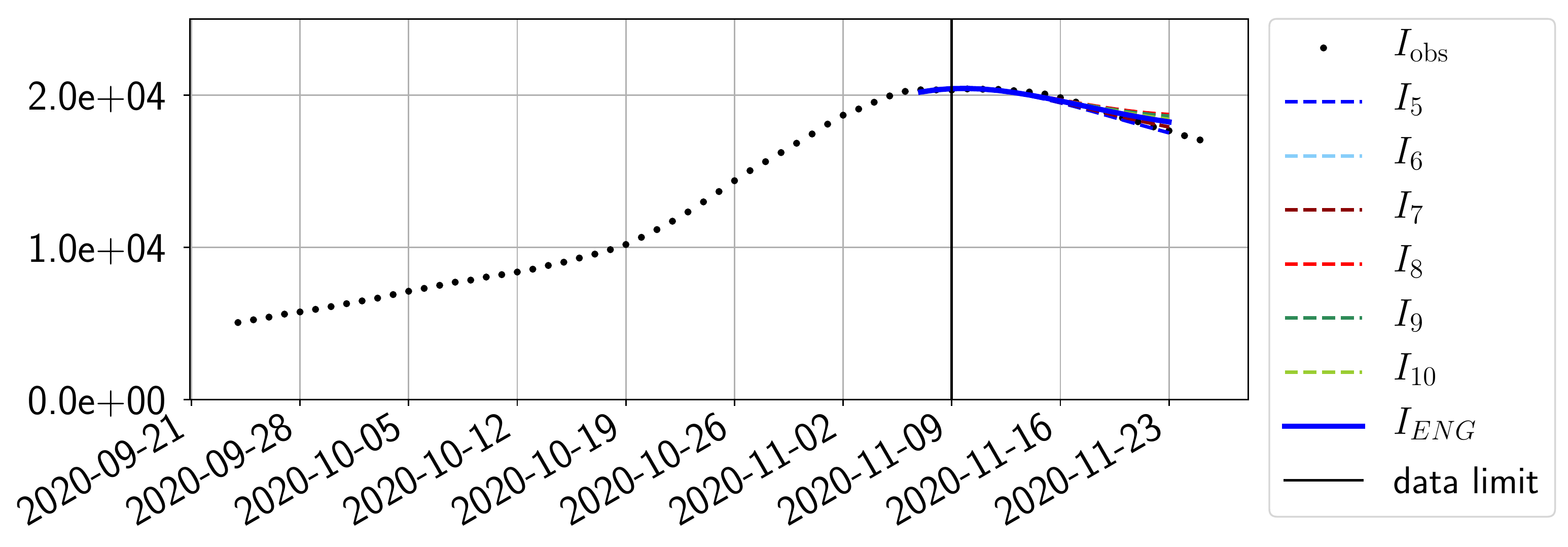}
\caption{Infected}
\end{subfigure}
\begin{subfigure}{.45\textwidth}
\includegraphics[width=1\textwidth]{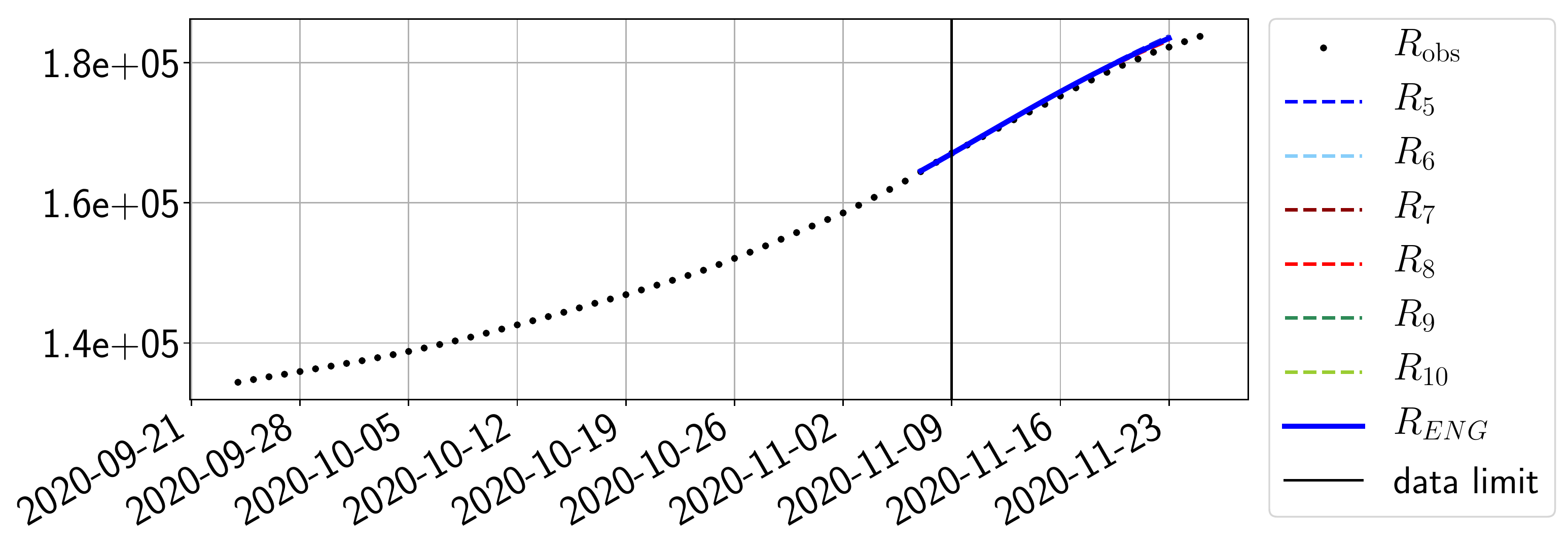}
\caption{Removed}
\end{subfigure}
\caption{ENG forecast from $T=09/11$}
\label{fig:forecast_modes_0911}
\end{figure}

\col{
\subsubsection{Forecasts with ENG on a 28-day-ahead window}
\label{sec:forecast-28-days}
To conclude this results section, we extend the forecasting window to 28 days instead of 14 and study whether the introduced ENG method still provides satisfactory \col{forecasts}. As shown in figures \ref{fig:forecast_28_days_modes_0104} to \ref{fig:forecast_28_days_modes_2810}), the results of the methods are quite stable for large windows. This shows that in contrast to standard extrapolation methods using classical linear or affine regressions, the reduced basis catches the dynamics of $\bfbeta$ and $\bfgamma$ not only locally but also at extended time intervals.
} 

\begin{figure}[H]
\centering
\begin{subfigure}{.45\textwidth}
\includegraphics[width=1\textwidth]{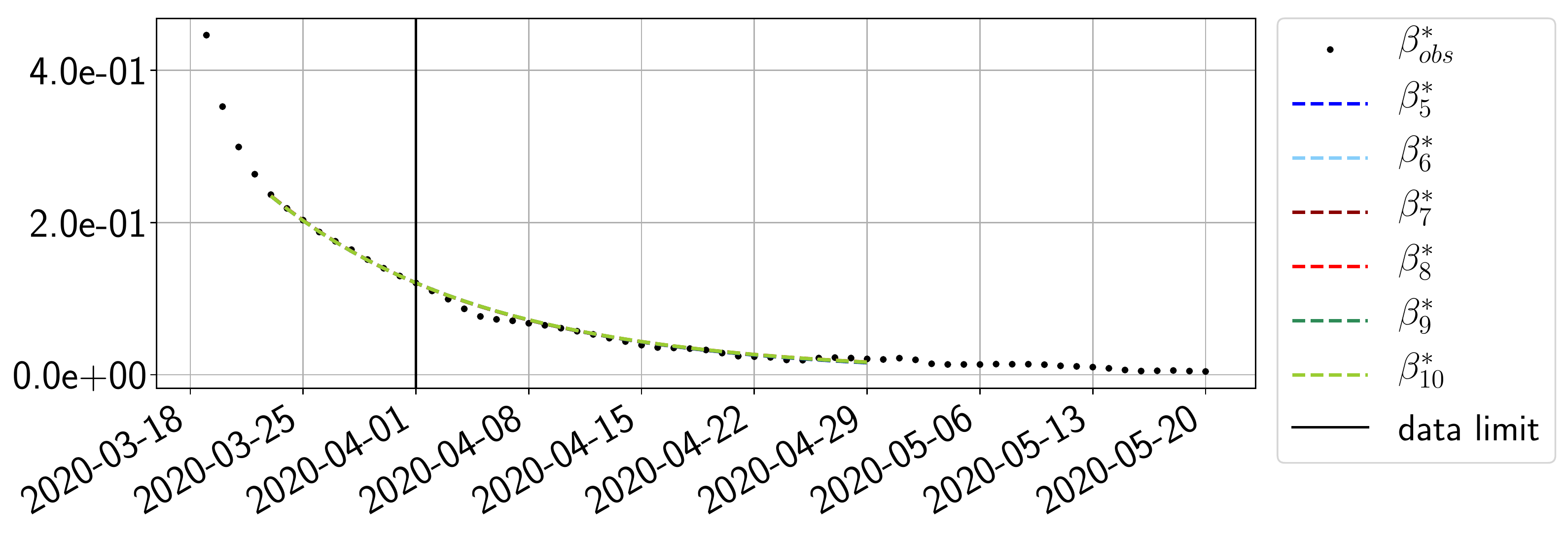}
\caption{$\beta$}
\end{subfigure}
\begin{subfigure}{.45\textwidth}
\includegraphics[width=1\textwidth]{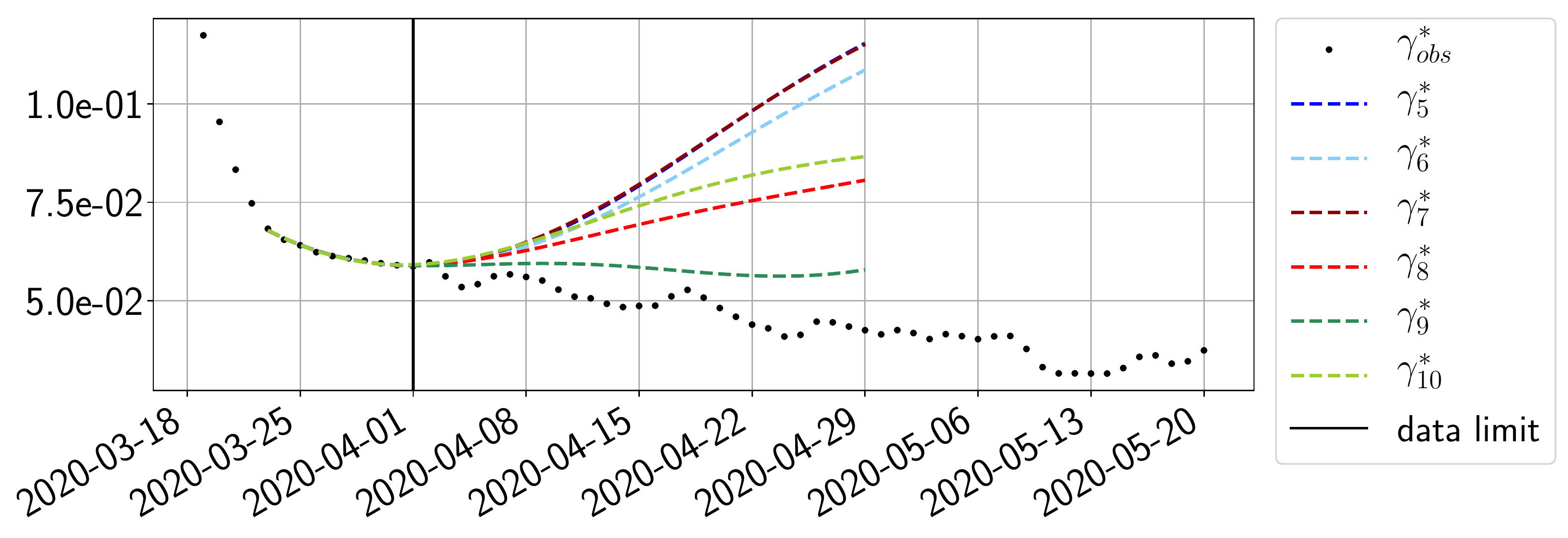}
\caption{$\gamma$}
\end{subfigure}

\vspace{0.4cm}

\begin{subfigure}{.45\textwidth}
\includegraphics[width=1\textwidth]{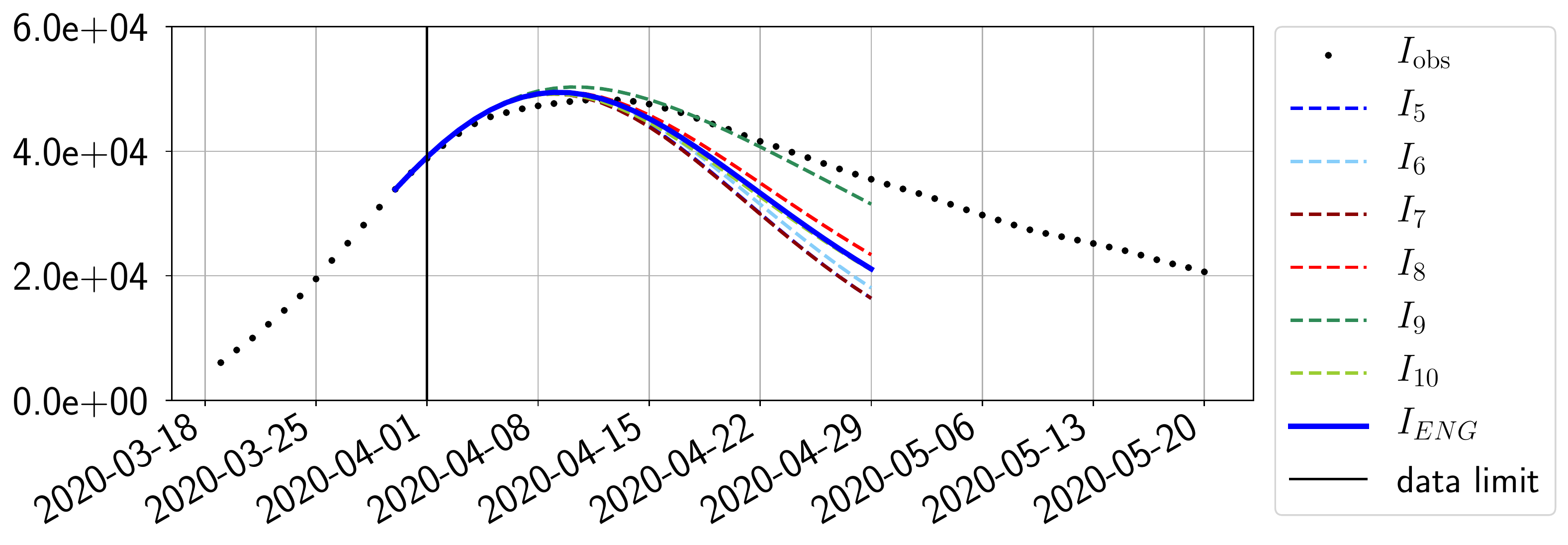}
\caption{Infected}
\end{subfigure}
\begin{subfigure}{.45\textwidth}
\includegraphics[width=1\textwidth]{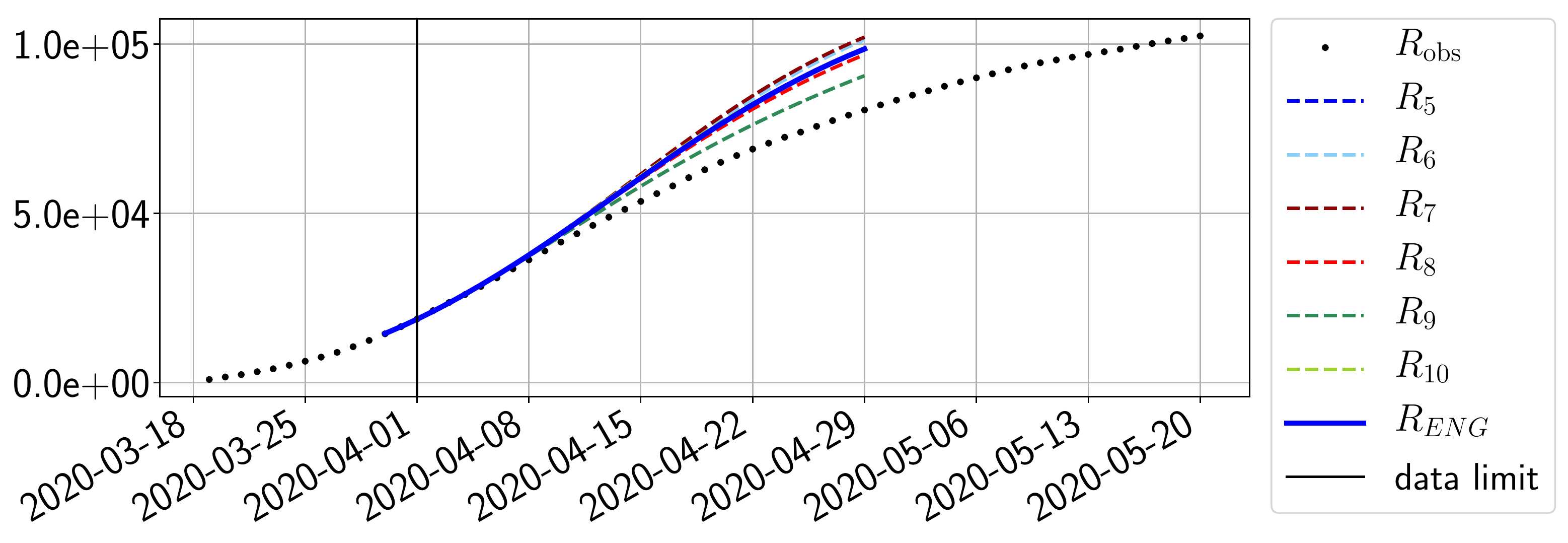}
\caption{Removed}
\end{subfigure}
\caption{ENG forecast from $T=01/04$}
\label{fig:forecast_28_days_modes_0104}
\end{figure}

\begin{figure}[H]
\centering
\begin{subfigure}{.45\textwidth}
\includegraphics[width=1\textwidth]{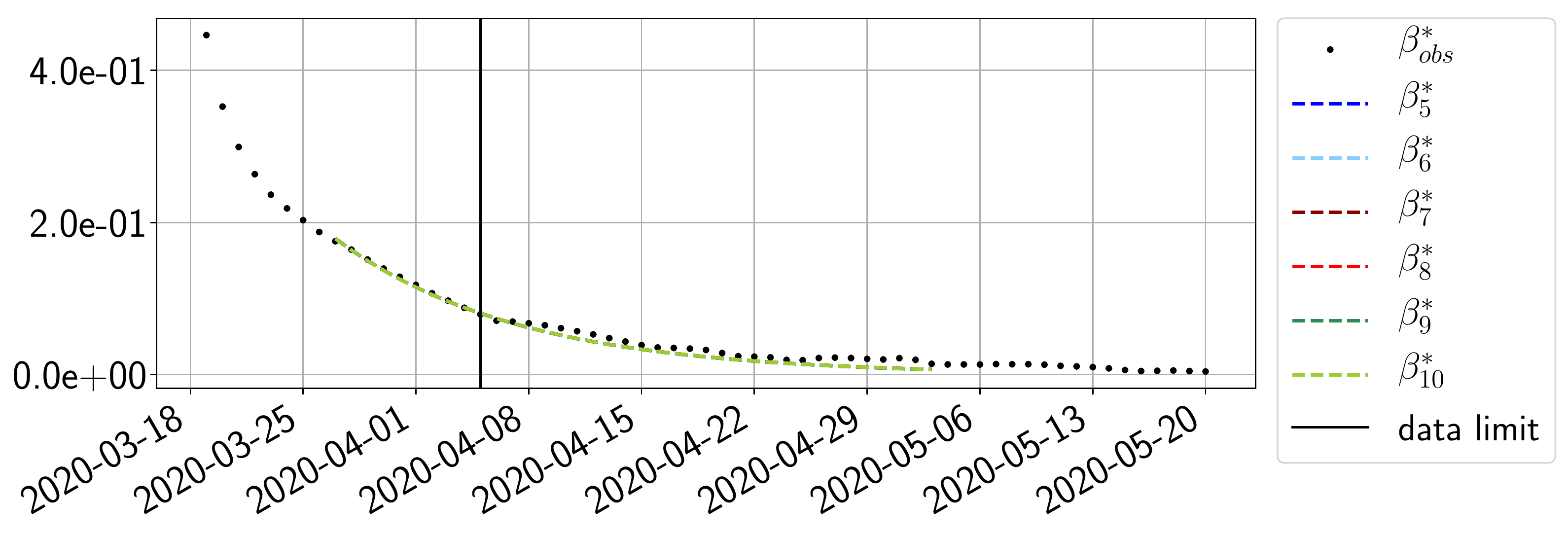}
\caption{$\beta$}
\end{subfigure}
\begin{subfigure}{.45\textwidth}
\includegraphics[width=1\textwidth]{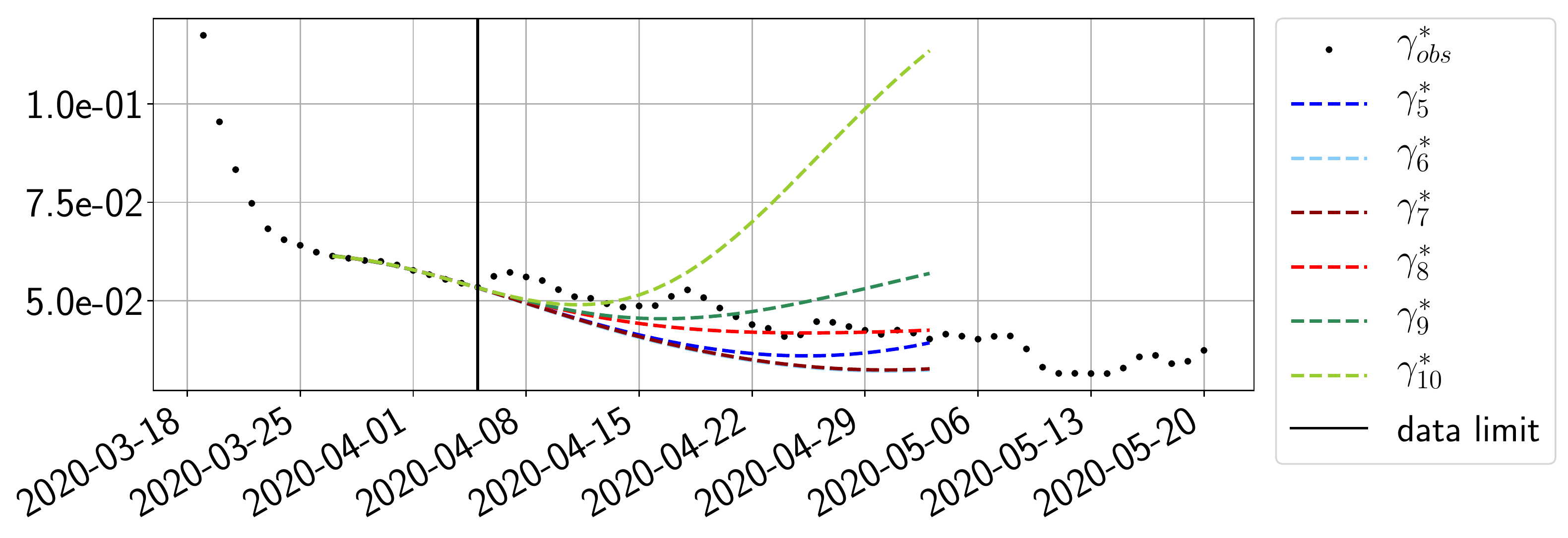}
\caption{$\gamma$}
\end{subfigure}

\vspace{0.4cm}

\begin{subfigure}{.45\textwidth}
\includegraphics[width=1\textwidth]{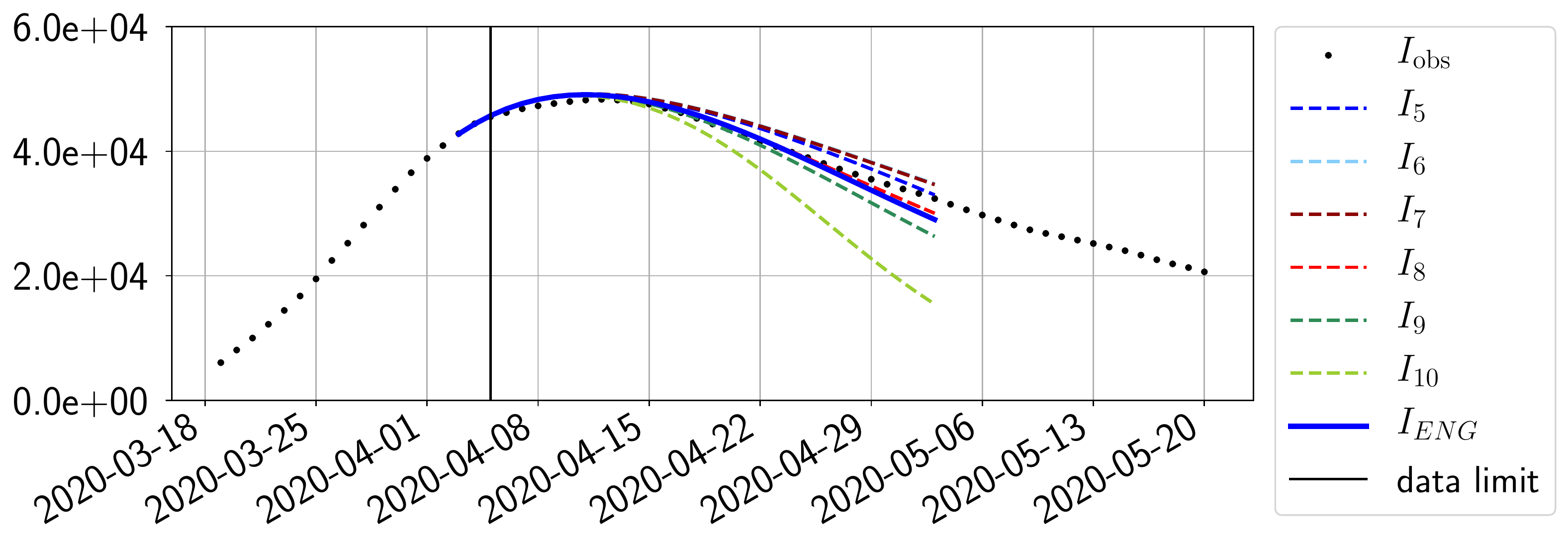}
\caption{Infected}
\end{subfigure}
\begin{subfigure}{.45\textwidth}
\includegraphics[width=1\textwidth]{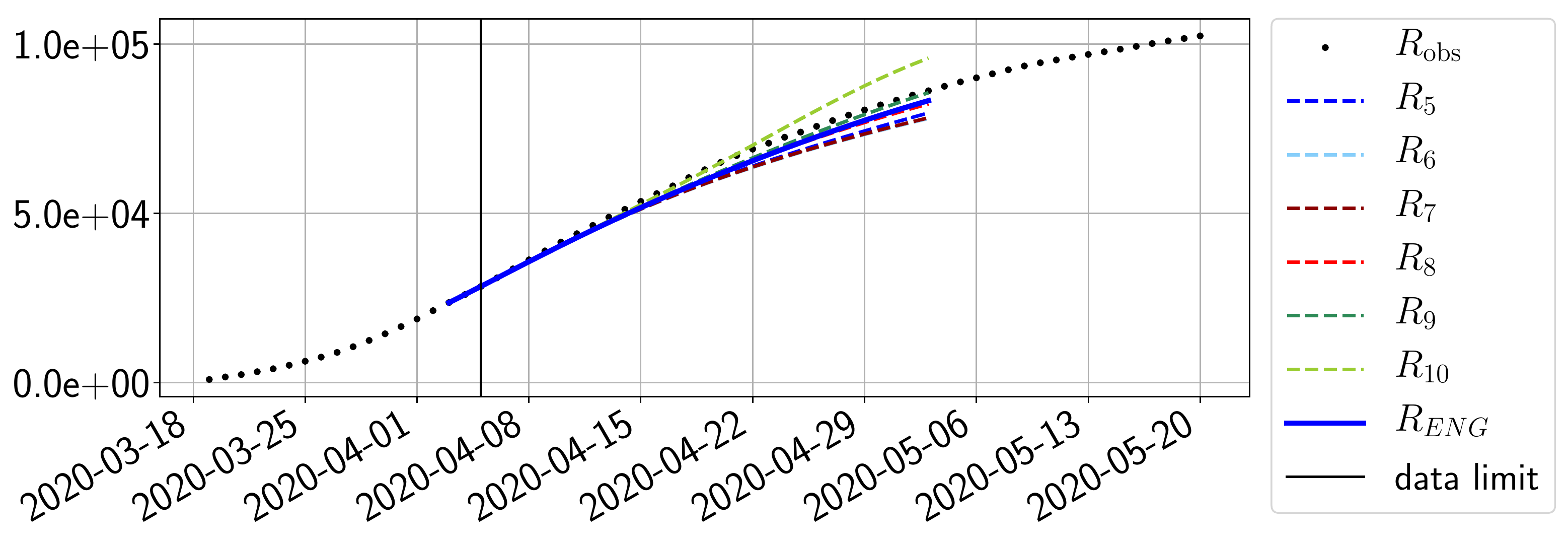}
\caption{Removed}
\end{subfigure}
\caption{ENG forecast from $T=05/04$}
\label{fig:forecast_28_days_modes_0504}
\end{figure}

\begin{figure}[H]
\centering
\begin{subfigure}{.45\textwidth}
\includegraphics[width=1\textwidth]{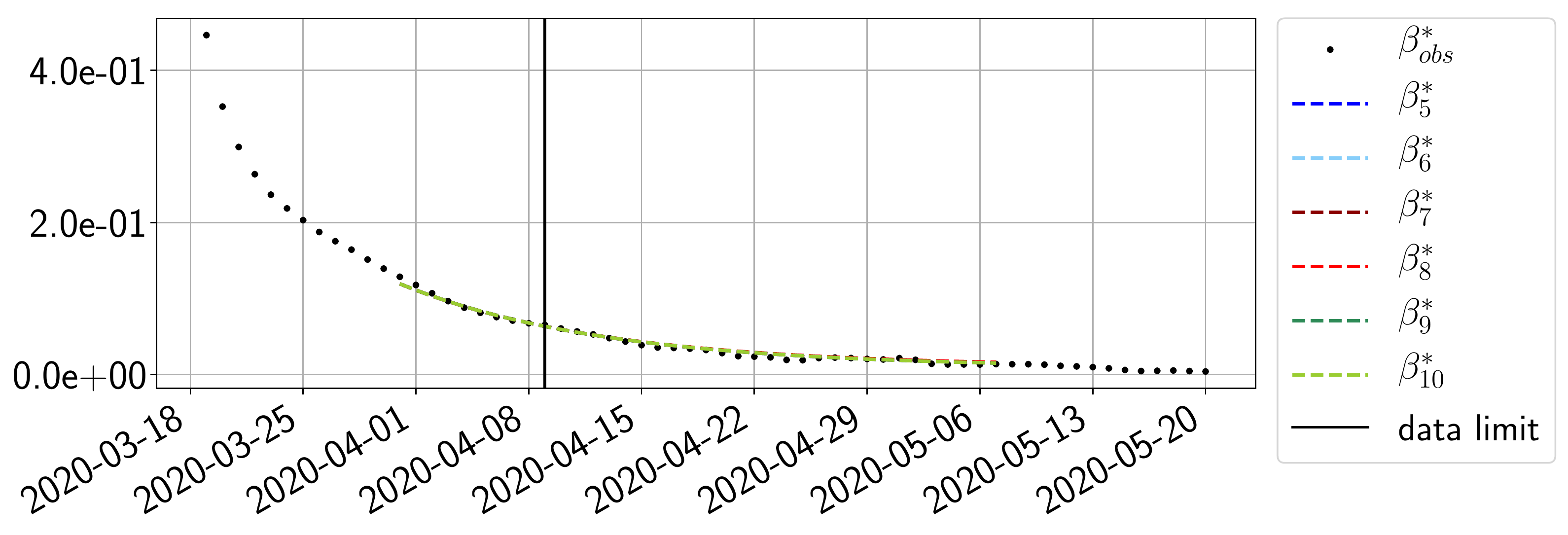}
\caption{$\beta$}
\end{subfigure}
\begin{subfigure}{.45\textwidth}
\includegraphics[width=1\textwidth]{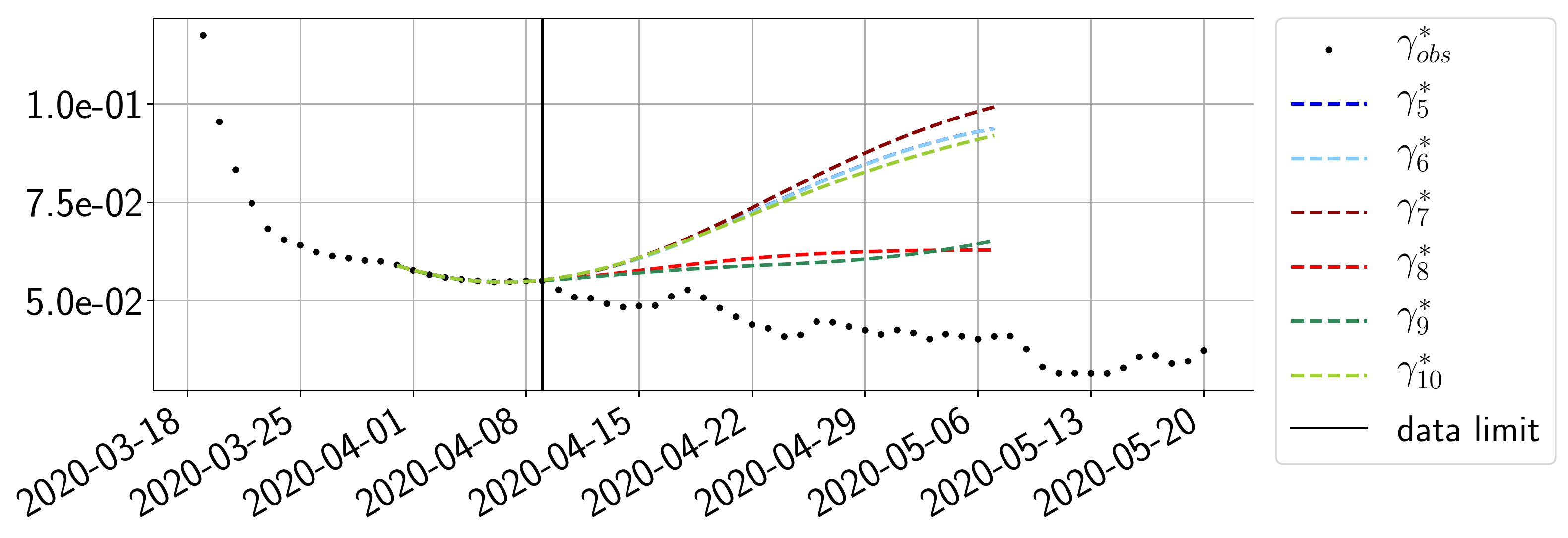}
\caption{$\gamma$}
\end{subfigure}

\vspace{0.4cm}

\begin{subfigure}{.45\textwidth}
\includegraphics[width=1\textwidth]{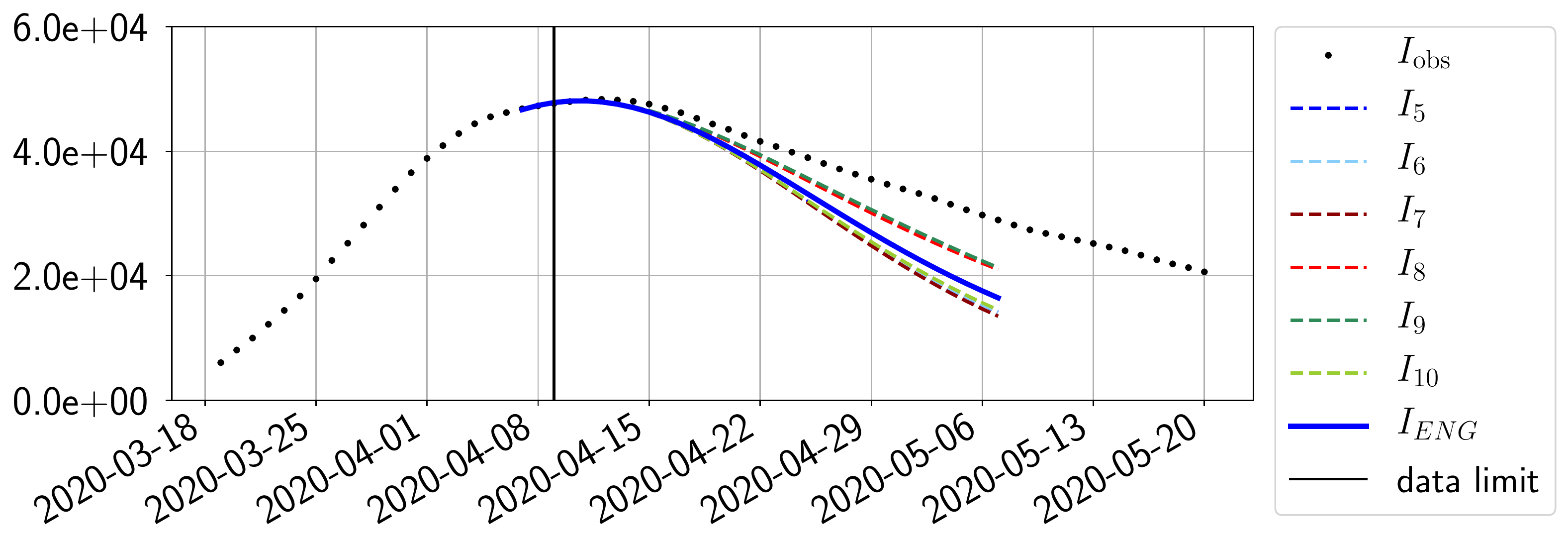}
\caption{Infected}
\end{subfigure}
\begin{subfigure}{.45\textwidth}
\includegraphics[width=1\textwidth]{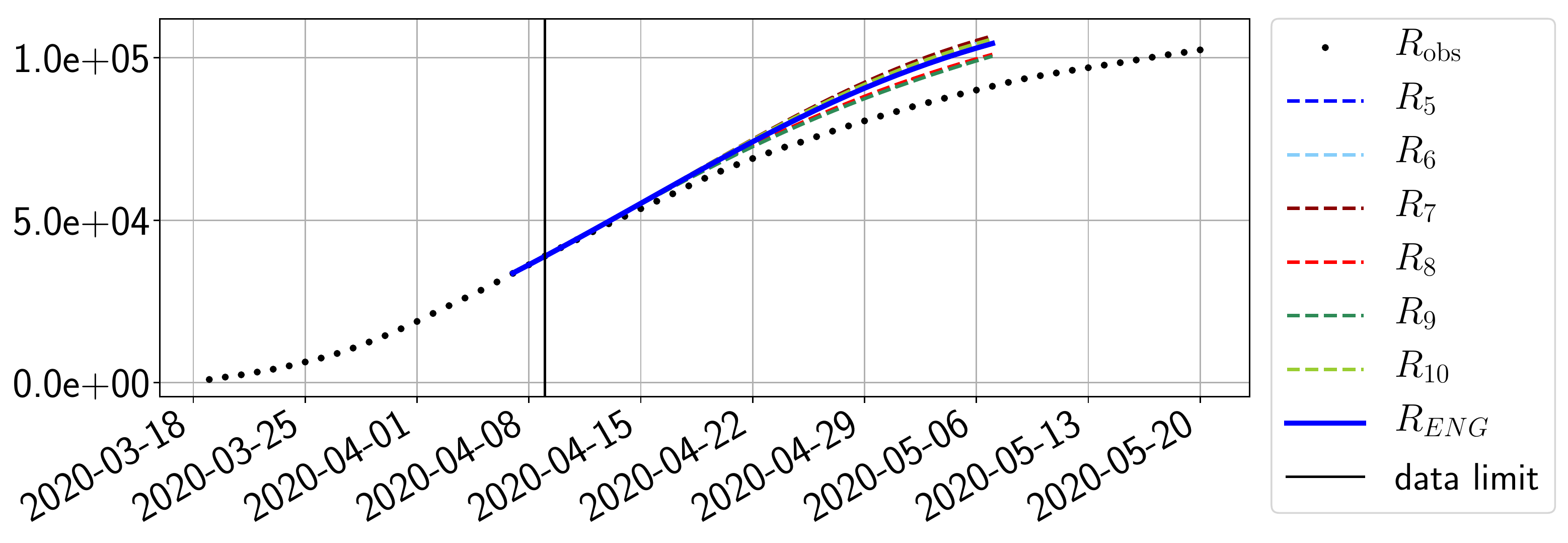}
\caption{Removed}
\end{subfigure}
\caption{ENG forecast from $T=05/04$}
\label{fig:forecast_28_days_modes_0504}
\end{figure}

\begin{figure}[H]
\centering
\begin{subfigure}{.45\textwidth}
\includegraphics[width=1\textwidth]{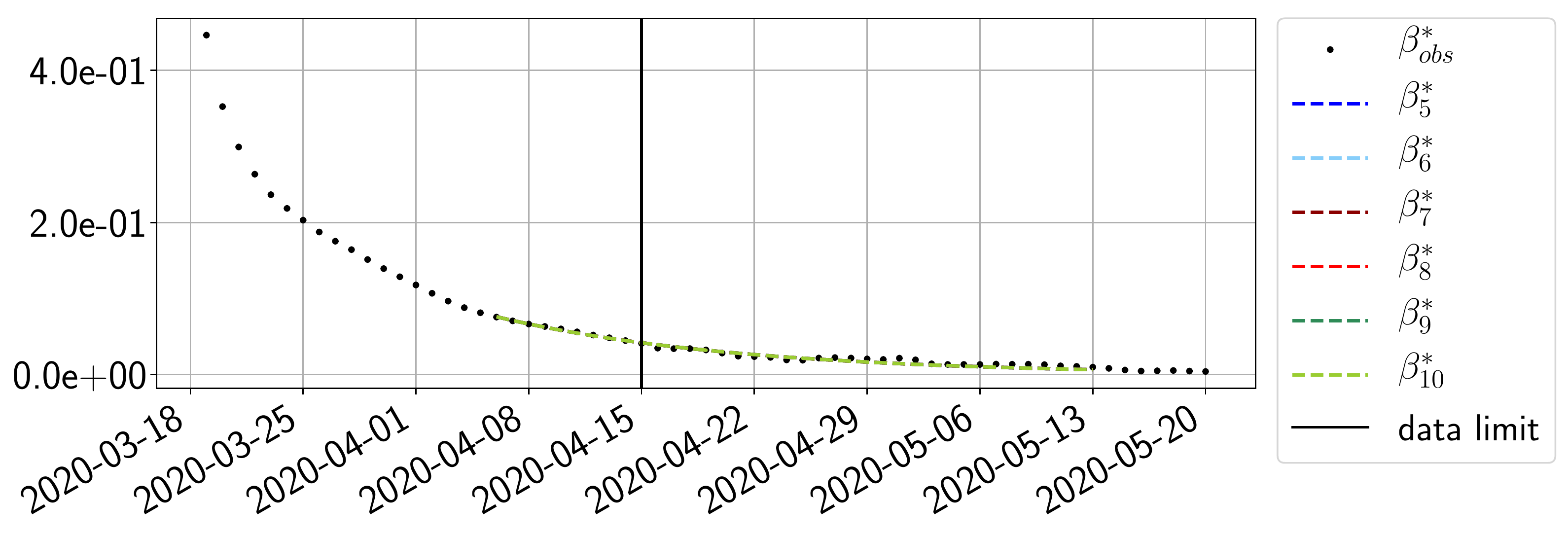}
\caption{$\beta$}
\end{subfigure}
\begin{subfigure}{.45\textwidth}
\includegraphics[width=1\textwidth]{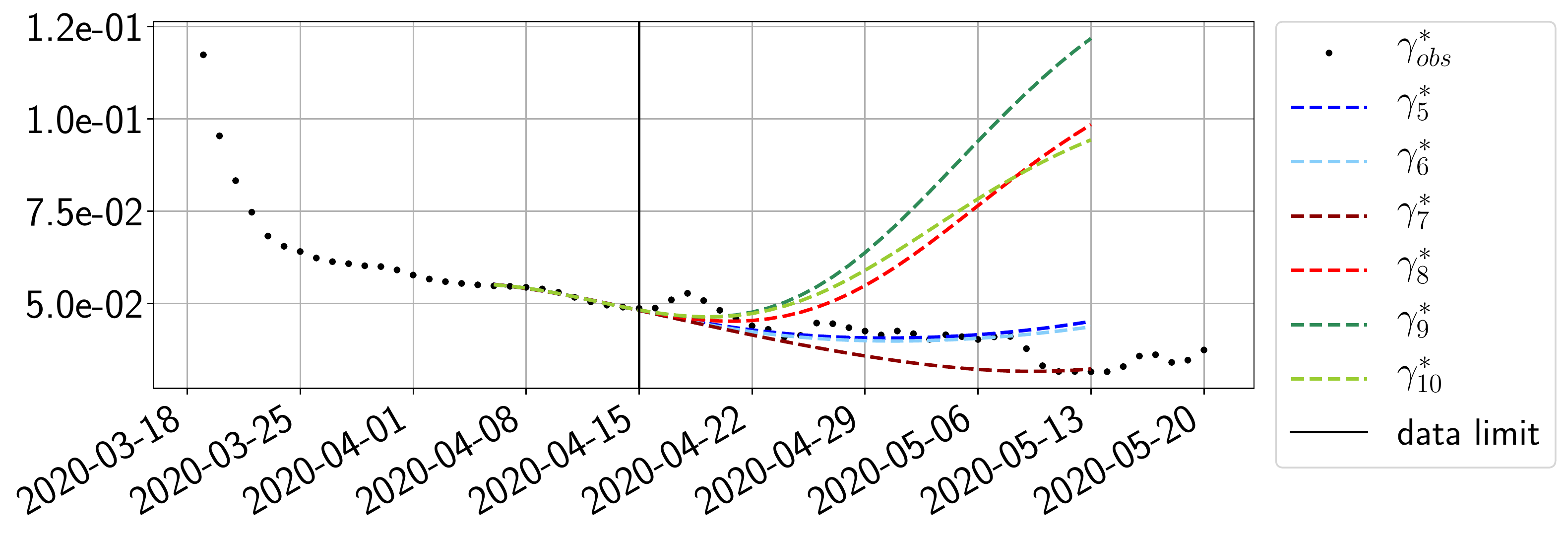}
\caption{$\gamma$}
\end{subfigure}

\vspace{0.4cm}

\begin{subfigure}{.45\textwidth}
\includegraphics[width=1\textwidth]{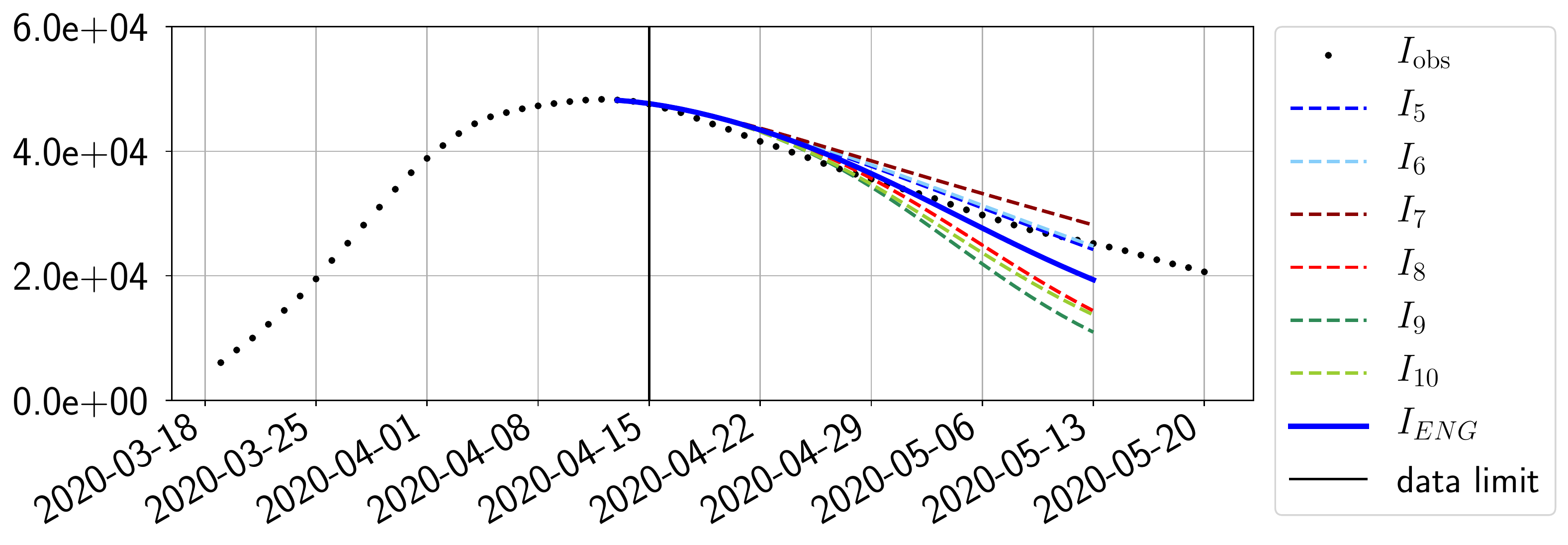}
\caption{Infected}
\end{subfigure}
\begin{subfigure}{.45\textwidth}
\includegraphics[width=1\textwidth]{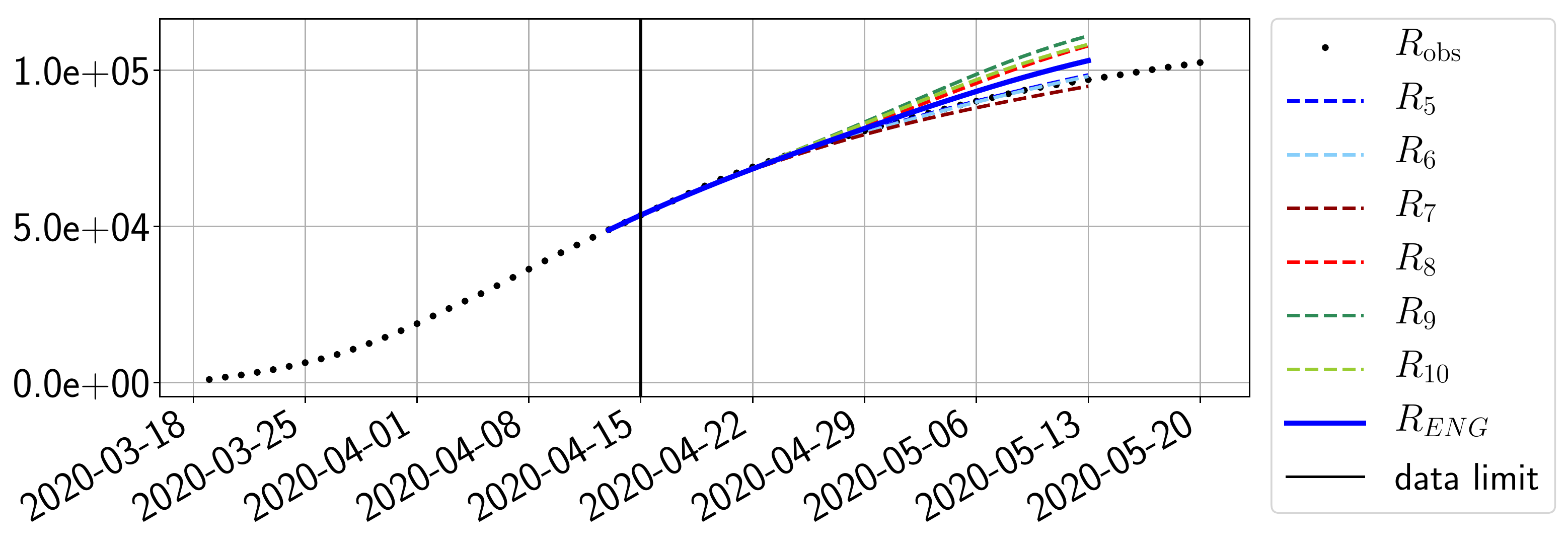}
\caption{Removed}
\end{subfigure}
\caption{ENG forecast from $T=15/04$}
\label{fig:forecast_28_days_modes_1504}
\end{figure}

\begin{figure}[H]
\centering
\begin{subfigure}{.45\textwidth}
\includegraphics[width=1\textwidth]{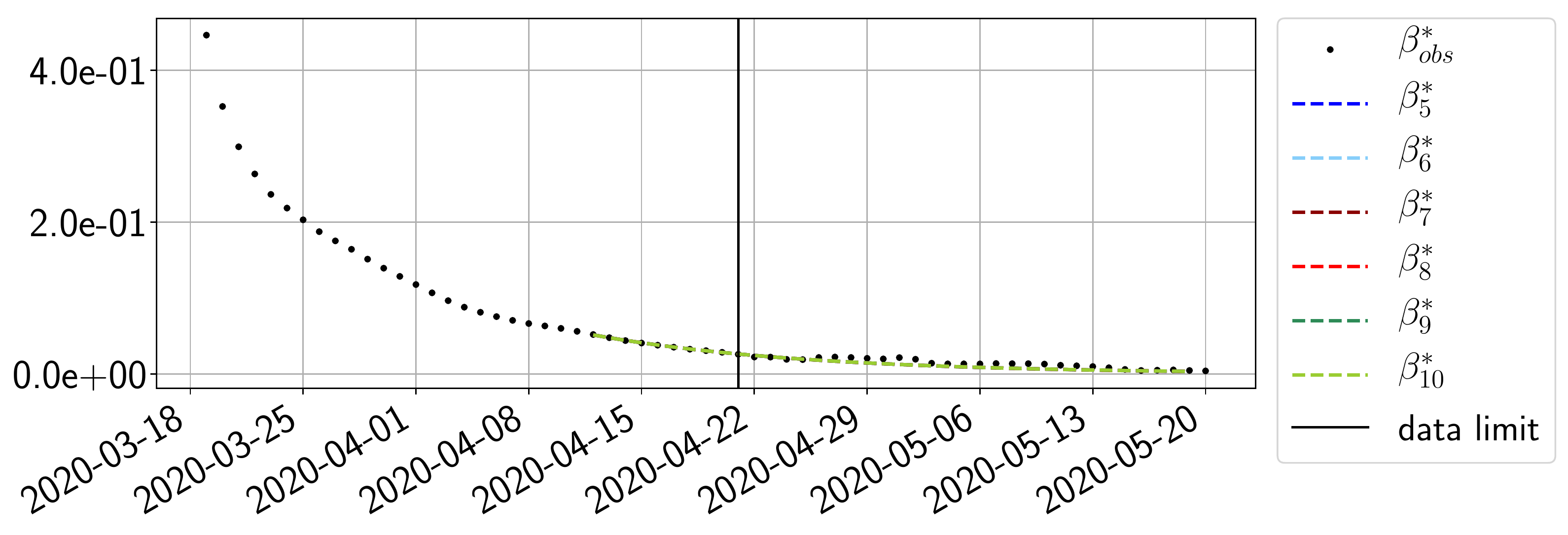}
\caption{$\beta$}
\end{subfigure}
\begin{subfigure}{.45\textwidth}
\includegraphics[width=1\textwidth]{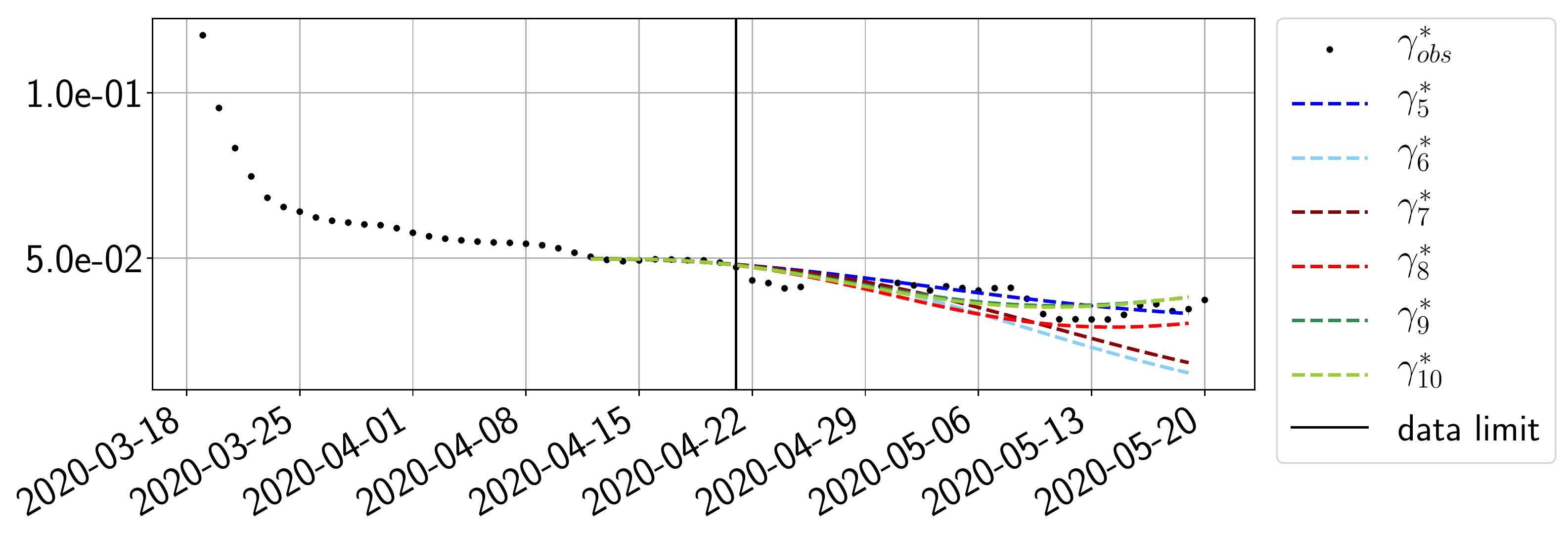}
\caption{$\gamma$}
\end{subfigure}

\vspace{0.4cm}

\begin{subfigure}{.45\textwidth}
\includegraphics[width=1\textwidth]{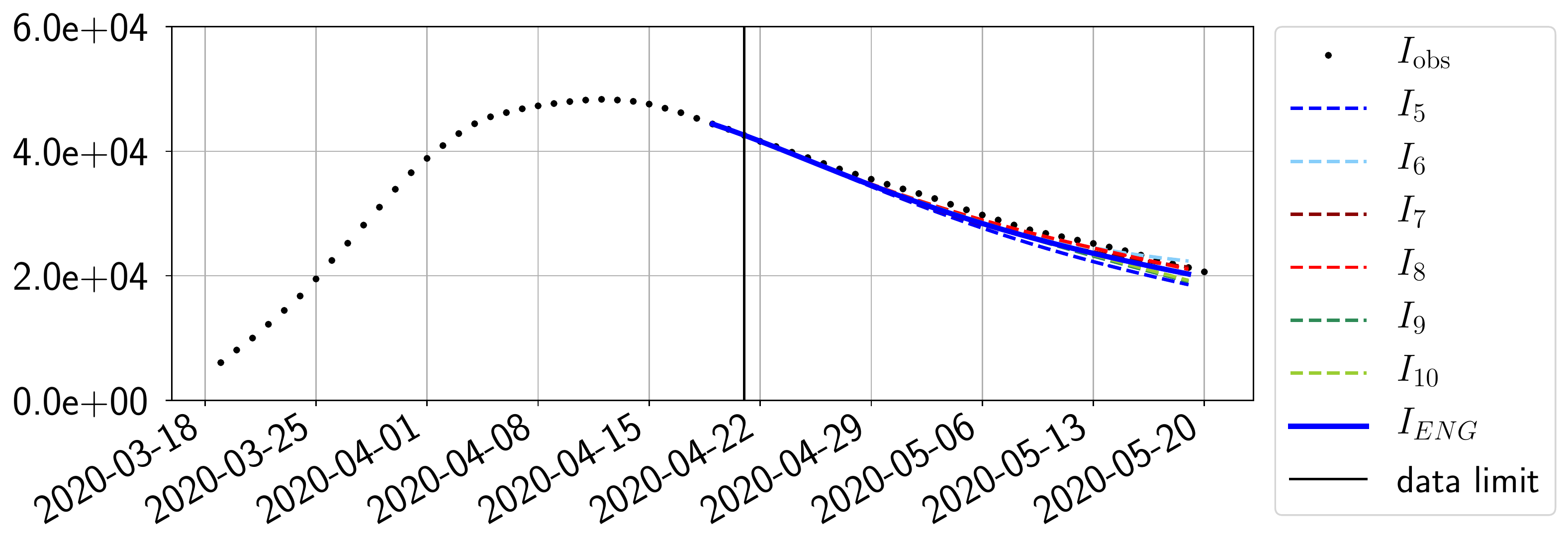}
\caption{Infected}
\end{subfigure}
\begin{subfigure}{.45\textwidth}
\includegraphics[width=1\textwidth]{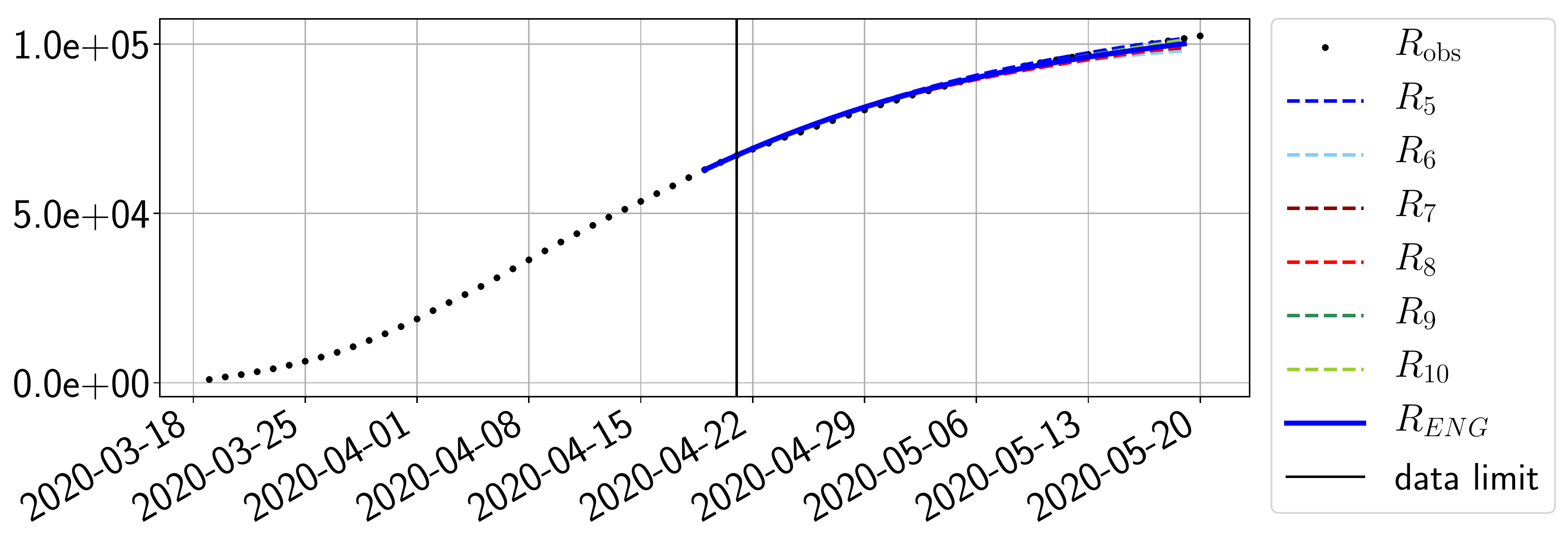}
\caption{Removed}
\end{subfigure}
\caption{ENG forecast from $T=21/04$}
\label{fig:forecast_28_days_modes_2104}
\end{figure}

\begin{figure}[H]
\centering
\begin{subfigure}{.45\textwidth}
\includegraphics[width=1\textwidth]{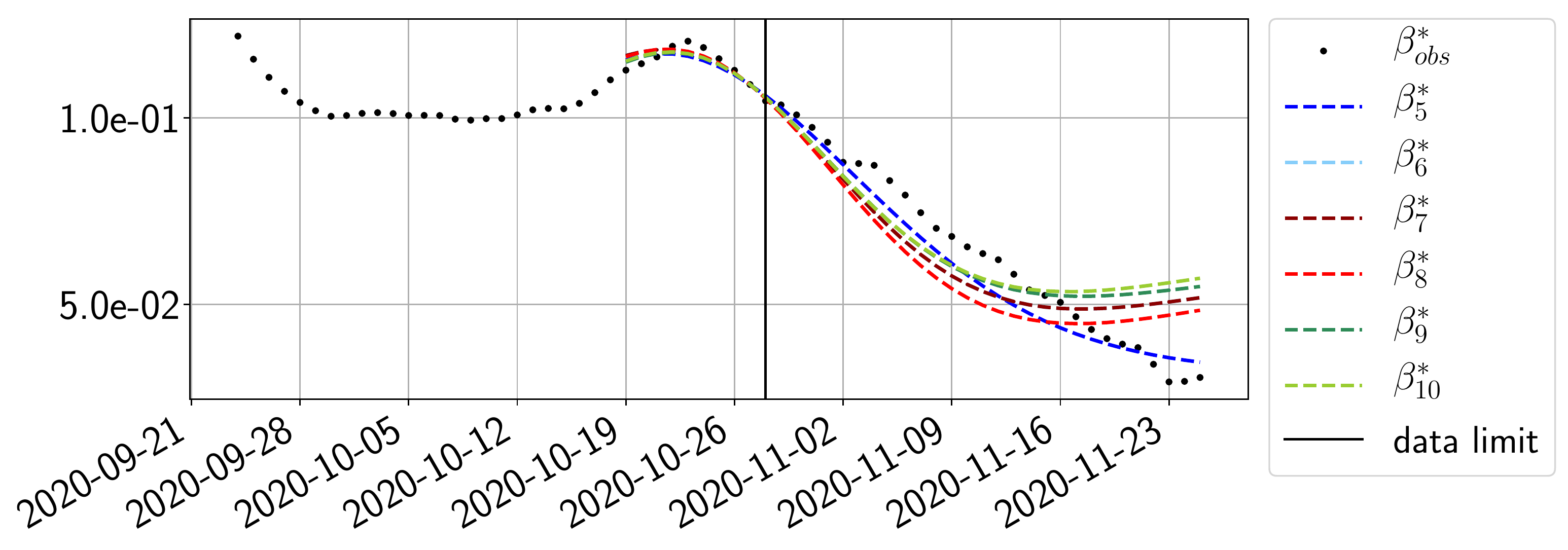}
\caption{$\beta$}
\end{subfigure}
\begin{subfigure}{.45\textwidth}
\includegraphics[width=1\textwidth]{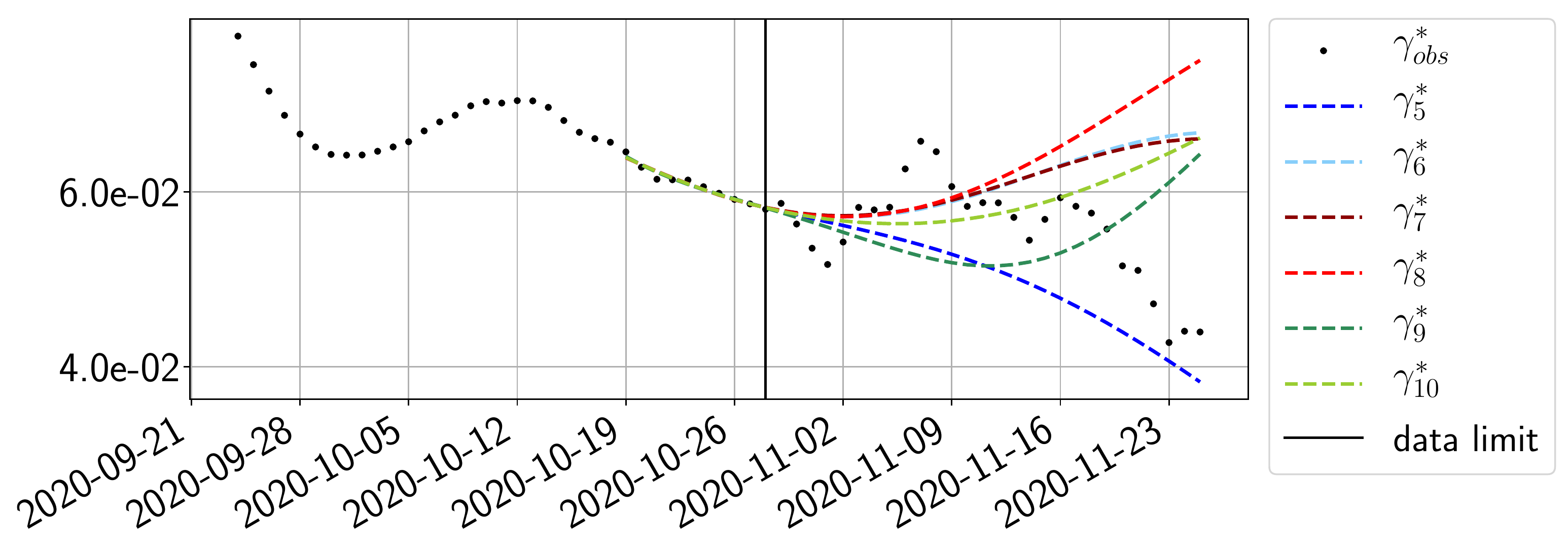}
\caption{$\gamma$}
\end{subfigure}

\vspace{0.4cm}

\begin{subfigure}{.45\textwidth}
\includegraphics[width=1\textwidth]{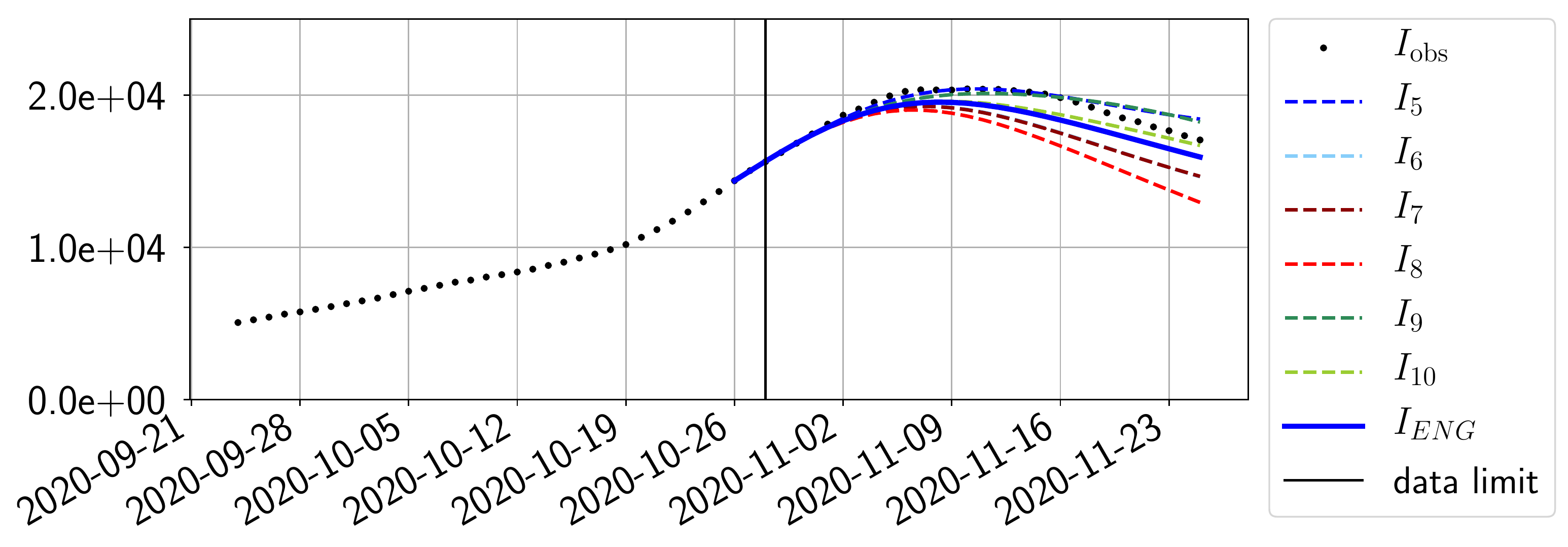}
\caption{Infected}
\end{subfigure}
\begin{subfigure}{.45\textwidth}
\includegraphics[width=1\textwidth]{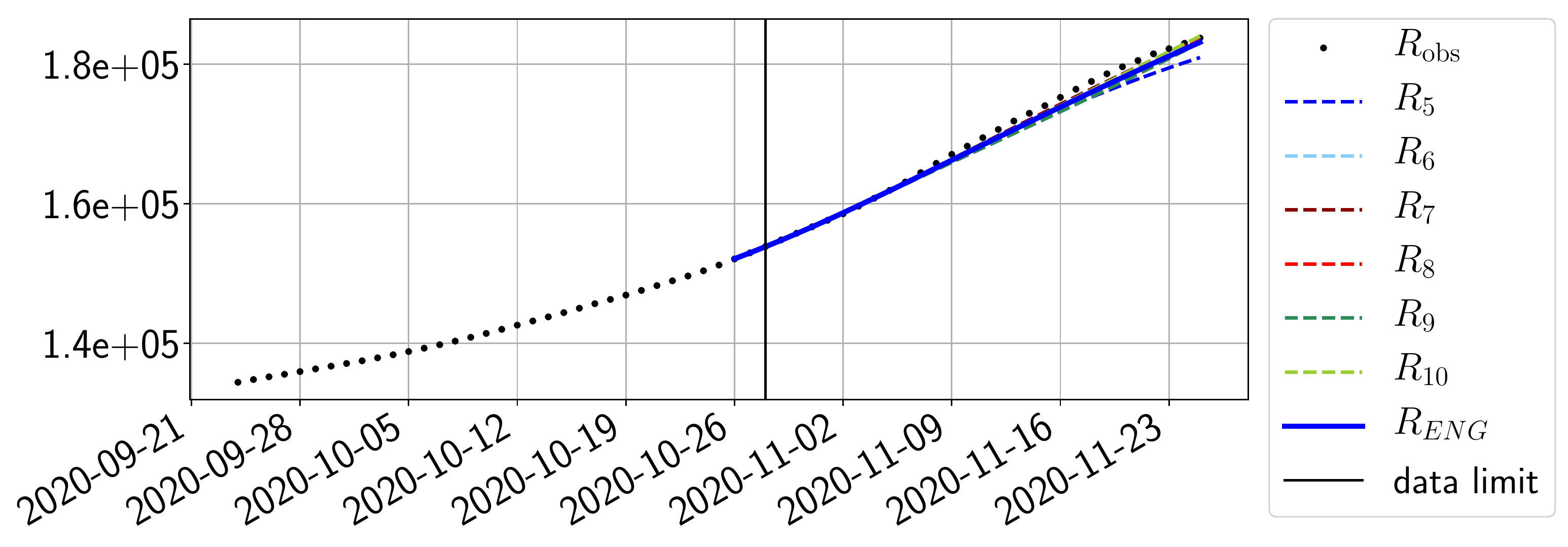}
\caption{Removed}
\end{subfigure}
\caption{ENG forecast from $T=28/10$}
\label{fig:forecast_28_days_modes_2810}
\end{figure}

\section{Conclusion}
\label{sec:conclusion} 
We have developed an epidemiological forecasting method based on reduced modeling techniques. Among the different strategies that have been explored, the one that outperforms the rest in terms of robustness and forecasting power involves reduced models that are built with an Enlarged Nonnegative Greedy (ENG) strategy. This method is novel and of interest in itself since it allows to build reduced models that preserve positivity and even other types of bounds. Despite that ENG does not have optimal fitting properties (i.e.~interpolation properties), it is well suited for forecasting since, due to the preservation of the natural constraints of the coefficients, it generates realistic dynamics with few modes. The results have been presented for a mono-regional test case, and we plan to extend the present methodology to a multi-regional setting using mobility data.

Last but not least, we would like to emphasize that the developed strategy is very general, and could be applied to the forecasting of other types of dynamics. The specificity of each problem may however require adjustments in the reduced models. This is exemplified in the present problem through the construction of reduced models that preserve positivity.

\section*{Acknowledgments and disclosure of funding}
We would like to thank our colleagues from the ``Face au virus'' initiative in PSL University: Jamal Atif, Laurent Massoulié, Olivier Cappé, and Akin Kazakcy. We also thank Gabriel Turinici for his feedback on the multiregional models. Part of this research has been supported by the Emergences project grant ``Models and Measures'' of the Paris city council, by the generous donation made available by Alkan together with the complementary funding given by the Sorbonne University Foundation. This research is done in the frame of the project Pandemia/Covidia and also the GIS Obepine, both projects aiming at a better understanding of the Covid-19 pandemic.

\appendix
\section{Analysis of model error and observation noise}
\label{appendix:noise}
In this section we study the impact of observation noise and model error in the quality and behavior of the fitting step. The elements that impact the accuracy of our procedure are the following:
\begin{enumerate}
\item \textbf{Observation noise: }The observed health data $\bfI_\obs$ and $\bfR_\obs$ presents several sources of noise: there may be some inaccuracies in the reporting of cases, and the data are then post-processed. This eventually yields to noisy time series $\bfI_\obs$ and $\bfR_\obs$ for which it is difficult to estimate the uncertainty. These noisy data are then used to produce (through finite differences) $\beta^*_\obs$ and $\gamma^*_\obs$ which are in turn also noisy.
\item \textbf{Model errors:} Two types of model errors are identified 
\begin{enumerate}
\item \textbf{Intrinsic model error on $\cB$ and $\cG$:} The families of detailed models that we use are rich but they may not cover all possible evolutions of $\bfI_\obs$ and $\bfR_\obs$. In other words, our manifolds $\cB$ and $\cG$ may not perfectly cover the real epidemiological \col{evolution}. \col{Such error motivated the introduction of the exponential functions $b_0$ and $g_0$ described in Section \ref{sec:MOR}.}
\item \textbf{Sampling error of $\cB$ and $\cG$:} The size of the training sets $\cB_\tr$ and $\cG_\tr$ and the dimension $n$ of the reduced models $\rB_n$ and $\rG_n$ also limit the approximation capabilities of the continuous target sets $\cB$ and $\cG$
\end{enumerate}
\end{enumerate}
In this section, we aim at disentangling these effects in order to give a better insight of the fitting error behavior on the real data.

\subsection{Study of the impact of the sampling error: Fitting a virtual scenario}
Our set of virtual epidemiological dynamics is $\cU$. After the collapsing step, the manifolds for $\bfbeta$ and $\bfgamma$ are $\cB$ and $\cG$. These sets are potentially infinite and we have used finite training subsets $\cB_\tr\subseteq \cB$ and $\cG_\tr\subseteq \cG$ to build the reduced models $\rB_n$ and $\rG_n$.

First we look at the eigenvalues for $\beta$ and $\gamma$ when performing an SVD decomposition for the virtual scenarios. Figure \ref{fig:decay_eigenvalues} shows a rapid decay of the eigenvalues obtained by SVD decomposition, it shows that we can obtain a very good approximation of elements of $\cB_\tr\subseteq \cB$ and $\cG_\tr\subseteq \cG$ with only a few modes.

\begin{figure}[H]
\centering
\begin{subfigure}{.4\textwidth}
\includegraphics[width=1\textwidth]{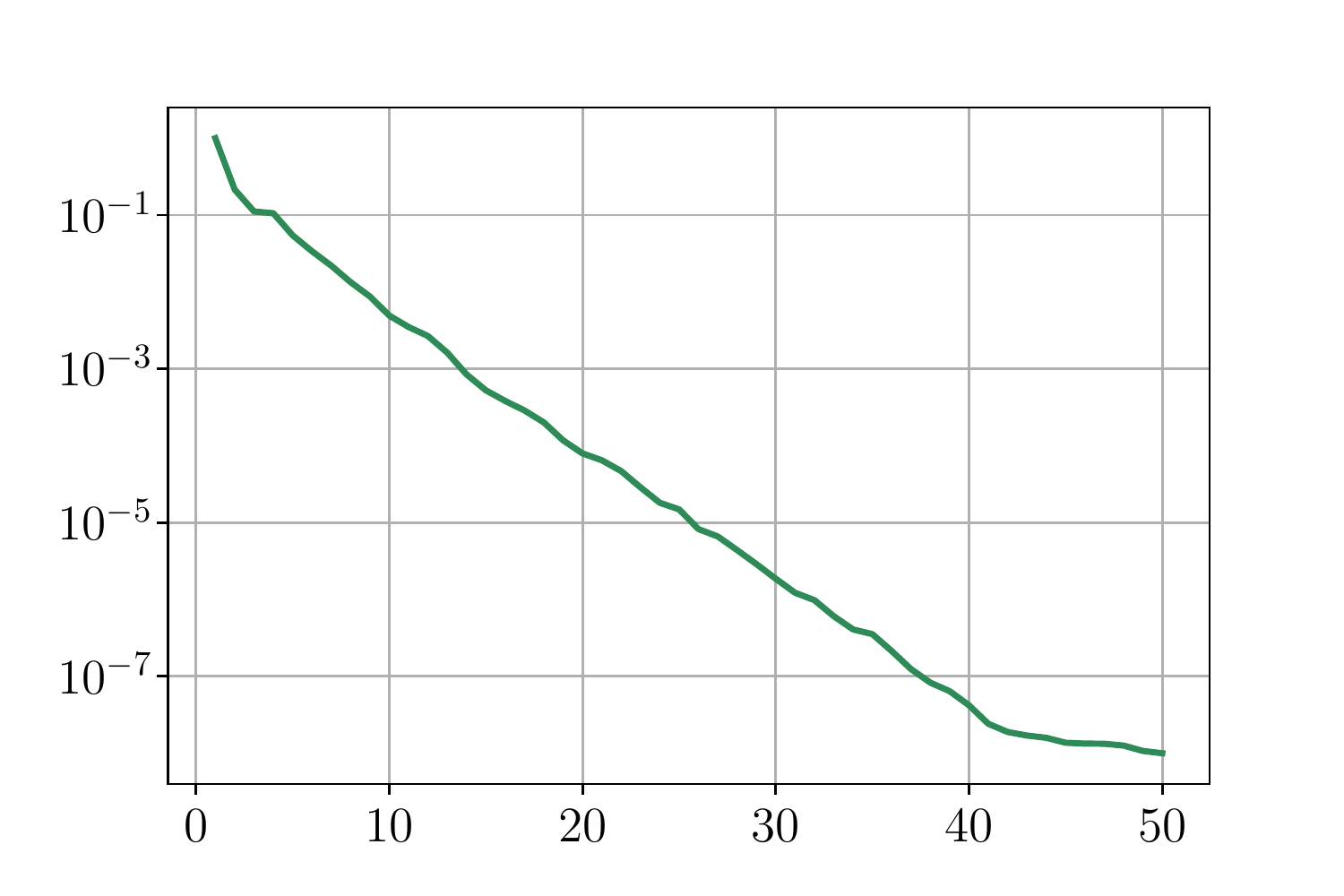}
\caption{Normalised eigenvalues for $\beta$ vs $n$}
\end{subfigure}
\begin{subfigure}{.4\textwidth}
\includegraphics[width=1\textwidth]{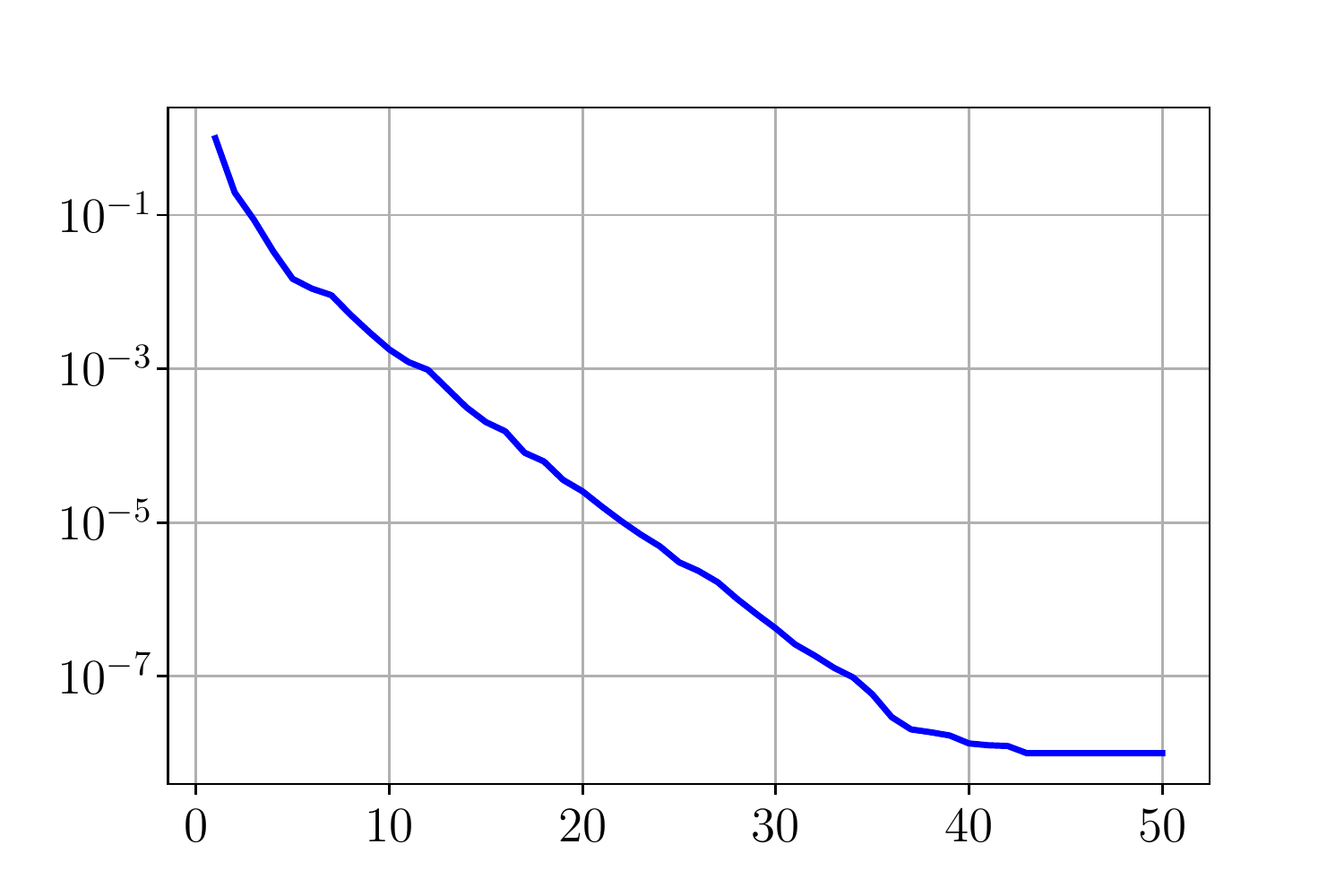}
\caption{Normalised eigenvalues for $\gamma$ vs $n$}
\end{subfigure}
\caption{Training step: Decay of SVD eigenvalues}
\label{fig:decay_eigenvalues}
\end{figure}

Among the sources of noise and model error given at the beginning of Appendix \ref{appendix:noise}, we can study the impact of the sampling error (point 2-b) as follows. \col{For this, we start by examining} the fitting error on an interval of $45$ days for two functions $\bfbeta$ and $\bfgamma$ which belong to the manifold $\cB$ and $\cG$ and are in the training sets $\cB_\tr$ and $\cG_\tr$. \col{This error will serve as a benchmark that we will use to compare fitting errors of functions that are not in the training sets.}

\col{Figure~\ref{fig:fitting_error_b_g_inbase} shows relative fitting errors $\| \beta^*_n - \beta^*_\obs\|_{L^2([t_0, T])}/\| \bar{\beta}^*_\obs\|_{L^2([t_0, T])}$ and $\| \gamma^*_n - \gamma^*_\obs\|_{L^2([t_0, T])}/\| \bar{\gamma}^*_\obs\|_{L^2([t_0, T])}$ using SVD, NMF and ENG reduced bases with $n=2,\dots,20$, where the notation $\bar{x}$ denotes the mean of the quantity $x$ over $[0,T]$.} We observe that \col{for SVD} the fitting errors behave similarly as the error decay of the training step (Figure \ref{fig:decay_eigenvalues}).

\begin{figure}[H]
\centering
\begin{subfigure}{.4\textwidth}
\includegraphics[width=1\textwidth]{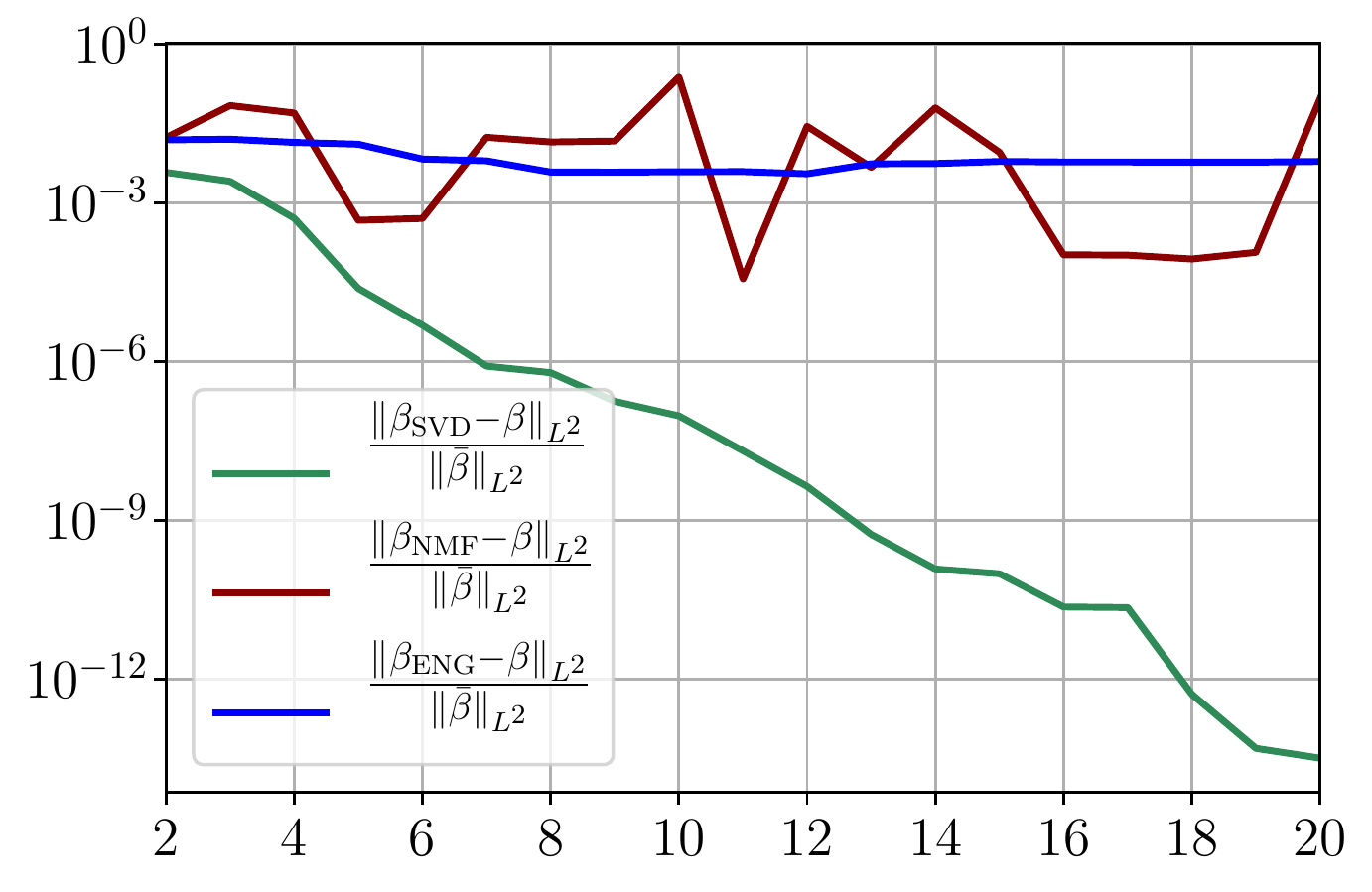}
\caption{$L^2$ relative error of $\beta$ vs $n$}
\end{subfigure}
\begin{subfigure}{.4\textwidth}
\includegraphics[width=1\textwidth]{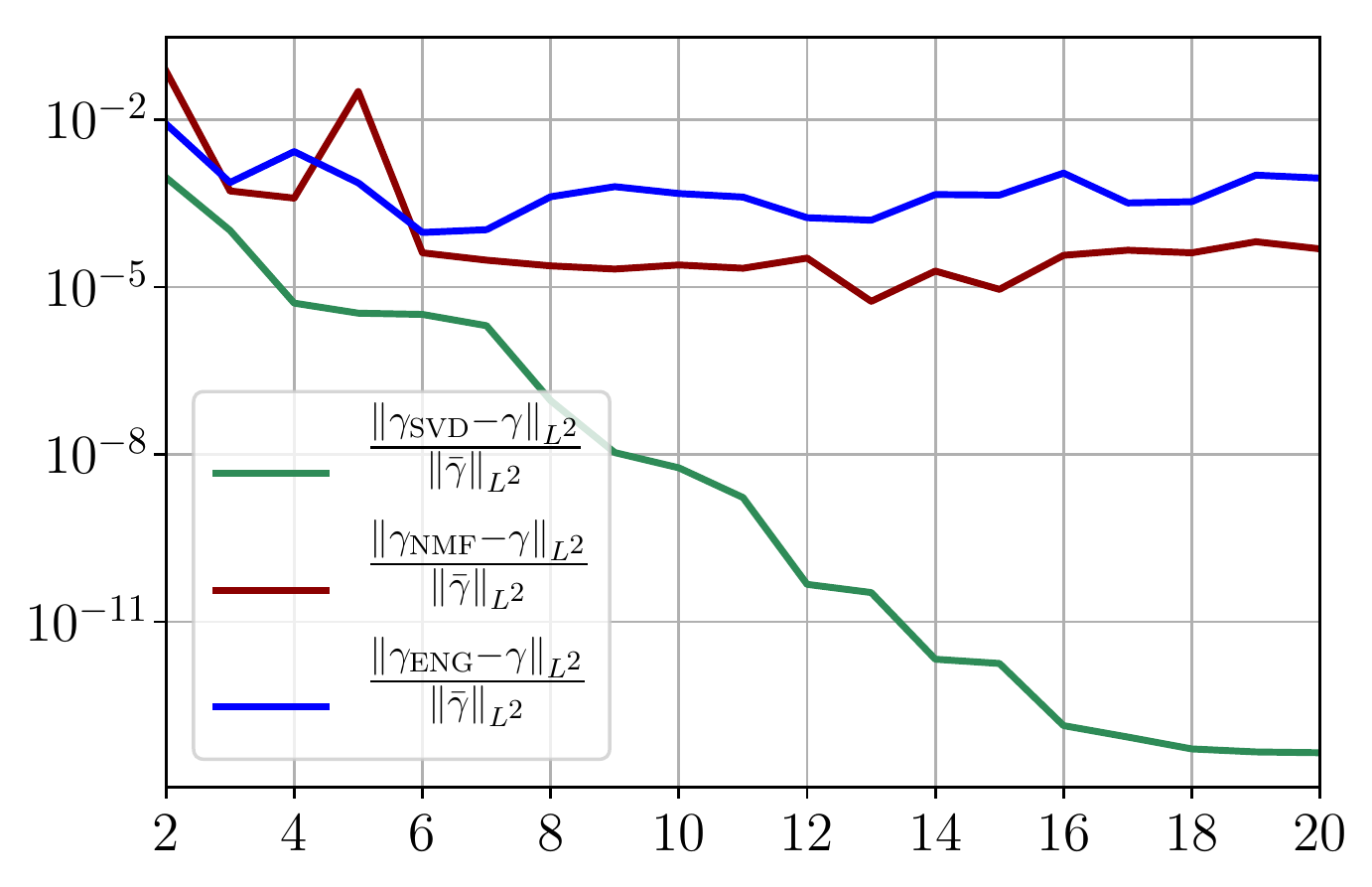}
\caption{$L^2$ relative error of $\gamma$ vs $n$}
\end{subfigure}
\caption{Study of sampling errors: Fitting errors of functions $\bfbeta$ and $\bfgamma$ from $\cB$ and $\cG$ that belong to the training sets $\cB_\tr$ and $\cG_\tr$.}
\label{fig:fitting_error_b_g_inbase}
\end{figure}

Figure \ref{fig:fitting_error_I_R_inbase} shows the $L^1$ and $L^\infty$ errors obtained after the propagation of the fittings of $\bfbeta$ and $\bfgamma$. In both metrics, the error for $\bfI$ and $\bfR$ obtained using SVD is decreasing \col{below $10^{-12}$}. When the NMF and the ENG are used, the error decreases and stagnates at $10^{-2}$ for both $\bfI$ and $\bfR$.

\begin{figure}[H]
\centering
\begin{subfigure}{.4\textwidth}
\includegraphics[width=1\textwidth]{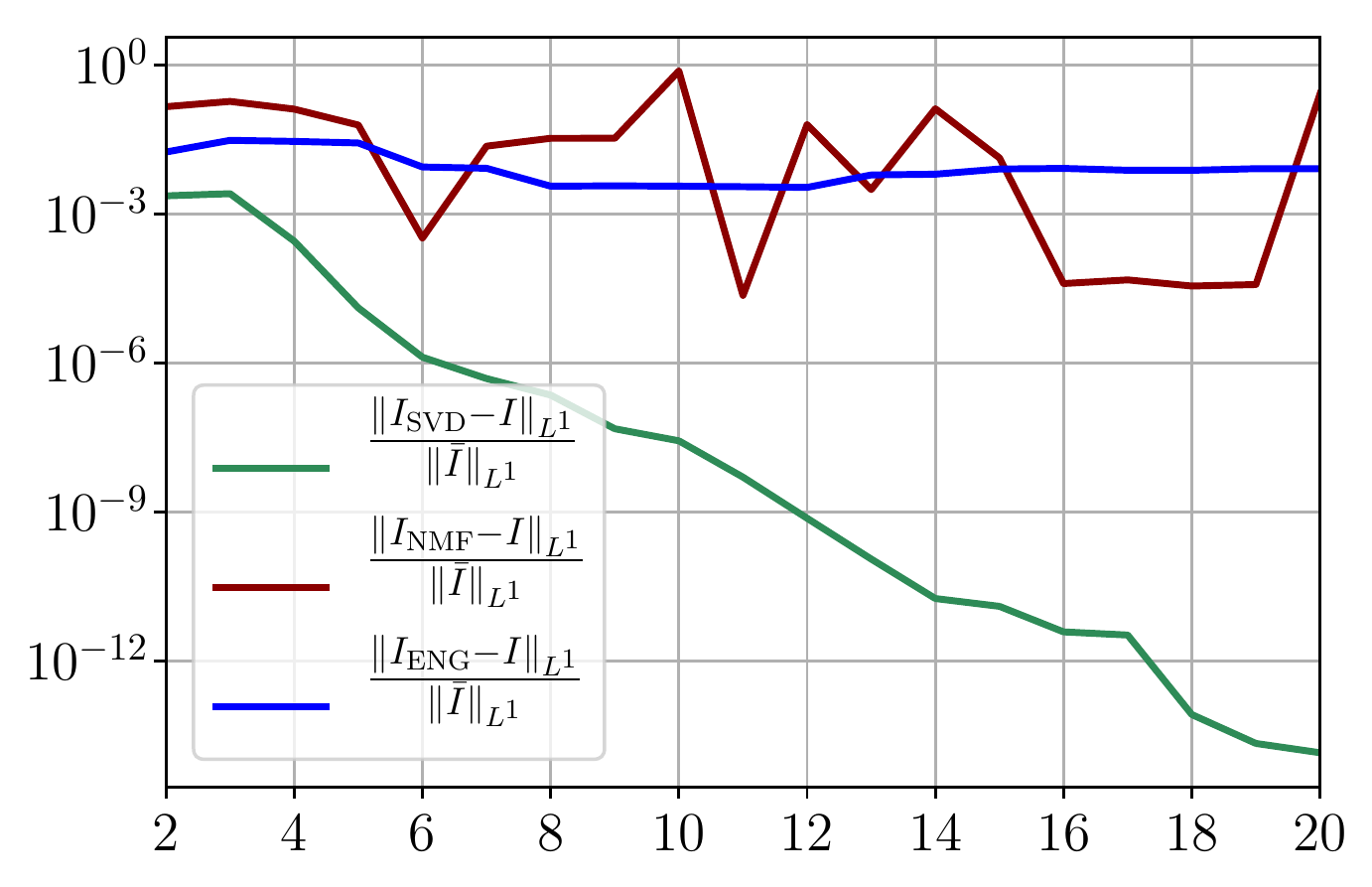}
\caption{$L^1$ relative error  of $I$ vs $n$}
\end{subfigure}
\begin{subfigure}{.4\textwidth}
\includegraphics[width=1\textwidth]{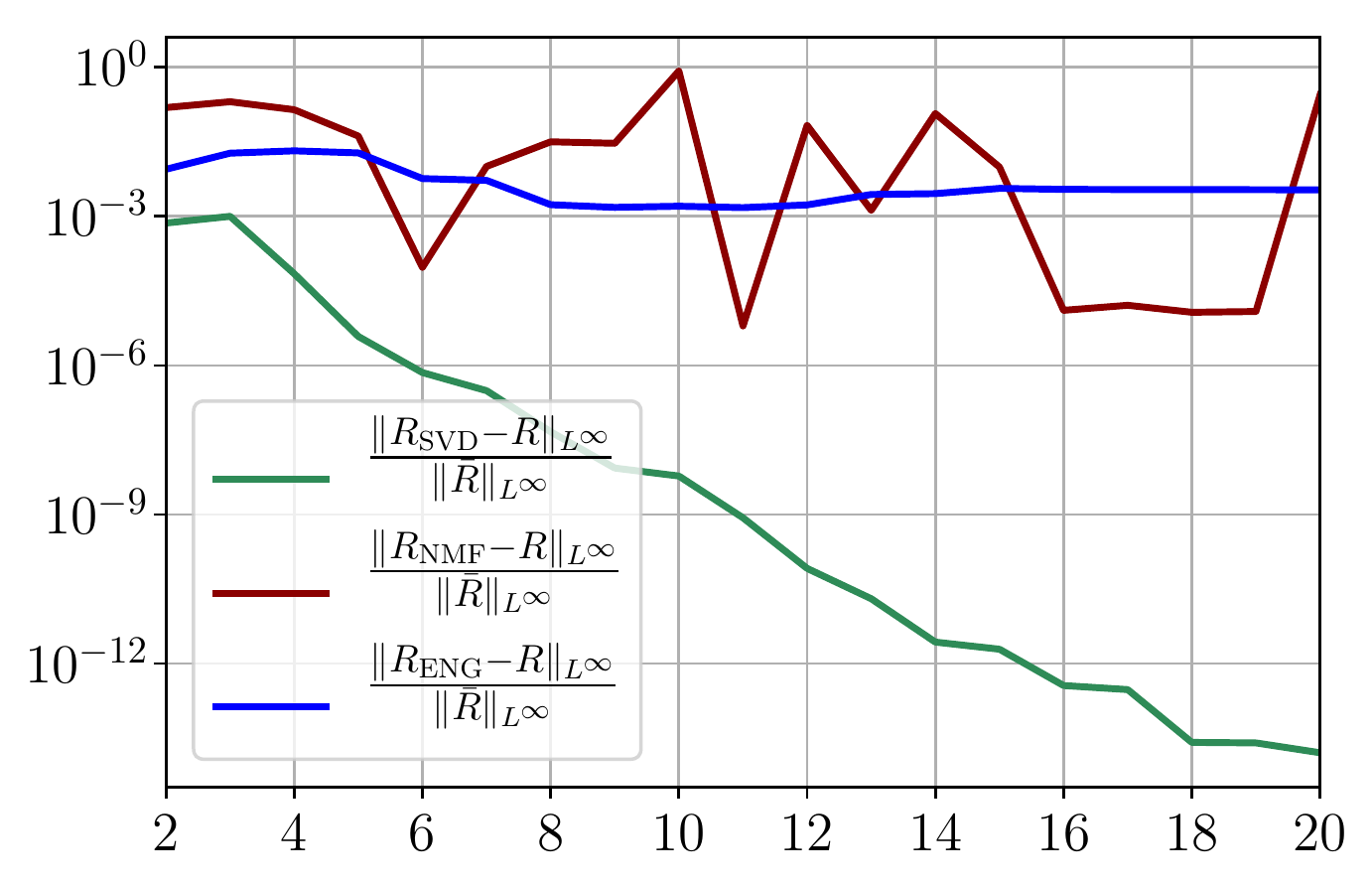}
\caption{$L^\infty$ relative error  of $R$ vs $n$}
\end{subfigure}
\caption{Study of sampling errors: Errors on $\bfI$ and $\bfR$ obtained by a SIR model using the fitted $\bfbeta$ and $\bfgamma$.}
\label{fig:fitting_error_I_R_inbase}
\end{figure}

Now we consider the fitting error for two functions $\bfbeta$ and $\bfgamma$ which belong to the manifold $\cB$ and $\cG$ but which are not in the training sets $\cB_\tr$ and $\cG_\tr$. Figure \ref{fig:fitting_error_b_g} shows the fitting error on the virtual scenario considered using SVD, NMF and ENG reduced bases for $n=2,\dots,20$. \col{We note that the quality of the fittings by each method is very similar to that of Figure \ref{fig:fitting_error_I_R_inbase} where the functions were taken in the training sets.} This illustrates that the sampling error is not playing a major role in our experiments.

\begin{figure}[H]
\centering
\begin{subfigure}{.4\textwidth}
\includegraphics[width=1\textwidth]{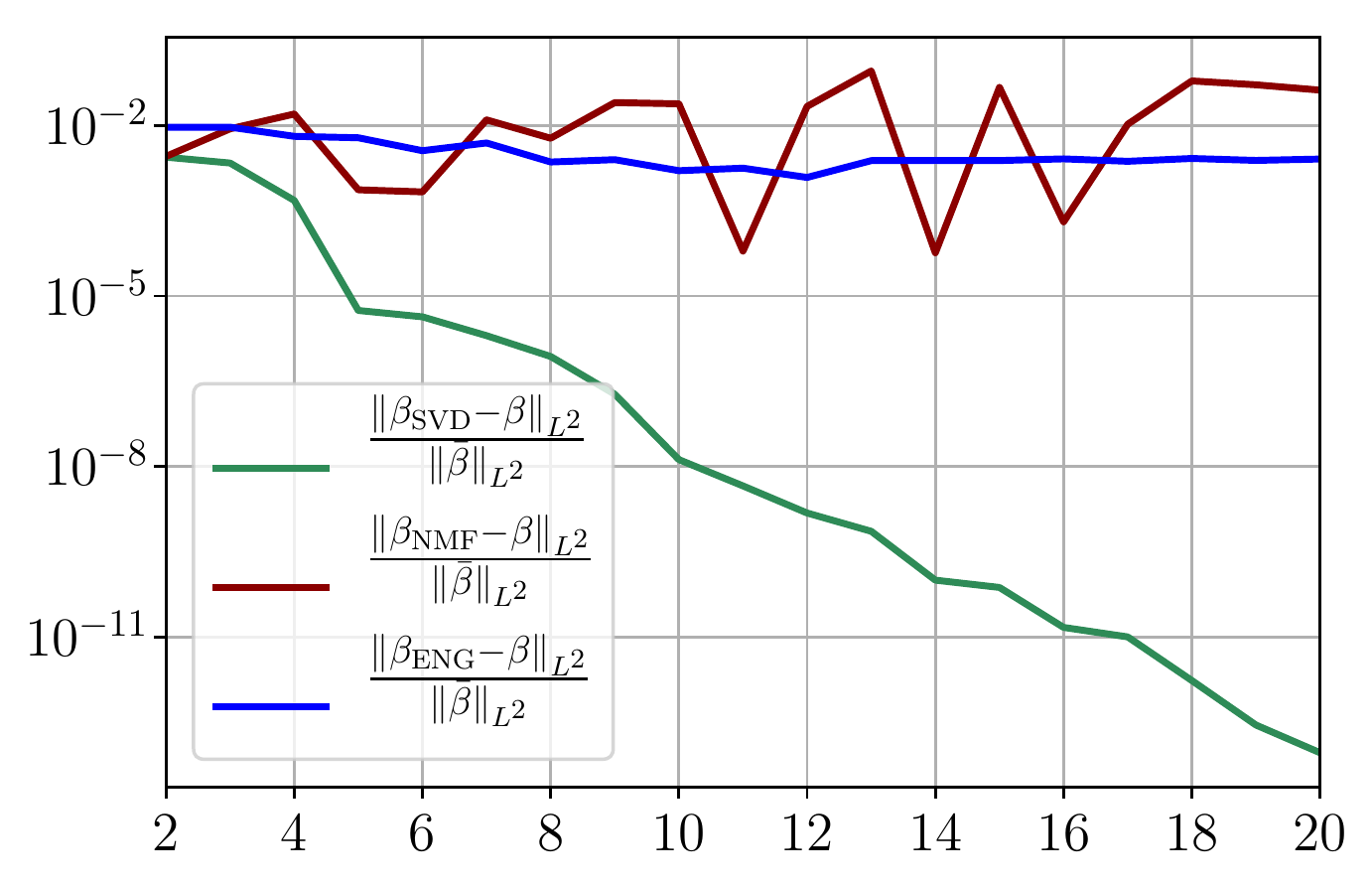}
\caption{$L^2$ relative error  of $\beta$ vs $n$}
\end{subfigure}
\begin{subfigure}{.4\textwidth}
\includegraphics[width=1\textwidth]{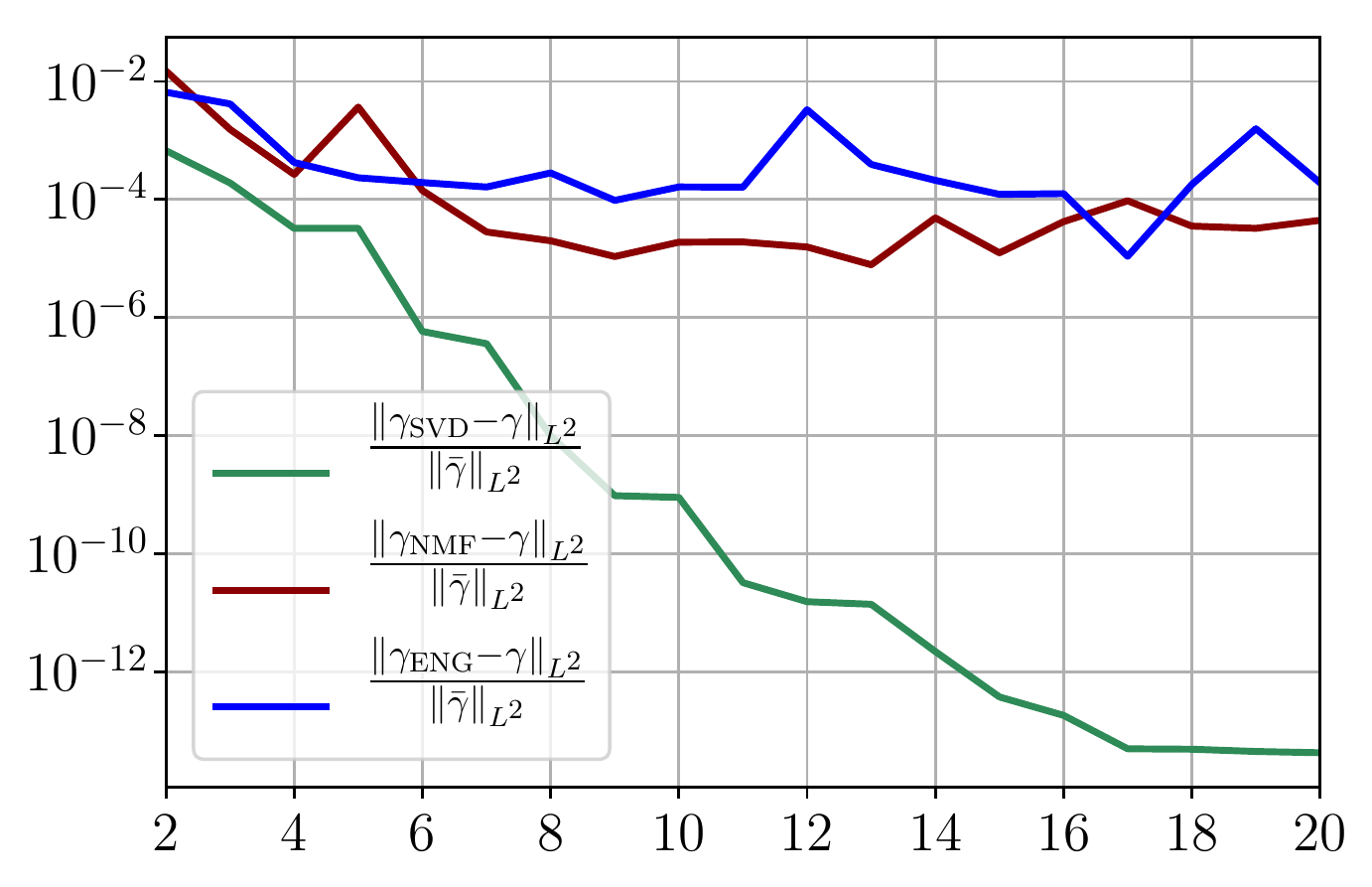}
\caption{$L^2$ relative error  of $\gamma$ vs $n$}
\end{subfigure}
\caption{Study of sampling errors: Fitting errors of functions $\bfbeta$ and $\bfgamma$ from $\cB$ and $\cG$ but which do not belong to the training sets $\cB_\tr$ and $\cG_\tr$.}
\label{fig:fitting_error_b_g}
\end{figure}

Figure \ref{fig:fitting_error_I_R} shows the $L^1$ and $L^\infty$ errors obtained after the propagation of the fittings of $\bfbeta$ and $\bfgamma$. In both metrics, the error for $\bfI$ and $\bfR$ obtained using SVD is decreasing to each  $10^{-14}$. When the NMF and the ENG are used, the error decreases and stagnates respectively at $10^{-3}$ and $10^{-4}$ for both $\bfI$ and $\bfR$.

\begin{figure}[H]
\centering
\begin{subfigure}{.4\textwidth}
\includegraphics[width=1\textwidth]{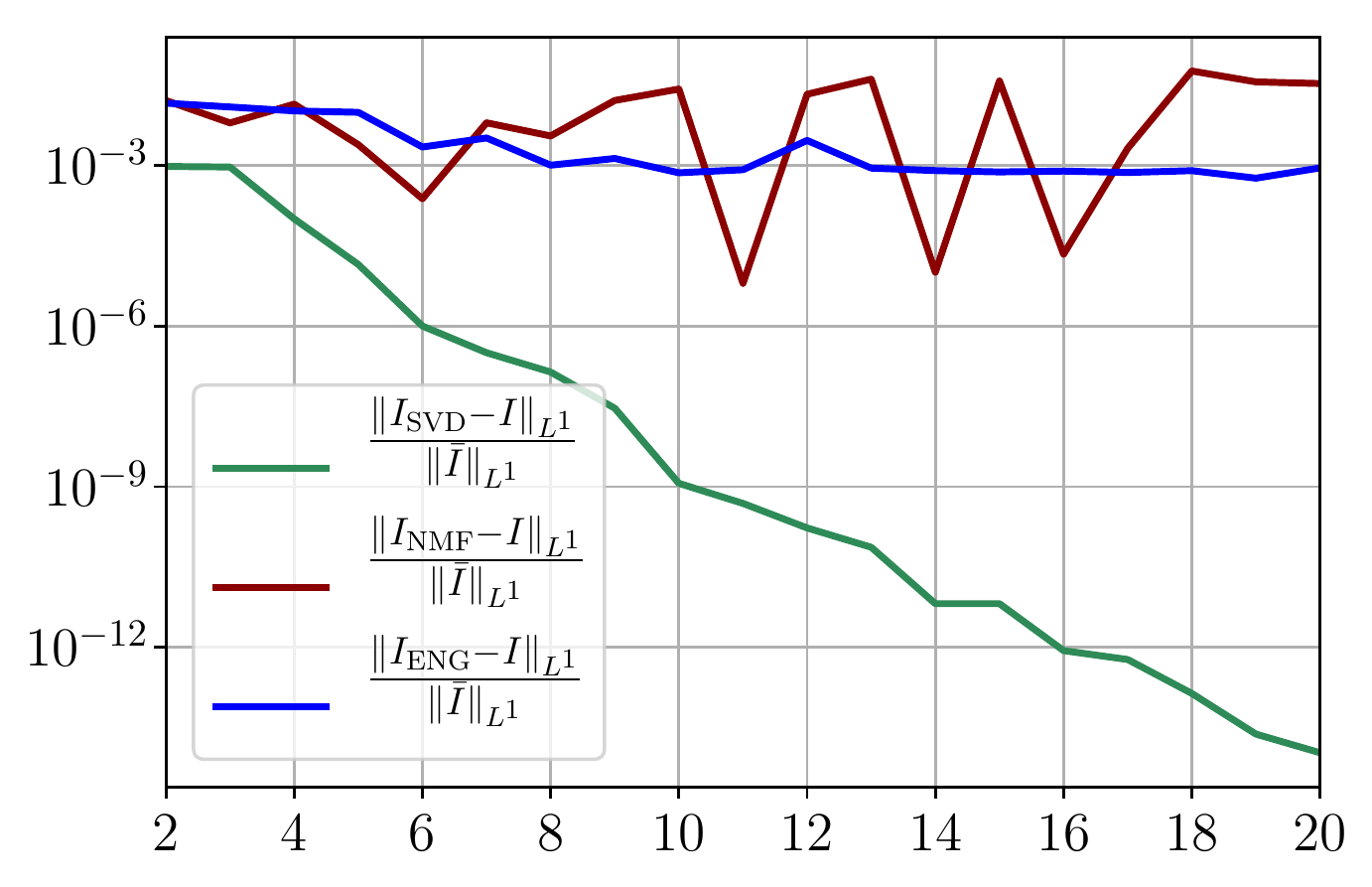}
\caption{$L^1$ relative error  of $I$ vs $n$}
\end{subfigure}
\begin{subfigure}{.4\textwidth}
\includegraphics[width=1\textwidth]{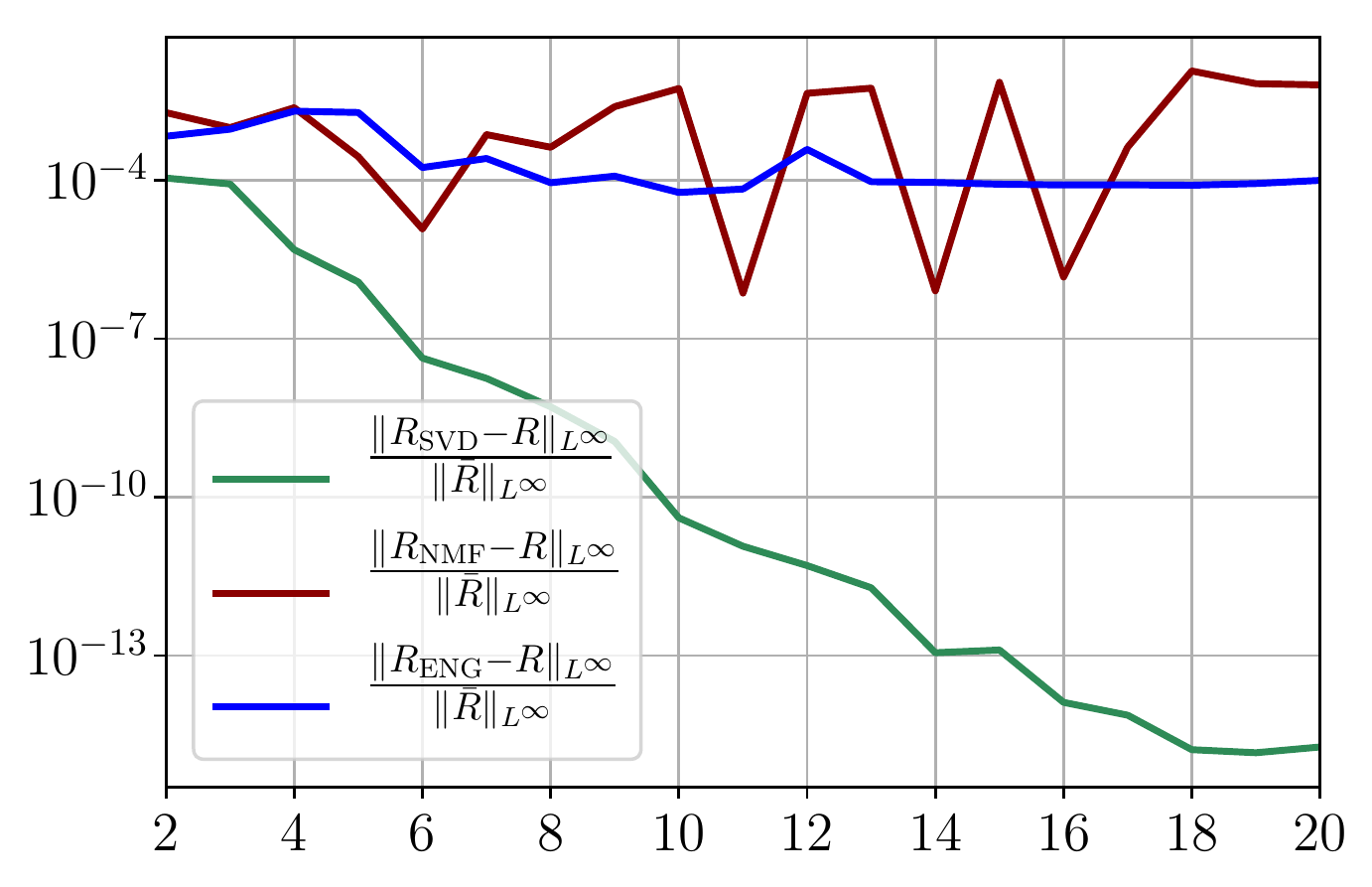}
\caption{$L^\infty$ relative error  of $R$ vs $n$}
\end{subfigure}
\caption{Study of sampling errors: Errors on $\bfI$ and $\bfR$ obtained by a SIR model using the fitted $\bfbeta$ and $\bfgamma$.}
\label{fig:fitting_error_I_R}
\end{figure}

\subsection{Study of the impact of noisy data and intrinsic model error}
To investigate the impact of noise in the observed data, we now add noise to the two previously chosen functions $\bfbeta\in \cB$ and $\bfgamma\in \cG$, and we study the fitting error on this noisy data. The level of the noise has been chosen to be of the same level as the one estimated for the real dynamics. In order to estimate the noise we performed a fitting of the real data using SVD reduced bases, the level of noise is defined as the difference between this fitting and the real data. This level of noise is then added to the virtual scenario considered here. Figure \ref{fig:fitting_error_b_g_noise} shows the fitting error on $\bfbeta$ and $\bfgamma$ using SVD, NMF and ENG reduced bases for $n=2,\dots,20$.
\begin{figure}[H]
\centering
\begin{subfigure}{.4\textwidth}
\includegraphics[width=1\textwidth]{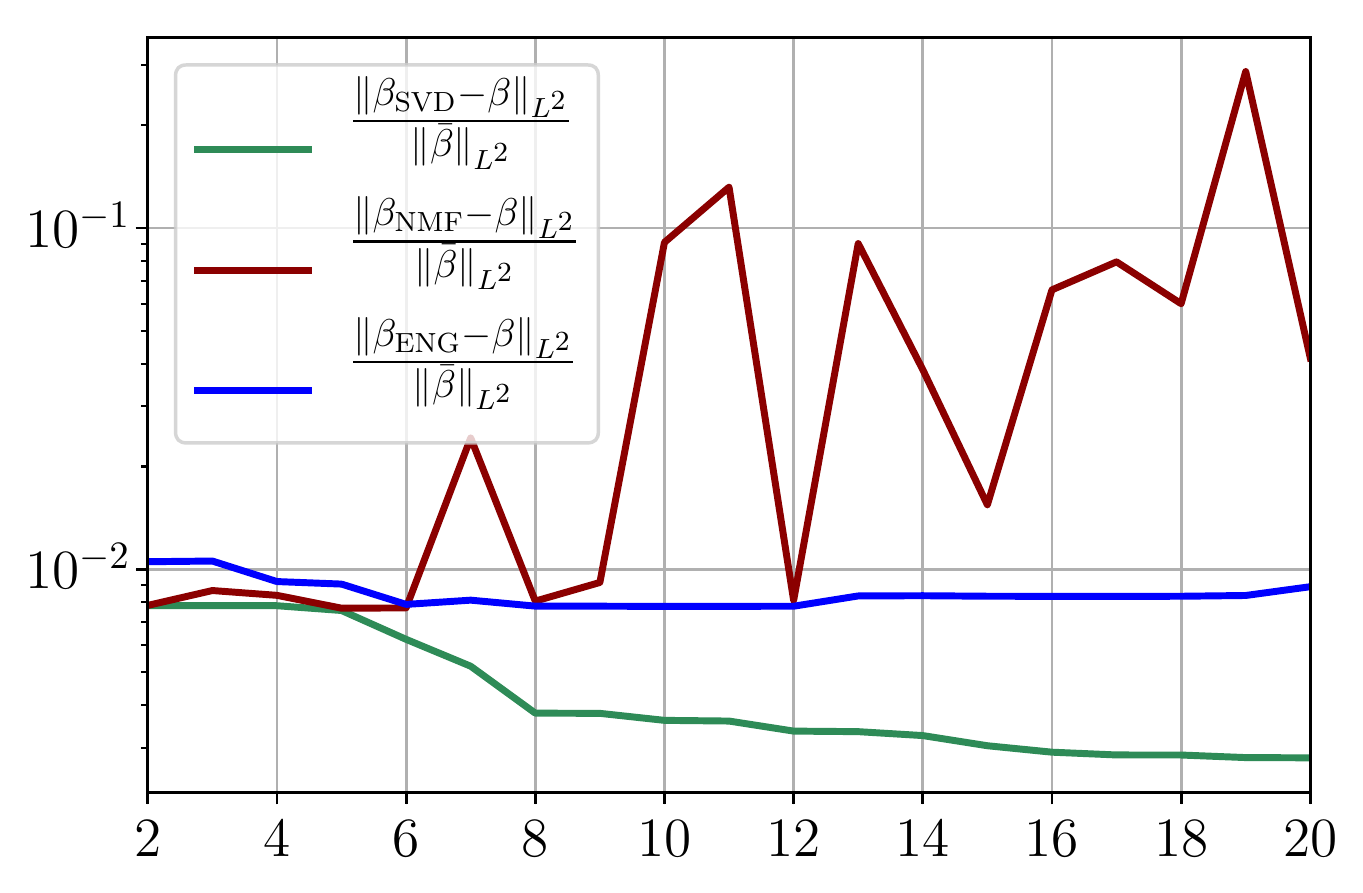}
\caption{$L^2$ relative error  of $\bfbeta$ vs $n$}
\end{subfigure}
\begin{subfigure}{.4\textwidth}
\includegraphics[width=1\textwidth]{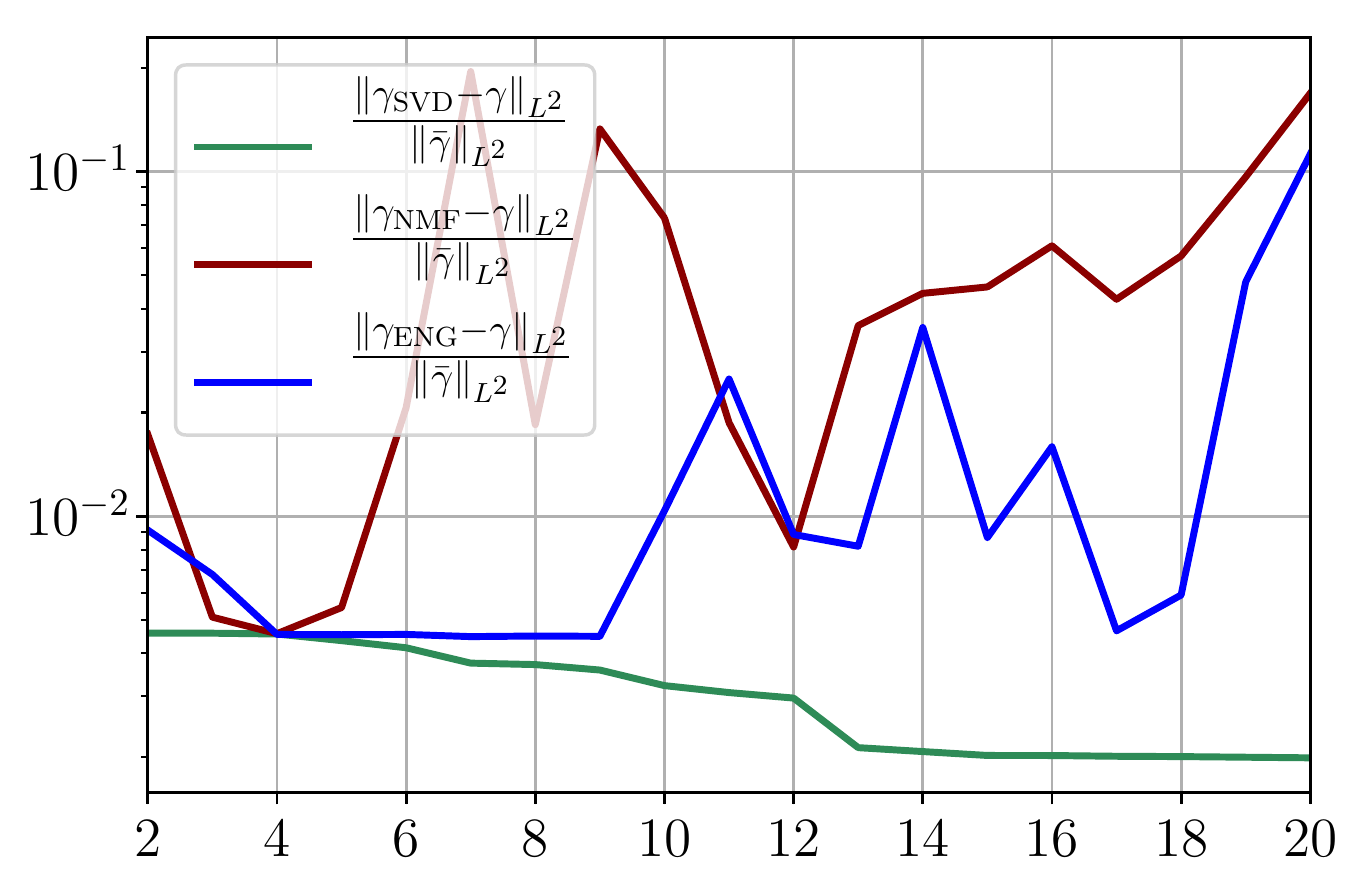}
\caption{$L^2$ relative error  of $\bfgamma$ vs $n$}
\end{subfigure}
\caption{Fitting errors of $\bfbeta$ and $\bfgamma$ on noisy virtual scenario}
\label{fig:fitting_error_b_g_noise}
\end{figure}
We observe that the noise strongly deteriorates  the fitting error obtained using NMF, and the error becomes oscillatory and very unstable. For ENG, the error remains low and consistently around $10^{-2}$ for $\bfbeta$. \col{We observe the same behaviour for $\bfgamma$ with instabilities arising for $n>10$. For SVD the error is lower than in the ENG case and slowly decreases as $n$ increases.}
Figure \ref{fig:fitting_error_I_R_noise} shows the $L^1$ and $L^\infty$ errors obtained after the propagation of the fittings of $\bfbeta$ and $\bfgamma$. In line with the observations from Figure \ref{fig:fitting_error_b_g_noise} the error obtained using NMF is very unstable. \col{Using ENG, we observe a decay from $n=2$ to $n=7$ and oscillations that remain around $10^{-2}$ for $I$ and $10^{-3}$ for $R$. The decay observed for SVD is slow and steady, the error nearly reaches $10^{-4}$ for $I$ and $10^{-5}$ for $R$ when $n=20$.}
\begin{figure}[H]
\centering
\begin{subfigure}{.4\textwidth}
\includegraphics[width=1\textwidth]{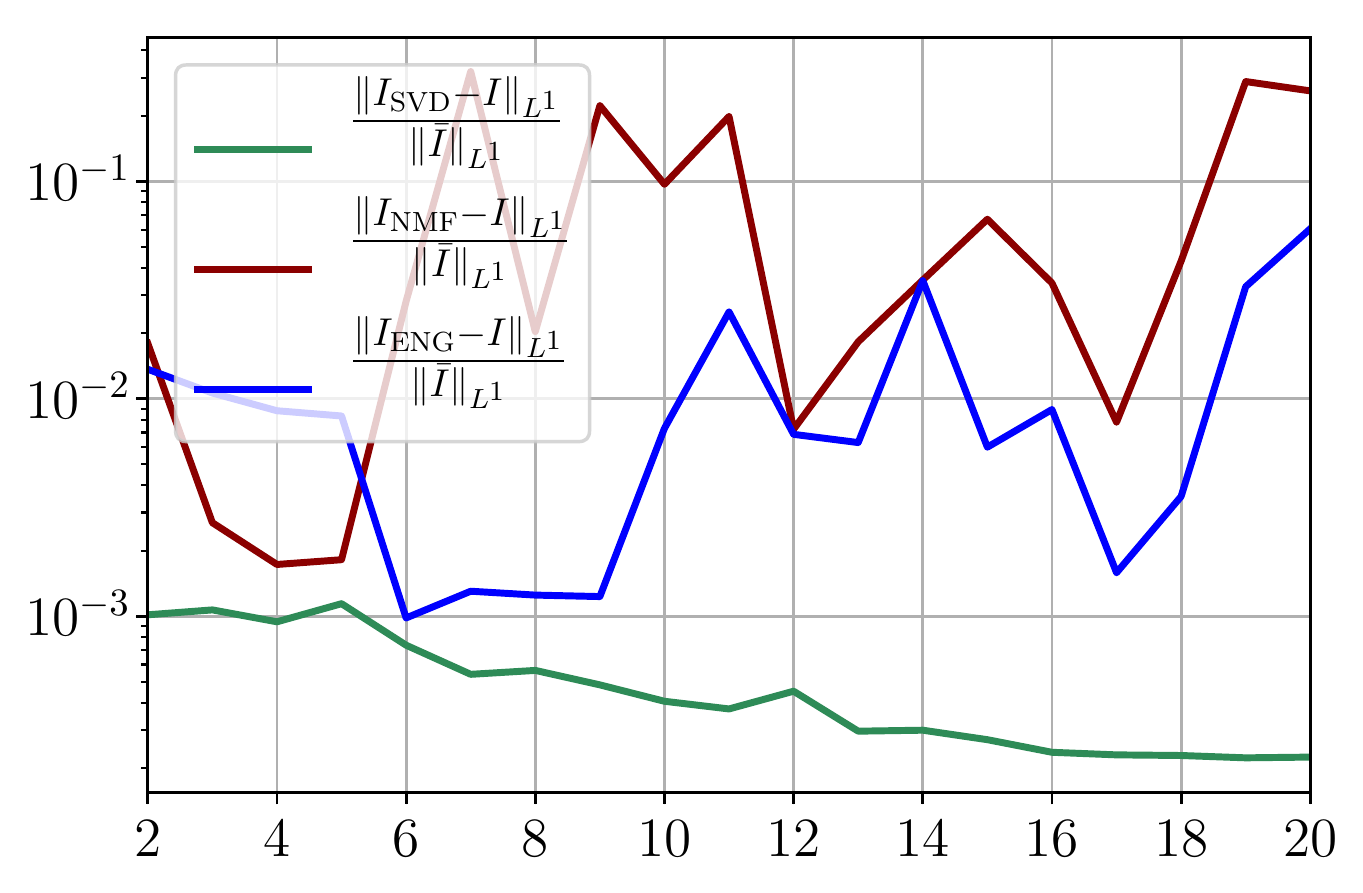}
\caption{$L^1$ relative error  of $I$ vs $n$}
\end{subfigure}
\begin{subfigure}{.4\textwidth}
\includegraphics[width=1\textwidth]{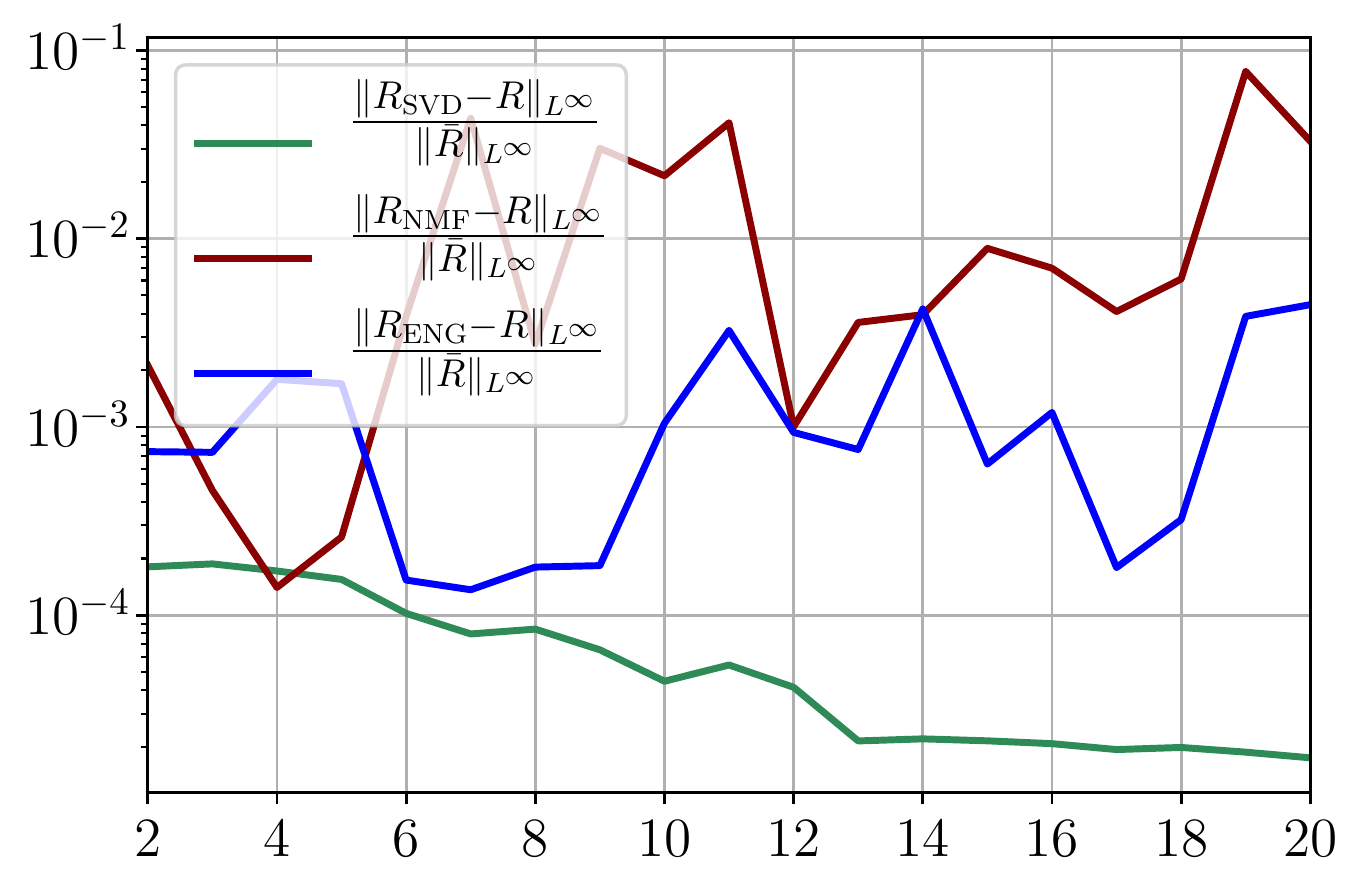}
\caption{$L^\infty$ relative error  of $R$ vs $n$}
\end{subfigure}
\caption{Fitting errors of $I$ and $R$ on noisy data}
\label{fig:fitting_error_I_R_noise}
\end{figure}

Finally, it remains to add intrinsic model error on top of the previous sampling error and observation noise. If we do so, the {main} change is that the previous fitting error plots from Figure \ref{fig:fitting_error_b_g_noise} have essentially the same behavior but the error values are increased depending on the degree of model inaccuracy.
\col{
We have therefore disentangled all the effects of model error and noisy data, and all the observations from this section give thus a better insight of the fitting on the real data.

Figure \ref{fig:fitting_error_1} summarizes the fitting results for an example fitting period taken from $t_0=19/03$ to $T=03/05$. Figures \ref{fig:fitting_error_b_g_real_beta} and \ref{fig:fitting_error_b_g_real_gamma} show the fitting error on $\bfbeta$ and $\bfgamma$ using SVD, NMF and ENG reduced bases for $n=2,\dots,20$. Figures \ref{fig:fitting_error_b_g_real_I} and \ref{fig:fitting_error_b_g_real_R} show the $L^1$ and $L^\infty$ relative errors on $\bfI_n$ and $\bfR_n$ after the propagation of the fittings of $\beta^*_n$ and $\gamma^*_n$.

\begin{figure}[h!]
\centering
\begin{subfigure}{.43\textwidth}
\includegraphics[width=1\textwidth]{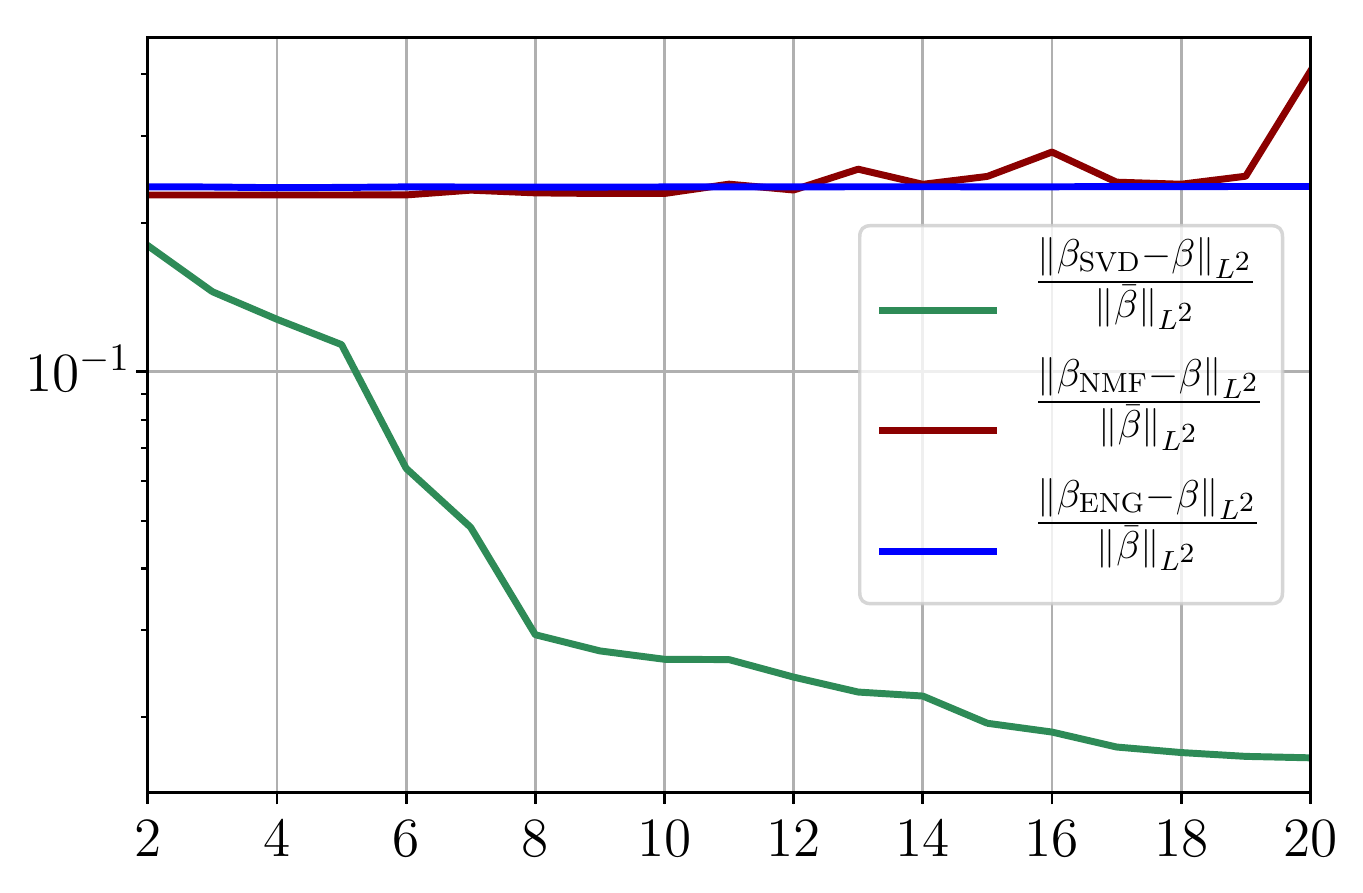}
\caption{$L^2$ relative error  of $\bfbeta$ vs $n$}
\label{fig:fitting_error_b_g_real_beta}
\end{subfigure}
\begin{subfigure}{.43\textwidth}
\includegraphics[width=1\textwidth]{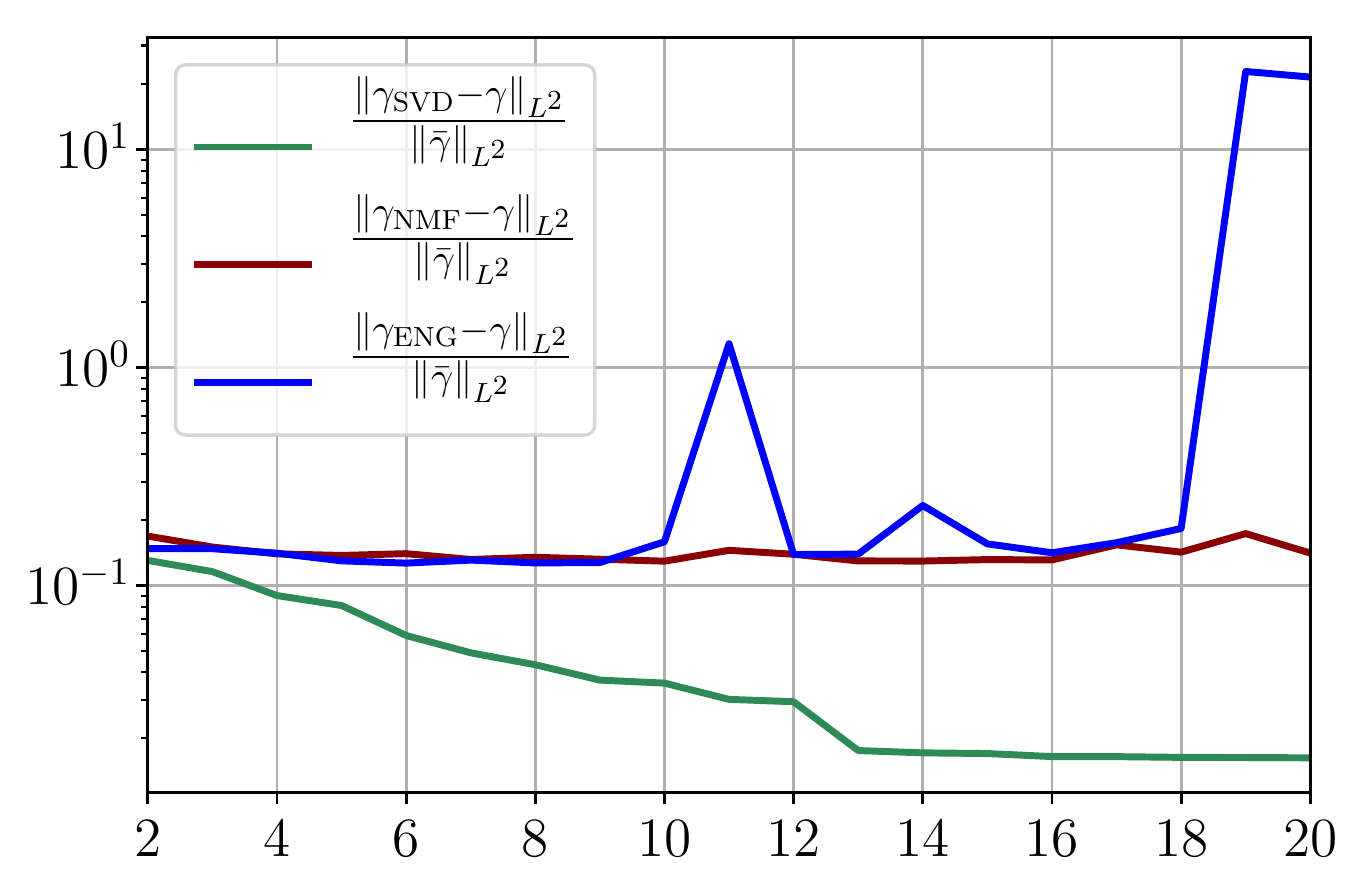}
\caption{$L^2$ relative error  of $\bfgamma$ vs $n$}
\label{fig:fitting_error_b_g_real_gamma}
\end{subfigure}
\par\bigskip 
\begin{subfigure}{.43\textwidth}
\includegraphics[width=1\textwidth]{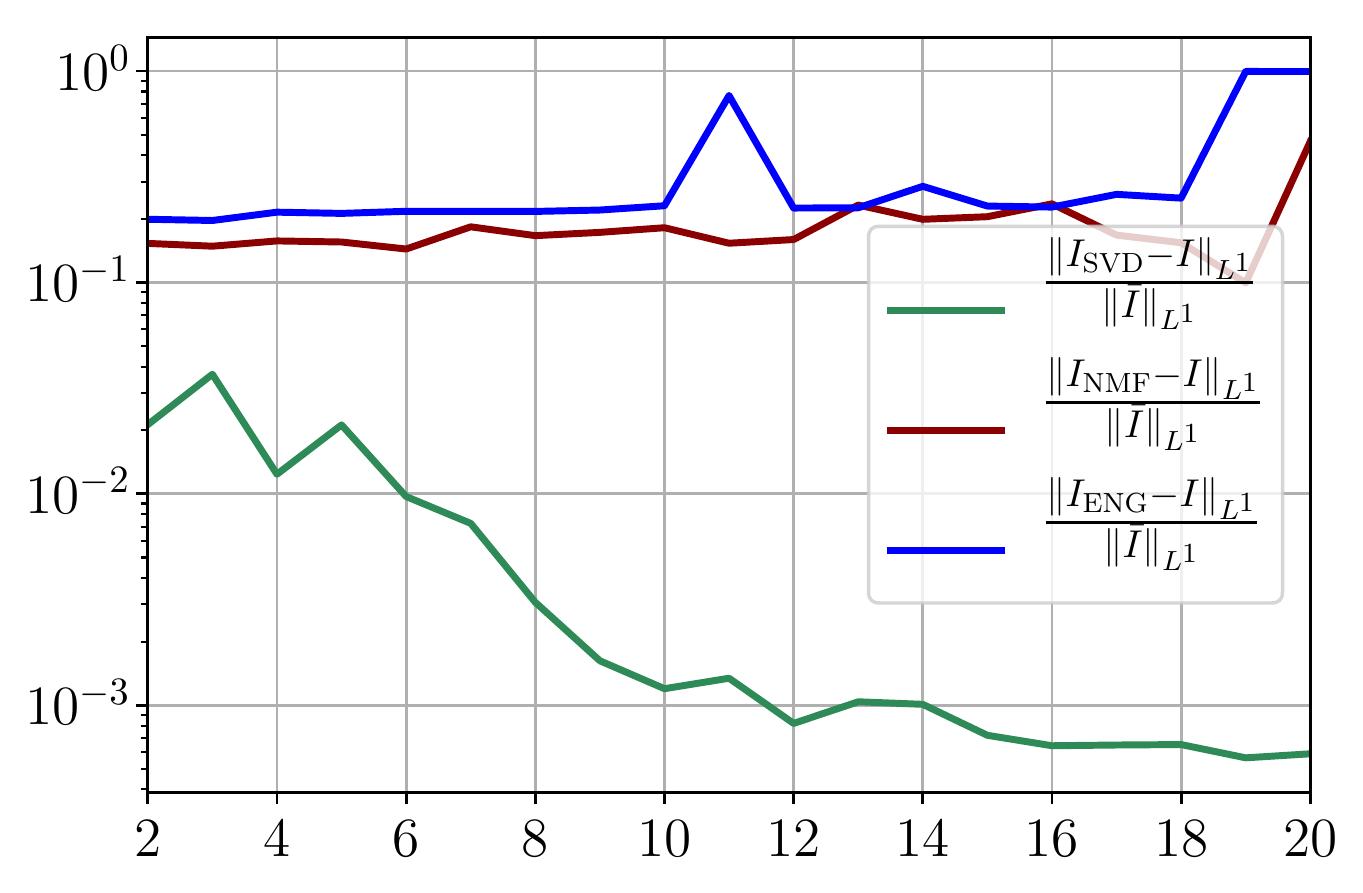}
\caption{$L^1$ relative error  of $I$ vs $n$}
\label{fig:fitting_error_b_g_real_I}
\end{subfigure}
\begin{subfigure}{.43\textwidth}
\includegraphics[width=1\textwidth]{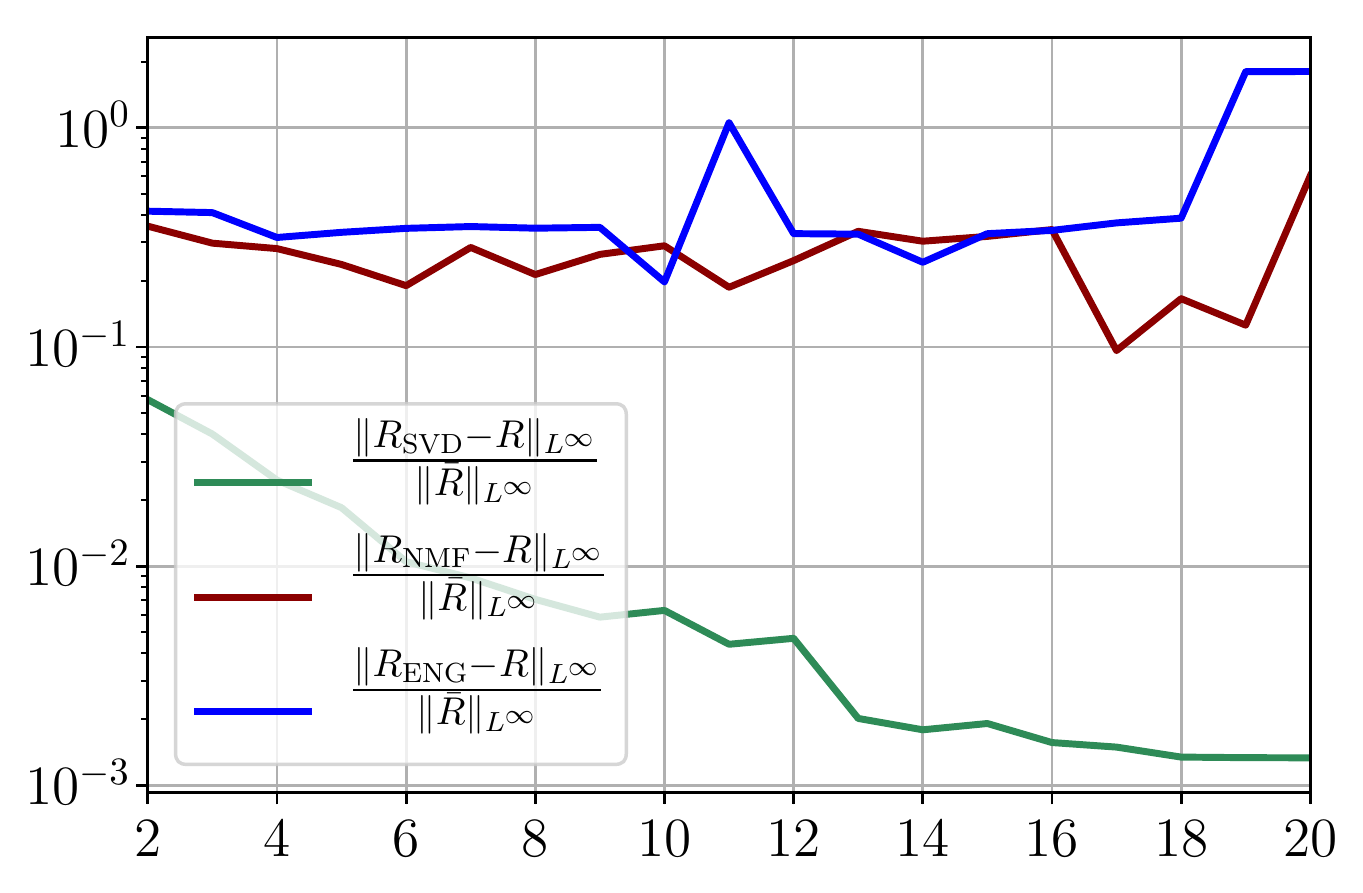}
\caption{$L^\infty$ relative error  of $R$ vs $n$}
\label{fig:fitting_error_b_g_real_R}
\end{subfigure}
\caption{\textbf{Fitting errors from $t_0=19/03/2020$ to $T=03/05/2020$}}
\label{fig:fitting_error_1}
\end{figure}

From figures \ref{fig:fitting_error_b_g_real_beta} and \ref{fig:fitting_error_b_g_real_gamma}, we observe that the fitting error with SVD decreases at a moderate rate as the dimension $n$ of the reduced basis is increased. The error with NMF and ENG does not decrease and oscillates around certain constant error value of order $10^{-1}$. Note that this value is small and yields to small errors in the approximation of $\bfI$ and $\bfR$ as figures \ref{fig:fitting_error_b_g_real_I} and \ref{fig:fitting_error_b_g_real_R} illustrate. 
}

\section{Study of forecasting errors}
\label{appendix:forecasting-error}
In this section we give a detailed study of the forecasting errors for each different fitting strategy (\fitIR, \fitbg), reduced model (SVD, NMF, ENG) and starting date $T$. We anticipate the main conclusion which we already announced in Section \ref{sec:forecast-compare}: ENG fitted with \fitbg~outperforms the other reduced models and is the most robust and accurate reduced model to use in a forecasting strategy. \col{Figure \ref{fig:forecast_error_0104} shows the relative errors of a 14 days forecast from $T=01/04$ for each forecasting method and each reduced basis. Similarly figures \ref{fig:forecast_error_0304} to \ref{fig:forecast_error_0505} show forecasts relative errors respectively from different times.}

\begin{figure}[H]
\centering
\begin{subfigure}{.45\textwidth}
\includegraphics[width=1\textwidth]{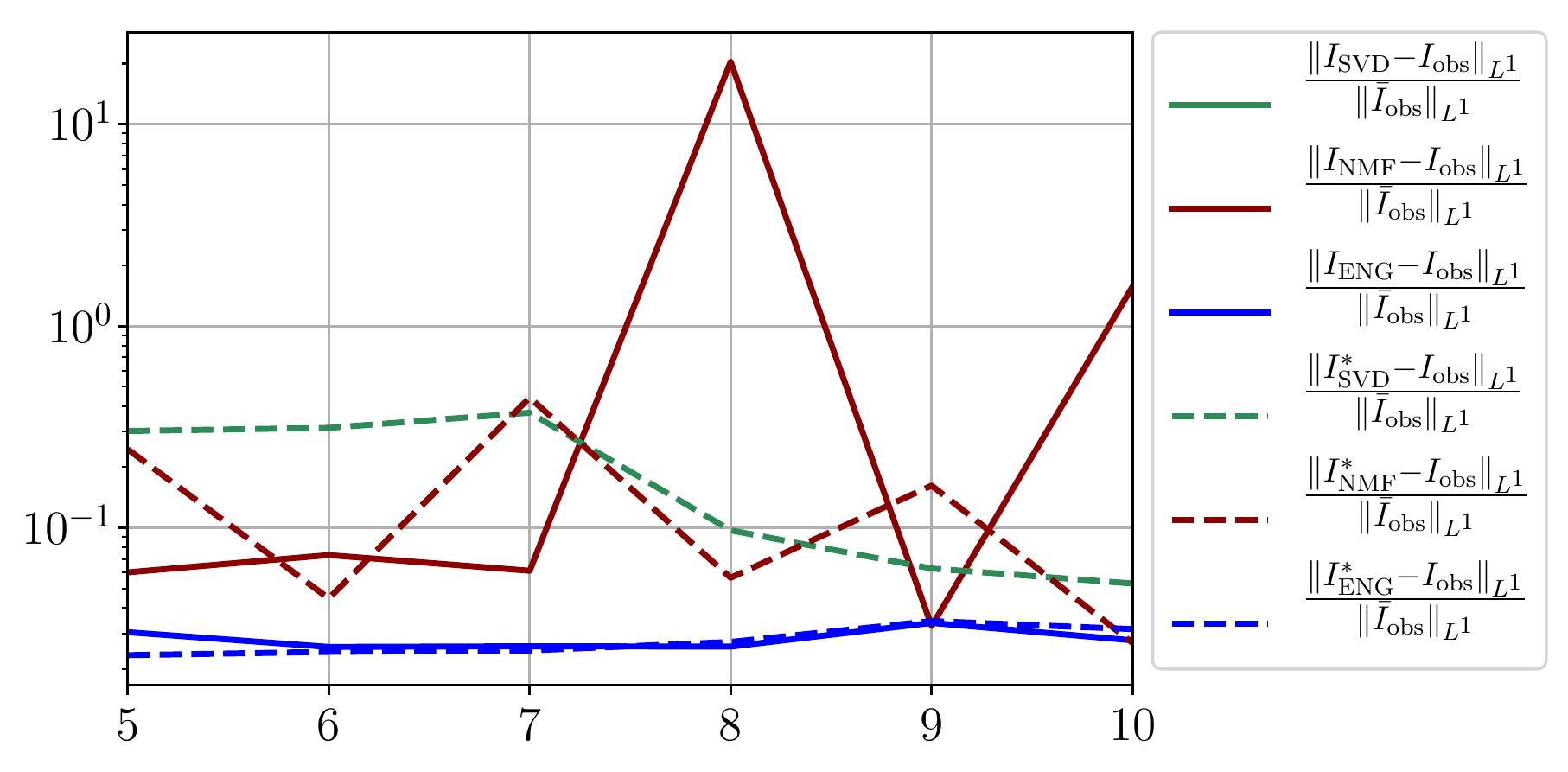}
\caption{$L^1$ relative error  of $I$ vs $n$}
\end{subfigure}
\begin{subfigure}{.45\textwidth}
\includegraphics[width=1\textwidth]{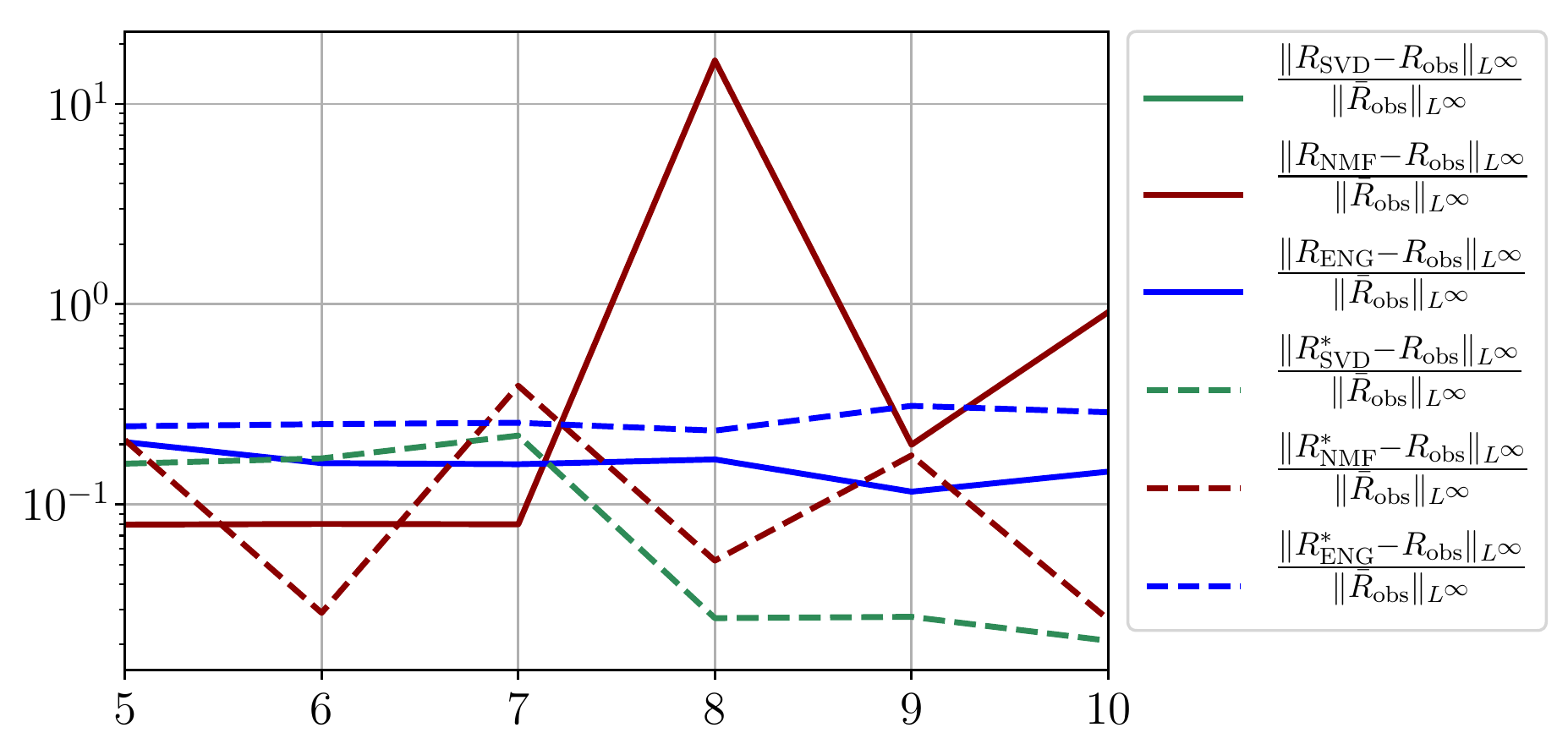}
\caption{$L^\infty$ relative error  of $R$ vs $n$}
\end{subfigure}
\caption{Forecasting errors of $I$ and $R$ (from $T=01/04$)}
\label{fig:forecast_error_0104}
\end{figure}

\begin{figure}[H]
\centering
\begin{subfigure}{.45\textwidth}
\includegraphics[width=1\textwidth]{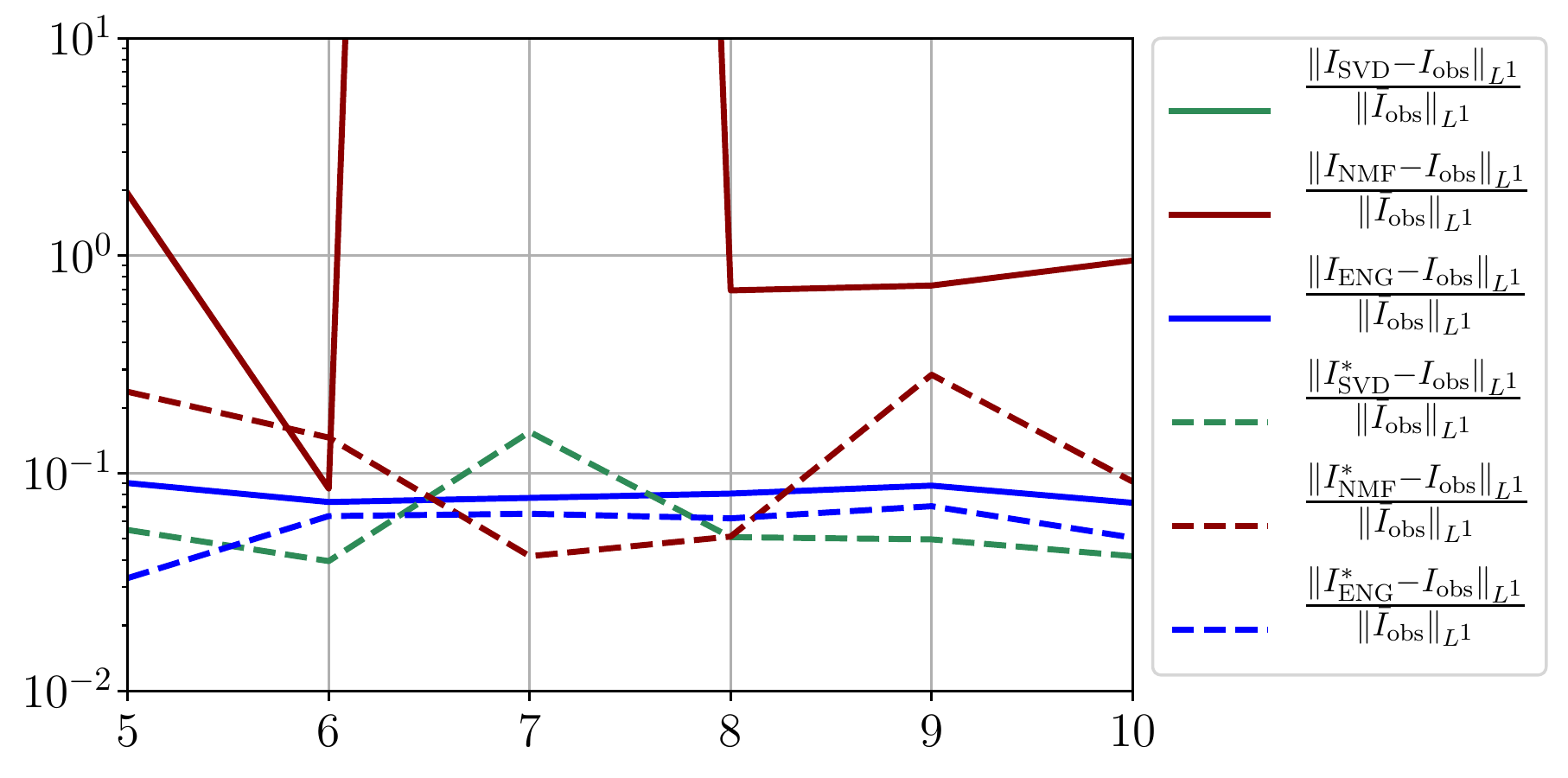}
\caption{$L^1$ relative error  of $I$ vs $n$}
\end{subfigure}
\begin{subfigure}{.45\textwidth}
\includegraphics[width=1\textwidth]{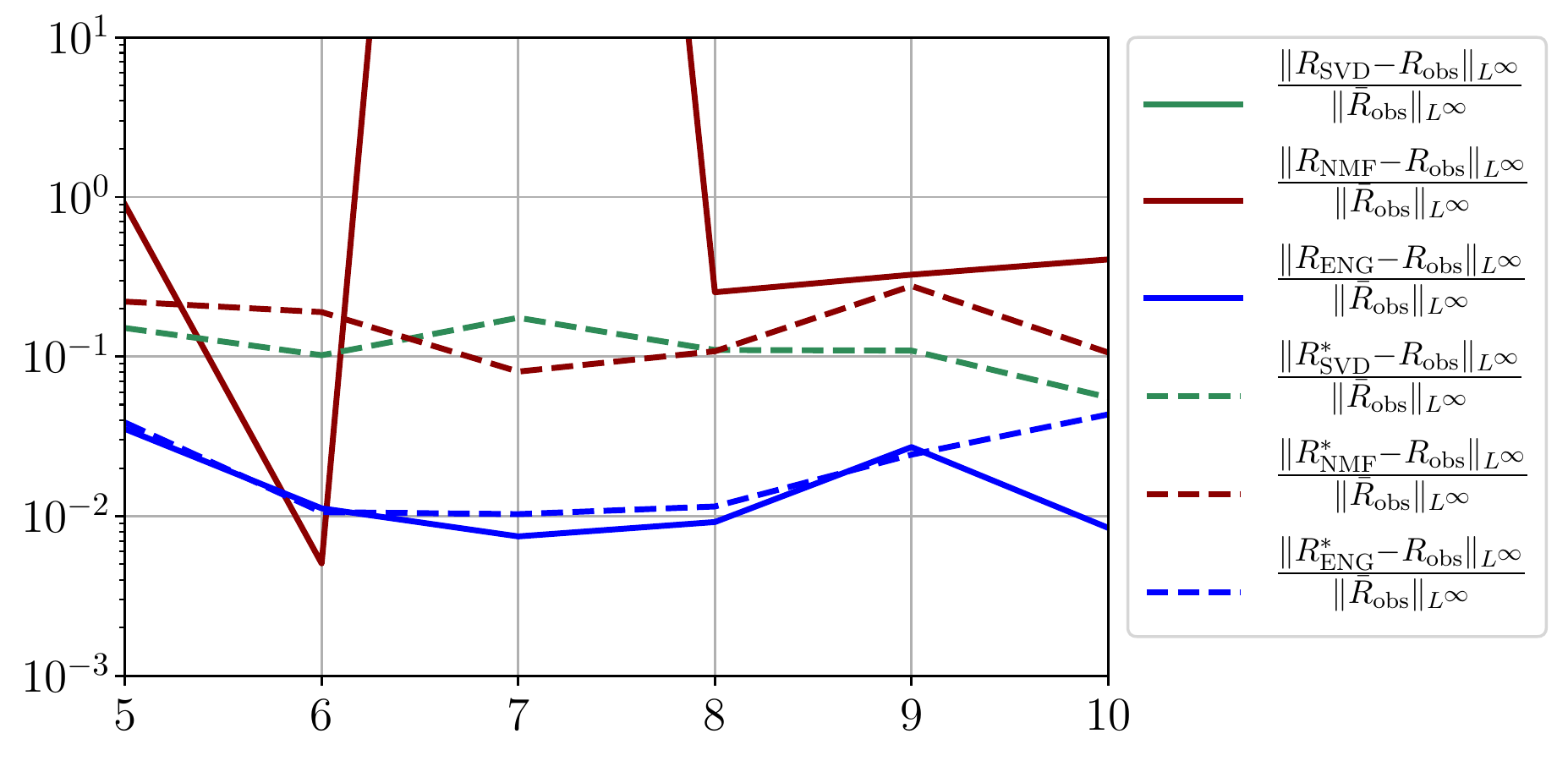}
\caption{$L^\infty$ relative error  of $R$ vs $n$}
\end{subfigure}
\caption{Forecasting errors of $I$ and $R$ (from $T=03/04$)}
\label{fig:forecast_error_0304}
\end{figure}

\begin{figure}[H]
\centering
\begin{subfigure}{.45\textwidth}
\includegraphics[width=1\textwidth]{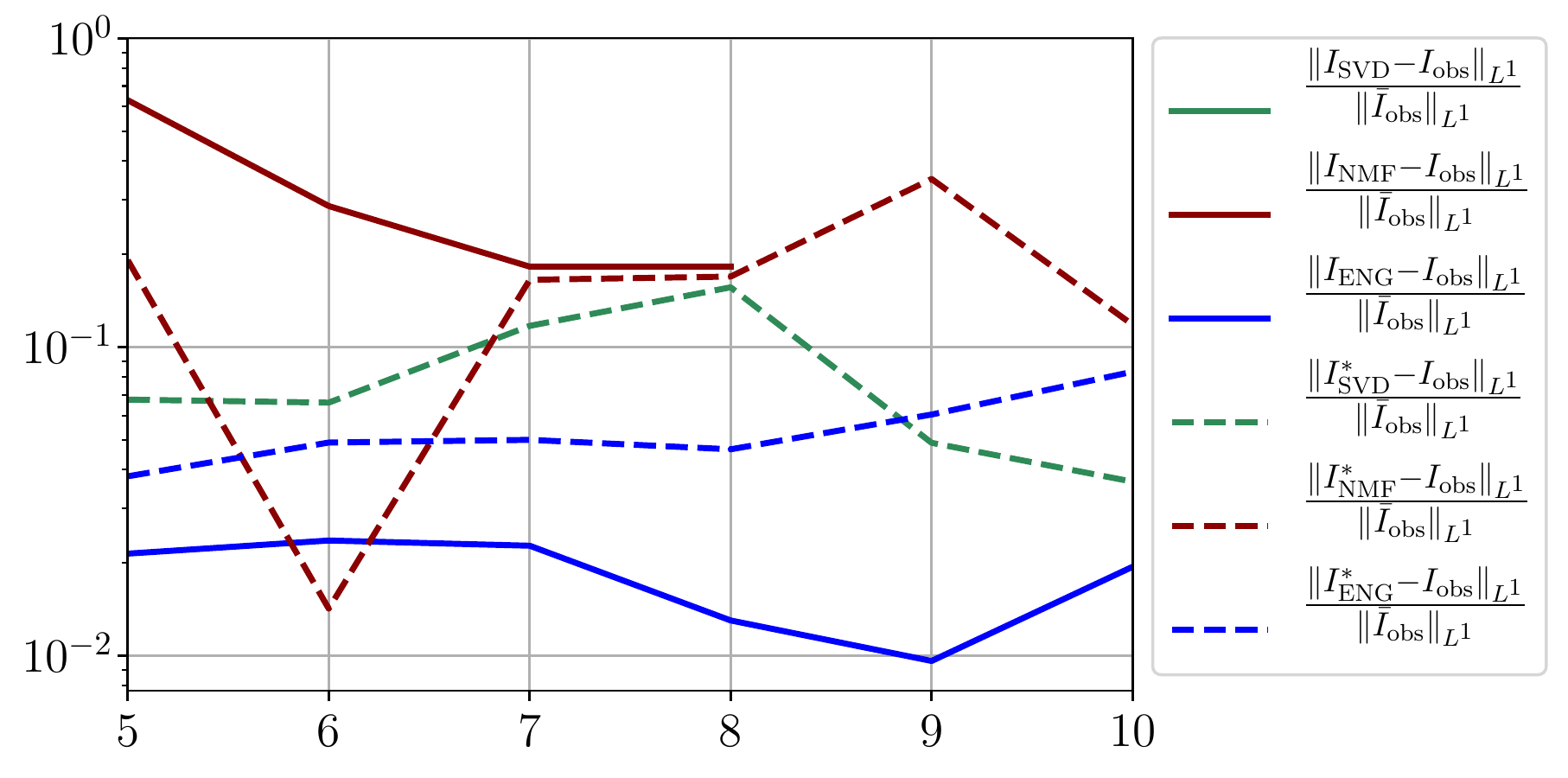}
\caption{$L^1$ relative error  of $I$ vs $n$}
\end{subfigure}
\begin{subfigure}{.45\textwidth}
\includegraphics[width=1\textwidth]{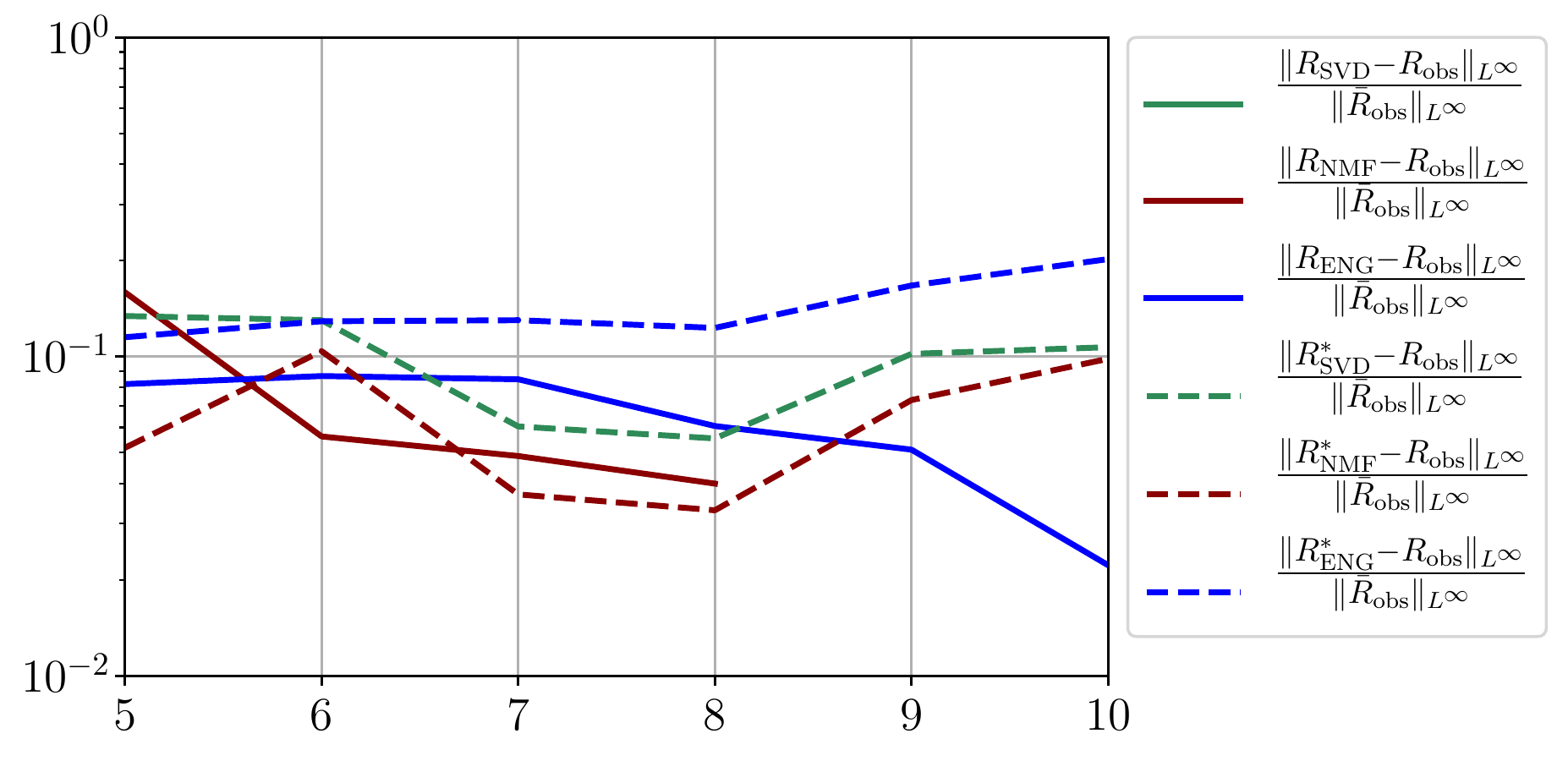}
\caption{$L^\infty$ relative error  of $R$ vs $n$}
\end{subfigure}
\caption{Forecasting errors of $I$ and $R$ (from $T=05/04$)}
\label{fig:forecast_error_0504}
\end{figure}

\begin{figure}[H]
\centering
\begin{subfigure}{.45\textwidth}
\includegraphics[width=1\textwidth]{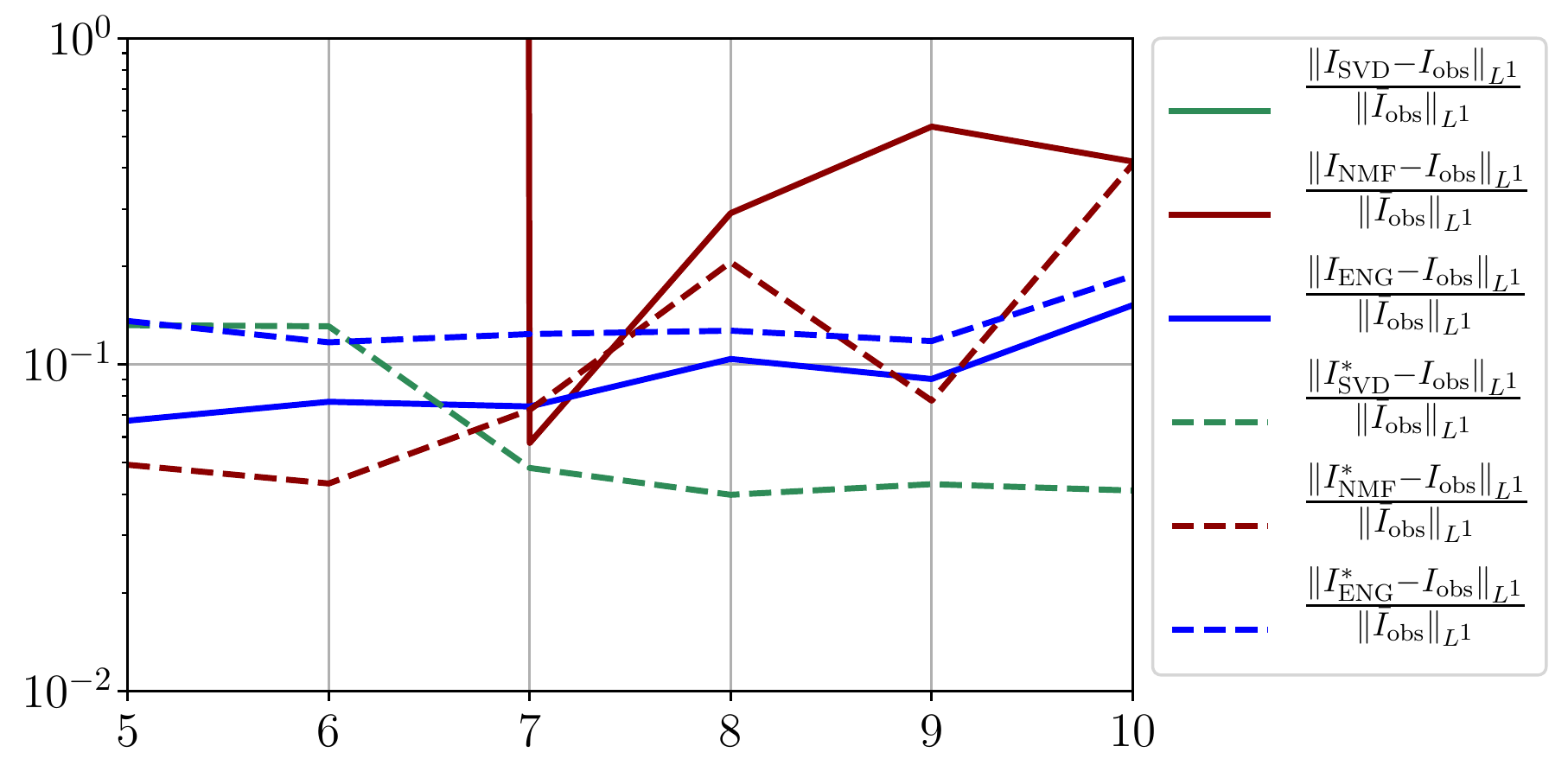}
\caption{$L^1$ relative error  of $I$ vs $n$}
\end{subfigure}
\begin{subfigure}{.45\textwidth}
\includegraphics[width=1\textwidth]{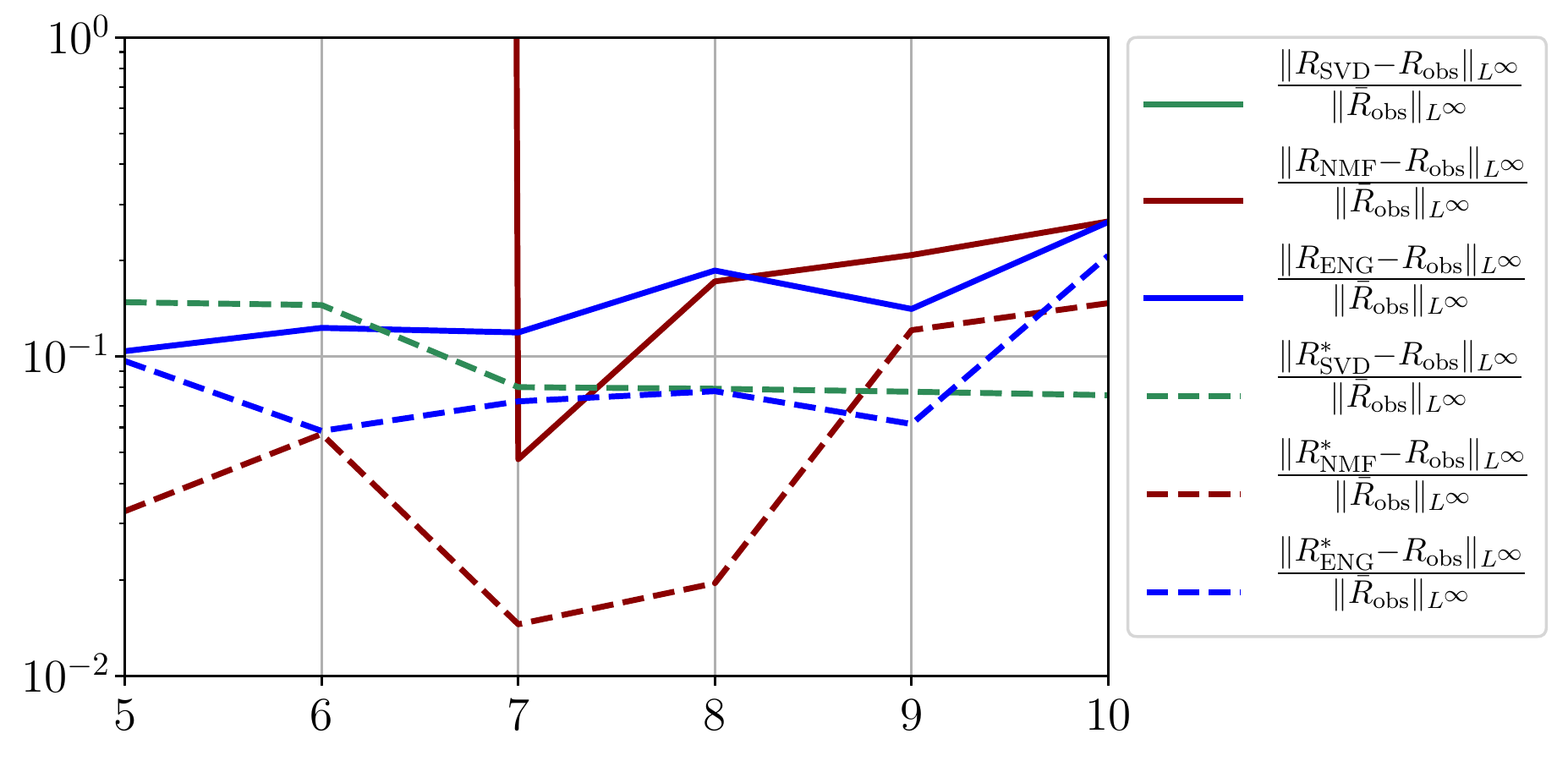}
\caption{$L^\infty$ relative error  of $R$ vs $n$}
\end{subfigure}
\caption{Forecasting errors of $I$ and $R$ (from $T=07/04$)}
\label{fig:forecast_error_0704}
\end{figure}

\begin{figure}[H]
\centering
\begin{subfigure}{.45\textwidth}
\includegraphics[width=1\textwidth]{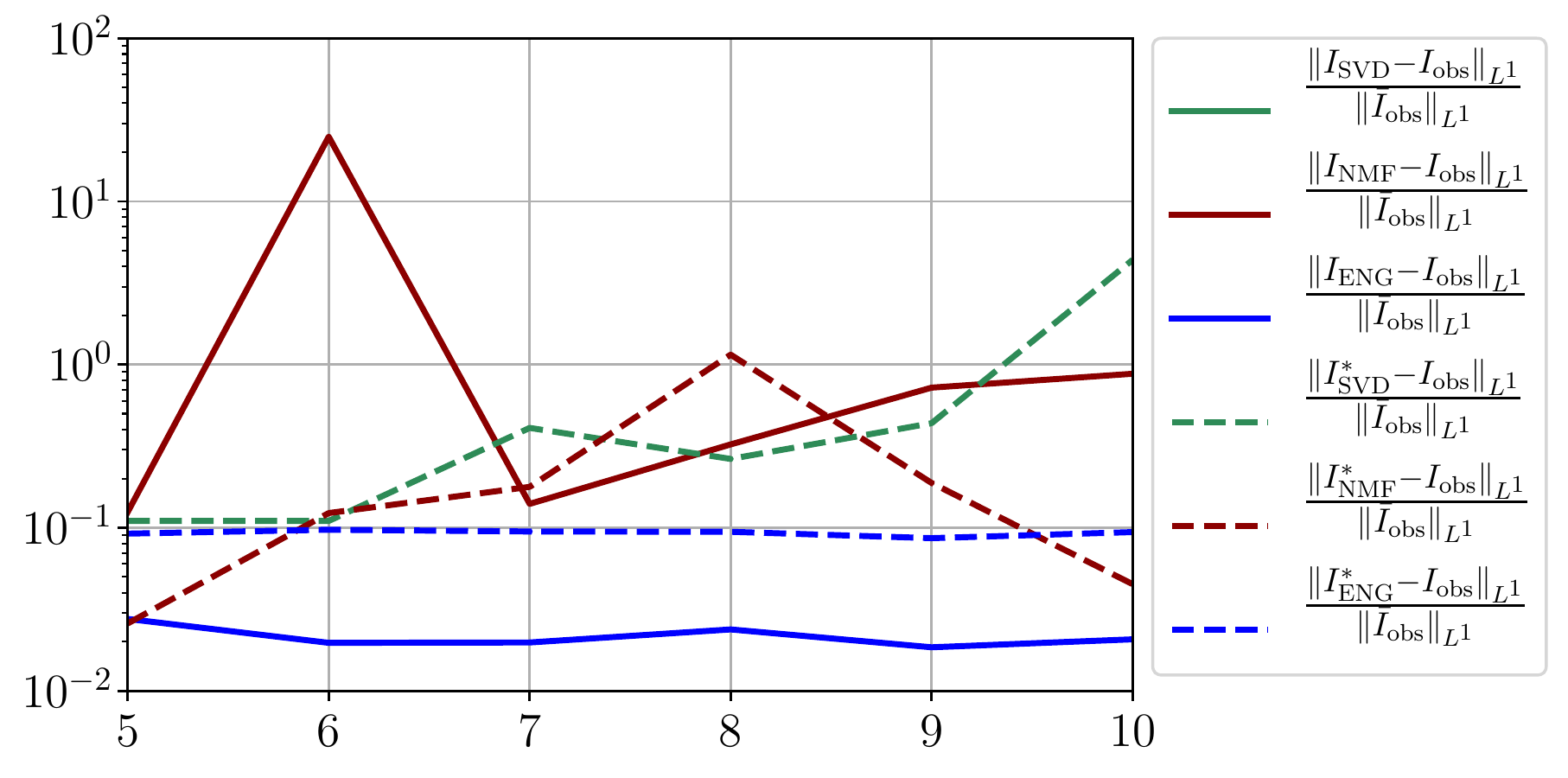}
\caption{$L^1$ relative error  of $I$ vs $n$}
\end{subfigure}
\begin{subfigure}{.45\textwidth}
\includegraphics[width=1\textwidth]{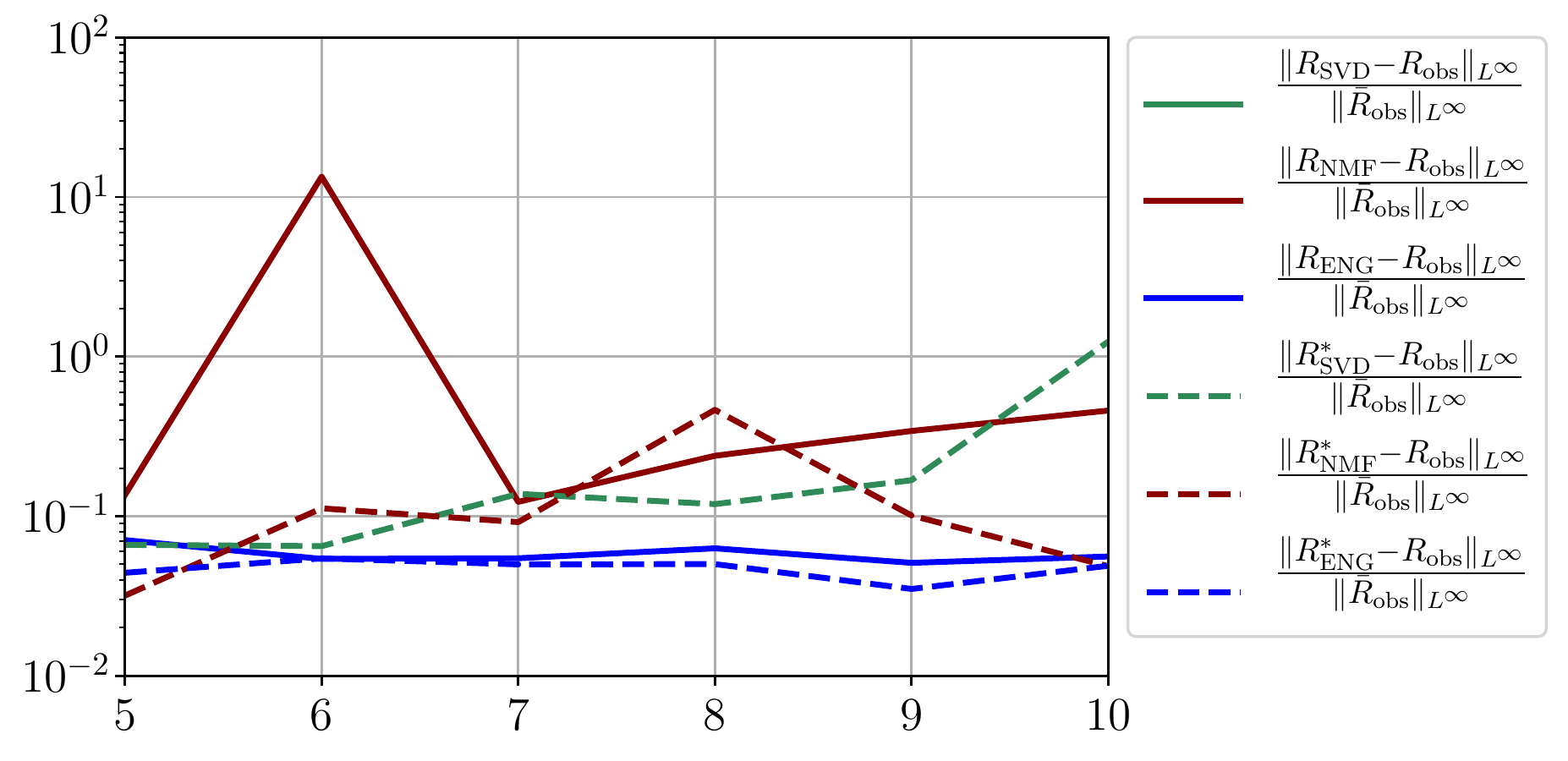}
\caption{$L^\infty$ relative error  of $R$ vs $n$}
\end{subfigure}
\caption{Forecasting errors of $I$ and $R$ (from $T=09/04$)}
\label{fig:forecast_error_0904}
\end{figure}

\begin{figure}[H]
\centering
\begin{subfigure}{.45\textwidth}
\includegraphics[width=1\textwidth]{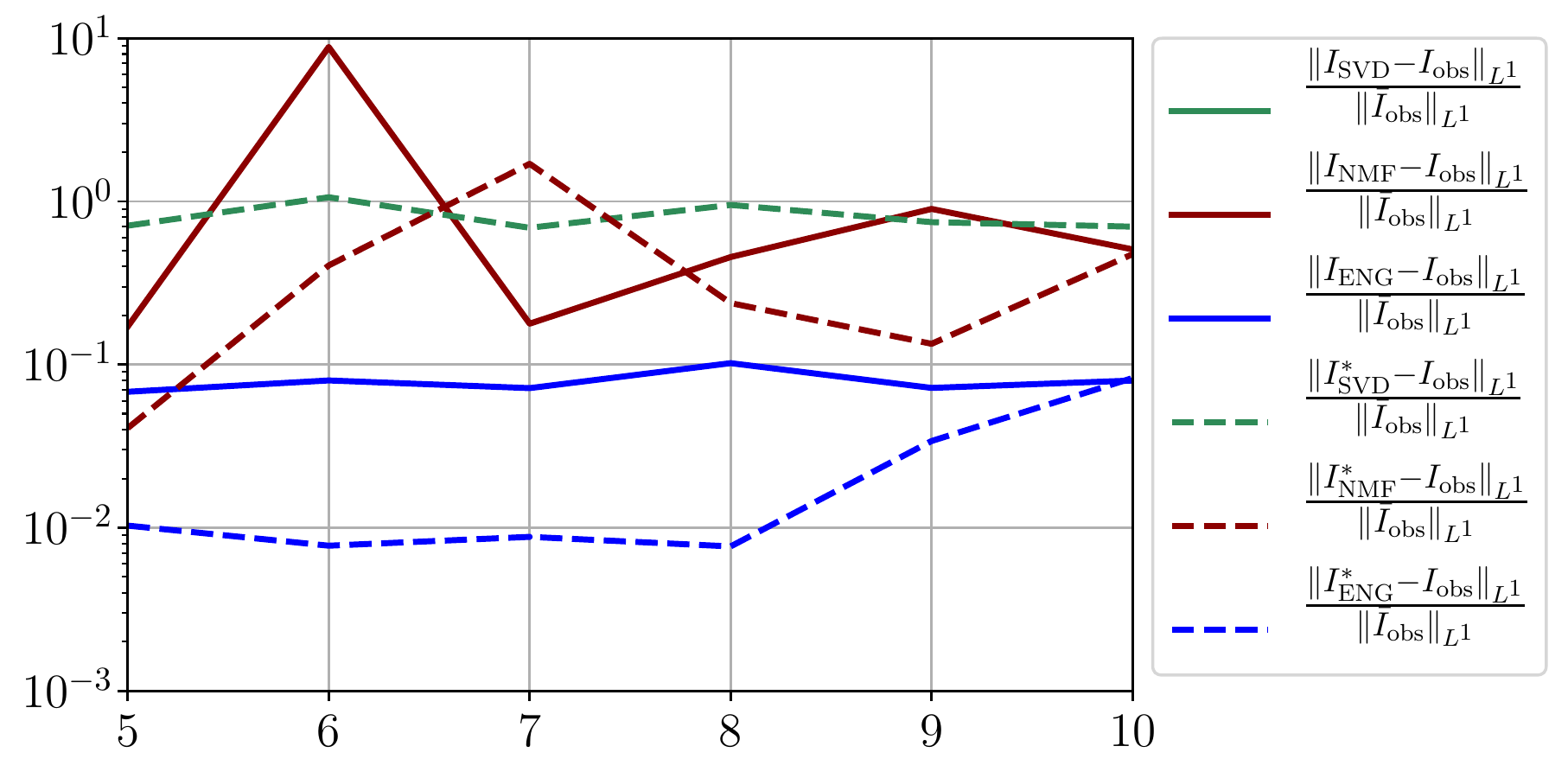}
\caption{$L^1$ relative error  of $I$ vs $n$}
\end{subfigure}
\begin{subfigure}{.45\textwidth}
\includegraphics[width=1\textwidth]{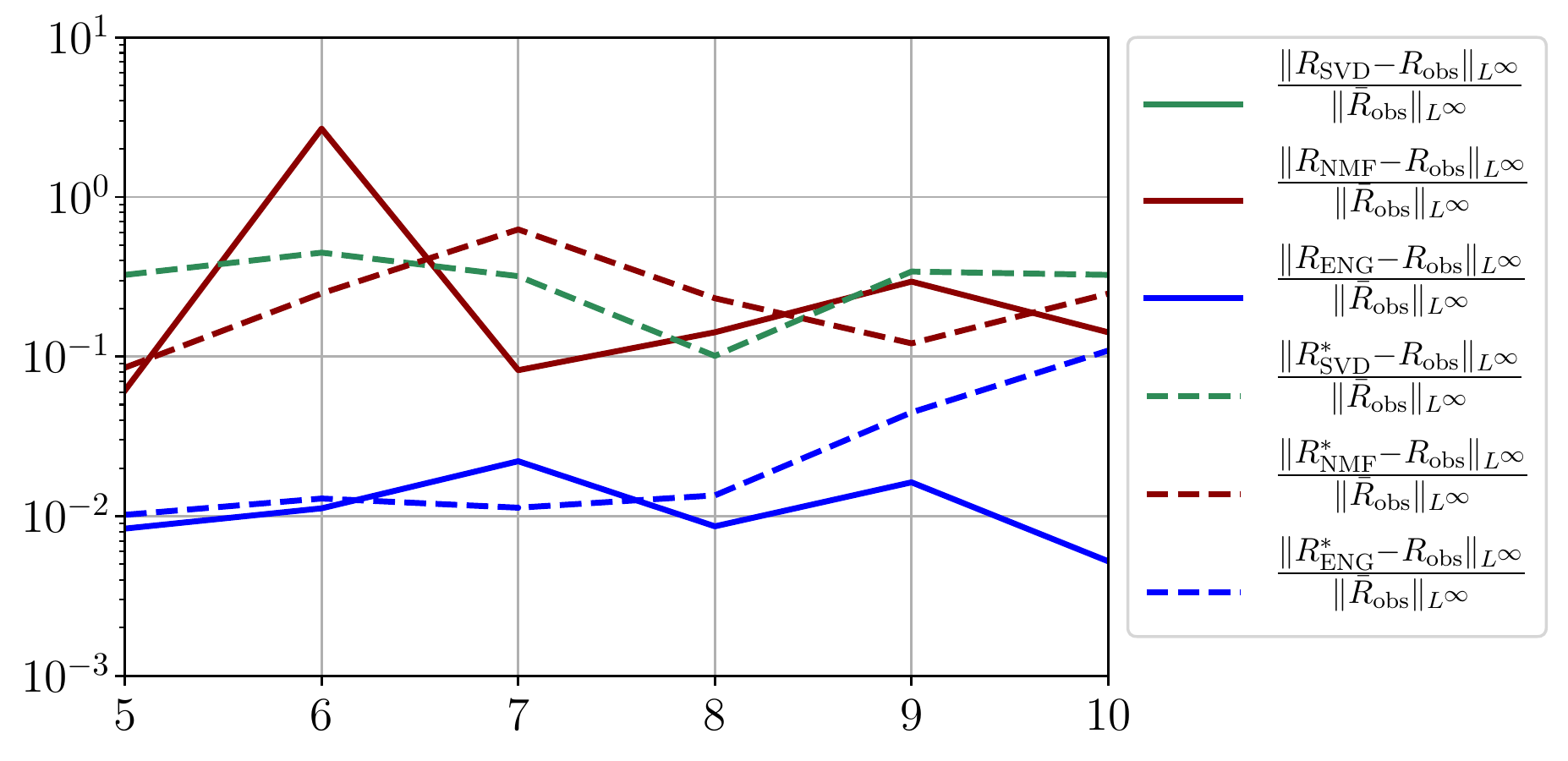}
\caption{$L^\infty$ relative error  of $R$ vs $n$}
\end{subfigure}
\caption{Forecasting errors of $I$ and $R$ (from $T=11/04$)}
\label{fig:forecast_error_1104}
\end{figure}

\begin{figure}[H]
\centering
\begin{subfigure}{.45\textwidth}
\includegraphics[width=1\textwidth]{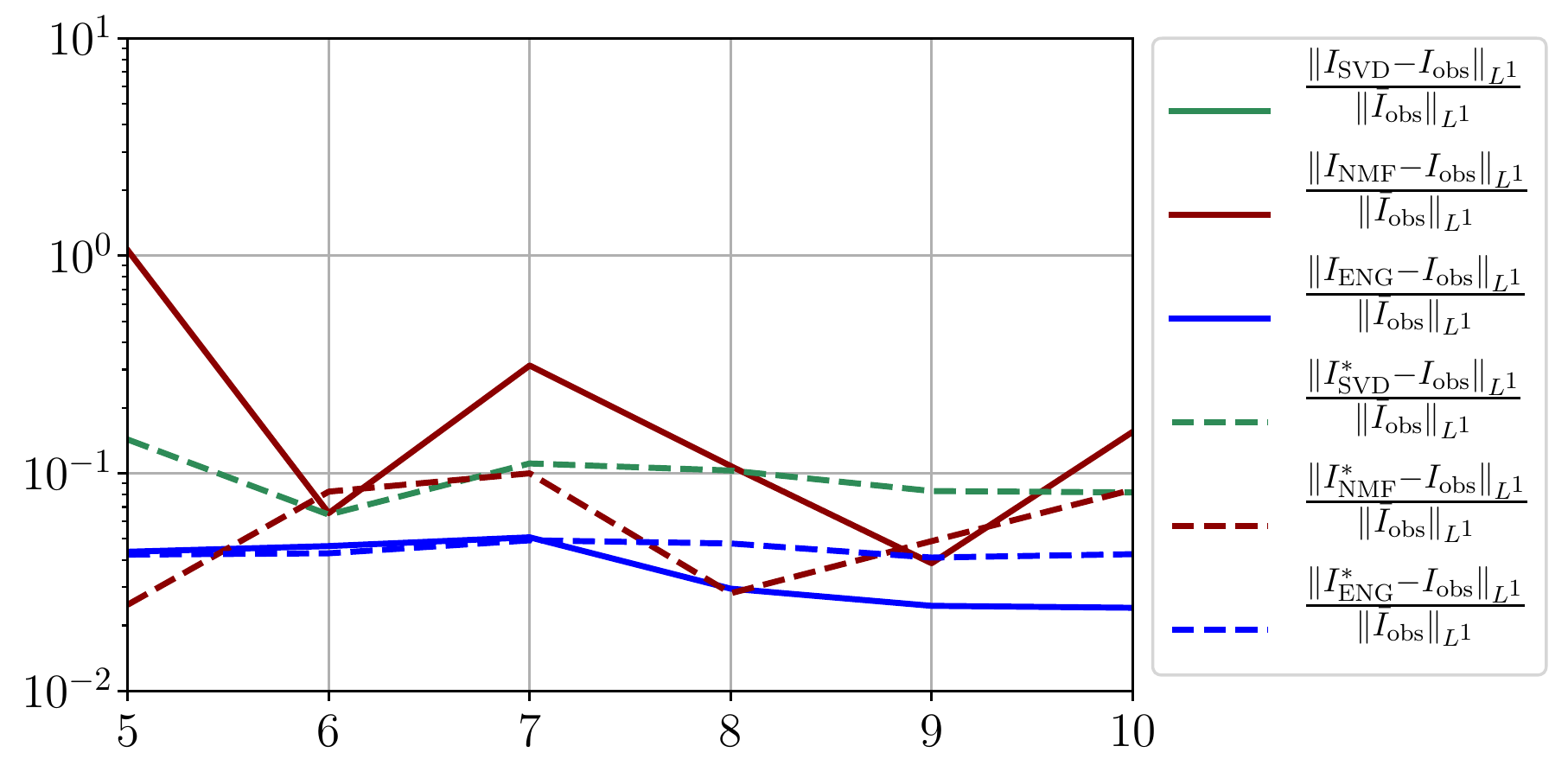}
\caption{$L^1$ relative error  of $I$ vs $n$}
\end{subfigure}
\begin{subfigure}{.45\textwidth}
\includegraphics[width=1\textwidth]{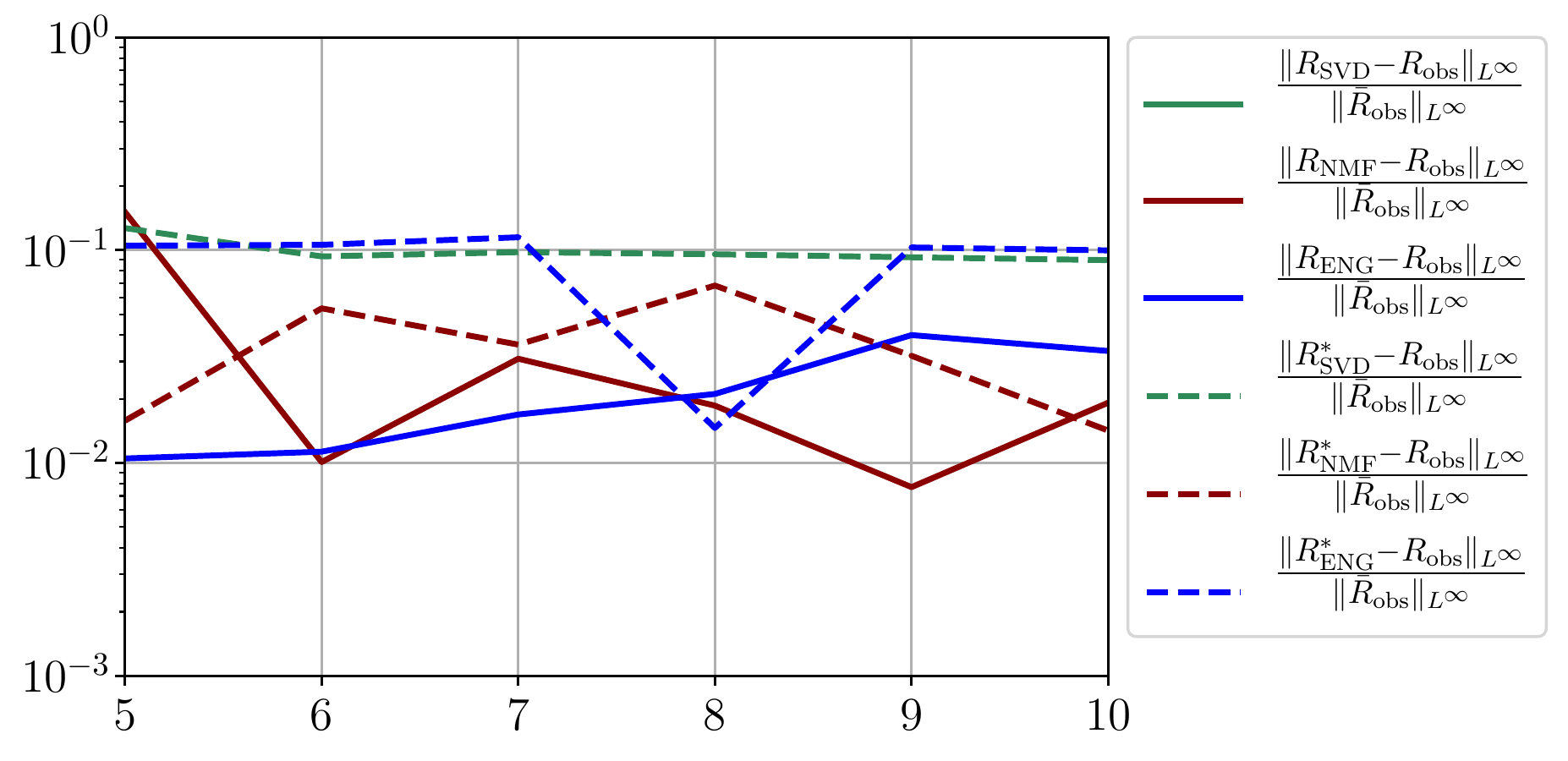}
\caption{$L^\infty$ relative error  of $R$ vs $n$}
\end{subfigure}
\caption{Forecasting errors of $I$ and $R$ (from $T=15/04$)}
\label{fig:forecast_error_1504}
\end{figure}

\begin{figure}[H]
\centering
\begin{subfigure}{.45\textwidth}
\includegraphics[width=1\textwidth]{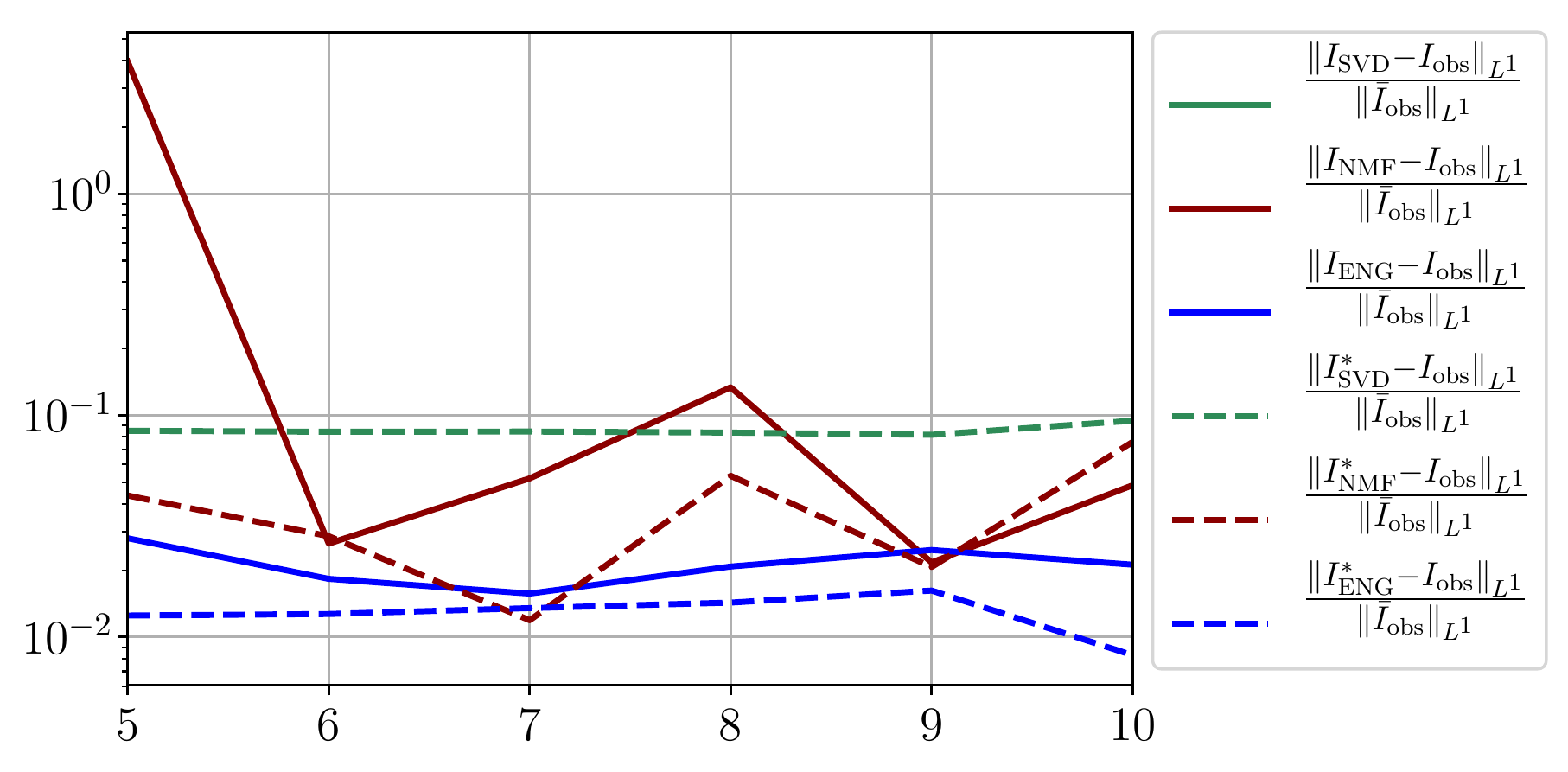}
\caption{$L^1$ relative error  of $I$ vs $n$}
\end{subfigure}
\begin{subfigure}{.45\textwidth}
\includegraphics[width=1\textwidth]{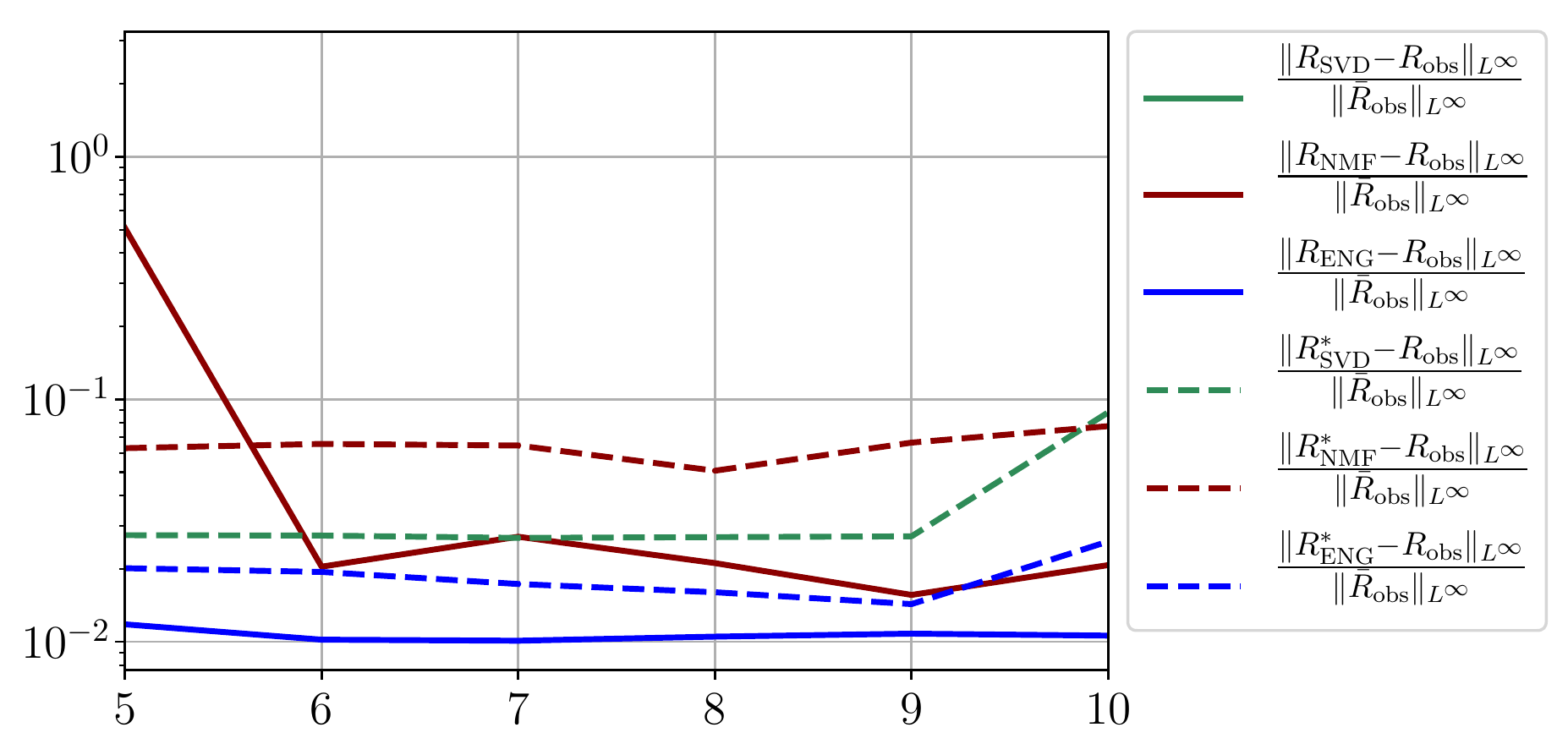}
\caption{$L^\infty$ relative error  of $R$ vs $n$}
\end{subfigure}
\caption{Forecasting errors of $I$ and $R$ (from $T=21/04$)}
\label{fig:forecast_error_2104}
\end{figure}

\begin{figure}[H]
\centering
\begin{subfigure}{.45\textwidth}
\includegraphics[width=1\textwidth]{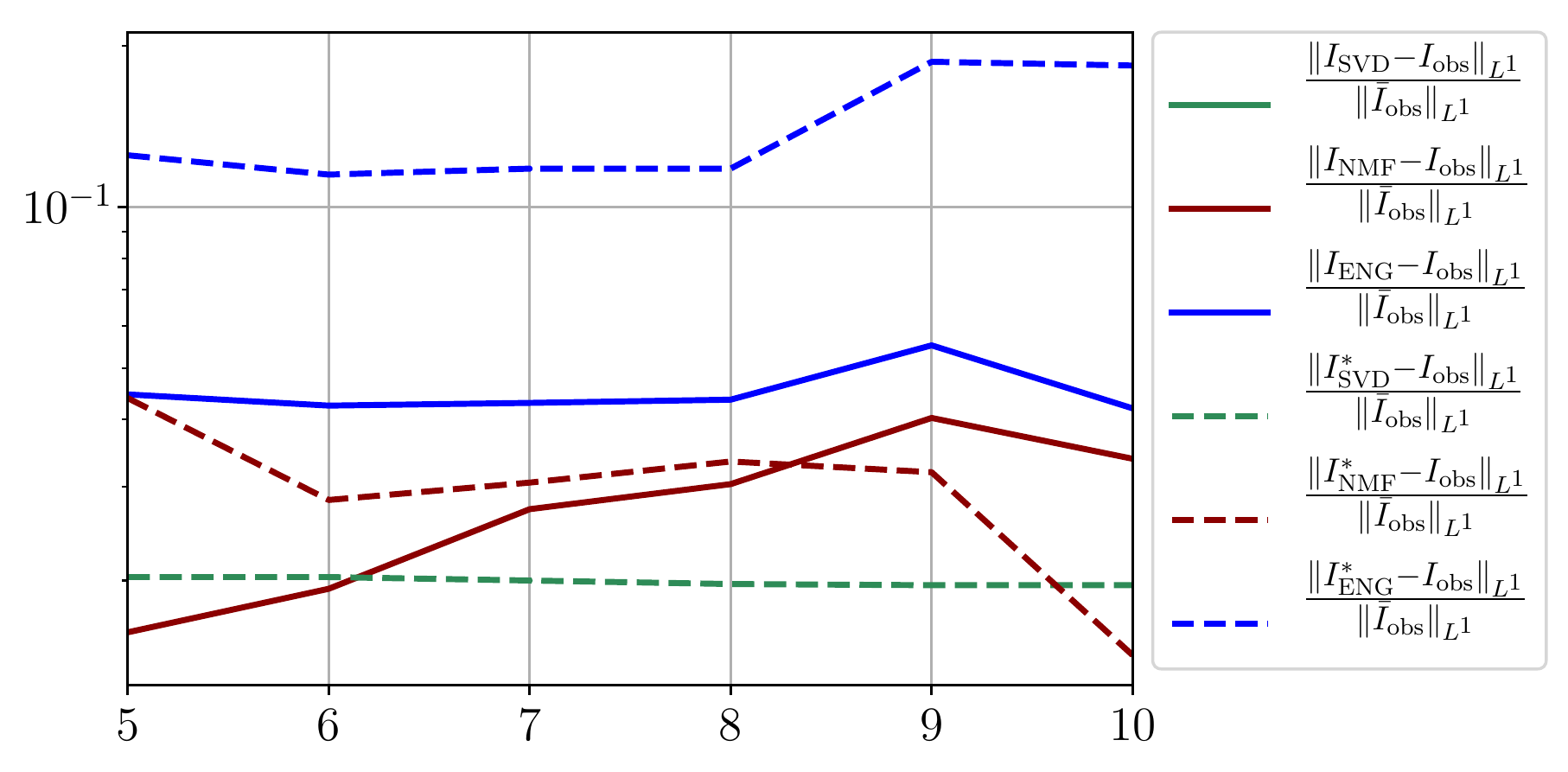}
\caption{$L^1$ relative error  of $I$ vs $n$}
\end{subfigure}
\begin{subfigure}{.45\textwidth}
\includegraphics[width=1\textwidth]{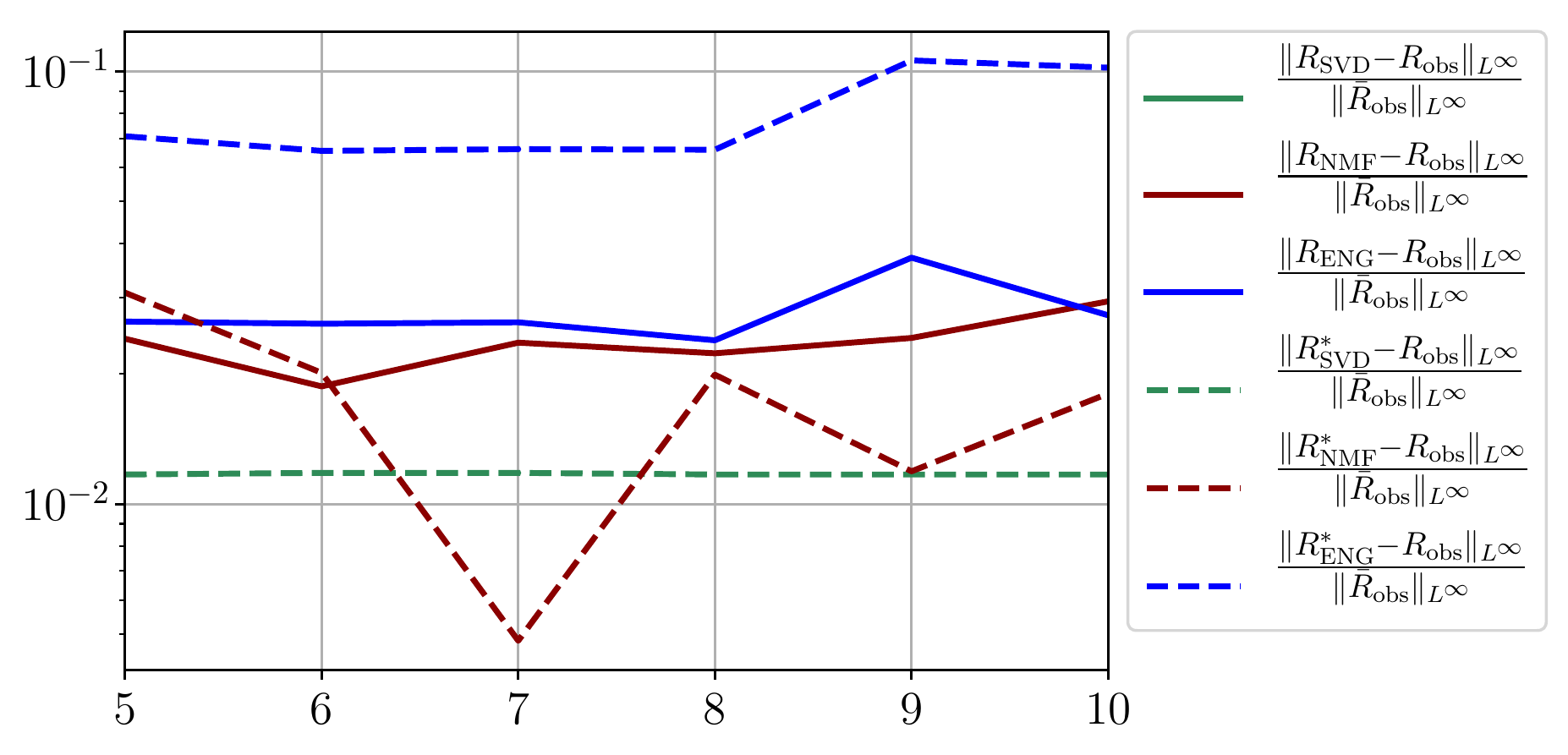}
\caption{$L^\infty$ relative error  of $R$ vs $n$}
\end{subfigure}
\caption{Forecasting errors of $I$ and $R$ (from $T=05/05$)}
\label{fig:forecast_error_0505}
\end{figure}

\col{
We observe that the quality of the forecast depends on the reduced basis but also strongly on the starting day $T$ from which the forecast is done. The forecasts using \fitbg~with SVD and NMF are not accurate and most of the time exploding. When \fitIR~is used with SVD and NMF, the forecasts are more robust as there is no explosion of the error observed. Reduced bases from ENG consistently lead to the the best forecast obtained using either \fitbg~and \fitIR ; by inspecting the error on figures \ref{fig:forecast_error_0104} to \ref{fig:forecast_error_0505} and the averaged forecasts obtained in Section \ref{sec:forecast-compare} we conclude that \fitbg~with ENG reduced bases provide slightly better forecasts.
}

\bibliographystyle{bibstyle}
\bibliography{refs}

\begin{thebibliography}{10}
\providecommand{\url}[1]{\texttt{#1}}
\providecommand{\urlprefix}{URL }

\bibitem{anastassopoulou2020data}
Anastassopoulou, C., Russo, L., Tsakris, A. and Siettos, C.
\newblock Data-based analysis, modelling and forecasting of the COVID-19
  outbreak.
\newblock \emph{PloS one}, 15(3), p. e0230405, 2020.

\bibitem{armocida2020italian}
Armocida, B., Formenti, B., Ussai, S., Palestra, F. and Missoni, E.
\newblock The Italian health system and the COVID-19 challenge.
\newblock \emph{The Lancet Public Health}, 5(5), p. e253, 2020.

\bibitem{FAV}
Atif, J., Capp{\'e}, O., Kazak{\c c}i, A., L{\'e}o, Y., Massouli{\'e}, L. and
  Mula, O.
\newblock {Initiative face au virus Observations sur la mobilit{\'e} pendant
  l'{\'e}pid{\'e}mie de Covid-19}.
\newblock Technical report, {PSL University}, May 2020.
\newblock \urlprefix\url{https://hal.archives-ouvertes.fr/hal-02921194}.

\bibitem{BCOW2017}
Benner, P., Cohen, A., Ohlberger, M. and Willcox, K.
\newblock \emph{Model Reduction and Approximation: Theory and Algorithms},
  volume~15.
\newblock SIAM, 2017.

\bibitem{BCDDPW2011}
Binev, P., Cohen, A., Dahmen, W., DeVore, R., Petrova, G. and Wojtaszczyk, P.
\newblock Convergence Rates for Greedy Algorithms in Reduced Basis Methods.
\newblock \emph{SIAM Journal on Mathematical Analysis}, 43(3), pp. 1457--1472,
  2011.
\newblock \urlprefix\url{http://dx.doi.org/10.1137/100795772}.

\bibitem{brauer2017mathematical}
Brauer, F.
\newblock Mathematical epidemiology: Past, present, and future.
\newblock \emph{Infectious Disease Modelling}, 2(2), pp. 113--127, 2017.

\bibitem{Ceylan2020}
Ceylan, Z.
\newblock Estimation of COVID-19 prevalence in Italy, Spain, and France.
\newblock \emph{Science of The Total Environment}, p. 138817, 2020.

\bibitem{CD2015}
Cohen, A. and DeVore, R.
\newblock Kolmogorov widths under holomorphic mappings.
\newblock \emph{IMA Journal of Numerical Analysis}, 36(1), pp. 1--12, 2016.
\newblock \urlprefix\url{http://dx.doi.org/10.1093/imanum/dru066}.

\bibitem{DPW2013}
DeVore, R., Petrova, G. and Wojtaszczyk, P.
\newblock Greedy algorithms for reduced bases in {B}anach spaces.
\newblock \emph{Constructive Approximation}, 37(3), pp. 455--466, 2013.

\bibitem{di9impact}
Di~Domenico, L., Pullano, G., Sabbatini, C.~E., Bo{\"e}lle, P.-Y. and Colizza,
  V.
\newblock Expected impact of lockdown in {\^I}le-de-France and possible exit
  strategies.
\newblock \emph{medRxiv}, 2020.

\bibitem{fang2020forecasting}
Fang, X., Liu, W., Ai, J., He, M., Wu, Y., Shi, Y., Shen, W. and Bao, C.
\newblock Forecasting incidence of infectious diarrhea using random forest in
  Jiangsu Province, China.
\newblock \emph{BMC Infectious Diseases}, 20(1), pp. 1--8, 2020.

\bibitem{ferguson2020}
Ferguson, N., Laydon, D., Nedjati~Gilani, G., Imai, N., Ainslie, K., Baguelin,
  M., Bhatia, S., Boonyasiri, A., Cucunuba~Perez, Z., Cuomo-Dannenburg, G.
  et~al.
\newblock Report 9: Impact of non-pharmaceutical interventions (NPIs) to reduce
  COVID19 mortality and healthcare demand.
\newblock 2020.
\newblock
  \urlprefix\url{https://www.if.ufrgs.br/if/wp-content/uploads/Imperial-College-COVID19-NPI-modelling-16-03-2020.pdf}.

\bibitem{flaxman2020estimating}
Flaxman, S., Mishra, S., Gandy, A., Unwin, H. J.~T., Mellan, T.~A., Coupland,
  H., Whittaker, C., Zhu, H., Berah, T., Eaton, J.~W. et~al.
\newblock Estimating the effects of non-pharmaceutical interventions on
  COVID-19 in Europe.
\newblock \emph{Nature}, 584(7820), pp. 257--261, 2020.

\bibitem{gillis2014and}
Gillis, N.
\newblock The why and how of nonnegative matrix factorization.
\newblock \emph{Regularization, optimization, kernels, and support vector
  machines}, 12(257), pp. 257--291, 2014.

\bibitem{HRS2019}
Hesthaven, J.~S., Rozza, G. and Stamm, B.
\newblock \emph{Certified reduced basis methods for parametrized partial
  differential equations}, volume 590.
\newblock Springer, 2016.

\bibitem{KR2011}
Keeling, M.~J. and Rohani, P.
\newblock \emph{Modeling infectious diseases in humans and animals}.
\newblock Princeton University Press, 2011.

\bibitem{kermack1927contribution}
Kermack, W.~O., McKendrick, A.~G. and Walker, G.~T.
\newblock A contribution to the mathematical theory of epidemics.
\newblock \emph{Proceedings of the Royal Society of London. Series A,
  Containing Papers of a Mathematical and Physical Character}, 115(772), pp.
  700--721, 1927.
\newblock \urlprefix\url{http://dx.doi.org/10.1098/rspa.1927.0118}.

\bibitem{liu2020covid}
Liu, Z., Magal, P., Seydi, O. and Webb, G.
\newblock A COVID-19 epidemic model with latency period.
\newblock \emph{Infectious Disease Modelling}, 5, pp. 323--337, 2020.

\bibitem{liu2020predicting}
Liu, Z., Magal, P., Seydi, O. and Webb, G.
\newblock Predicting the cumulative number of cases for the COVID-19 epidemic
  in China from early data.
\newblock \emph{arXiv preprint arXiv:2002.12298}, 2020.

\bibitem{MMT2016}
Maday, Y., Mula, O. and Turinici, G.
\newblock Convergence analysis of the Generalized Empirical Interpolation
  Method.
\newblock \emph{SIAM Journal on Numerical Analysis}, 54(3), pp. 1713--1731,
  2016.
\newblock \urlprefix\url{http://dx.doi.org/10.1137/140978843}.

\bibitem{MP2020}
Maday, Y. and Patera, A.
\newblock Reduced basis methods.
\newblock In P.~Benner, S.~Grivet-Talocia, A.~Quarteroni, G.~Rozza,
  W.~Schilders and L.~Silveira (editors), \emph{Model Order Reduction},
  chapter~4, pp. 139--179. De Gruyter, Oxford, 16 Dec. 2020.

\bibitem{magal2020predicting}
Magal, P. and Webb, G.
\newblock Predicting the number of reported and unreported cases for the
  COVID-19 epidemic in South Korea, Italy, France and Germany.
\newblock \emph{Medrxiv}, 2020.
\newblock \urlprefix\url{https://doi.org/10.1101/2020.03.21.20040154}.

\bibitem{martcheva2015introduction}
Martcheva, M.
\newblock \emph{An introduction to mathematical epidemiology}, volume~61.
\newblock Springer, 2015.

\bibitem{mizumoto2020estimating}
Mizumoto, K., Kagaya, K., Zarebski, A. and Chowell, G.
\newblock Estimating the asymptomatic proportion of coronavirus disease 2019
  (COVID-19) cases on board the Diamond Princess cruise ship, Yokohama, Japan,
  2020.
\newblock \emph{Eurosurveillance}, 25(10), p. 2000180, 2020.

\bibitem{PT1994}
Paatero, P. and Tapper, U.
\newblock Positive matrix factorization: A non-negative factor model with
  optimal utilization of error estimates of data values.
\newblock \emph{Environmetrics}, 5(2), pp. 111--126, 1994.
\newblock \urlprefix\url{https://doi.org/10.1002/env.3170050203}.

\bibitem{poncela2011forecast}
Poncela, P., Rodríguez, J., Sánchez-Mangas, R. and Senra, E.
\newblock {Forecast combination through dimension reduction techniques}.
\newblock \emph{International Journal of Forecasting}, 27(2), pp. 224--237,
  April 2011.
\newblock
  \urlprefix\url{https://ideas.repec.org/a/eee/intfor/v27yi2p224-237.html}.

\bibitem{QMN2015}
Quarteroni, A., Manzoni, A. and Negri, F.
\newblock \emph{Reduced basis methods for partial differential equations: an
  introduction}, volume~92.
\newblock Springer, 2015.

\bibitem{roda2020difficult}
Roda, W.~C., Varughese, M.~B., Han, D. and Li, M.~Y.
\newblock Why is it difficult to accurately predict the COVID-19 epidemic?
\newblock \emph{Infectious Disease Modelling}, 2020.

\bibitem{roques2020impact}
Roques, L., Klein, E.~K., Papa{\"\i}x, J., Sar, A. and Soubeyrand, S.
\newblock Impact of Lockdown on the Epidemic Dynamics of COVID-19 in France.
\newblock \emph{Frontiers in Medicine}, 7, p. 274, 2020.

\end{thebibliography}
\end{document}